\documentclass[11pt]{article}
\usepackage[letterpaper, margin=1.0in]{geometry}
\usepackage[utf8]{inputenc}
\usepackage{tabularx}
\usepackage{microtype}
\usepackage{hyperref}
\usepackage{amsmath}

\usepackage{amssymb,amsfonts,amsthm}
\usepackage{mathtools,thmtools}
\usepackage{enumerate}
\usepackage{color}
\usepackage{xspace}
\usepackage{characters}

\usepackage{tikz}
\usetikzlibrary{shapes,shapes.callouts,shadows,arrows,backgrounds,%
matrix,patterns,arrows,decorations.pathmorphing,decorations.pathreplacing,%
positioning,fit,calc,decorations.text,arrows.meta,quotes,%
decorations.markings%
}
\usepackage{todonotes}
\usepackage{comment}
\usepackage{cleveref}
\usepackage[font=small,labelfont=bf]{caption}

\theoremstyle{plain}
  \newtheorem{theorem}{Theorem}
  \newtheorem{corollary}[theorem]{Corollary}
  
  \newtheorem{lemma}[theorem]{Lemma}
  \newtheorem{proposition}[theorem]{Proposition}
  
  \newcounter{theoremspecial}
  \newtheorem{claimspecial}{Claim}[theoremspecial]
  \newtheorem*{theorem*}{Theorem}
  \newtheorem*{corollary*}{Corollary}
  \newtheorem*{lemma*}{Lemma}
  \newtheorem*{proposition*}{Proposition}
  \newtheorem*{claim*}{Claim}
  
\theoremstyle{definition}
  \newtheorem{definition}[theorem]{Definition}
  \newtheorem{example}[theorem]{Example}

  \newtheorem*{definition*}{Definition}
  \newtheorem*{example*}{Example}
  \newtheorem*{question*}{Question}
  \newtheorem*{conjecture*}{Conjecture}
  \newtheorem*{remark*}{Remark}

\AddToHook{env/lemma/begin}{\crefalias{theorem}{lemma}}

\newenvironment{claimproof}
{\noindent {\em Proof of claim:} }
{\hfill $\diamond$ \smallskip}

\DeclarePairedDelimiter\norm{\lVert}{\rVert}

\mathchardef\mhyphen="2D

\newcommand*{\eps}{\varepsilon}
\renewcommand{\epsilon}{\varepsilon}

\newcommand{\bigoh}{O}

\newcommand{\cc}[1]{{\mbox{\textnormal{\textsf{#1}}}}\xspace}  %
\newcommand{\SB}{\{\,}
\newcommand{\SM}{\;{|}\;}
\newcommand{\SE}{\,\}}

\newcommand{\NP}{\cc{NP}}
\newcommand{\FPT}{FPT\xspace}

\newcommand{\Weft}{{\cc{W}}}
\newcommand{\W}[1]{{\Weft}{{[#1]}}}

\newcommand*{\ETH}{\textsc{ETH}\xspace}

\DeclareMathOperator{\cost}{cost}

\usepackage{boxedminipage}

\newcommand{\pbDefP}[4]{%
  \noindent
  \begin{center}
  \begin{boxedminipage}{0.98 \columnwidth}
  {\sc #1}\\[5pt]
  \begin{tabular}{l p{0.70 \columnwidth}}
  {\sc Instance}: & #2\\
  {\sc Parameter:} & #3\\
  {\sc Question}: & #4
  \end{tabular}
  \end{boxedminipage}
  \end{center}
}

\newcommand{\pbDefGap}[5]{%
  \noindent
  \begin{center}
  \begin{boxedminipage}{0.98 \columnwidth}
  {\sc #1}\\[5pt]
  \begin{tabular}{l p{0.80 \columnwidth}}
    {\sc Instance}: & #2\\
    {\sc Parameter:} & #3\\
    {\sc Goal}: & Distinguish between the following cases: \\
     & {(\textsc{Yes})} #4 \\
     & {(\textsc{No})} #5
  \end{tabular}
  \end{boxedminipage}
  \end{center}
}

\newcommand{\pbDefGapnonparam}[4]{%
  \noindent
  \begin{center}
  \begin{boxedminipage}{0.98 \columnwidth}
  {\sc #1}\\[5pt]
  \begin{tabular}{l p{0.80 \columnwidth}}
    {\sc Instance}: & #2\\
    {\sc Goal}: & Distinguish between the following cases: \\
     & {(\textsc{Yes})} #3 \\
     & {(\textsc{No})} #4
  \end{tabular}
  \end{boxedminipage}
  \end{center}
}

\newcommand{\CSP}{\textsc{CSP}}

\newcommand{\MinCSP}{\textsc{MinCSP}}
\newcommand{\MinSat}{\textsc{MinSat}}
\newcommand{\Sat}{\textsc{Sat}}

\newcommand*{\GAP}[1]{\ensuremath{\textsc{Gap}_{#1}}}

\newcommand{\Htkp}{$H_{t}$}
\newcommand{\Htk}{H_{t,k}}

\newcommand{\svens}{\textsc{\Htkp-MSI}\xspace}

\newcommand{\unit}{\ensuremath{\overline{13}\xspace}}

\newcommand{\gapsven}{\textsc{$\GAP{\eps}$
    \Htkp-Multicoloured Subgraph
    Isomorphism}\xspace}

\newcommand{\gapsvens}{\textsc{$\GAP{\eps}$ \Htkp-MSI}\xspace}

\newcommand{\pSplit}{\textsc{$\GAP{p-\eps}$ $p$-Split Min Cut}\xspace}
\newcommand{\pSplitng}{\textsc{$p$-Split Min Cut}\xspace}

\DeclareMathOperator{\mincost}{mincost}

\mathchardef\mhyphen="2D

\newcommand*{\range}[2]{\ensuremath{\{#1,\dots,#2\}}}

\newcommand{\eqn}{\textsf{eqn}}

\newcommand{\ed}{\textsf{ed}}

\usepackage[noend]{algpseudocode}
\usepackage{algorithm,algorithmicx}

\algnewcommand\algorithmicinput{\textbf{Input:}}
\algnewcommand\INPUT{\item[\algorithmicinput]}

\algnewcommand\algorithmicoutput{\textbf{Output:}}
\algnewcommand\OUTPUT{\item[\algorithmicoutput]}

\newcommand{\linz}[2]{\ensuremath{\textsc{{#1}-Lin}(\ZZ_{#2})}}
\newcommand{\minlinz}[2]{\ensuremath{\textsc{Min-{#1}-Lin}(\ZZ_{#2})}}
\newcommand*{\minlin}[2]{\textsc{Min-\ensuremath{#1}-Lin(\ensuremath{#2})}}

\def \RandomCover {\textsf{RandomCover}\xspace}

\def \SolveMinLin {\textsf{SolveMinLin}\xspace}

\newcommand{\bool}[1]{\ensuremath{{#1}_{\textsf{B}}}}
\newcommand{\enhance}[1]{\ensuremath{{#1}^{+}}}

\newcommand{\sind}[2]{\mathrm{s}_{#2}(#1)}
\DeclareMathOperator{\ord}{ord}

\DeclareMathOperator{\suff}{suff}
\DeclareMathOperator{\wgt}{w}

\newcommand{\bvar}[4]{\ensuremath{
  \mathbf{x}_{#1}
  [{#2}, {#3}: p^{#4}]
  }\xspace}
\newcommand{\coset}[3]{\ensuremath{\textsf{coset}({#1}, {#2}: p^{#3})}\xspace}
\newcommand{\bvarsingle}[2]{\ensuremath{
  \mathbf{x}_{#1}
  [{#2}]
  }\xspace}

\title{Optimal FPT-Approximability for Modular Linear Equations}

 \author{
   Konrad K. Dabrowski\thanks{Newcastle University, UK, \texttt{konrad.dabrowski@newcastle.ac.uk}} \and
   Peter Jonsson\thanks{Link{\"o}ping University, Sweden, \texttt{peter.jonsson@liu.se}} \and
   Sebastian Ordyniak\thanks{University of Leeds, UK, \texttt{sordyniak@gmail.com}} \and
   George Osipov\thanks{Royal Holloway, University of London, UK, \texttt{george.osipov@pm.me}} \and
   Magnus Wahlstr{\"o}m\thanks{Royal Holloway, University of London, UK, \texttt{Magnus.Wahlstrom@rhul.ac.uk}}
 }

\date{}

\begin{document}

\maketitle

\begin{abstract}
  We show optimal FPT-approximability results
  for solving almost satisfiable systems of modular linear equations,
  completing the picture of the parameterized complexity and FPT-approximability landscape for the \textsc{Min-{$r$}-Lin}$(\mathbb{Z}_m)$ problem for every $r$ and $m$.
  
  In \textsc{Min-{$r$}-Lin}$(\mathbb{Z}_m)$,
  we are given a system $S$ of linear equations modulo $m$, 
  each on at most $r$ variables,
  and the goal is to find a subset $Z \subseteq S$ of minimum cardinality
  such that $S - Z$ is satisfiable.
  The problem is NP-hard, and UGC-hard to approximate within
  any constant factor for every $r \geq 2$ and $m \geq 2$,
  which motivates studying it through the lens of parameterized complexity
  with solution size as the parameter.
  From previous work (Dabrowski~et~al. SODA'23/TALG and ESA'25) we know that
  \textsc{Min-{$r$}-Lin}$(\mathbb{Z}_m)$ is W[1]-hard to FPT-approximate
  within any constant factor when $r \geq 3$, and that
  \textsc{Min-{$2$}-Lin}$(\mathbb{Z}_m)$ is in \FPT\ when $m$ is prime
  and W[1]-hard when $m$ has at least two distinct prime factors.
  The case when $m = p^d$ for some prime $p$ and $d \geq 2$
  has remained an open problem.
  We resolve this problem in this paper and prove the following:
  \begin{itemize}
      \item \textbf{FPT algorithm for prime powers}: 
        we prove that \textsc{Min-{$2$}-Lin}$(\mathbb{Z}_{p^d})$ is in \FPT for every prime $p$ and $d \geq 1$. This implies that
      \textsc{Min-{$2$}-Lin}$(\mathbb{Z}_{m})$ can be FPT-approximated 
        within a factor of $\omega(m)$,
        where $\omega$ is the number of distinct prime factors of $m$.
      \item \textbf{Matching lower bound}: we show that, under the ETH,
        \textsc{Min-{$2$}-Lin}$(\mathbb{Z}_m)$ cannot be FPT-approximated 
        within $\omega(m) - \varepsilon$ for any $\varepsilon > 0$.
  \end{itemize}

  Our main algorithmic contribution is a new technique coined
  \emph{balanced subgraph covering}, which generalizes
  important balanced subgraphs of Dabrowski~et~al.\ (SODA'23/TALG) and
  shadow removal of Marx~and~Razgon\ (STOC'11/SICOMP).
  In a nutshell, we relax \textsc{Min-{$2$}-Lin}$(\mathbb{Z}_{p^d})$ by allowing
  ambiguous assignments, which may map variables to subsets of values 
  (as opposed to only individual values).
  We then apply balanced subgraph covering to impose additional constraints on the relaxation,
  solve it by a reduction to \textsc{MinSat},
  and use the additional constraints to convert
  any ambiguous assignment into a proper value assignment of the same cost.
  For the lower bounds, we develop a framework for proving
  optimality of FPT-approximation factors under the ETH.
  We consider a variant of $st$-\textsc{MinCut} where the edges
  of the input graph are partitioned into bundles of size $\ell$ and each bundle
  can be removed at cost $1$.
  Starting from a recent inapproximability result of Bafna~et~al.\ (STOC'25), we prove that $st$-\textsc{MinCut} with edge bundles of size $\ell \geq 2$
  cannot be FPT-approximated within $\ell - \varepsilon$ under the ETH,
  and then reduce the latter problem into 
  \textsc{Min-{$2$}-Lin}$(\mathbb{Z}_{m})$ with $\omega(m) = \ell$.
\end{abstract}
\thispagestyle{empty}

\newpage

\tableofcontents
\thispagestyle{empty}

\newpage

\setcounter{page}{1}

\section{Introduction} \label{sec:intro} 

Systems of linear equations are omnipresent in computer science and 
mathematics~\cite{grcar2011ordinary}.
In particular, equations and congruences over 
the finite ring of integers modulo $m$ ($\ZZ_m$)
are of central importance in number theory, but also have many
applications in computer science, see e.g.~\cite{Blake:ic72,Dawar:etal:lmcs2013,Ding:etal:CRT,Yan:numbertheory}.
An example is the ongoing effort to understand the approximability of
finite-domain constraint satisfaction problems, where the approximability of equations
over $\ZZ_m$ plays a key role. For instance, the {\em unique games conjecture}
(UGC)~\cite{khot2002power}
and its underlying theory is intrinsically based on
linear equations over $\ZZ_m$ (see the book by Widgerson~\cite[Section~4.3.2]{Wigderson:math}). 
Linear equations over a modular ring $\ZZ_m$ can be solved in polynomial time
by methods such as Gaussian elimination, but such methods are not suited for handling 
inconsistent systems of equations.
We consider the  $\minlin{r}{R}$ problem, where one wants to find an assignment to a system of linear equations over the ring $R$
that violates the minimum number of equations, and where each equation contains at most $r$
distinct variables. 
We note that $\minlin{r}{R}$ for $r \in \NN$ and finite ring $R$ is a special case of
\textsc{MinCSP$(\Gamma)$} for a finite constraint language~$\Gamma$.
This, and the more general problem \textsc{Valued CSP}, have been widely studied
from various perspectives~\cite{Bonnet:etal:esa2016,Dabrowski:etal:ipec2023,Kim:etal:sicomp2025,Kolmogorov:etal:sicomp2017,Kratsch:etal:compscirev2026,Osipov:Wahlstrom:esa2023,raghavendra2008optimal}.

The problem $\minlin{r}{R}$
 is \NP-hard even when $r = 2$ and $R$ is the simplest
 nontrivial ring $\ZZ_2$~\cite{Kolmogorov:etal:sicomp2017}.
Two ways of coping with \NP-hardness 
are approximation and parameterized algorithms,
but neither of them seem sufficient for $\minlin{r}{\ZZ_m}$.
Even $\minlin{2}{\cdot}$ over finite fields such as $\ZZ_2$
is conjectured to be \NP-hard to approximate within any constant factor under
the UGC:
see Definition~3 in~\cite{khot2016candidate} and the discussion that follows.
The natural parameter for  $\minlin{r}{R}$ is the cost of the optimal solution, i.e.\ the number of violated equations.
Under this parameterization, the $\minlin{3}{R}$ problem is \W{1}-hard
for {\em every} nontrivial ring~\cite[Theorem 44]{minlin25talg}.
The situation is slightly better for 
$\minlin{2}{\ZZ_m}$: it is fixed-parameter tractable 
when $m$ is a prime~\cite[Theorem 33]{minlin25talg} (and $\ZZ_m$ is a field), but \W{1}-hard when $m$ is not a prime power~\cite[Section~6.2]{minlin25talg}. 
However, the parameterized complexity of seemingly simple
cases such as $\minlin{2}{\ZZ_4}$ is not settled by previous work,
and this is a recurring open question~\cite{minlin25talg,Kratsch:etal:compscirev2026,PACS24open}.

This has motivated the use of \emph{parameterized approximation}
for studying $\minlin{2}{\ZZ_m}$~\cite{Dabrowski:etal:arxiv2024,Dabrowski:etal:esa2025}.
Parameterized approximation is an approach that has received rapidly increasing interest (see, for instance,~\cite{EibenRW22ipec,Guruswami:etal:stoc2024,GuruswamiRS24ccc,Lokshtanov:etal:soda2021,lokshtanov2020parameterized,OsipovPW24pointalgebra} and the surveys~\cite{Feldmann:etal:algorithms2020,Marx:tcj2008}).
Informally speaking, \FPT-approximability gives more
time to compute the output (compared to polynomial-time
approximation) and the output 
may be an approximate solution (unlike exact FPT algorithms).
Let $c \geq 1$ be a constant and assume, 
as usual, that the notation $O^*(\cdot)$ hides polynomial factors in the input size.
A factor-$c$ {\em FPT-approximation
algorithm} takes an instance $(I, k)$, runs in
$O^*(f(k))$ time for some computable function $f$, and
either returns that there is no solution of size at most $k$ or returns that there is 
a solution of size at most $c \cdot k$. 
  A decision $c$-approximation procedure for 
  $\minlin{2}{\ZZ_m}$ can be turned into an algorithm
  that returns a $c$-approximate solution using
  self-reducibility; simply note that we have access to
  equations $x=a$ for all elements $a$ in $\ZZ_m$.
  Clearly, a problem is in FPT if and only if it is
  FPT-approximable within a factor of $c=1$.

   The currently best known bounds
for FPT-approximability of $\minlin{2}{\ZZ_m}$
are scattered in the literature and we give a summary below.
Let $m \geq 2$ be a natural number and let $\omega(m)$ denote the number of 
distinct prime factors of $m$.

  \begin{enumerate}
  \item
  If $r \geq 3$, then
  $\minlin{r}{\ZZ_m}$
  is \W{1}-hard to \FPT-approximable within any constant factor~\cite[Theorem 16]{Dabrowski:etal:esa2025}.

  \item
  $\minlin{2}{\ZZ_m}$ is \FPT-approximable within a factor of $2\omega(m)$~\cite[Theorem 1]{Dabrowski:etal:esa2025}.

  \item
  If $m$ is prime, then
  $\minlin{2}{\ZZ_m}$ is in \FPT~\cite[Theorem 33]{minlin25talg}.

  \item
  If $m$ is not a prime power, then $\minlin{2}{\ZZ_m}$ is not FPT-approximable within a factor of $2-\epsilon$ for any $\epsilon > 0$, unless the \ETH is false~\cite[Theorem 45]{Dabrowski:etal:arxiv2024}.
\end{enumerate}

\noindent
We obtain a full picture of constant-factor FPT-approximability of
$\minlin{r}{\ZZ_m}$ (under the ETH) by proving the following.

\begin{theorem} \label{thm:main}
Let $m \geq 2$ be a natural number.
$\minlin{2}{\ZZ_m}$ is  FPT-approximable within a factor of $\omega(m)$ and it is not FPT-approximable within $\omega(m)-\epsilon$ for any $\epsilon > 0$ unless the \ETH is false.
\end{theorem}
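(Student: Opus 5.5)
The theorem has two halves, and I would prove them separately: the upper bound by a Chinese-remainder decomposition plus an FPT algorithm for prime powers, and the lower bound by a gap reduction from a bundled cut problem.

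\textbf{Upper bound.} Write $m=p_1^{d_1}\cdots p_\ell^{d_\ell}$ with $\ell=\omega(m)$. By the Chinese remainder theorem, $\ZZ_m\cong \prod_i \ZZ_{p_i^{d_i}}$. Each equation $e$ of an instance $S$ of $\minlin{2}{\ZZ_m}$ is therefore equivalent to the conjunction of its $\ell$ projections $e_i$ over $\ZZ_{p_i^{d_i}}$. Let $S_i$ be the system of the $i$-th projections.

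For the approximation, we solve each $\minlin{2}{\ZZ_{p_i^{d_i}}}$ instance $(S_i,k)$ exactly. If some $S_i$ has no solution of size $k$, then neither does $S$, since an optimal deletion set for $S$ projects to a deletion set for every $S_i$. Otherwise, the union of the $\ell$ deletion sets (taken as sets of original equations) has size at most $\ell k$. Removing it leaves a system all of whose projections are satisfiable, and CRT recombines the projected assignments into a satisfying assignment over $\ZZ_m$.

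So the upper bound reduces to the core result: $\minlin{2}{\ZZ_{p^d}}$ is in FPT. I would prove this by induction on $d$, with the base case $d=1$ being \cite[Theorem 33]{minlin25talg}. The plan is to relax an assignment to a map from variables to cosets of $p^j\ZZ_{p^d}$ (ambiguous values), so that each equation becomes a constraint between coset choices. I would first use randomized or derandomized covering to guess structure around the solution, in the spirit of important balanced subgraphs and shadow removal. Under the guessed structure, the relaxed problem should become a \MinSat\ instance over a bijunctive/IHS-like language, which is FPT. Finally, the guessed constraints would let us round the coset assignment into an actual value assignment of equal cost.

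\textbf{Main obstacle.} I expect the hardest step to be this prime-power algorithm, specifically the covering step. It must handle equations $ax+by=c$ in which $a$ and $b$ have different $p$-adic valuations, so that an equation only partially determines one variable from the other. Ensuring that the rounding incurs no extra cost requires a new generalization of shadow removal to these "balanced subgraphs".

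\textbf{Lower bound.} I would introduce the problem $st$-\textsc{MinCut} in which the edges are partitioned into bundles of size $\ell$, each bundle deletable at unit cost. The argument then has two steps.

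\begin{enumerate}
\item Show ETH-hardness of FPT-approximating this problem within $\ell-\eps$. Start from the parameterized inapproximability of Bafna et al.\ (STOC'25) for a multicoloured-subgraph-type problem with gap. Build a gadget in which every choice of vertices and edges yields a cut, but consistency must be paid for by cutting $\ell$ edge-disjoint path families, one per bundle coordinate. In the no case, a cheap cut must then essentially pay separately in each coordinate.
\item Reduce bundled cut to $\minlin{2}{\ZZ_m}$ with $\omega(m)=\ell$. Assign the $j$-th edge of each bundle to the prime $p_j$. Model a bundle as a single equation $x_u-x_v=0$ whose projections act as $st$-cut edges in the $\ZZ_{p_j^{d_j}}$ components. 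Use crisp equations (replicated $k+1$ times) to force $x_s=0$ and $x_t=1$, and pad non-edges with trivial components, e.g.\ by multiplying the variable differences by $m/p_j^{d_j}$ so that a component is vacuous.
\end{enumerate}

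Cut costs then correspond exactly to violated equations, so the gap $\ell-\eps$ transfers. The delicate part of step 2 is keeping the components independent so that each equation encodes exactly one bundle.
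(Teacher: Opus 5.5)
Your CRT step is correct and is exactly how the paper gets the factor $\omega(m)$: it is the direct-sum proposition with every $c_i=1$. The two steps that carry the real content of the theorem, however, are only named in your proposal, not supplied.

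\textbf{Prime powers.} The FPT algorithm for $\minlin{2}{\ZZ_{p^d}}$ is described only as ``relax to cosets, cover, solve by \MinSat, round''. The ``induction on $d$'' never receives an inductive step, and the paper does not use one. More seriously, the claim that rounding costs nothing is exactly what fails for the bare relaxation: the triangle-on-a-stick system has a zero-cost ambiguous assignment but no zero-cost genuine one. The mixed-valuation equations you worry about are the routine part; they go away by reducing to special instances that contain only $ru=v$ with $r$ a unit, $pu=v$, and one crisp $s=1$. The missing ideas are the following.
\begin{enumerate}
\item For each $0\le i\le d-2$, build a graph $G_i$ labelled by the units of $\ZZ_{p^{d-i}}$. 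It has copies $v^0,\dots,v^i$ of each variable, edges $u^jv^j$ labelled $r \bmod p^{d-i}$ for each equation $ru=v$, and identity-labelled edges $u^jv^{j+1}$ for each equation $pu=v$. Then the copies $v^{\ord(\alpha(v))}$ of an optimal $\alpha$, after deleting the edges of violated equations, induce a balanced subgraph of cost at most $2k$. The unit graph alone is not enough once $d\ge 3$, as the $\ZZ_8$ example shows.
\item You need a covering theorem for \emph{disconnected} balanced subgraphs, because only connected important ones can be enumerated.
\item Add crisp annotations that forbid order $i$ outside $S_i$ and that fix a guessed exact value at the endpoints of the returned cleaning sets.
\item You need a disambiguation argument. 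It raises the minimum weight $\ell$ of ambiguous variables one digit at a time, using a consistent labelling of $(G_{d-\ell-1}-F'_{d-\ell-1}-\ed_{d-\ell-1}(Z))[S']$ shifted by a constant on each component.
\end{enumerate}
Also, ``bijunctive/IHS-like'' does not give you an FPT backend. The paper relies on Kim et al.'s theorem for bijunctive languages with $2K_2$-free Gaifman graphs, and secures that condition by putting the cliques $\bool{K}(u)$ and $\bool{K}(v)$ into every constraint. By contrast, a relation such as $(x_1\to y_1)\wedge(x_2\to y_2)$ already encodes paired cut and is W[1]-hard.

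\textbf{Lower bound.} Your step 2 is essentially the paper's reduction: multiplying by $m/p_j^{d_j}$ is a unit multiple of the idempotent $q_j$. One correction: a bundle has $\ell$ edges with $2\ell$ distinct endpoints, so it cannot be a single binary equation on endpoint variables. You need fresh variables $x_B,y_B$, a soft equation $x_B=y_B$, and crisp equations $q_ju_j=q_jx_B$ and $q_jy_B=q_jv_j$.

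The real gap is step 1. You never say what a size-$\ell$ bundle corresponds to in the source instance, and ``pay separately in each coordinate'' does not by itself produce the gap. Soundness needs four things:
\begin{enumerate}
\item each bundle has one edge per component;
\item a bundle cut in at least two coordinates certifies a preserved edge of the pattern graph;
\item bundles attached to different groups certify different edges;
\item the pattern graph still contains a constant fraction of all $\binom{k}{2}$ pairs, so that the gap of the source instances survives restriction to it.
\end{enumerate}
The paper achieves this by making the bundles images of edge-disjoint multicoloured $\ell$-cliques that partition the edges of an $\ell$-partite pattern graph $H_{t,k}$ with $\Omega(k^2)$ edges. That graph is built from Tur\'an's theorem plus a random colouring. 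Such a decomposition keeps a constant fraction of the pairs only because Bafna et al.'s instances have a complete primal graph. That density is why the chain passes through \textsc{Multicoloured Densest Subgraph}, and why the earlier sparse route stopped at $\ell=2$. You cite Bafna et al., but your proposal neither uses nor mentions this clique-partition mechanism.
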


\noindent
We are not aware of
any other example in the literature where tight lower and
upper bounds for FPT-approximability have been obtained for such a
wide range of natural problems.
In particular, our result settles the \FPT-status of
the aforementioned $\minlin{2}{\ZZ_4}$ problem.

\paragraph*{Technical overview.} 
Let us give a brief overview of our tools and techniques; more
details are found in
Sections~\ref{sec:intro-coverings}--\ref{sec:intro-hardness}.
We begin with the positive result. Let $m \geq 2$ be a constant. An
instance of $\minlin{2}{\ZZ_m}$ consists (somewhat informally) of a
set of variables $V$, a set $S$ of binary equations over $\ZZ_m$, and
an integer $k$, and the question is whether there is an assignment
$\varphi \colon V \to \ZZ_m$ which satisfies all but at most $k$
equations. Here, a generic binary equation over $\ZZ_m$ takes the form
$au+bv=c \pmod {m}$ where $a, b, c \in \ZZ_m$ and $u, v \in V$ are
variables.

We simplify the situation in several ways.
By~\cite[Proposition~4]{Dabrowski:etal:esa2025}, it is sufficient to prove that
$\minlin{2}{\ZZ_{p^d}}$ (for $d \geq 2$ and an arbitrary prime $p$)
is in \FPT. 
Furthermore, it is known from~\cite[Lemma~5]{Dabrowski:etal:esa2025} that we can
restrict our attention to instances with two types of constraints, a
crisp (unbreakable) constraint $s=1$ and constraints $u=av$ for $u, v
\in V$ and $a \in \ZZ_{p^d}$. For simplicity,
we further separate the latter into equations $u=av$
where $a \bmod p \neq 0$ (i.e. $a$ is a \emph{unit} in $\ZZ_{p^d}$) and equations $u=pv$.
We refer to these as \emph{special} instances. 
Of these equation types, $u=av$ induces a permutation of $\ZZ_{p^d}$, while $u=pv$
acts like a ``coarsening'' of the domain
(e.g.\ for $m=9$, $u=3v$ maps $v \in \{1,4,7\}$ to $u=3$, $v \in \{2,5,8\}$
to $u=6$, and $v \in \{0,3,6\}$ to $u=0$). While the FPT literature contains
many methods for dealing with systems of permutation constraints~\cite{chitnis2016designing,iwata2016half,iwata201801all,wahlstrom2017lp},
their interaction with constraints $u=pv$ causes significant complications.

For our solution to $\minlin{2}{\ZZ_{p^d}}$, we first show a support statement,
starting from the formulation of equation systems as \emph{biased graphs} used in previous work~\cite{minlin25talg,wahlstrom2017lp}
and generalize it into a statement we call \emph{balanced subgraph covering};
see Section~\ref{sec:intro-coverings}. Informally, this result takes
an inconsistent system $G$ of permutation constraints, with a hidden almost
consistent induced subsystem $H$, and returns a consistent subsystem
$G'$ of $G$, with some annotations, such that with sufficient probability
$G'$ captures the hidden system $H$. We thus generalize the \emph{shadow covering},
(a.k.a.\ \emph{shadow removal}) technique of Marx and
Razgon~\cite{marx2014fixed} into the setting of biased graphs.

With this tool in hand, our algorithm has two aspects. 
First, we formulate a \emph{Boolean relaxation} of the problem, where
we reduce the instance to a Boolean bijunctive formula, in order to use the powerful
FPT-algorithm for \MinSat\ by Kim et al.~\cite{Kim:etal:sicomp2025} as a backend solver. 
This formulation is not precise, but it is an overapproximation of the
solution space, whose solution gives partial information about a variable value $\varphi(v)$.
Technically, it assigns a \emph{coset} of $\ZZ_{p^d}$ for every value $\varphi(v)$;
e.g.\ in the simplest interesting case of $\ZZ_4$ such a solution may encode
$\varphi(v)=1 \pmod 2$ without settling on either $\varphi(v)=1$ or $\varphi(v)=3$.
This will handle constraints $u=pv$, but fail to resolve permutation constraints
such as $u=3v$. To address this, we apply the balanced subgraph covering tool
to a collection of auxiliary biased graphs that capture the structure
of these permutation constraints. The solutions returned by this tool
can then be used to add additional constraints to the Boolean formula,
in a way that allows a \emph{disambiguation step} -- a guarantee
that every assignment to the Boolean formula can be fully resolved
into an assignment $\varphi \colon V \to \ZZ_{p^d}$ without increasing
the number of violated constraints. Thus the \MinSat\ solver,
in the presence of these additional constraints, acts as a complete FPT
algorithm for $\minlin{2}{\ZZ_{p^d}}$. 

\smallskip

For the negative result, we first note that $\ZZ_m$ is a direct sum of $\omega(m)$ different rings $\ZZ_{p^d}$ by the Chinese Remainder Theorem.
We show that $\minlin{2}{R}$ for a ring $R$
that is the direct sum of $p$ rings $R_i$ does not have an
FPT-approximation with ratio $p-\varepsilon$ for any $\varepsilon > 0$
unless ETH is false. 
In the course of the proof,
we define an auxiliary problem
\pSplitng{} and show that it
cannot be FPT-approximated within $p -\eps$
for any $\eps > 0$.
\pSplitng{} is a natural generalization of 
the problem \textsc{Paired Min Cut} 
which has frequently been used in W[1]-hardness proofs (e.g.~\cite{
Anand:etal:ipec2025,
Kim:etal:sicomp2025,
Osipov:Wahlstrom:esa2023,
Dabrowski:etal:ipec2023,
marx2009constant}).
To show hardness of \pSplitng{}, 
we start from a recent result of 
Bafna~et~al.~\cite{Bafna:etal:stoc2025}
that \textsc{Max-$2$-CSP} parameterized
by number of variables is hard to approximate
within any constant factor under the ETH,
even on complete instances,
i.e.\ where every pair of variables
is bound by a constraint.
To reduce to \pSplitng{},
we construct dense instances of 
\textsc{Max-$2$-CSP} where the constraints
can be partitioned into $p$-cliques
using Tur{\'a}n's theorem,
transfer inapproximability
from Bafna~et~al.\ to this case,
and then reduce the case to \pSplitng{}.

\subsection{Balanced Subgraph Covering}
\label{sec:intro-coverings}

Our algorithm builds on a combinatorial statement about so-called \emph{biased graphs}.
Biased graphs were studied by Zaslavsky~\cite{Zaslavsky:jctb89,Zaslavsky:jctb91}
and are an important topic in matroid theory; see e.g.~\cite{oxley2011matroid,walsh2026gaingraphs,zaslavsky2012mathematical}.
They have also seen applications in parameterized complexity~\cite{minlin25talg,wahlstrom2017lp}.
Before we give the definitions, let us review an example. Let $G$ be a
graph, and consider a simple cycle $C$ in $G$ to be \emph{balanced} if
it has even length. A subgraph $H$ of $G$ is balanced if every cycle
in $H$ is balanced -- i.e.\ if $H$ is bipartite. 
The problem of deleting at most $k$ edges from $G$ so that the remaining graph is balanced is now effectively \textsc{Edge Bipartization}, which has a long history in parameterized complexity (along with its vertex deletion variant, \textsc{Odd Cycle Transversal})~\cite{pilipczuk2019edge,guo2006compression,reed2004finding,KolayMRS20oct}.

Biased graphs provide a rich generalization of these notions: a {\em biased graph} is a pair $(G, \BB)$ where $G$ is a graph and $\BB$ is the set of cycles in $G$ that are considered balanced (provided via oracle access). $\BB$ must satisfy a closure property:  for any cycle $C$ in $G$ that is not balanced, and any chord path $P$ that connects two vertices of $C$, at least one of the two resulting new cycles in $C+P$ is also unbalanced. 
Both the edge- and vertex-deletion versions of the above ``cleaning task'' (i.e.\ remove at most $k$ edges/vertices from $G$ such that the resulting graph is balanced) are FPT, with a running time of $O^*(4^k)$,  even with oracle access to $\BB$~\cite{wahlstrom2017lp,minlin25talg}.
As a prominent application, let $\Gamma$ be a group, and let $G$ be a graph. Refer to an oriented edge in $G$ as a triple $(e,u,v)$ where $e \in E(G)$ and $u$ and $v$ are the endpoints of $e$, representing orienting $e$ from $u$ to $v$. 
Let $\gamma$ be a function labelling the oriented edges of $G$ by $\Gamma$, where for every edge $e=uv \in E(G)$ we have $\gamma(e,u,v)=\gamma(e,v,u)^{-1}$. Let a cycle $C$ be balanced if its edge labels, taken in the order and orientation of $C$, multiply to identity in $\Gamma$. Then this defines a biased graph $(G,\BB)$,
known as a group-labelled graph or \emph{$\Gamma$-gain graph},
and the corresponding cleaning task serves as a ``meta algorithm'' that captures many problems
such as \textsc{Edge Bipartization/OCT}, \textsc{Multiway Cut}, \textsc{Subset Feedback Edge/Vertex Set}
and more~\cite{wahlstrom2017lp}. The setting also provides linear-time FPT algorithms~\cite{iwata201801all} 
and approximation results~\cite{wahlstrom17preprint}.

However, we are looking at a more general task. Let $(G,\BB)$ be a biased graph and let $H$ be a subgraph of $G$. For a set of vertices $S \subseteq V(G)$, let $\delta(S)$ be the set of edges in $G$ with one endpoint in $S$, and define the \emph{cost of $H$} to be 
\[
\cost(H)=|\delta(V(H))|+|E(G[V(H)] \setminus E(H)|
\]
i.e.\ the number of edges that need to be deleted to create and ``cut off'' $H$ from the rest of $G$. We define a \emph{domination order} on balanced subgraphs of $G$: given two balanced subgraphs $H_1$ and $H_2$, we say that $H_2$ \emph{dominates} $H_1$ if $V(H_1) \subseteq V(H_2)$ and $\cost(H_1) \geq \cost(H_2)$, and \emph{strictly dominates} if one of the inequalities is strict. We say that $H$ is an \emph{important balanced subgraph} of $G$ if no balanced subgraph strictly dominates $H$. Note that distinct subgraphs can be \emph{domination equivalent}, i.e.\ mutually dominate each other. Thus, we lift the definition: for a subset $S \subseteq V(G)$, we define $\cost(S)$ as the minimum cost of a balanced subgraph $H$ of $G$ with $V(H)=S$, and say that $S$ is an \emph{important subset} if there is an important balanced subgraph $H$ with $V(H)=S$. It is known that for every vertex $v \in V(G)$ there are at most $4^k$ important \emph{connected} subsets (i.e.\ $G[S]$ is connected) containing $v$ of cost at most $k$, and they can be enumerated in $O^*(4^k)$ time~\cite{minlin25talg}. 
However, if $S$ (respectively $H$) is not required to be connected, then the situation is much more intangible. 

Now the setting is as follows. Given a graph $G$, with an unknown balanced subgraph $H$ of cost at most $k$, not necessarily connected, we wish to \emph{cover} $H$ by another balanced subgraph $H'$, such that $V(H) \subseteq V(H')$ and $H'$ (in some sense) captures the cost of $H$. 
We show the following. 

\begin{restatable}{theorem}{thmBiasSample} \label{ithm:balanced-shadow-sampling}
      There is a randomized algorithm that 
      takes as input a biased graph $(G,\BB)$ 
      and an integer $k$, with oracle access to $\BB$,
      and in $\bigoh^*(4^k)$ time outputs 
      a vertex set $S \subseteq V(G)$ and edge set
      $F \subseteq E(G)$ such that the following hold.
      \begin{enumerate}
        \item $\delta_G(S) \subseteq F$.
        \item $(G-F)[S]$ is balanced.
        \item For every (not necessarily connected) balanced
          subgraph $H$ of $G$ of cost at most $k$,
          with probability at least $1/2^{\bigoh(k^2)}$ we have (1) $V(H) \subseteq S$ and (2) there are at most $k$ edges in $F$ with an endpoint in $V(H)$.
      \end{enumerate}
\end{restatable}

For a set $U \subseteq V$, refer to a \emph{cleaning set} for $U$ as $F_U \subseteq E(G)$ such that $G[U]-F_U$ is balanced. Returning to bipartite graphs, the theorem then informally says: for any bipartite subgraph $H$ of $G$, with $\cost(H) \leq k$, 
with probability $1/2^{O(k^2)}$ the set $F$ that is returned contains an optimal cleaning set for $V(H)$. 
The result can be derandomized by using \emph{cover-free families}~\cite{BshoutyG17cff}.
In the case when $H$ is assumed to be an important balanced subgraph, we can summarize Theorem~\ref{ithm:balanced-shadow-sampling} in a simpler way.

\begin{restatable}{corollary}{corSampleImportant}\label{icor:sample-important}
  There is a randomized algorithm that 
  takes as input a biased graph $(G,\BB)$ and an integer $k$,
  and in $\bigoh^*(4^k)$ time outputs 
  a balanced subgraph $G_B$ of $G$ 
  such that for any important balanced subgraph $H$ of cost at most $k$,
  with probability at least $1/2^{\bigoh(k^2)}$
  every connected component of $H$ is domination equivalent to a
  connected component of $G_B$. In particular, in this case
  we have $V(H) \subseteq S$ and $N_G(V(H)) \cap S =
  \emptyset$.
\end{restatable}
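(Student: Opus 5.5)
We derive the corollary from Theorem~\ref{ithm:balanced-shadow-sampling}. Run that algorithm to obtain $S$ and $F$, and output $G_B := (G-F)[S]$. Property~(1) gives $\delta_G(S)\subseteq F$ and property~(2) says $(G-F)[S]$ is balanced, so $G_B$ is a balanced subgraph of $G$. Every connected component $C$ of $G_B$ is then cut off from the rest of $G$ by edges of $F$. Its cost is
\[
\cost(C) = |\delta_G(V(C))| + |E(G[V(C)])\setminus E(C)|,
\]
and all of these edges lie in $F$ and have an endpoint in $V(C)$. Now fix an important balanced subgraph $H$ of cost at most $k$, and condition on the good event of property~(3), which has probability $2^{-O(k^2)}$. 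On this event $V(H)\subseteq S$, and the set $F_H$ of edges of $F$ with an endpoint in $V(H)$ has size at most $k$.

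The key step is to show that on this event no component $C$ of $G_B$ meets both $V(H)$ and $V(G)\setminus V(H)$. Suppose some component does, and let $D$ be the union of all components of $G_B$ that meet $V(H)$. Then $D$ is balanced and $V(D)\supsetneq V(H)$. Every edge counted in $\cost(D)$ lies in $F$ and has an endpoint in $V(D)$. We would like to bound $\cost(D)\le \cost(H)$ (or $\le|F_H|$), which would make $D$ strictly dominate $H$ and contradict importance. The difficulty is that $F$-edges incident to $V(D)\setminus V(H)$ are not controlled by property~(3). This is the main obstacle, and I expect to resolve it by a submodularity/uncrossing argument rather than a direct count.

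Let $H^\ast$ be a minimum-cost balanced subgraph on $V(H)$, and let $D$ be as above. I would compare $\cost(H^\ast)+\cost(D)$ with the cost of the balanced subgraph on $V(H)\cup V(D)=V(D)$ (namely $D$ itself) plus the cost of a balanced subgraph on $V(H)\cap V(D)=V(H)$. The cut function $|\delta(\cdot)|$ is submodular. Edges inside the union that are deleted in either graph can be charged to $F_H$ or to $D$'s deletions. The hope is an inequality of the form $\cost(D)\le|F_H|\le\cost(H)$. The intuition is that $G_B$ restricted around $V(H)$ deletes at most $k$ edges touching $V(H)$, and extending $H$ to $D$ only removes boundary edges between $V(H)$ and $V(D)\setminus V(H)$ while adding boundary edges that are already in $F$. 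If this charging fails, I would instead use the stronger internal structure of the proof of Theorem~\ref{ithm:balanced-shadow-sampling}, in which $F$ is built from important connected balanced subgraphs and sampled shadows. In particular, an important $H$ is never "shadowed" by extra vertices.

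Once no component of $G_B$ straddles $V(H)$, each component $H_i$ of $H$ lies in a union of components of $G_B$ contained in $V(H)$. By connectivity of $H_i$, and because components of $G_B$ are separated by $F$-edges, I would then show that these components coincide vertex-wise with the components of $H$. Equal vertex sets together with $\cost(C)\le\cost(H_i)$, plus importance of $H$ in the other direction, give domination equivalence. The final claims then follow directly. $V(H)\subseteq S$ holds by construction. $N_G(V(H))\cap S=\emptyset$ holds because any neighbour in $S$ would lie in a component of $G_B$ meeting $V(H)$ (if the edge is not in $F$), or would give a component straddling $V(H)$. This is exactly what the equivalence argument excludes, using importance once more to absorb the cut-edge case.
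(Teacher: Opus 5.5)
Your plan to derive the corollary from Theorem~\ref{ithm:balanced-shadow-sampling} as a black box cannot succeed. The step you flag as the main obstacle (no component of $G_B$ meets both $V(H)$ and its complement) is false under the guarantees you condition on, so no submodularity or charging argument can prove it.

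Property~(3) gives only $V(H)\subseteq S$ and at most $k$ (not $\cost(H)$) edges of $F$ touching $V(H)$. It says nothing about vertices outside $V(H)$. For example, let $G$ be an edge $ab$ together with $k+1$ triangles $b c_j d_j$ glued at $b$, and let $\BB=\emptyset$. Then $H=(\{a\},\emptyset)$ is an important balanced subgraph of cost $1$, since any balanced subgraph containing $b$ has cost at least $k+1$. The pair $S=V(G)$, $F=\{bc_j : j\in[k+1]\}$ satisfies (1) and (2), and the event of (3) holds for $H$ because no edge of $F$ touches $a$. Yet $G_B=(G-F)[S]$ is a single tree of cost $k+1$ containing $a$. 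So $\{a\}$ is domination equivalent to no component of $G_B$, and $b\in N_G(V(H))\cap S$.

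Two further steps also fail. When $\cost(H)<k$, property~(3) allows $F$ to contain a non-minimum cleaning set of $G[V(H)]$, possibly splitting a component of $H$; so your step ``$V(C)=V(H_i)$ and $\cost(C)\le\cost(H_i)$'' is unsupported. And your derivation of $N_G(V(H))\cap S=\emptyset$ from non-straddling fails whenever the edge to the neighbour lies in $F$, since the neighbour can then sit in a separate component of $G_B$.

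The missing idea is to condition on a stronger event inside \RandomCover, which is what the paper does; your fallback sentence points this way but does not carry it out.
\begin{itemize}
\item By Lemma~\ref{lem:important-components}, the vertex sets $K$ of the components of $H$ are important connected subsets of cost at most $k$. There are at most $k$ of them, and they are pairwise non-adjacent.
\item Let $\cB_H$ be the members of $\cX$ meeting $N_G(V(H))$. Condition on the event $\cA_H\subseteq\cX'$ and $\cB_H\cap\cX'=\emptyset$, which has probability at least $4^{-k^2}\cdot 4^{-k}$.
\item On this event $U'\cap N_G(V(H))=\emptyset$, so $N_G(V(H))\cap S=\emptyset$ immediately. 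Each $K$ is chosen, and every other chosen set, being connected, either equals $K$ or is disjoint from and non-adjacent to $K$.
\item Hence the edges of $F$ touching $K$ are exactly $\delta_G(K)\cup F_K$, where $F_K$ is the \emph{minimum} cleaning set supplied by Lemma~\ref{lem:important-subsets}.
\item So the component of $G_B$ on $K$ is $G[K]-F_K$. It is connected, since otherwise some edge of $F_K$ could be restored as a bridge, contradicting minimality. Its cost is $\cost(K)$, which equals $\cost(H[K])$ because $H$ is important.
\end{itemize}
The two facts doing the work, minimality of $F_K$ and $U'\cap N_G(V(H))=\emptyset$, are internal to the algorithm. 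The statement of Theorem~\ref{ithm:balanced-shadow-sampling} does not provide them.
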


This result is a generalization of \emph{shadow covering} (a.k.a.\ \emph{shadow removal}) for edge cuts in undirected graphs, unifying the concept with the \emph{important connected balanced subgraphs} of Dabrowski et al.~\cite{minlin25talg}. The shadow covering procedure was originally created by Marx and Razgon, under the name \emph{random sampling of important separators}, for an FPT algorithm for \textsc{Multicut}~\cite{marx2014fixed}, and has become an important part of the toolbox for FPT algorithms for graph separation problems (see e.g.~\cite{ChitnisHM13dmwc,chitnis2015directed,HatzelJLMPSS23}).
The shadow covering result was subsequently improved and generalized by Chitnis et al.~\cite{chitnis2015directed}.
Our procedure matches the success probability of~\cite{chitnis2015directed}; the success probability for shadow covering in undirected graphs was very recently improved to $1/2^{\bigoh(k \log k)}$~\cite{chu2026faster}. 

However, for our application the formulation in Theorem~\ref{ithm:balanced-shadow-sampling} is more relevant than Corollary~\ref{icor:sample-important}. We will use the theorem for the setting of group-labelled graphs, where the group is the multiplicative group of $\ZZ_{p^d}$ for various $p^d$; see below.

\subsection{Algorithm for $\minlinz{2}{m}$}
\label{sec:intro-algorithm}

Recall from the technical overview that it is sufficient to
prove the following result in order to prove the positive
part of Theorem~\ref{thm:main}.

\begin{restatable}{theorem}{thmZpd}\label{thm:zpd_is_fpt}
  $\minlin{2}{\ZZ_{p^d}}$ is in \FPT for every prime $p$ and $d \geq 1$.    
\end{restatable}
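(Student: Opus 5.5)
We follow the outline of the technical overview. First, by \cite[Lemma~5]{Dabrowski:etal:esa2025}, reduce to \emph{special} instances: variables $V$, a crisp constraint $s=1$, and soft equations of the forms $u=av$ with $a$ a unit of $\ZZ_{p^d}$ and $u=pv$, with parameter $k$. We then proceed by iterative compression, or simply by guessing, to assume a solution $Z$ with $|Z|\le k$ and assignment $\varphi$ exists. Each value $\varphi(v)$ is described by its \emph{valuation} (the largest $i$ with $p^i \mid \varphi(v)$) together with its residues modulo $p^j$ for $j=1,\dots,d$. This suggests the Boolean relaxation: for each $v$, each $j\le d$ and each residue class $c \bmod p^j$, introduce a Boolean variable $\bvar{v}{\cdot}{c}{j}$ meaning ``$\varphi(v)\equiv c \pmod{p^j}$''. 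Add bijunctive consistency clauses: implications from level $j$ to level $j-1$, and pairwise exclusions within a level. Each equation $u=pv$ becomes a bundle of implications $\bvar{v}{}{c}{j}\to\bvar{u}{}{pc}{j+1}$ and their converses. A unit equation $u=av$ at level $j$ is likewise a bundle of implications $c\mapsto ac$. Each original equation is encoded as a bundle of 2-clauses that costs $1$ when violated. By Kim et al.~\cite{Kim:etal:sicomp2025}, \MinSat{} over bijunctive languages with such bundles is in FPT, which gives the backend solver.

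The problem is that this relaxation only ever forces each variable into a \emph{coset} $c+p^j\ZZ_{p^d}$, where $j$ may be less than $d$. Such an ambiguous assignment can satisfy a unit constraint $u=av$ ``up to level $j$'' while no lift to actual values satisfies all the unit constraints simultaneously. The typical failure is a cycle of unit edges whose product of labels is $\neq 1$ modulo the relevant power of $p$. To rule this out, for each level $j$ consider the auxiliary group-labelled graph $G_j$ over the multiplicative group of $\ZZ_{p^{d-j}}$, whose edges are the unit equations restricted to variables that are ambiguous at level $j$. In the intended solution, the set of variables whose value is determined only up to $p^j$ induces a subgraph $H_j$ that must be balanced, since otherwise refining would be forced. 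Moreover $\cost(H_j)$ is bounded in terms of $k$, because its boundary edges are unit equations whose endpoints are resolved differently and so must lie in $Z$ or be cut by $u=pv$ structure. Apply Theorem~\ref{ithm:balanced-shadow-sampling} to each $G_j$ to obtain $(S_j,F_j)$. With probability $2^{-O(dk^2)}$, every $H_j$ is covered, i.e.\ $V(H_j)\subseteq S_j$ and at most $k$ edges of $F_j$ touch $V(H_j)$. Since $(G_j-F_j)[S_j]$ is balanced, fix a potential (a consistent ``gauge'') on each component. Then add clauses to the Boolean formula forcing: (i) variables outside $S_j$ cannot be ambiguous at level $j$; and (ii) ambiguous variables inside a component must use cosets compatible with the gauge, where edges of $F_j$ are charged as deletions.

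Finally, prove the disambiguation lemma. Given any assignment to the strengthened formula of cost $\le k$, we construct $\varphi\colon V\to\ZZ_{p^d}$ violating no more equations. Go level by level: within each balanced component, choose a representative in the coset of one root vertex and propagate it along the potential. Balancedness makes this consistent, and constraints $u=pv$ remain satisfied because the coset of $u$ is determined by that of $v$. Conversely, the intended $\varphi$ yields a satisfying Boolean assignment of cost $\le k$ when sampling succeeds. Amplify the success probability by repetition, or derandomize via cover-free families, which gives an FPT algorithm. The main obstacle is making step (ii) and the disambiguation exact. We must show that the cost of the hidden subgraphs $H_j$ is genuinely at most $O(k)$, and that the gauge constraints are expressible as bijunctive bundles without losing the intended solution. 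I would first work this out carefully for $\ZZ_4$ (a single level, group $\{1,3\}$, so each $G_j$ is a signed graph) and then induct on $d$.
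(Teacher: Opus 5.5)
\textbf{There is a genuine gap in the disambiguation step, and it appears exactly when $d\ge 3$.} Your auxiliary graphs $G_j$ contain only the unit equations, with one vertex per variable. So they cannot detect inconsistent cycles that pass through equations $u=pv$. Your claim that ``constraints $u=pv$ remain satisfied because the coset of $u$ is determined by that of $v$'' fails when both endpoints are ambiguous and are refined independently, in different unit components.

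Concretely, take $\ZZ_8$ with $k=0$ and the equations $x=2u$, $y=2v$, $v=u$, $x=3y$, together with $w=2x$, $w=z_2$, $z_2=2z_1$, $z_1=2s$, $s=1$. The last five force $4u=4$, i.e.\ $u$ odd. A satisfying assignment would need $2u=x=3y=6u$, i.e.\ $u$ even, so this is a no-instance. Yet the relaxation has a cost-$0$ solution with $u,v\in\{1,3,5,7\}$ and $x,y\in\{2,6\}$. Your unit-only graph is a matching, hence balanced, so neither the covering nor your annotations (i)--(ii) exclude this solution.

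\textbf{The missing idea is the paper's layered graph.} For $0\le i\le d-2$, $G_i$ is labelled by the unit group of $\ZZ_{p^{d-i}}$ and has copies $v^0,\dots,v^i$ of every variable, one per possible order. An equation $r\cdot u=v$ gives edges $u^jv^j$ labelled $r \bmod p^{d-i}$, and $p\cdot u=v$ gives edges $u^jv^{j+1}$ labelled $1$. In the example, $u^0x^1y^1v^0$ is a cycle of $G_1$ with gain $3\neq 1$ in the unit group $\{1,3\}$ of $\ZZ_4$.

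The hidden subgraph $H_i$ must be defined from the true assignment $\alpha$: it is induced by the copies $v^j$ with $j=\ord(\alpha(v))\le i$, minus the edges of violated equations. It is balanced, with cost at most $2k$, because $v^j\mapsto\sind{\alpha(v)}{d-i}$ is a consistent potential. Your justification via variables ``determined only up to $p^j$'' instead refers to ambiguity of a relaxed solution, which does not exist yet when the covering is computed. For $d=2$ only $G_0$ is needed and no $p$-edges arise, so your planned $\ZZ_4$ warm-up will not reveal this problem.

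\textbf{Three further steps also need repair.}
\begin{itemize}
\item Edges of $F_j$ cannot be ``charged as deletions'': $F_j$ is unbounded, and only $O(k)$ of its edges touch $V(H_j)$. The paper instead forbids order $i$ outside $S_i$. For the set $T_i$ of endpoints of $F'$-edges, it guesses an exact value $g_v$ (only $O(dk)$ of these guesses matter). It then adds crisp clauses forcing any order-$i$ value at $v$ to be the singleton $g_v$, so ambiguous variables never meet $F'$.
\item No gauge constraints are put into the formula at all; ``compatible with the gauge up to a multiplier'' is not bijunctive. Instead, soundness repeatedly raises the minimum weight $\ell$ of ambiguous variables by one digit. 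It does so using a labelling of $(G_{d-\ell-1}-F'_{d-\ell-1}-\ed_{d-\ell-1}(Z))[S']$, with one multiplier chosen per connected component.
\item Bijunctive \MinSat{} with arbitrary bundles of $2$-clauses is not FPT: bundles of two disjoint equivalences already express \textsc{Paired Min Cut}. Kim et al.\ require $2K_2$-free Gaifman graphs. The paper obtains this by conjoining $\bool{K}(u)\wedge\bool{K}(v)$ into each equation constraint $\bool{C}$, which makes its Gaifman graph a clique.
\end{itemize}

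Your residue-class relaxation itself matches the paper's. For $c\not\equiv 0$, the class $c+p^j\ZZ_{p^d}$ is exactly $\coset{s}{j-i}{i}$ with $c=sp^i$; the only extra states are cosets containing $0$, which can be resolved to $0$.
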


When $d = 1$, the ring $\ZZ_p$ is a field,
and FPT algorithms for \textsc{Min}-$2$-\textsc{Lin} over a field are known
(see e.g.~\cite{minlin25talg}).
The first interesting case for us is when $d=2$ and $p=2$,
i.e.\
$\minlin{2}{\ZZ_4}$, which has been an open problem 
resisting attacks with the existing FPT techniques.
We concentrate on this example to illustrate some ideas
behind our algorithm.
As pointed out in the technical overview, we can
focus on special instances. Hence,
we only consider 
instances of $\minlin{2}{\ZZ_4}$ where all binary equations are
of the form $u = 2v \bmod 4$ and $u = 3v \bmod 4$,
and all unary equations are crisp and of the form $s = 1$.
It is worth noting that
the equation $u = 3v \bmod 4$ is symmetric, i.e.\ 
$3x = y \bmod 4$ is equivalent to $x + y = 0 \bmod 4$,
which is in turn equivalent to $x = 3y \bmod 4$.

\paragraph{Non-bipartite obstructions.} 

Before diving into the algorithm,
let us give some examples that hint at the connection
between $\minlin{2}{\ZZ_4}$ and
balanced (in fact, bipartite) subgraph covering.
We will make the link more concrete in the next subsection.
It is not hard to see that $\minlin{2}{\ZZ_4}$
encodes \textsc{Edge Bipartization},
i.e.\ the problem asking whether $k$ edges can
be removed from an undirected graph $G$ to make it bipartite.
Given a graph $G$, define a system of equations $I_G$ over $\ZZ_4$ 
with the vertices of $G$ as the variables and 
equations $3u=v \bmod 4$ for every edge $uv$ in $G$.
Force all variables of $I_G$ to take values in $\{1,3\}$,
e.g.\ by adding equations $2v=2 \bmod 4$ for all $v$ in $k+1$ copies
(where $2v=2$ can be implemented as $2v=w$, $w=2s$, $s=1$ in the restricted instances we consider). %
The only allowed assignments to the endpoints of 
an edge are $(1,3)$ and $(3,1)$.
Thus, any assignment $\alpha : V(G) \to \{1,3\}$ of cost $k$ to $I_G$
can be viewed as a $2$-colouring of $G$ (with the two colours 
being $1$ and $3$) that is proper on all but $k$ edges.
However, $\minlin{2}{\ZZ_4}$ encodes more complicated problems.
Start again with a graph $G$ and an instance $I_G$, as described above.
Pick an arbitrary subset $U$ of the vertices in $G$ and
add equations $2u = 2 \bmod 4$ for all $u \in U$ to $I_G$.
Now the algorithm may ``opt out'' of making some parts of $G$
bipartite and use values $0$ and $2$ on those variables.
However, this requires cutting that part of the graph
away from the bipartite region, and also deleting
all unary constraints $2u = 2 \bmod 4$ for $u$ that fall 
outside the bipartite region.

\begin{figure}
  \centering
  \begin{tikzpicture}[
    thick,
    node/.style={circle, draw, minimum size=6pt, inner sep=0pt},
    black node/.style={node, fill=black},
    white node/.style={node, fill=white},
    red node/.style={node, fill=white}
]

    \coordinate (S1) at (-0.5, 0.5);  %
    \coordinate (S2) at (-0.5, -0.5); %
    \coordinate (S3) at (0.5, -0.5);  %
    \coordinate (S4) at (0.5, 0.5);   %

    \path (S2) ++(198:1) coordinate (P3);
    \path (P3) ++(126:1) coordinate (P4);
    \path (P4) ++(54:1)  coordinate (P5);

    \path (S3) ++(-30:1) coordinate (H3);
    \path (H3) ++(30:1)  coordinate (H4);
    \path (H4) ++(90:1)  coordinate (H5);
    \path (H5) ++(150:1) coordinate (H6);

    \coordinate (C1) at (S4);
    \coordinate (C2) at (S3);
    \coordinate (C3) at (H3);
    \coordinate (C4) at (H4);
    \coordinate (C5) at (H5);
    \coordinate (C6) at (H6);

    \foreach \i in {1,...,5} {
        \pgfmathtruncatemacro{\start}{\i+1}
        \foreach \j in {\start,...,6} {
            \draw (C\i) -- (C\j);
        }
    }

    \draw (S2) -- (S1);
    \draw (S4) -- (S3);

    \draw (S1) -- (S4);
    \draw (S3) -- (S2);

    \draw (S1) -- (P5) -- (P4) -- (P3) -- (S2);
    \draw (S1) -- (P3);

    \node[white node] at (S1) {};
    \node[black node] at (S2) {};
    \node[black node] at (P3) {};
    \node[black node] at (P4) {};
    \node[black node] at (P5) {};

    \node[white node] at (S3) {};
    \node[white node] at (S4) {};
    \node[white node] at (H3) {};
    \node[white node] at (H4) {};
    \node[black node] at (H5) {};
    \node[white node] at (H6) {};

\end{tikzpicture}
  \hspace{1cm}
  \begin{tikzpicture}[
    thick,
    node/.style={circle, draw, minimum size=6pt, inner sep=0pt},
    black node/.style={node, fill=black},
    white node/.style={node, fill=white},
    red node/.style={node, draw=red, fill=red}
]

    \coordinate (S1) at (-0.5, 0.5);  %
    \coordinate (S2) at (-0.5, -0.5); %
    \coordinate (S3) at (0.5, -0.5);  %
    \coordinate (S4) at (0.5, 0.5);   %

    \path (S2) ++(198:1) coordinate (P3);
    \path (P3) ++(126:1) coordinate (P4);
    \path (P4) ++(54:1)  coordinate (P5);

    \path (S3) ++(-30:1) coordinate (H3);
    \path (H3) ++(30:1)  coordinate (H4);
    \path (H4) ++(90:1)  coordinate (H5);
    \path (H5) ++(150:1) coordinate (H6);

    \coordinate (C1) at (S4);
    \coordinate (C2) at (S3);
    \coordinate (C3) at (H3);
    \coordinate (C4) at (H4);
    \coordinate (C5) at (H5);
    \coordinate (C6) at (H6);

    \foreach \i in {1,...,5} {
        \pgfmathtruncatemacro{\start}{\i+1}
        \foreach \j in {\start,...,6} {
            \draw (C\i) -- (C\j);
        }
    }

    \draw (S2) -- (S1);
    \draw (S4) -- (S3);

    \draw[red, very thick, dashed] (S1) -- (S4);
    \draw[red, very thick, dashed] (S3) -- (S2);

    \draw (S1) -- (P5) -- (P4) -- (P3) -- (S1); 
    \draw[red, very thick, dashed] (S2) -- (P3);

    \path let \p1 = ($ (S1) + (S2) + (P3) + (P4) + (P5) $), \n1={5} in
          (\x1/\n1, \y1/\n1) coordinate (CPentagon);

    \draw[orange, fill=orange, opacity=0.3] (CPentagon) circle (1.1);

    \node[white node] at (S1) {};
    \node[black node] at (S2) {};
    \node[black node] at (P3) {};
    \node[black node] at (P4) {};
    \node[black node] at (P5) {};

    \node[white node] at (S3) {};
    \node[white node] at (S4) {};
    \node[white node] at (H3) {};
    \node[white node] at (H4) {};
    \node[red   node] at (H5) {};
    \node[white node] at (H6) {};

\end{tikzpicture}
  \caption{An illustration of an instance of $\minlin{2}{\ZZ_4}$ on the left
  and a solution on the right.
  The nodes represent the variables of the instance.
  Every edge $xy$ is an equation of the form $3x = y \bmod 4$
  on its endpoints.
  A node $v$ coloured black is equipped with an equation $2v = 2 \bmod 4$.
  On the right, the red dashed edges and the red node
  show the equations deleted by the solution;
  the orange region encloses the set of variables
  assigned values $1$ and $3$.
  }
  \label{fig:bipartite-cleaning}
\end{figure}
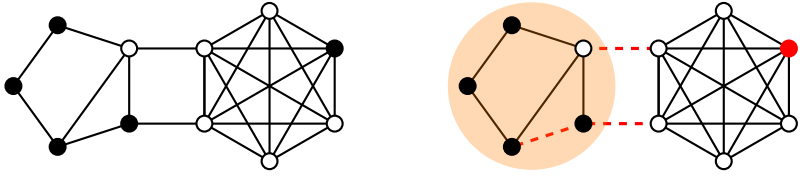

For a concrete example, see Figure~\ref{fig:bipartite-cleaning}.
It depicts an instance $I_G$ for a graph $G$
consisting of a $5$-cycle with a chord joined to 
a $6$-clique by the two horizontal edges.
The set $U$ are the black nodes.
One possible solution to the instance in the illustration
is to delete all unary equations
$2u = 2$ for all $u \in U$, which costs five.
There is a better solution of cost four,
where the vertices of the $5$-cycle receive
values in $\{1,3\}$.
Informally, an optimal solution has to delete
the constraint $2v = 2$ for the vertex $v$ in the $6$-clique because 
carving out a bipartite subgraph from the $6$-clique is too costly.
On the other hand, the $5$-cycle with a chord is almost
bipartite (one edge-deletion away),
and cutting it from the rest of the graph
only requires removing the two horizontal edges.
Intuitively, balanced subgraph covering helps us 
by discovering this information about the graph.
However, there are obstructions to satisfiability of 
$\minlin{2}{\ZZ_4}$ instances beyond non-bipartite
subgraphs, and the technical challenge is in 
integrating balanced subgraph covering
with the rest of the algorithm.

\paragraph{MinCSP relaxation.}

To solve $\minlin{2}{\ZZ_4}$, we
leave the realm of linear equations and consider
a more general problem called \MinCSP{}.
Let $A$ be a set of values called the \emph{domain},
and $\bA$ be a set of relations over $A$.
An instance of $\MinCSP(\bA)$
is a triple $(V, \cC, k)$,
where $V$ is a set of variables,
$\cC$ is a (multi)set of constraints,
and $k$ is an integer.
Each constraint is of the form $R(x_1,\dots,x_r)$,
where $R \in \bA$ is a relation of arity $r$, 
and $x_1,\dots,x_r$ are (not necessarily distinct) 
variables from $V$.
An assignment $\alpha : V \to A$ satisfies 
a constraint $R(x_1,\dots,x_r)$
if $(\alpha(x_1),\dots,\alpha(x_r)) \in R$,
and violates it otherwise.
The question is whether $(V, \cC)$ admits
an assignment that violates at most $k$ constraints.

To cast $\minlin{2}{\ZZ_4}$ as a \MinCSP{},
let $A_4 = \{0,1,2,3\}$ and $\bA_4 = \{R_3, R_2, U_1\}$
where
$R_3 = \{(0,0), (1,3), (2,2), (3,1)\}$,
$R_2 = \{(0,0), (1,2), (2,0), (3,2)\}$, and
$U_1 = \{1\}$.
An equation $3u = v$ is then equivalent
to a constraint $R_3(u,v)$,
an equation $2u = v$ to
$R_2(u,v)$, and
a unary equation $v = 1$
to $U_1(v)$.
The special case of \MinCSP{} with Boolean domain,
i.e.\ $\{0,1\}$, is called \MinSat{}. 
A common approach in FPT algorithms (for 
\MinCSP{}s and other problems) 
is to encode them as \MinSat{}
using Boolean relations
that can be handled by the general algorithm
of Kim~et~al.~\cite{Kim:etal:sicomp2025}.
However, this approach does not seem directly feasible
for solving $\MinCSP(\bA_4)$.
Instead, the same pipeline can be used to handle
a relaxation of the problem in FPT time.

Let $A^*_4 = \{0,1,2,3,\unit\}$, 
i.e.\ $A_4$ extended with a new \emph{ambiguous} value $\unit$, 
which is interpreted as ``1 or 3''.
The relaxation of $R_3$ is then $R^*_3 = R_3 \cup \{(\unit,\unit)\}$, and
the relaxation of $R_2$ is $R^*_2 = R_2 \cup \{(\unit,2)\}$.
We define $\bA_4^* = \{R^*_2, R^*_3, U_1\}$.
The reduction from $\MinCSP(\bA^*_4)$ to \MinSat{} is deferred to 
Section~\ref{ssec:boolean-transformation}:
here we mention without proof that $\MinCSP(\bA^*_4)$ is in \FPT.
Note that $\MinCSP(\bA_4^*)$ is a proper relaxation of $\MinCSP(\bA_4)$:
on the one hand, every yes-instance of $\MinCSP(\bA_4)$
is automatically a yes-instance of $\MinCSP(\bA^*_4)$.
However, $\MinCSP(\bA^*_4)$ accepts strictly more instances:
for example, the set of equations
\begin{equation}
  \label{eq:triangle_on_a_stick}
  s = 1, \ 2s = v, \ 2u_1 = v, \ 3u_1 = u_2, \ 3u_2 = u_3, \ 3u_3 = v_1
\end{equation}
has no zero-cost assignment in $\ZZ_4$ because $u_1, u_2, u_3$ 
are forced to take values in $\{1,3\}$,
and the last three equations form an odd cycle.
On the other hand, the relaxation admits a zero-cost assignment
that maps $s$ to $1$, $v$ to $2$ and 
$u_1,u_2,u_3$ to $\unit$.

Intuitively, the only obstructions that $\MinCSP(\bA_4^*)$ 
``misses'' are the odd cycles of $R_3$-constraints, 
such as the triangle in the example above.
In this case, the relaxed solution is allowed to be ambiguous,
while a proper solution is not.
Our idea is to add more constraints to the relaxed instance
to handle these obstructions, which would allow
\emph{disambiguating} any relaxed solution without increasing cost.
Here, important balanced subgraphs come into play.
Construct the \emph{unit graph} $G$ 
with the variables of $I$ as vertices and 
an edge $uv$ for every equation $3u = v$ in $I$.
Fix a hypothetical optimal assignment $\alpha : V(I) \to \ZZ_4$ 
and let
$V_{1,3}$ be the set of vertices taking values $1$ and $3$
in $\alpha$.
Then, the subgraph $H$ of $G[V_{1,3}]$ obtained 
after removing the edges corresponding to 
the constraints violated by $\alpha$ must be bipartite.
Since at most $k$ edges are removed,
$H$ is a balanced subgraph of cost at most $k$ in
the biased graph $(G, \BB)$, where
$\BB$ is the family of even cycles.

Applying the algorithm from
Theorem~\ref{ithm:balanced-shadow-sampling}
to $(G, \BB)$ with parameter $k$,
we expect to cover $H$ with sufficient probability.
In particular,
we obtain a pair $(S, F)$ such that
$(G - F)[S]$ is bipartite, and
with sufficient probability,
$V_{1,3} \subseteq S$ and $|V(F) \cap V_{1,3}| \leq k$,
i.e.\
all variables taking values $1$ and $3$  are 
covered by $S$ and at most $k$ variables taking
values $1$ and $3$ are endpoints
of the edges in $F$.
Now, we can modify the relaxation to take care of
all non-bipartite obstructions.
First, we remark without proof that the unary relations
$U_{0,2} = \{0,2\}$, $U_{0,2,1} = \{0,2,1\}$ and
$U_{0,2,3} = \{0,2,3\}$ can be added to $\bA^*_4$ 
while maintaining that $\MinCSP(\bA^*_4)$ is in \FPT.
We add the following annotations to the instance:
\begin{itemize}
  \item For every variable $v$ outside $S$,
  add constraint $U_{0,2}(v)$ in $k + 1$ copies,
  forbidding $v$ from taking values $1$, $3$ and $\unit$.
  \item Then, for every variable $v \in V(F)$,
  guess a value $g(v) \in \{1,3\}$ uniformly at random
  and add constraint $U_{0,2,g(v)}(v)$ in $k+1$ copies.
\end{itemize}
With probability $1/2^k$, 
$\alpha$ is an optimal assignment to the relaxation
with the additional constraints:
indeed, no variable outside $S$ takes value in $\{1,3\}$,
and only $\leq k$ variables in $V(F)$ take a value in $\{1,3\}$, so 
only $\leq k$ of our guesses need to be correct.
Thus, if we start with a yes-instance,
with sufficient probability it remains a yes-instance after
the annotations are added.

What we gain from the annotations is that every
assignment $\alpha^* : V(I) \to \{0,1,2,3,\unit\}$
can be disambiguated
into a proper assignment $\alpha : V(I) \to \{0,1,2,3\}$
without increasing the cost.
To be specific,
recall that $(G - F)[S]$ is bipartite
and fix a consistent labelling $\lambda : S \to \{1,3\}$.
Let $\alpha(v) = \alpha^*(v)$ whenever
$\alpha^*(v) \in \{0,1,2,3\}$, and let
$\alpha(v) = \lambda(v)$ if $\alpha^*(v) = \unit$.
The correctness follows by observing
that the variables assigned $\unit$ by $\alpha^*$
must belong to a bipartite subgraph in $(G-F)[S]$.
Indeed, every such variable must be in $S$,
and it cannot be connected to 
an endpoint of an edge in $F$
because the annotations and $R^*_3$-constraints 
would force it to take value $1$ or $3$ but not $\unit$.
Thus, it is safe to use $\lambda$ on all 
variables assigned $\unit$.

\paragraph{Generalizing to $\ZZ_{p^d}$.}

We switch our attention to
the general problem $\minlin{2}{\ZZ_{p^d}}$.
In essence, the algorithm for $\minlin{2}{\ZZ_4}$ 
already contains the main ingredients --
the relaxation (solved by a reduction to \MinSat{}) and
the annotations using balanced subgraph covering.
However, adapting them, especially for the case with $d > 2$
requires further insights.
To generalize the relaxation,
we add an ambiguous value for every subset of $\ZZ_{p^d}$
that excludes zero and
can be defined using a binary linear equation.
These turn out to be \emph{proper cosets},
i.e.\ subsets of values $\{i + r \cdot p^j : r \in \ZZ_{p^d}\}$
defined for every $i \neq 0$ and $0 \leq j \leq d-1$.
For example, for $\ZZ_{9}$ we would
introduce two ambiguous values in the relaxation:
$\overline{147}$ and $\overline{258}$.
For $\ZZ_{8}$, the ambiguous values are
$\overline{1357}$, $\overline{15}$, $\overline{26}$ and $\overline{37}$.

In case $d=2$, the use of balanced subgraphs
generalizes in a straightforward way.
For $\ZZ_4$, we can view bipartite subgraph
covering as a group-labelled graph problem
with the group being $\{1,3\}$ equipped with
multiplication modulo $4$.
By analogy, for $\ZZ_9$ we would use the group
$\{1,2,4,5,7,8\}$ equipped with
multiplication modulo $9$:
these elements are the \emph{units} of $\ZZ_9$,
i.e.\ the elements that admit multiplicative
inverses modulo $9$.
The odd cycle obstructions in $\ZZ_4$ generalize
to non-identity cycles in $\ZZ_9$, i.e.\ cycles 
where the edge labels are units and
their product is not $1$ modulo $9$.

\begin{figure}
    \centering
    \begin{tikzpicture}[
    scale=0.7,
    vertex/.style={circle, draw, minimum size=7mm, inner sep=1pt},
    >={Latex[length=3mm, width=2mm]} 
]
    \node[vertex] (s)  at (0, 0)  {$s$};
    \node[vertex] (w)  at (2.5, 0){$w$};
    
    \node[vertex] (u1) at (5, 0)  {$u_1$};
    \node[vertex] (u2) at (9, 0) {$u_2$}; 

    \node[vertex] (v1) at (6, 1.5)  {$v_1$};
    \node[vertex] (v2) at (8, 1.5) {$v_2$};

    \draw[->] (s)  -- node[above] {$2$} (w);
    \draw[->] (u1) -- node[above] {$2$} (w);
    \draw[-]  (u1) -- node[above] {$\mathbf{5}$} (u2);
    
    \draw[->] (u2) to[bend left=20] node[below] {$2$} (w);

    \draw[->] (u1) -- node[above left]  {$2$} (v1);
    \draw[->] (u2) -- node[above right] {$2$} (v2);
    \draw[-]  (v1) -- node[above]       {$7$} (v2);

    \node[vertex] (ss)  at (11 + 0, 0)  {$s$};
    \node[vertex] (ww)  at (11 + 2.5, 0){$w$};
    
    \node[vertex] (uu1) at (11 + 5, 0)  {$u_1$};
    \node[vertex] (uu2) at (11 + 9, 0) {$u_2$}; 

    \node[vertex] (vv1) at (11 + 6, 1.5)  {$v_1$};
    \node[vertex] (vv2) at (11 + 8, 1.5) {$v_2$};

    \draw[->] (ss)  -- node[above] {$2$} (ww);
    \draw[->] (uu1) -- node[above] {$2$} (ww);
    \draw[-]  (uu1) -- node[above] {$\mathbf{3}$} (uu2);
    
    \draw[->] (uu2) to[bend left=20] node[below] {$2$} (ww);

    \draw[->] (uu1) -- node[above left]  {$2$} (vv1);
    \draw[->] (uu2) -- node[above right] {$2$} (vv2);
    \draw[-]  (vv1) -- node[above]       {$7$} (vv2);

\end{tikzpicture}
    \caption{An illustration of two instance of $\minlin{2}{\ZZ_8}$
    with common equations 
    $s = 1$, $2s = w$, $2u_1 = 2$, $2u_2 = 2$, 
    $v_1 = 2u_1$, $v_2 = 2u_1$ and $v_1 = 7v_2$.
    The one on the left additionally has an equation
    $u_1 = 5u_2$ and the one on the right has
    $u_1 = 3u_2$.
    Note that the equations $u = av$ and $au = v$
    are equivalent for $a \in \{1,3,5,7\}$, so we
    depict them as undirected edges.
    The first instance is unsatisfiable,
    while the second one is satisfiable.
    }
    \label{fig:z8_instance}
\end{figure}
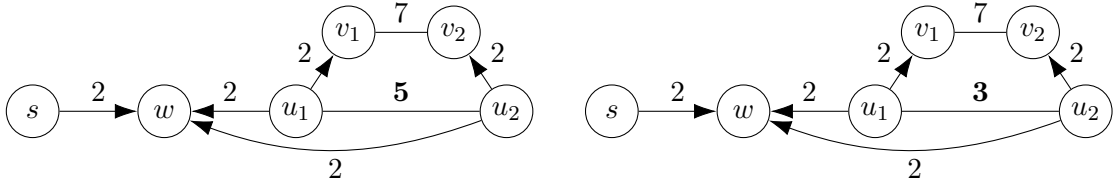

However, for $d > 2$, the situation is more complicated. Consider $\ZZ_8$ and the instances
in Figure~\ref{fig:z8_instance} with the equations
listed in the caption.
To see that the instance on the left has no zero-cost assignments,
note that $2u_1=2$ implies $u_1 \in \{1,5\}$, and by symmetry we can assume
$u_1=1$. Then the two paths from $u_1$ to $v_2$ imply
$v_2=14 \bmod 8=6$ and $v_2=10 \bmod 8=2$, respectively,
and cannot be reconciled in $\ZZ_8$. The cycle on $\{u_1,u_2,v_1,v_2\}$
with $u_1$ being a unit 
forms an obstruction not captured by the unit graph alone
(or indeed by the relaxation).

We handle this via an additional biased graph, which
handles cycles where some variables take values from $\{2,6\}$. Consider the cycle on $\{u_1,u_2,v_1,v_2\}$ in the two instances
in Figure~\ref{fig:z8_instance}. In the first instance, it is unsatisfiable
when $u_1$ is a unit, but is satisfied by $u_1=u_2=2$, $v_1=v_2=4$. 
The second instance is satisfiable (e.g.\ $u_1=1$, $u_2=3$, $v_1=2$, $v_2=6$),
but would be unsatisfiable over e.g.\ $\ZZ_{16}$. The fact that the
cycle involves a factor of $2$ implies that it only needs to be
consistent mod $2^{d-1}$. As a general rule for $\ZZ_{p^d}$,
for a cycle $C$ in the primal graph of the instance (i.e.\ involving both unit edges
and constraints $u=pv$), if a variable of the cycle takes a value
divisible by $p^i$, $i<d$, then the cycle only needs to be balanced
mod $p^{d-i}$; effectively, if we interpret the values as being written in base $p$,
then we ``lose'' $i$ digits of precision due to values being taken mod $p^d$. 
Thus, for every $i<d$ we introduce an auxiliary biased graph
$(G_i,\BB_i)$ which handles cycles on variables taking values divided
by at most $i$ powers of $p$, and with edge labels interpreted mod $p^{d-i}$. 

For the construction of $G_i$, we let every variable $v$ be
represented by $i+1$ vertices $v^0$, \ldots, $v^i$, where the superscripts
trace the number of powers of $p$ that divide a variable value. 
An equation $u=a \cdot v$ becomes edges $u^jv^j$, $0 \leq j \leq i$,
with label $a \bmod p^{d-i}$, and an equation
$u=p \cdot v$ becomes edges $u^{j+1}v^j$, $0 \leq j < i$,  with label $1$.
Assignments in this graph correspond to cosets, e.g.\ 
$v^j=a$ in $G_i$ for a unit $a \in \ZZ_{p^{d-i}}$
represents $v=a \cdot p^j \pmod {p^{d-i}}$.
One can then show that an unbalanced cycle in $G_i$ involving a vertex
$v^j$ cannot be satisfied by any assignment where $v=a \cdot p^j$
for a unit $a$.

Returning to $\ZZ_8$, we use a covering and annotation strategy in $(G_1,\BB_1)$
similarly as in the unit graph, aiming to cover
the subgraph induced by vertices $v^0$ for the variables $v$ assigned $1,3,5,7$,
and vertices $v^1$ for the variables $v$ assigned $2, 6$
in an optimal solution.
The annotations from the unit graph and $(G_1,\BB_1)$ are then combined
to argue that we can turn an assignment
to the relaxation, which is allowed to use ambiguous values
$\overline{1357}$, $\overline{15}$, $\overline{26}$ and $\overline{37}$, 
into a proper assignment of the same cost.
The labelling of the new graph is used to assign
$2$ or $6$ to the variables with ambiguous value $\overline{26}$,
and values
$\overline{15}$ and $\overline{37}$ to the variables assigned $\overline{1357}$.
The labelling of the unit graph is then used to
choose a value in $\{1,5\}$ and a value in $\{3,7\}$
for the variables assigned $\overline{15}$ and $\overline{37}$, respectively.

To summarize, the high-level recipe for 
$\ZZ_{p^d}$ is to introduce ambiguous values
to the relaxation for every proper coset.
Then, construct $d-1$ graphs with appropriately chosen groups
that capture all non-identity cycle
obstructions in $\minlin{2}{\ZZ_{p^d}}$.
Use the outputs from the covering algorithm to annotate the relaxation,
and then use the labellings of the graphs
to disambiguate the relaxed assignment
without increasing its cost.

\subsection{Hardness of FPT Approximation}
\label{sec:intro-hardness}

Let $m \geq 2$ be an integer.
Dabrowski et al.~\cite{Dabrowski:etal:arxiv2024} showed
that if $m$ is not a prime power, then $\minlin{2}{\ZZ_m}$ is not FPT-approximable within $2-\epsilon$ for any $\epsilon > 0$, unless the \ETH is false. In fact, this result holds for all commutative rings
that are not
{\em lineal} rings~\cite{Marks:Mazurek:ijm2016} and
for rings that are the direct sum of two or more non-trivial rings.
The proof is a reduction from \textsc{Paired Min Cut}; this problem
was shown to be W[1]-hard in~\cite[Theorem~7]{marx2009constant},
and variants of it are often used in hardness reductions.
Dabrowski et al. strengthened this result by 
proving that \textsc{Paired Min Cut}
(in a special form equivalent to \textsc{$2$-Split Min Cut} below)
is not \FPT-approximable within $2-\epsilon$ for any 
$\epsilon > 0$ unless the \ETH is false.
We generalize their result by considering the following
problem.

\pbDefP{\pSplitng}
{A graph $G$ with special vertices $s, t$ such that $G-\{s,t\}$ is the disjoint union of 
$p$ graphs $G_1,\dotsc,G_p$, 
a set of pairwise disjoint edge sets (a.k.a. \emph{bundles}) $\cB \subseteq \binom{E(G)}{1}
\cup \binom{E(G)}{p}$, each set containing at most one edge from each
graph $G_i$,
and an integer $k$.}
{$k$.}
{Is there an $st$-cut $Z \subseteq E(G)$ that is the union of $k$ bundles in $\cB$?}

We show that \pSplitng is not FPT-approximable within $p-\epsilon$ under the ETH,
and use this result to improve the $2-\epsilon$ lower bound for
FPT-approximability of $\minlin{2}{\ZZ_m}$ to $\omega(m)-\epsilon$. This
matches the upper bound obtained from our algorithm, and implies the
hardness part of Theorem~\ref{thm:main}. 
Thus, this is the final piece of the puzzle that completes
the picture of FPT-approximability of $\minlin{2}{\ZZ_m}$. 
In the reductions used by Dabrowski et al., 
the step from \textsc{Paired Min Cut} to $\minlin{2}{\ZZ_{m}}$ with $\omega(m) = 2$
generalizes quite naturally to a reduction
from \pSplitng{} to $\minlin{2}{\ZZ_{m}}$ with $\omega(m) = p$.
However, generalizing $(2-\eps)$-inapproximability of \textsc{Paired Min Cut}
to \pSplitng{} with $p > 2$ poses several challenges, which we discuss below.

As in~\cite{Dabrowski:etal:arxiv2024}, we start from 
the \textsc{Max-2-CSP} problem parameterized by the number of variables.
The problem can be equivalently stated as 
\textsc{Multicoloured Subgraph Isomorphism (MSI)}.
In MSI we are given two graphs $G$, $H$
and a partition of $V(G) = \biguplus_{a \in V(H)} V_a(G)$ 
into $|V(H)|$ colour classes, with one class per vertex of $H$.
One should think of $G$ as large and $H$ as small;
the problem is parameterized by $|V(H)|$.
We are looking for a mapping from $H$ to $G$,
and a colour class $V_a(G)$ in the input 
represents the set of possible images of $a$ in $G$.
The goal is to find a mapping $h$
such that $h(a) \in V_a(G)$ for all $a \in V(H)$ 
and $h$ maps as many edges of $H$ to edges of $G$ as possible.

In a recent breakthrough~\cite{Guruswami:etal:jacm2025},
Guruswami~et~al.\ showed that, assuming ETH,
MSI cannot be approximated within any constant factor.
Roughly, for every $0 < \eps < 1$ it is hard
to find a mapping that preserves $\eps \cdot |E(H)|$ edges.
To illustrate the hardness proof, we will describe a reduction from MSI to 
\textsc{Paired Min Cut} (reminiscent of \cite{marx2009constant}), then 
discuss the challenges of extending this reduction to \pSplitng{}
for $p > 2$.
Given a pair of graphs $G$ and $H$ and a partition of 
$V(G)$ into colour classes, we construct 
an instance of \textsc{Paired Min Cut} with a graph $D$
and parameter $k = |E(H)|$.
The edges of $D$ represent the possible mappings of edges 
of $H$ to edges of $G$.
Formally, for every ordered pair of vertices $(a,b)$ in $H$
such that $ab \in E(H)$,
let $E_{a,b}$ be the set of edges $uv$ in $G$ such that $u \in V_a(G)$ and $v \in V_b(G)$.
We create an edge $e_{a,b}^{u,v}$ in $D$ for $uv \in E_{a,b}$.
Note that an edge $uv$ in $G$ corresponds to two edges in $D$: 
$e_{a,b}^{u,v}$ and $e_{b,a}^{v,u}$.
To encourage selecting edges of $G$,
we make a bundle $\{e_{a,b}^{u,v}, e_{b,a}^{v,u}\}$ 
for every edge $uv$ in $G$;
the role of these bundles will be apparent below.

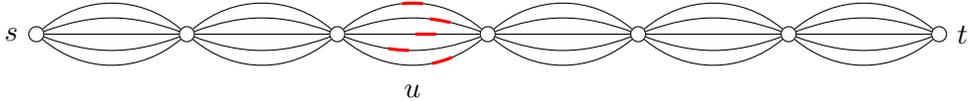
\begin{figure}
    \centering
    \begin{tikzpicture}[
  scale=1,
  vertex/.style={circle, draw, minimum size=2mm, inner sep=1pt}
]
    \node[vertex, label=left:{$s$}] (v0) at (2*0, 0) {};
    \node[vertex] (v1) at (2*1, 0) {};
    \node[vertex] (v2) at (2*2, 0) {};
    \node (v) at (2*2.5, -0.75) {$u$};
    \node[vertex] (v3) at (2*3, 0) {};
    \node[vertex] (v4) at (2*4, 0) {};
    \node[vertex] (v5) at (2*5, 0) {};
    \node[vertex, label=right:{$t$}] (v6) at (2*6, 0) {};

    \draw (v0) to [bend left=40]  (v1);
    \draw (v0) to [bend left=20]  (v1);
    \draw (v0) to                 (v1);
    \draw (v0) to [bend right=20] (v1);
    \draw (v0) to [bend right=40] (v1);

    \draw (v1) to [bend left=40]  (v2);
    \draw (v1) to [bend left=20]  (v2);
    \draw (v1) to                 (v2);
    \draw (v1) to [bend right=20] (v2);
    \draw (v1) to [bend right=40] (v2);

    \draw (v2) to [bend left=40]  (v3);
    \draw (v2) to [bend left=20]  (v3);
    \draw (v2) to                 (v3);
    \draw (v2) to [bend right=20] (v3);
    \draw (v2) to [bend right=40] (v3);

    \path[decorate, decoration={markings, mark=at position 0.5 with {\coordinate (m1); \node[above=-1pt, red] {};}}] (v2) to [bend left=40] (v3);
    \path[decorate, decoration={markings, mark=at position 0.7 with {\coordinate (m2);}}] (v2) to [bend left=20] (v3);
    \path[decorate, decoration={markings, mark=at position 0.6 with {\coordinate (m3);}}] (v2) to                 (v3);
    \path[decorate, decoration={markings, mark=at position 0.4 with {\coordinate (m4);}}] (v2) to [bend right=20] (v3);
    \path[decorate, decoration={markings, mark=at position 0.7 with {\coordinate (m5);}}] (v2) to [bend right=40] (v3);

    \begin{scope}
        \clip (m1) circle (4pt)
              (m2) circle (4pt)
              (m3) circle (4pt)
              (m4) circle (4pt)
              (m5) circle (4pt);
        
        \draw[red, very thick] (v2) to [bend left=40]  (v3);
        \draw[red, very thick] (v2) to [bend left=20]  (v3);
        \draw[red, very thick] (v2) to                 (v3);
        \draw[red, very thick] (v2) to [bend right=20] (v3);
        \draw[red, very thick] (v2) to [bend right=40] (v3);
    \end{scope}

    \draw (v3) to [bend left=40]  (v4);
    \draw (v3) to [bend left=20]  (v4);
    \draw (v3) to                 (v4);
    \draw (v3) to [bend right=20] (v4);
    \draw (v3) to [bend right=40] (v4);

    \draw (v4) to [bend left=40]  (v5);
    \draw (v4) to [bend left=20]  (v5);
    \draw (v4) to                 (v5);
    \draw (v4) to [bend right=20] (v5);
    \draw (v4) to [bend right=40] (v5);

    \draw (v5) to [bend left=40]  (v6);
    \draw (v5) to [bend left=20]  (v6);
    \draw (v5) to                 (v6);
    \draw (v5) to [bend right=20] (v6);
    \draw (v5) to [bend right=40] (v6);

\end{tikzpicture}    
    \caption{
      An illustration for the reduction from
      \textsc{Multicoloured Subgraph Isomorphism} to 
      \textsc{Paired Min Cut}.
      The graph represents a choice gadget
      for a vertex $a$ in $H$.
      The gadget consists of segments, each
      representing a possible image of 
      a vertex from $H$ in the graph $G$.
      Each segment is a union of $|E(H)|$ 
      (which is 5 in the illustration) 
      internally disjoint paths 
      between a pair of vertices.
      A cut corresponding to choosing $u$ in $G$
      as an image of $v$ is depicted using
      red edges.
    }
    \label{fig:msigadget}
\end{figure}

We will construct $D$ in such a way that 
every minimum $st$-cut in $D$ (viewed as a set of deleted edges)
is of size $2|E(H)|$ and 
contains exactly one edge $e_{a,b}^{u,v}$ for every 
ordered pair $(a,b) \in V(H)^2$ with $ab \in E(H)$.
Thus, a minimum $st$-cut may be viewed as a mapping from 
ordered pairs $(a,b)$ in $H$ to their images $(u,v)$ in $G$.
Furthermore, the construction will ensure
the following consistency property:
for every $a,b,c \in V(H)$,
if the cut ``maps'' $(a,b)$ to $(u_1, v_1)$ 
and $(a,c)$ to $(u_2, v_2)$, 
then $u_1 = u_2$, i.e.\ $a$ in the first coordinate
of a pair is always ``mapped'' to the same vertex of $G$.
 
The principal gadget for this construction is illustrated in Figure~\ref{fig:msigadget}, which is 
created for a vertex $a$ in $H$.
It consists of $|V_a(G)|$ segments,
one for each possible image of $a$ in $G$.
Each segment is a pair of vertices
connected by $|\delta_H(a)|$ internally disjoint paths, 
one path for every ordered pair $(a,b)$ with $ab \in E(H)$.
The segments are linked consecutively (in an arbitrary order), 
the first vertex of the first segment is identified with $s$ and the last vertex of the last segment is identified with $t$.
A minimum $st$-cut in this gadget has to cut all paths in one segment, which corresponds to selecting an image in $G$ for the vertex $a$.
We zoom into a segment for a vertex $u \in V_a(G)$.
As mentioned above, each path of the segment
corresponds to an ordered pair $(a,b)$ with $ab \in E(H)$.
The edges on the path are $e_{a,b}^{u,v}$ for all $v \in V_b(G)$
(linked in an arbitrary order).
Note that $a,b$ and $u$ are shared among all the edges of the path.
Now it is easy to verify that constructing $D$ with 
one such gadget for every vertex $a$ in $H$ 
satisfies the specification above.
Moreover, there is a natural way to translate a cut
into a mapping $h : V(H) \to V(G)$:
for every vertex $a$ in $H$, we look at which 
segment is cut in the gadget for $a$, 
and we set $h(a)$ to be the vertex of $G$ corresponding 
to that segment.

Now the role of bundles becomes apparent:
the cost of a minimum $st$-cut in $D$ is 
measured by the number of bundles it intersects,
so it is between $|E(H)|$ and $2|E(H)|$.
Moreover, if the cost is $(2 - \eps) \cdot |E(H)|$,
this means that the cut contains 
at least $\eps \cdot |E(H)|$ bundles.
Each bundle is of the form 
$\{e_{a,b}^{u,v}, e_{b,a}^{v,u}\}$, so if the cut
contains such a bundle, then we know that
$h(a) = u$ and $h(b) = v$, which means that $h$ preserves the edge $ab$.
The total number of edges preserved by $h$ is at least 
$\eps \cdot |E(H)|$, so $(2-\eps)$-approximating \textsc{Paired Min Cut} is at least as hard as $\eps$-approximating MSI.

Both \textsc{Paired Min Cut} and MSI
are concerned with pairs:
in \textsc{Paired Min Cut} the bundles are pairs of edges, and
in MSI we  preserve edges that are pairs of vertices.
In \pSplitng{} with $p > 2$, 
we have bundles of size $p$, 
so a suitable counterpart in MSI are 
$p$-tuples of vertices inducing $p$-cliques.
An important property which is trivially satisfied 
for $p=2$ but stops being true for $p > 2$ 
is that the edge set of any graph can be partitioned into $p$-cliques.
Disjointness of the $p$-cliques is crucial for the correctness argument:
from every bundle intersected by 
the cut in a non-trivial way
(i.e.\ in at least two elements),
we want to deduce that an edge of $H$ preserved by the mapping, 
so different bundles should correspond to different edges of $H$.
A natural idea now is to start with a graph $H$
and drop a constant fraction of edges to make
it $p$-clique-partitionable.
However, this is not possible for arbitrary graphs $H$ so
we cannot continue using the result of
Guruswami~et~al. because their hard instances
are sparse graphs.
Here we depart from the route taken by Dabrowski~et~al.
by starting from a result of
Bafna~et~al.~\cite{Bafna:etal:stoc2025} proving that
MSI is hard even if the graphs $H$ are cliques.
To ensure the $p$-clique-partitionability,
for every $k$  we use Turán's theorem to construct graphs
$\Htk$ with $t \cdot k$ vertices and $\Omega_t(k^2)$ edges 
such that $E(\Htk)$ can be partitioned into $t$-cliques.

The encoding into \pSplitng{} is structurally similar to the one for \textsc{Paired Min Cut}.
We start with a partition $\cP$ of $E(H)$ into $p$-cliques,
and for every $P \in \cP$
and every possible image $I$ of $P$ in $G$,
we create $p$ edges $(e^I_P)_1, \dots, (e^I_P)_p$,
one for each coordinate $i \in [p]$.
In the choice gadget for a vertex $a$ in $H$,
we select an image $u \in V(G)$ for $a$
and images for every $p$-clique of $H$ containing $a$.
The $st$-cuts are thus of size $p \cdot |\cP|$.
The correctness argument is more involved than before.
Informally, if a cut is of cost $(p - \eps) \cdot |\cP|$,
then it intersects at least $\eps \cdot |\cP|$ bundles
in at least two edges;
since the bundles correspond to cliques,
picking up two edges from the bundle 
implies that an edge from the clique is preserved by the mapping.
Thus, the mapping preserves at least 
$(\eps/p) \cdot |\cP|$ edges, which is a constant fraction
of the number of edges in $H$,
and we obtain the following result.

\begin{theorem} \label{thm:intropsplit}
  \pSplitng\ is not FPT-approximable within a factor of $p-\eps$ for
  any $\eps>0$ unless the \ETH is false.
\end{theorem}

\bigskip

\noindent
\large
\textbf{Roadmap.}
\normalsize The structure of the paper closely follows the outline given in the introduction.
We present the necessary preliminaries in Section~\ref{sec:prelims}
and balanced subgraph coverings are introduced in 
Section~\ref{sec:balanced-subgraph-coverings}.
The upper and lower bounds on FPT approximability are collected in
Sections~\ref{sec:upperbounds} and \ref{sec:hardness}, respectively.
We conclude the paper in Section~\ref{sec:discussion} with a discussion of
our results.

\section{Preliminaries}
\label{sec:prelims}

\paragraph*{Groups and Rings.} 
A {\em group} $\Gamma$ is a set 
of elements
together with a binary operation $\circ$
such that (1) $\circ$ is associative, (2) there is a identity element $0$ in $\Gamma$ with respect to $\circ$, and (3) every element $g$ in $\Gamma$ has a unique inverse $g^{-1}$ such that $g \circ g^{-1}=0$.
We say that $\Gamma$ is {\em Abelian} when $\circ$ is commutative.
A {\em ring} is an Abelian group (whose operation $+$ is called {\em addition})
equipped with a second binary operation $\cdot$ (called {\em multiplication}) that is (1) associative, 
(2) is distributive over the addition operation, and (3) has an identity element $1$.
We will exclusively consider {\em commutative rings}, 
where multiplication is a commutative operator. 
Let $R=(R,+,\cdot)$ be such a ring where
$0$ denotes the additive identity element and 
$1$ the multiplicative identity element.
In this paper we will be mostly working with finite rings $\ZZ_m$,
where the elements are integers $\{0,\dots,m-1\}$ and the operations
are defined modulo $m$.
A {\em unit} in $R$ is an invertible element for the multiplication of the ring, i.e. an element $d \in R$ is a unit if there exists $d' \in R$ such that
$d \cdot d'=1$. 
For instance, the units in $\ZZ_4$ are $1$ and $3$.
One may note that the units of $R$ under multiplication form an Abelian group. 

The {\em direct sum} of two rings 
 $R_1=(D_1;+_1,\cdot_1)$ and $R_2=(D_2;+_2,\cdot_2)$ 
 is denoted $R_1 \oplus R_2 = (R; +, \cdot)$.
 Its domain $R$ consists of the ordered pairs 
 $\{(d_1,d_2) \mid d_1 \in D_1, d_2 \in D_2\}$ 
 and the operations are defined coordinate-wise:
 $(d_1,d_2)+(d'_1,d'_2)=(d_1+_1 d'_1,d_2+_2 d'_2)$ and
 $(d_1,d_2) \cdot (d'_1,d'_2)=(d_1 \cdot_1 d'_1,d_2 \cdot_2 d'_2)$.
The following is one (out of many possible) ways
of formulating the well-known Chinese Remainder Theorem.

\begin{proposition}
\label{prop:crt}
Let $m \geq 2$ be an integer and $p_1^{n_1} \cdot \dots \cdot p_\ell^{n_\ell}$ be the prime factorization of $m$. The ring $\ZZ_m$ is isomorphic to the direct sum
$\bigoplus_{i=1}^{\ell} \ZZ_{p_i^{n_i}}$, and the
map $f(x) \mapsto (x \mod p_1,\dots,x \mod p_n)$
defines a ring isomorphism from $\ZZ_m$ to 
$\bigoplus_{i=1}^{\ell} \ZZ_{p_i^{n_i}}$.
\end{proposition}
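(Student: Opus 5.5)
The plan is to prove this version of the Chinese Remainder Theorem with the standard map. One remark on the statement first: as written, the components of $f$ are reduced mod $p_i$. For $f$ to land in $\ZZ_{p_i^{n_i}}$ and be a bijection, the $i$-th component must be $x \bmod p_i^{n_i}$, with $i$ ranging over $1,\dots,\ell$. I will prove the statement with that reading:
\[
f(x) = (x \bmod p_1^{n_1}, \dots, x \bmod p_\ell^{n_\ell}).
\]

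First, I check that $f$ is well defined on $\ZZ_m$. Each $p_i^{n_i}$ divides $m$, so $x \equiv x' \pmod m$ implies $x \equiv x' \pmod{p_i^{n_i}}$ for every $i$. Next, $f$ is a ring homomorphism. Reduction modulo $p_i^{n_i}$ respects addition and multiplication of integers, and the operations in the direct sum are defined coordinate-wise. Moreover, $f(1) = (1,\dots,1)$, which is the multiplicative identity of the direct sum.

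It then suffices to show that $f$ is injective, because the domain and codomain have the same size: $|\ZZ_m| = m = \prod_i p_i^{n_i} = |\bigoplus_i \ZZ_{p_i^{n_i}}|$. For injectivity, suppose $f(x) = f(x')$. Then $p_i^{n_i} \mid x - x'$ for every $i$. The prime powers $p_i^{n_i}$ for distinct primes are pairwise coprime, so their product $m$ divides $x - x'$. This follows from unique factorization, or by induction using the fact that $a \mid c$, $b \mid c$ and $\gcd(a,b) = 1$ together imply $ab \mid c$. Hence $x = x'$ in $\ZZ_m$. An injective map between finite sets of equal size is a bijection, so $f$ is a ring isomorphism.

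No step here is a real obstacle. The only point that needs care is the coprimality argument in the injectivity step, together with reading the component maps as reduction modulo $p_i^{n_i}$ rather than modulo $p_i$.
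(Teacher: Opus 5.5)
Your proof is correct: it is the standard argument for the Chinese Remainder Theorem (well-definedness, homomorphism, injectivity via pairwise coprimality, then equal cardinalities). The paper gives no proof of its own and simply invokes the theorem as well known. You are also right that the components must be reduction modulo $p_i^{n_i}$ with $i$ ranging over $1,\dots,\ell$. The ``mod $p_i$'' and the index $n$ in the statement are typos, and they do no harm in the paper's $\ZZ_{30}$ example only because all exponents there equal $1$.
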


\paragraph*{Linear equations.}

Let $R$ denote a commutative ring.
An expression $c_1 \cdot x_1+\dots+c_r \cdot x_r=c$ is a {\em (linear) equation over $R$}  
if $c_1,\dots,c_r,c \in R$ and $x_1,\dots,x_r$ are variables with domain $R$.
Let $S$ denote a set (or equivalently a system) of equations over $R$.
We let $V(S)$ denote the variables appearing in $S$, and we
say that $S$ is {\em consistent} if there is an assignment
$\varphi : V(S) \rightarrow R$
that satisfies all equations in $S$. 
An instance of the computational problem \textsc{$r$-Lin$(R)$} %
is a system $S$ of equations on at most $r$ variables
over $R$, and the question is whether $S$ is consistent.
Linear equation systems over $\ZZ_m$ are solvable
in polynomial time
and the well-known procedure is outlined, for instance, in~\cite[p. 473]{Arvind:Vijayaraghavan:stacs2005}.

Let us now consider the case when
we allow some equations in an instance to be
soft (i.e. deletable at unit cost) and
crisp (i.e. undeletable).
We study the following computational problem.

\pbDefP{$\minlin{r}{R}$}
{A (multi)set $S$ of equations over $R$ with at most $r$
variables per equation,
a subset $S^\infty \subseteq S$ of crisp equations
and an integer $k$.}
{$k$.}
{Is there a set $Z \subseteq S \setminus S^\infty$ 
such that $S - Z$ is consistent and 
$|Z| \leq k$?}

We use crisp equations merely for convenience 
since they can be modelled by $k+1$ copies of 
the same soft equation.
For an assignment $\alpha : V(S) \to R$,
let $\cost_S(\alpha)$ be $\infty$ if $\alpha$
does not satisfy a crisp equation and
the number of violated soft equations otherwise.
We drop the subscript $S$ when it is clear from context.
We also write $\mincost(S)$ to denote the minimum cost
of an assignment to $S$.

\paragraph*{Graphs.}

We use the following graph-theoretic terminology in what follows.
Let $G$ be an undirected graph.
We write $V(G)$ and $E(G)$ to denote the vertices and edges of $G$, respectively.
For every vertex $v \in V(G)$,
let the {\em neighbourhood of $v$ in $G$} denoted by $N_G(v)$ be the set
$\{u \in V(G) \mid \{u, v\} \in E(G)\}$.
We extend this notion to sets of vertices $S \subseteq V(G)$
in a natural way: $N_G(S)= (\bigcup_{v \in S} N_G(v)) \setminus S$.
If $U \subseteq V(G)$, then the {\em subgraph of $G$ induced
by $U$} is the graph $G'$ with
$V(G')=U$ and $E(G')=\{ vw \mid v,w \in U \; {\rm and} \; vw \in E(G)\}$.
We denote this graph by $G[U]$.
If $Z$ is a subset of edges in $G$, we write $G-Z$ to 
denote the graph~$G'$ with $V(G') = V(G)$ and $E(G') = E(G) \setminus Z$.
For a set of vertices $X \subseteq V(G)$, we denote by 
$\delta_G(X)$ the set of edges in $G$ with exactly one endpoint in $X$.
For $X,Y \subseteq V(G)$, 
an \emph{$(X,Y)$-cut} is a subset of edges $Z$
such that $G - Z$ does not contain a path with 
one endpoint in $X$ and another in $Y$.
When $X,Y$ are singleton sets $X=\{x\}$ and
$Y=\{y\}$, we simplify the notation and write $xy$-cut instead
of $(X,Y)$-cut. 
An edge $e$ in a graph $G$ is called a bridge
if $G - e$ has more connected components than $G$.

\paragraph*{Biased Graphs.}

Biased graphs are combinatorial
objects of importance especially to matroid theory~\cite{Zaslavsky:jctb89}.
To introduce biased graphs,
recall that
a \emph{theta graph} is a graph consisting of two distinct vertices
connected by three internally vertex-disjoint paths. A theta graph is illustrated in Figure~\ref{fig:theta}.
Note that a theta graph contains three cycles,
each formed by the union of two out of the three paths.
A \emph{biased graph} is a pair $(G, \BB)$ where $G$ is an undirected graph
and $\BB$ is a collection of cycles in $G$, called \emph{balanced cycles},
with the following property: for any theta subgraph of $G$,
if two of its cycles are balanced, then so is the third cycle.
For example, the set of even-length cycles in any graph
satisfies this property: if two even cycles $C$ and $C'$
are part of a theta graph, then 
the third cycle $C \triangle C'$ is also even.
In contrast, the set of odd-length cycles does not necessarily
satisfy this property: if two odd cycles $C$ and $C'$
forming a theta graph share one edge, then the third cycle
$C \triangle C'$ has even length.
Note that this is equivalent to the form given in the introduction:
for any unbalanced cycle $C$, and any path $P$ whose endpoints $u, v$ are in $C$ and internal vertices (if any) are disjoint from $C$, at least one of the two new cycles in $C+P$ is also unbalanced.

An important example of biased graphs are \emph{group-labelled graphs},
also called \emph{$\Gamma$-gain graphs} in the literature.
Let $\Gamma$ be a group and $G$ an undirected graph. 
We define an oriented edge in $G$ as a triple $(e,u,v)$ where $e \in E(G)$ 
and $u$ and $v$ are the endpoints of $e$. A group labelling of $G$ with group $\Gamma$
is then a function $\gamma$ from the oriented edges of $G$ to $\Gamma$
satisfying $\gamma(e,u,v)=\gamma(e,v,u)^{-1}$. 
A cycle $C$ in $G$ is \emph{identity} if the product of the labels
of the oriented edges of $C$ (in the orientation and direction of their occurrence on $C$) 
equals the identity element of $\Gamma$.
Note that this does not depend on the starting point or direction of traversal of $C$, and so is a well-defined notion.
We have the following result.

\begin{lemma}%
\label{lem:minlin-fin-to-rbgce}
  Let $G$ be a graph with edges labelled
  by elements of a finite group $\Gamma$.
  Let $\BB$ be the set of identity cycles in $G$.
  Then $(G, \BB)$ is a biased graph and $\BB$ has a polynomial-time
  membership oracle.
\end{lemma}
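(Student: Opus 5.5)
To prove the lemma I would handle the theta property and the oracle separately, and for the theta property I would work with path products rather than cycle products.

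\textbf{Notation.} For a walk $W=(v_0,e_1,v_1,\dots,e_\ell,v_\ell)$ in $G$, let $\gamma(W)=\gamma(e_1,v_0,v_1)\circ\gamma(e_2,v_1,v_2)\circ\dots\circ\gamma(e_\ell,v_{\ell-1},v_\ell)$. Write $W^{-1}$ for the reversed walk. Since $\gamma(e,u,v)=\gamma(e,v,u)^{-1}$ and inverses reverse products, we get $\gamma(W^{-1})=\gamma(W)^{-1}$.

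\textbf{Well-definedness.} Before the theta argument, I would confirm that being an identity cycle does not depend on the starting vertex or direction, as the paper claims. Changing the starting point conjugates the product: $ab=0$ implies $ba=a^{-1}(ab)a=0$. Reversing the direction inverts the product, and the identity is its own inverse.

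\textbf{Theta property.} Let $P_1,P_2,P_3$ be the three internally disjoint paths of a theta subgraph, each oriented from $x$ to $y$, and set $g_i=\gamma(P_i)$. The cycle $C_{ij}$, traversed from $x$ along $P_i$ and back along $P_j^{-1}$, has product $g_ig_j^{-1}$. So $C_{ij}$ is an identity cycle if and only if $g_i=g_j$. If $C_{12}$ and $C_{23}$ are both balanced, then $g_1=g_2=g_3$, so $C_{13}$ is balanced as well. By symmetry this covers every choice of two balanced cycles, so $(G,\BB)$ is a biased graph.

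\textbf{Oracle.} Given a cycle $C$, the oracle checks that it is a cycle of $G$, picks a start vertex and direction, and multiplies the at most $|V(G)|$ labels along $C$ in order. Each multiplication takes constant time, since $\Gamma$ is finite and fixed (a constant-size multiplication table). It then compares the result to the identity. This runs in time polynomial (in fact linear) in the size of $G$.

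There is no real obstacle here. The only point needing care is non-commutativity: products must be taken in traversal order, and the cycle orientation must be fixed consistently, which is why the argument goes through the path products $g_i$ rather than the cycles directly.
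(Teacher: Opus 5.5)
Your proof is correct, and it differs from the paper's only in being self-contained. For the biased-graph property the paper gives no argument and simply invokes \cite[Proposition~5.1]{Zaslavsky:jctb89}. Your path-product argument is the standard proof behind that citation: orient the three theta paths from $x$ to $y$, note that $C_{ij}$ has product $g_i g_j^{-1}$, so balance of $C_{ij}$ is the equation $g_i = g_j$, and equality is transitive. It also handles non-commutative $\Gamma$ correctly.

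You additionally verify that being an identity cycle does not depend on the starting point or direction (via conjugation and inversion), which the paper only asserts. Your oracle argument matches the paper's remark that it suffices to test whether a product of labels is the identity in the finite group $\Gamma$.
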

\begin{proof}
Zaslavsky~\cite[Proposition~5.1]{Zaslavsky:jctb89} has proven
that $(G, \BB)$ is a biased graph.
A membership oracle for a group-labelled
graph only requires the ability to test whether
a product $\gamma_1 \cdot \ldots \cdot \gamma_\ell$
is identity in $\Gamma$, and this is trivial
when $\Gamma$ is finite
\end{proof}

Naturally, the set of balanced cycles in a biased graph
may be exponentially large in the size of the graph,
so it is essential that $\BB$ is provided via an efficient oracle. 

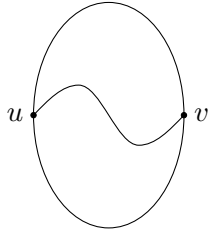
\begin{figure}[bt]
\centering
\begin{tikzpicture}
  \draw (0,0) ellipse (1cm and 1.5cm);
  \filldraw[black] (-1,0) circle (1pt) node[anchor=east]{$u$};
  \filldraw[black] (1,0) circle (1pt) node[anchor=west]{$v$};

  \draw (-1,0) sin (-0.4,0.4) cos (0,0) sin (0.4,-0.4) cos (1,0);
\end{tikzpicture}
\caption{The structure of a theta graph.}
\label{fig:theta}
\end{figure}

\paragraph*{Parameterized Complexity.}
In parameterized 
algorithmics~\cite{book/DowneyF99,book/FlumG06,book/Niedermeier06}
the runtime of an algorithm is studied with respect to
the input size~$n$ and a parameter $p \in \NN$.
The basic idea is to find a parameter that describes the structure of
the instance such that the combinatorial explosion can be confined to
this parameter.
In this respect, the most favourable complexity class is \FPT
(\emph{fixed-parameter tractable}),
which contains all problems that can be decided by an algorithm
running in $f(p)\cdot n^{O(1)}$ time, where $f$ is a computable
function.
Problems that can be solved within such a time bound are said to be 
\emph{fixed-parameter tractable} (FPT).

A {\em parameterized problem} is, formally speaking, a subset of $\Sigma^* \times {\mathbb N}$
where $\Sigma$ is the finite input alphabet. Reductions between parameterized problems need to take
the parameter into account. To this end, we will use {\em parameterized reductions} (or FPT-reductions).
Let $L_1$ and $L_2$ denote parameterized problems with $L_1 \subseteq \Sigma_1^* \times {\mathbb N}$
and $L_2 \subseteq \Sigma_2^* \times {\mathbb N}$. 
A parameterized reduction from $L_1$ to $L_2$ is a
mapping $P: \Sigma_1^* \times {\mathbb N} \rightarrow \Sigma_2^* \times {\mathbb N}$
such that
(1) $(x, k) \in  L_1$ if and only if $P((x, k)) \in L_2$, (2) the mapping can be computed
by an FPT-algorithm with respect to the parameter $k$, and (3) there is a computable function $g : {\mathbb N} \rightarrow {\mathbb N}$ 
such that for all $(x,k) \in L_1$ if $(x', k') = P((x, k))$, then $k' \leq g(k)$.
One may, for instance, note that if $L_1$ is in FPT
and $L_2$ FPT-reduces to $L_1$, then $L_2$ is 
in FPT, too.

\paragraph*{Parameterized Approximation.}
Let $c \geq 1$ be a constant. 
Formally, a factor-$c$ {approximation
algorithm} takes an instance $(I,k)$ of a minimization problem $\Pi$, 
accepts if $(I, k)$ is a yes-instance and 
rejects if $(I, c \cdot k)$ is a no-instance.
The running time of an \FPT-approximation 
algorithm is bounded by 
$f(k) \cdot \norm{I}^{O(1)}$ where $f: \NN \rightarrow \NN$
is some computable function. 
We use the following simple observation
several times in the sequel. Assume $\Pi_1$ and $\Pi_2$
to be minimization problems and $\Pi_1$ is \FPT-approximable
within factor $c$. If there is an \FPT-reduction
from $\Pi_2$ to $\Pi_1$
that does not change the parameter, then $\Pi_2$
is \FPT-approximable within factor $c$, too.

\paragraph*{MinSat.}

A Boolean relation $R$ of arity $r$ is a subset of tuples in $\{0,1\}^r$.
A set of relations is called a \emph{constraint language}.
Let $\bA$ be a Boolean constraint language.
An instance of a \emph{satisfiability problem over $\bA$},
denoted by $\Sat(\bA)$, consists of
a set of variables $V$ and a (multi)set of constraints $\cC$, each constraint
of the form $R(x_1,\dots,x_r)$, where $R \in \bA$ is a relation of arity $r$ and
$x_1,\dots,x_r$ are (not necessarily distinct) 
variables from $V$.
An assignment $\alpha : V \to A$ \emph{satisfies} a constraint
$R(x_1,\dots,x_r)$ if $(\alpha(x_1),\dots,\alpha(x_r)) \in R$,
otherwise we say that it \emph{violates} the constraint.
The \emph{cost} of $\alpha$ in an instance $(V, \cC)$ of $\CSP(\bA)$
is the number of constraints violated by $\alpha$ in $\cC$.

The optimization problem we are interested in
is known as the \textsc{Minimum-Cost Satisfiability 
over $\bA$}, or $\MinSat(\bA)$ for short.

\pbDefP{$\MinSat(\bA)$}
{An instance $I = (V, \cC)$ of $\CSP(\bA)$
with a subset $\cC^\infty \subseteq \cC$ of crisp constraints,
and an integer $k$.}
{$k$.}
{Does $I$ admit an assignment of cost at most $k$?}

The constraints in $\cC \setminus \cC^\infty$ 
are called \emph{soft}.
As with $\minlin{r}{R}$,
we use crisp constraints merely for convenience 
as they can be modelled by $k+1$ copies of 
the same constraint.
The cost of an assignment that violates a crisp constraint
is defined to be infinite.

Kim~et~al.~\cite{Kim:etal:sicomp2025} classified the complexity
of $\MinSat(\bA)$ parameterized by $k$ for every Boolean
constraint language $\bA$.
We will use the positive part of their classification.
Every Boolean relation $R \subseteq \{0,1\}^r$
is definable by a propositional formula on $r$ variables:
formally, a tuple $(b_1,\dots,b_r) \in \{0,1\}^r$ 
belongs to $R$
if and only if
$\phi_R(b_1,\dots,b_r)$ is true.
A relation $R$ is \emph{bijunctive} if
it is definable by a conjunction of $1$- or $2$-clauses.
Given two literals $l_1,l_2$, we write $(l_1 \rightarrow l_2)$ 
(implication) as a short-hand for the 2-clause $(\neg l_1 \vee l_2)$,
and we write $(l_1 \leftrightarrow l_2)$ (equivalence) as a short-hand
for the conjunction of two clauses $(l_1 \rightarrow l_2) \land (l_2 \rightarrow l_1)$. 

Every bijunctive relation $R$ admits a unique (up to permuting conjuncts) \emph{complete} 
bijunctive formula $\phi_R$ that defines it.
To derive $\phi_R$ from $R$, for every pair of indices $1 \leq i < j \leq r$
consider the set $R_{i,j} = \{ (b_i, b_j) : (b_1,\dots,b_r) \in R\}$.
\begin{itemize}
  \item If $(0,0) \notin R_{i,j}$, then add $(x_i \lor x_j)$ to $\phi_R$.
  \item If $(0,1) \notin R_{i,j}$, then add $(\lnot x_i \lor x_j)$ to $\phi_R$.
  \item If $(1,0) \notin R_{i,j}$, then add $(x_i \lor \lnot x_j)$ to $\phi_R$.
  \item If $(1,1) \notin R_{i,j}$, then add $(\lnot x_i \lor \lnot x_j)$ to $\phi_R$.
\end{itemize}
The \emph{Gaifman graph} $G_R$ of
a bijunctive relation $R$ is defined as follows.
Let $\phi_R$ be the complete formula on
variables 
$x_1,\dots,x_r$ that defines $R$;
the vertex set of $G_R$ is $\{1,\dots,r\}$ and 
the edge set consists of pairs $ij$ 
for $1\leq i<j\leq r$ such that
$\phi_R$ contains a clause on variables $x_i$ and $x_j$.
We say that a graph $G$ is \emph{$2K_2$-free} if there is no
induced subgraph of $G$ isomorphic to $2K_2$, i.e.\ the graph consisting
of two vertex-disjoint edges. 
Equivalently, $G$ is $2K_2$-free if and only if
for every pair of edges $u_1 v_1$ and $u_2 v_2$ in $G$,
there is also an edge with an endpoint in $\{u_1,v_1\}$
and in $\{u_2,v_2\}$.

\begin{theorem}[Theorem~1.2 in \cite{Kim:etal:sicomp2025}] \label{thm:2k2-free}
  Let $\bA$ be a finite, bijunctive Boolean constraint language.
  If for every relation $R \in \bA$ the Gaifman graph $G_R$ is
  $2K_2$-free, then $\MinSat(\bA)$ is in \FPT.
\end{theorem}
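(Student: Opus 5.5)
The statement is Theorem~1.2 of Kim et al.~\cite{Kim:etal:sicomp2025}, and here we simply invoke it. If I had to reconstruct its proof, I would follow the standard route for bijunctive \MinSat: move to the implication digraph and then apply directed flow augmentation. What follows is a plan, not a verified argument.

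\textbf{Step 1: reduce to a bundled cut problem.} Write each constraint $R(x_1,\dots,x_r)$ via its complete bijunctive formula $\phi_R$. Each 2-clause $(l_1 \lor l_2)$ becomes the two arcs $\lnot l_1 \to l_2$ and $\lnot l_2 \to l_1$ of an implication digraph $D$ on the literals. Each 1-clause becomes an arc from a source $s$ (meaning ``true'') or to a sink $t$. All arcs coming from one constraint form a \emph{bundle}, and violating that constraint corresponds to deleting its bundle.

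\textbf{Step 2: iterative compression.} Using iterative compression, I would guess the values on a solution of size $k+1$ and the constraints it breaks. This leaves the task of deleting at most $k$ bundles so that no $s$-to-$t$ path remains and no literal reaches its own negation.

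\textbf{Step 3: flow augmentation.} I would apply the randomized directed flow augmentation of Kim et al. It adds arcs so that, with probability $2^{-\bigoh(k^4 \log k)}$, the union of the deleted bundles becomes a \emph{minimum} $st$-cut. The cut then consists of one arc on each of $\lambda \leq f(k)$ arc-disjoint flow paths.

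\textbf{Step 4: use $2K_2$-freeness (the main obstacle).} The task is now to choose one arc per flow path so that the chosen arcs are covered by at most $k$ bundles. The $2K_2$-free hypothesis on $G_R$ is where the real work lies. It says any two clauses of one constraint share a variable or are linked by another clause of the same constraint. In cut terms, the arcs of a bundle lying on different flow paths are pairwise forced into a comparable order along those paths. My first attempt would be to use this to show that, after guessing for each bundle the set of flow paths it hits (at most $f(k)^k$ choices), each bundle behaves like a single ``chained'' object. One could then solve the resulting problem by a polynomial-time minimum cut computation in an auxiliary graph where each bundle is contracted to a unit-cost cut arc, or by bounded branching on crossing pairs.

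The conceptual crux is showing that a $2K_2$ pattern is exactly what would let two arcs of a bundle be cut independently on two paths. That independence is what encodes \textsc{Paired Min Cut}-type hardness, and it is absent here.
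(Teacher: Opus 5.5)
Your proposal matches the paper: the paper gives no proof of this statement and simply imports it as Theorem~1.2 of Kim et al.~\cite{Kim:etal:sicomp2025}, exactly as you do, so your reconstruction plays no role in the argument. One caveat should you ever flesh out that sketch: after iterative compression the task is just an $st$-cut separating the guessed-true literals from their negations (the compression assignment already makes the remaining formula satisfiable), and the extra demand that no literal reach its own negation is wrong, since satisfiable formulas such as $(\lnot x \lor y) \land (\lnot x \lor \lnot y)$ already contain the path $x \to y \to \lnot x$.
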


\section{Balanced Subgraph Covering}
\label{sec:balanced-subgraph-coverings}

Let $(G, \BB)$ be a biased graph and let $H$ be a subgraph of $G$.
The \emph{cost of $H$} is defined as 
the number of edges in $G$ incident with $V(H)$ but not present in $H$, i.e.
\[
  \cost(H) = |\delta_G(V(H))| + |E(G[V(H)]) \setminus E(H)|.
\]

\medskip

We show the following \emph{balanced subgraph covering} result, which we recall from the introduction.

\thmBiasSample*

We also provide a derandomization with essentially the same performance; see Theorem~\ref{thm:derandom_cover}.

If the procedure is successful for a given $H$, we say that \emph{$S$ covers $H$}. 
As an example, consider the bipartization case where $(G,\BB)$ is a biased graph
such that $\BB$ contains all cycles of even length. Then a subgraph $H$ as in  Theorem~\ref{ithm:balanced-shadow-sampling} is any subgraph such that
(1) $G[V(H)]$ is nearly bipartite and (2) $\delta_G(V(H))$ is small, which is captured by $\cost(V(H)) \leq k$. For any such $H$, if $S$ covers $H$, then $F$ contains an optimal cleaning set for $H$.
This fits the applications of Theorem~\ref{ithm:balanced-shadow-sampling}
in the algorithm for $\minlin{2}{\ZZ_{p^d}}$; see Section\ref{sec:the-algorithm}.

The rest of the section is structured as follows. Section~\ref{sec:balance:important}
covers definitions and properties of balanced subgraphs and related notions. 
Section~\ref{sec:balance:procedure} contains the proof of Theorem~\ref{ithm:balanced-shadow-sampling}. Section~\ref{sec:balance:derand} gives the derandomization result.
Finally, in Section~\ref{sec:balance:impsep} we make the connection to the ``shadow removal'' 
procedure of Marx and Razgon~\cite{marx2014fixed} explicit and prove Corollary~\ref{icor:sample-important} announced in the introduction.

\subsection{Important Balanced Subgraphs and Vertex Subsets}
\label{sec:balance:important}

To state the result we are using, we need additional terminology.
Let $H$ and $H'$ be balanced subgraphs of $G$ with respect to $\BB$. 
We say that $H'$ \emph{dominates} $H$ if 
$V(H) \subseteq V(H')$ and $\cost(H) \geq \cost(H')$,
and that $H'$ \emph{strictly dominates} $H$ if additionally at least one of 
these two conditions is strict. 
If two balanced subgraphs $H$ and $H'$ mutually dominate each other,
we call them \emph{domination equivalent}.
A balanced subgraph $H$ of $G$ is \emph{important} if 
no balanced subgraph in $G$ strictly dominates $H$.
Dabrowski et al.\ show the following~\cite{minlin25talg}.

\begin{theorem}[Theorem~5~in~\cite{minlin25talg}] 
  \label{thm:important-subgraphs}
  Let $(G, \BB)$ be a biased graph.
  There is an algorithm that takes as input $(G, \BB)$,
  a vertex $v \in V(G)$ and an integer $k$, and 
  in $\bigoh^*(4^k)$ time computes a family $\cH$ 
  of connected important balanced subgraphs of $G$ containing $v$
  such that the following holds:
  for every connected balanced subgraph $H$ of $G$
  containing $v$ with cost at most $k$,
  there is $H' \in \cH$ that dominates $H$.
\end{theorem}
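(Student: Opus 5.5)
Since this theorem is quoted from \cite{minlin25talg}, the plan below follows the standard important-separator template, adapted to biased graphs. I have not checked every step.

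The algorithm is a bounded-search-tree procedure. A node of the tree is a triple $(H_0, D, k')$:
\begin{itemize}
  \item $H_0$ is a connected balanced subgraph containing $v$ (initially $H_0=(\{v\},\emptyset)$);
  \item $D$ is a set of edges already declared deleted, each incident with $V(H_0)$ (initially $D=\emptyset$);
  \item $k'=k-|D|$ is the remaining budget.
\end{itemize}
A leaf outputs $H_0$, completed by adding every edge of $G[V(H_0)]$ that keeps it balanced, whenever $\delta_G(V(H_0)) \subseteq D$ and the cost is at most $k$.

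At an internal node, pick an edge $e=uw$ with $u \in V(H_0)$ and $e \notin D \cup E(H_0)$. Either $w \in V(H_0)$ and $e$ is a chord, or $w \notin V(H_0)$. Branch two ways: \emph{delete} $e$ (add it to $D$, decreasing $k'$), or \emph{include} $e$ (add $e$, and $w$ if needed, to $H_0$). A chord can be included only if $H_0+e$ stays balanced; this is decided with the oracle on the unique cycle that $e$ closes with a spanning tree of $H_0$. By the theta property, balance of a connected subgraph is determined by its fundamental cycles with respect to any spanning tree. So for fixed $V(H_0)$ and a spanning tree $T$, the set of chords that can be kept is determined, and the optimal cost for a vertex set depends only on $T$.

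Correctness is by induction along the tree. Fix a connected balanced $H$ of cost at most $k$ containing $v$. Following the branch that agrees with $H$ keeps $H_0\subseteq H$ and $D\subseteq \delta(V(H))\cup (E(G[V(H)])\setminus E(H))$. Hence some leaf outputs a subgraph on $V(H)$ of cost at most $\cost(H)$, which dominates $H$. Finally, discard the non-important outputs by pairwise comparison (polynomial in the size of the family), so $\cH$ consists of important subgraphs only.

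The main obstacle is the running-time bound $4^k$. The delete branch decreases $k'$, but the include branch makes no budget progress. I would use a measure $\mu = 2k' - \lambda$, where $\lambda$ is a lower bound on the cost of any extension, and make it drop in both branches via a \emph{pushing} step.
\begin{itemize}
  \item Before branching, replace $H_0$ by a canonical maximal extension: a connected balanced $H_0^* \supseteq H_0$ that respects $D$ and is of minimum cost among all such extensions, with $V(H_0^*)$ inclusion-maximal among the minimizers.
  \item Then branch on an edge $e$ leaving $H_0^*$. Including $e$ must strictly increase the minimum extension cost $\lambda$, since otherwise maximality of $H_0^*$ is violated. Deleting $e$ decreases $k'$ by one.
\end{itemize}
In both cases $\mu$ drops by at least one while $0\le\mu\le 2k$. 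This gives at most $2^{2k}=4^k$ leaves.

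The subtle part is twofold. First, I need a uniqueness or submodularity property of the cost function on vertex sets, so that the maximal minimum-cost extension is well defined. Second, I need to compute it, or at least $\lambda$, in polynomial time with oracle access. Computing the exact extension cost is itself a cleaning problem, so I would instead take $\lambda$ to be a relaxation computable in polynomial time. The candidate is $|\delta(V(H_0))\setminus D|$ plus the forced chord deletions, together with a pushing lemma in the style of the important-separator pushing lemma. I expect to spend most of the effort on this pushing lemma for biased graphs.
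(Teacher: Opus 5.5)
Your search tree mirrors the $4^k$ proof for important separators, but the ingredient that makes that template work is left open, so this is not yet a proof. Without pushing, the include branch consumes no budget, and the tree only gets an $|E(G)|^{O(k)}$ bound on its leaves.

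The pushing step needs a quantity $\lambda$ with four properties:
\begin{itemize}
\item[(i)] it lower-bounds the remaining cost of every connected balanced extension of $H_0$ avoiding $D$, and drops by at most one when an edge is moved to $D$;
\item[(ii)] it is computable in polynomial time from membership queries to $\BB$;
\item[(iii)] it comes with a canonical pushed subgraph $H_0^*$ and a pushing lemma: every solution $H$ is dominated by one extending $H_0^*$;
\item[(iv)] it strictly increases when an edge leaving $H_0^*$ is included.
\end{itemize}
Property (iii) is essential because pushing destroys your invariant $H_0\subseteq H$, and only the pushing lemma restores correctness.

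The exact minimum extension cost has (i) and (iv) but not (ii). At the root it is the optimisation problem of the theorem itself, and it is NP-hard already for group-labelled graphs, since it encodes \textsc{Edge Multiway Cut} with three terminals.

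Nor does (iii) follow by uncrossing as for important separators, because $\cost(\cdot)$ is not submodular on vertex sets. Take $\BB$ to be the even cycles of the graph on $v,p,q,a,b$ with edges $vp,pq,qv,va,ap,pb,bq$. The connected sets $X_1=\{v,p,q,a\}$ and $X_2=\{v,p,q,b\}$ each have cost $3$. However, $\cost(X_1\cup X_2)+\cost(X_1\cap X_2)=2+5=7>6$.

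Your fallback candidate, $|\delta(V(H_0))\setminus D|$ plus forced chord deletions, is the cost of \emph{stopping} at $H_0$. That is an upper bound, not a lower bound: at the root it equals $\deg(v)$ even when $v$ lies in a balanced component and the optimum is $0$. Moreover, including an edge to a new vertex of degree two leaves it unchanged. So it fails both (i) and (iv).

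What is missing is a polynomial-time relaxation of the rooted problem (delete edges so that the component of the root is balanced) that has all these properties. This is what LP-branching for biased graphs provides \cite{wahlstrom2017lp}. The rooted problem has a persistent, half-integral LP relaxation, with a polynomial-time separation oracle that uses only membership queries to $\BB$. In your scheme:
\begin{itemize}
\item an extremal optimal LP solution plays the role of $H_0^*$;
\item persistence replaces the pushing lemma;
\item half-integrality and persistence let one choose the branching edge so that the measure $2(k'-\lambda)\le 2k$ drops by at least one in each branch, which gives the $4^k$ bound.
\end{itemize}
This is the machinery on which the $O^*(4^k)$ bounds for biased graphs rest \cite{wahlstrom2017lp,minlin25talg}. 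To complete your argument you would have to establish it, or an equivalent, for the edge-deletion, domination-based setting of the theorem; the combinatorial pushing you sketch does not supply it.
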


As an illustrative example, we relate this to the notion of important separators due to Marx~\cite{Marx:tcs2006}.
Let $G$ be a graph and $s, t \in V(G)$ be vertices. Define a biased graph from $G$ by attaching $W \geq k+1$ internally disjoint cycles to $t$, and let every cycle except these be balanced. 
As Dabrowski et al.~\cite{minlin25talg} note, this defines a biased graph where a balanced subgraph $H$ (of cost at most $k$) is precisely a subgraph $H$ such that $t \notin V(H)$. 
Now apply Theorem~\ref{thm:important-subgraphs} to $(G,\BB)$ with starting vertex $s$.
Since there are no unbalanced cycles except those on $t$, the important connected balanced subgraphs are induced subgraphs $H=G[X]$ with $s \in X$, $t \notin X$ and $G[X]$ connected, such that $|\delta(X)|$ is minimized and $X$ is maximized.

Marx~\cite{Marx:tcs2006} defined important subgraphs as follows (for the case of edge cuts).
Let $G$ be a graph and $X, Y \subseteq V(G)$ disjoint subsets. 
An $(X,Y)$-separator is an edge set $S \subseteq E(G)$ such
that $X$ and $Y$ are separated in $G-S$, i.e.\ no component 
of $G-S$ contains vertices of both $X$ and $Y$. It is
\emph{inclusion-wise minimal} if no strict subset of $S$ is an
$(X,Y)$-separator. For an $(X,Y)$-separator $S$, let $R(X,S)$
denote the vertices reachable from $X$ in $G-S$. 
An \emph{important $(X,Y)$-separator} is an inclusion-wise minimal
$(X,Y)$-separator $S$ such that there is no $(X,Y)$-separator $S'$
such that $R(X,S) \subset R(X,S')$ and $|S'| \leq |S|$.
By Marx~\cite{Marx:tcs2006} and Chen et al.~\cite{chen2009improved}
we know that there are at most $4^k$ important separators of cost at most $k$
and they can be enumerated efficiently. 

This reduces back to the application of Theorem~\ref{thm:important-subgraphs}: indeed, for undirected edge cuts we can contract $X$ and $Y$ into single vertices $s$ and $t$, 
and as a result the important $(X,Y)$-separators correspond directly to the cuts $\delta(V(H))$ for the connected important balanced subgraphs $H$ computed by Theorem~\ref{thm:important-subgraphs}.
This connection is developed more in Section~\ref{sec:balance:impsep}.

We stress one difference between the result of 
Marx~\cite{Marx:tcs2006} about important separators
and Theorem~\ref{thm:important-subgraphs}:
the former enumerates \emph{all} important separators
of cost at most $k$, while the latter only computes
a dominating family.
There is a good reason for this:
the number of important balanced subgraphs
of cost at most $k$ is not necessarily bounded by a function of $k$.
For example, if $G$ is a cycle of length $n$ and $\BB$ is empty,
then $(G, \BB)$ admits $n$ important balanced subgraphs of cost $1$,
each obtained by deleting one edge from $G$.
Observe that in this example all important balanced subgraphs
of cost $1$ share the same vertex set.
Indeed, the vertex set of an important balanced subgraph
seems to be a more robust object than the subgraph itself.
This leads us to the following definitions.

\smallskip

An edge subset $F \subseteq E(G)$ is a \emph{cleaning set} if
$G - F$ is balanced.
The \emph{cost of a vertex set $X \subseteq V(G)$} is defined as
\[ 
  \cost(X) = |\delta_G(X)| + 
  \min \{ |F_X| : F_X \text{ is a cleaning set for } G[X] \}.
\]
Note that $\cost(X)$ is equal to the minimum cost of
a balanced subgraph $H$ of $G$ with $V(H) = X$.
For a pair of vertex sets $X, Y \subseteq V(G)$, we say that
$X$ \emph{dominates} $Y$ if $X \supseteq Y$ and $\cost(X) \leq \cost(Y)$,
and that $X$ \emph{strictly dominates} $Y$ if
at least one of these two conditions is strict.
A vertex set $X$ is an \emph{important subset} if 
it is not strictly dominated by another vertex set.

\begin{proposition}
  \label{prop:really_important}
  Every important subset is the vertex set
  of an important balanced subgraph.
\end{proposition}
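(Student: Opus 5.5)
Let $X$ be an important subset, and fix a balanced subgraph $H$ of $G$ with $V(H) = X$ and $\cost(H) = \cost(X)$. Such an $H$ exists by the remark after the definition of $\cost(X)$: take $H = G[X] - F_X$ for a minimum cleaning set $F_X$ of $G[X]$. Then
\[
\cost(H) = |\delta_G(X)| + |F_X| = \cost(X).
\]
I will show that $H$ is an important balanced subgraph, arguing by contradiction.

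Suppose some balanced subgraph $H'$ strictly dominates $H$. Then $V(H) \subseteq V(H')$ and $\cost(H') \le \cost(H)$, with at least one of these strict. Set $X' = V(H')$. The key inequality is $\cost(X') \le \cost(H')$, which holds because $\cost(X')$ is the minimum cost over all balanced subgraphs with vertex set $X'$. Combining this with the above,
\[
\cost(X') \le \cost(H') \le \cost(H) = \cost(X), \qquad X \subseteq X'.
\]

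We now split according to which condition is strict.
\begin{itemize}
\item If $X \subsetneq X'$, then $X'$ strictly dominates $X$ as a vertex set.
\item If $X = X'$, the strict inequality must be $\cost(H') < \cost(H)$. This gives $\cost(X') < \cost(X) = \cost(X')$, which is impossible.
\end{itemize}
The first case contradicts the assumption that $X$ is an important subset, and the second is a contradiction outright. Hence no balanced subgraph strictly dominates $H$, so $H$ is important and $X$ is its vertex set.

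The argument is essentially bookkeeping. The one point to state carefully is the choice of $H$ as a minimum-cost realization of $X$. A non-minimal realization could be strictly dominated, on the same vertex set, by a cheaper realization, so it would not serve as the important balanced subgraph we need.
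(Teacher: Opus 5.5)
Your proof is correct and is the paper's argument: take a minimum-cost balanced subgraph $H$ with $V(H)=X$, and note that any $H'$ strictly dominating $H$ would make $V(H')$ strictly dominate $X$. Your case split, using $\cost(V(H')) \le \cost(H')$ and observing that $V(H') = X$ gives an immediate contradiction, spells out what the paper's one-line proof leaves implicit.
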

\begin{proof}
  Let $X$ be an important subset.
  By definition, there exists a balanced subgraph
  $H$ of $G$ with $V(H) = X$ and of the same cost as $X$.
  To see that $H$ is important,
  suppose a balanced subgraph $H'$
  strictly dominates $H$.
  Then its vertex set $V(H')$ strictly dominates $X$.
\end{proof}

With this terminology, we derive the following consequence of Theorem~\ref{thm:important-subgraphs}.

\begin{lemma}
  \label{lem:important-subsets}
  Let $(G, \BB)$ be a biased graph,
  $v \in V(G)$ be a vertex and $k$ be an integer.
  Let $\cX_v$ be the family of important subsets
  of $V(G)$ containing $v$ and of cost at most $k$.
  Then, $|\cX_v| \leq 4^k$, and
  there exists an algorithm that
  takes as input $(G, \BB, v, k)$,
  and in $\bigoh^*(4^k)$ time computes the family $\cX_v$
  together with a minimum-cardinality
  cleaning set $F_X \subseteq E(G[X])$ for each $X \in \cX_v$.
\end{lemma}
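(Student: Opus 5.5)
The plan is to reduce the statement to Theorem~\ref{thm:important-subgraphs}, which handles connected balanced subgraphs, and then recover the cleaning sets with the known $\bigoh^*(4^k)$ edge-deletion algorithm for biased graphs~\cite{wahlstrom2017lp,minlin25talg}. The basic tool is the additivity of $\cost$ over connected components. If $G[X]$ has components $C_1,\dots,C_r$, there are no edges of $G$ between different $C_i$. Hence $\delta_G(X)$ is the disjoint union of the sets $\delta_G(C_i)$, and a minimum cleaning set of $G[X]$ is the union of minimum cleaning sets of the $G[C_i]$. So $\cost(X)=\sum_i \cost(C_i)$.

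\textbf{Step 1: structure of important subsets.} Let $X\in\cX_v$ and let $C_v$ be the component of $G[X]$ containing $v$. First, I show that $C_v$ is itself important. Suppose some $Y$ strictly dominates $C_v$. Then $Y\cup X$ should strictly dominate $X$, which I would verify by a submodularity-type inequality $\cost(Y\cup X)+\cost(Y\cap X)\le\cost(Y)+\cost(X)$ together with $\cost(Y\cap X)\ge\cost(C_v)$. The cut part of this inequality is ordinary cut submodularity. For the cleaning part, I would glue cleaning sets, using the theta property to check that balance is preserved; this is exactly the kind of uncrossing argument behind Theorem~\ref{thm:important-subgraphs}. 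Second, I show that the remaining components $X\setminus C_v$ are forced to be a canonical set $Z$ that does not depend on $X$. Any vertex set disjoint from and non-adjacent to $C_v$ whose cost can be absorbed without increasing the total must be added, by importance. Conversely, any component of positive cost could be removed, and I would argue that its presence contradicts importance, again via uncrossing against a maximal zero-cost extension. The upshot is a bijection $X\mapsto C_v$ between $\cX_v$ and a set of connected important subsets containing $v$ of cost at most $k$.

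\textbf{Step 2: counting and enumeration.} Run Theorem~\ref{thm:important-subgraphs} on $(G,\BB,v,k)$. By Proposition~\ref{prop:really_important}, each connected important subset $C_v$ of cost at most $k$ is the vertex set of an important balanced subgraph $H$. The returned family contains some $H'$ dominating $H$, and since $H$ is important, $H'$ is domination equivalent to $H$, so $V(H')=C_v$. The family has size at most $4^k$ because it is produced in $\bigoh^*(4^k)$ time with the branching bound of that theorem. Hence $|\cX_v|\le 4^k$. Algorithmically, I take the vertex sets of the returned subgraphs and add the canonical part $Z$ from Step~1. I then discard every candidate strictly dominated by another candidate; this suffices, since any strict dominator can be replaced by an important one of cost at most $k$, which appears in the list. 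For each surviving $X$, a minimum cleaning set $F_X$ of $G[X]$ has size at most $k$ and is computed by the biased-graph edge-deletion algorithm in $\bigoh^*(4^k)$ time. That same computation gives the exact $\cost(X)$ needed for the domination tests.

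\textbf{Main obstacle.} I expect the main obstacle to be the second half of Step~1, namely controlling the components of $X$ that avoid $v$. The number of subsets obtained by including or excluding positive-cost components elsewhere in the graph could a priori be unbounded. The argument must therefore show, via the uncrossing inequality and the theta property, that importance pins these components down uniquely. My first attempt would be to take $Z$ to be the union of all cost-zero sets non-adjacent to $C_v$, and to prove that the union of two cost-zero sets has cost zero by gluing balanced subgraphs. If that fails, I would enlarge the candidate list by also enumerating, via Theorem~\ref{thm:important-subgraphs}, the connected important subsets at the boundary vertices of $C_v$, and filter by domination as above.
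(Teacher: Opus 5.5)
\textbf{Step~1 cannot work: read literally, with disconnected sets allowed in $\cX_v$, the statement is false.} Your claim that a positive-cost component of $G[X]$ avoiding $v$ ``could be removed'' and so contradicts importance runs the wrong way. Removing it yields a \emph{subset} of $X$, and importance only forbids supersets of no larger cost.

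Here is a counterexample. Let $G$ consist of an isolated vertex $v$ and $m\geq 5$ vertex-disjoint triangles $B_1,\dots,B_m$, let $\BB$ be the even cycles, and let $k=1$.
\begin{itemize}
\item Each $X_j=\{v\}\cup V(B_j)$ has $\delta_G(X_j)=\emptyset$ and a minimum cleaning set of size $1$, so $\cost(X_j)=1$.
\item Cost is additive over the components of $G$. Every nonempty subset of some $V(B_i)$ has cost at least $1$: two cut edges for one or two vertices, one cleaning edge for all three.
\item Hence every $Y\supsetneq X_j$ has $\cost(Y)\geq 2$, so every $X_j$ is important.
\end{itemize}
All the $X_j$ share $C_v=\{v\}$, so there is no canonical $Z$, and $|\cX_v|\geq m>4^1$. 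No fallback enumeration can repair this, because the count is unbounded in $k$.

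Separately, the uncrossing inequality you planned for the first half of Step~1 is false for biased graphs. On the odd $5$-cycle $a\,c_1\,b\,c_2\,a'$ with $\BB$ the even cycles, $X=\{a,a',c_1,c_2\}$ and $Y=\{b,c_1,c_2\}$ both have cost $2$. Yet $\cost(X\cup Y)+\cost(X\cap Y)=1+4$.

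The lemma is meant for \emph{connected} important subsets. That is how it is used in Step~1 of \textsf{RandomCover} (Figure~\ref{fig:sampling}). The paper's own proof also applies Theorem~\ref{thm:important-subgraphs}, which concerns only connected balanced subgraphs, directly to $H_X$. Under that reading, your Step~2 alone is the paper's proof, and Step~1 together with $Z$ should simply be dropped:
\begin{itemize}
\item Proposition~\ref{prop:really_important} gives an important $H_X$ with $V(H_X)=X$.
\item $H_X$ is connected. Otherwise an edge of $G[X]$ joining two of its components could be added as a bridge, keeping it balanced and lowering its cost.
\item Theorem~\ref{thm:important-subgraphs} gives a dominating $H'\in\cH$, and importance of $X$ forces $V(H')=X$.
\end{itemize}
Your domination filter is unnecessary, since the vertex set of an important balanced subgraph is automatically an important subset.

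One part of Step~2 must change. Running the $O^*(4^k)$ edge-deletion algorithm separately on up to $4^k$ sets costs $O^*(16^k)$, which breaks the claimed bound. The paper instead reads off $F_X=E(G[X])\setminus E(H')$. This is a minimum cleaning set, because a smaller one would give a balanced subgraph on $X$ of smaller cost, strictly dominating $H'$.
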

\begin{proof}
  Let $\cH$ be the family of important balanced subgraphs
  computed by Theorem~\ref{thm:important-subgraphs}.
  We claim that $\{ V(H) : H \in \cH \}$ contains $\cX_v$.
  Indeed, let $X \in \cX_v$ be an important subset
  with $v \in X$ and $\cost(X) \leq k$.
  By Proposition~\ref{prop:really_important},
  there exists an important balanced subgraph
  $H_X$ of $G$ with $V(H_X) = X$.  
  By Theorem~\ref{thm:important-subgraphs},
  the family $\cH$ contains a balanced subgraph $H'$
  that dominates $H_X$.
  Since $X$ is important, $V(H')$ does not 
  strictly dominate $X$, implying that $V(H') = X$.
  This shows that $|\cX_v| \leq |\cH| \leq 4^k$.
  Furthermore, $\cX_v$ is precisely $\{ V(H) : H \in \cH \}$
  because no two members of the latter set strictly dominate each other.

  Finally, for each $X \in \cX_v$ we have 
  a subgraph $H \in \cH$ with $V(H) = X$,
  and $F_X = E(G[X]) \setminus E(H)$
  is a cleaning set for $G[X]$.
  Since $H$ is important, $F_X$ is of 
  minimum cardinality among all cleaning sets for $G[X]$.
\end{proof}

\subsection{Sampling Procedure} \label{sec:balance:procedure}

We now provide the sampling procedure and prove Theorem~\ref{ithm:balanced-shadow-sampling}.
The proof is inspired by the proof of Marx and Razgon~\cite{marx2014fixed} 
for their \emph{shadow removal/shadow covering} routine, i.e.\ \emph{random sampling of important separators}, except with success probability $1/2^{O(k^2)}$, matching Chitnis et al.~\cite{chitnis2015directed} (since we are only concerned with the edge deletion case).

Let $(G,\BB)$ be a biased graph and $k \in \NN$ the cost bound.
We may assume that every connected component of $G$ is unbalanced, as any balanced components can just be included in the sample set $S$ unmodified with no presence in $F$.
We describe the steps of the sampling procedure in
Figure~\ref{fig:sampling}.

Towards proving Theorem~\ref{ithm:balanced-shadow-sampling},
we show a result on the structure of important not-necessarily-connected balanced subgraphs.

\begin{lemma} \label{lem:important-components}
  Let $(G, \BB)$ be a biased graph, where no connected component of $G$ is balanced. 
  Let $H$ be an important balanced subgraph of $G$
  of cost at most $k$.
  Then $H$ contains at most $k$ connected components,
  every subset $K \subseteq V(H)$ inducing 
  a connected component of $H$ is an important vertex subset of $(G, \BB)$,
  and its neighbourhood $N_G(K)$ is disjoint from $V(H)$.
\end{lemma}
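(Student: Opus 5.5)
The proof rests on one decomposition of the cost. Let $H$ be an important balanced subgraph of cost at most $k$, and let $K_1,\dots,K_q$ be the vertex sets of its connected components, with $H_i = H[K_i]$. Every edge of $G$ counted in $\cost(H)$ is either in $\delta_G(V(H))$ or an edge of $G[V(H)]$ missing from $H$. Such an edge is counted in $\cost(H_i)$ for each $i$ with an endpoint in $K_i$, and an edge joining $K_i$ to $K_j$ is counted for both $i$ and $j$. Hence $\sum_i \cost(H_i) \ge \cost(H)$, with equality exactly when $G$ has no edges between distinct $K_i$ and $K_j$.

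\textbf{Step 1 (neighbourhoods).} Suppose $G$ has an edge $e$ between $K_i$ and $K_j$ with $i \neq j$. Consider $H' = H + e$. Every cycle of $H'$ lies within a single component of $H'$. Adding $e$ merges $H_i$ and $H_j$ through a bridge, so no new cycle is created and $H'$ is still balanced. It has the same vertex set and cost $\cost(H)-1$, so it strictly dominates $H$, a contradiction. Therefore distinct components are non-adjacent in $G$. Since also $N_G(K_i) \cap K_i = \emptyset$ by definition, we get $N_G(K_i)\cap V(H)=\emptyset$. In particular $\cost(H)=\sum_i \cost(H_i)$.

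\textbf{Step 2 (at most $k$ components).} Each $K_i$ induces a connected subgraph of $G$, because $H_i$ is connected. Since no connected component of $G$ is balanced, $K_i$ cannot be the vertex set of a whole component of $G$ with $H_i = G[K_i]$. If $\delta_G(K_i) = \emptyset$, then $K_i$ is a full component of $G$, which is unbalanced, so $H_i \neq G[K_i]$ and some edge of $G[K_i]$ is missing from $H_i$. Either way $\cost(H_i)\ge 1$. Summing via Step 1 gives $q \le \cost(H) \le k$.

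\textbf{Step 3 (importance of each $K_i$).} This is the main point. Suppose some vertex set $X$ strictly dominates $K_i$, and take a balanced $H_X$ with $V(H_X)=X$ and $\cost(H_X)=\cost(X)$. The natural candidate is $H^* = H_X \cup \bigcup_{j\neq i} H_j$, restricted so that it stays a subgraph. The difficulty is that $X$ may intersect the other $K_j$.

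I would handle this with posimodularity/submodularity-style reasoning, which I expect to be the main obstacle. Try $H'' = H_X \cup \bigcup_{j \ne i} H_j[K_j \setminus X]$. Its vertex set is $V(H) \cup X \supseteq V(H)$. It is balanced because it is a disjoint union of balanced pieces, after deleting the edges between $X$ and the rest as part of the cost. The aim is to bound
\[
\cost(H'') \le \cost(H_X) + \sum_{j\neq i}\cost(H_j) \le \cost(H),
\]
with strictness inherited from $X$ strictly dominating $K_i$. For the inequality, account each edge counted in $\cost(H'')$: edges leaving $X$ are paid by $\cost(H_X)$; edges of $K_j\setminus X$ leaving it are either edges of $\delta(K_j)$ or edges into $X\cap K_j$, and the latter lie in $\delta(X)$ or inside $G[X]$, so they are paid by $\cost(H_X)$; missing internal edges of $H_j[K_j\setminus X]$ are paid by $\cost(H_j)$.

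One must check that double counting does not break the bound, and that if $X \supseteq K_i$ with equal cost then $V(H'')\supsetneq V(H)$. Also note that $X \supseteq K_i$ and $X \ne K_i$ together with non-adjacency ensure the strictness survives. This contradicts the importance of $H$, so $K_i$ is an important subset.
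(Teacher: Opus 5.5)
Your proof is correct and follows the paper's argument: a bridge between two components of $H$ would give a cheaper balanced subgraph on the same vertex set; each component contributes at least one unit of cost because $G$ has no balanced component; and a dominated component $K_i$ is swapped for its dominator $X$ while the other components are kept. Your bound $\cost(H'') \le \cost(H_X) + \sum_{j\neq i}\cost(H_j)$ is, if anything, more careful than the paper's claimed equality $|F'| = |F| - \cost(K) + \cost(K')$, which in general holds only as an inequality.

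The one step you only assert is strictness when $\cost(X)=\cost(K_i)$, $X \supsetneq K_i$ and $X\setminus K_i \subseteq V(H)$, in which case $V(H'')=V(H)$. Here you need the no-balanced-component hypothesis, not just non-adjacency: non-adjacency gives $\cost(X)=\cost(K_i)+\cost(X\setminus K_i)$, so $\cost(X\setminus K_i)=0$, and a nonempty set of cost $0$ would be a union of balanced components of $G$. The paper's proof passes over this same case.
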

\begin{proof}
  Let $\cK$ be the partition of $V(H)$ into 
  maximal subsets inducing connected components of $H$.
  We first show that for every $K \in \cK$, $N_G(K)$ is disjoint from $V(H)$.
  Assume otherwise, and let $uv \in E(G)$ connect two distinct components
  $K_u, K_v \in \cK$ in $H$. The graph $H'=H+uv$ is also balanced, and has
  strictly lower cost than $H$. Indeed, since $uv$ is a bridge in $H'$, 
  it cannot participate in an unbalanced cycle. It follows that
  \[
    \cost(H)=\sum_{K \in \cK} \cost(H[K]) = \sum_{K \in \cK} (|F_K|+|\delta_G(K)|)
  \]
  where for each $K \in \cK$, $F_K$ is a minimum cleaning set for $G[K]$.
  Furthermore, the graph $G[V(H)]-\bigcup_K F_K$ is domination equivalent to $H$.  
  Now assume towards a contradiction that some $K \in \cK$ is not important,
  i.e.\ that there exists a vertex subset $K'$ that strictly dominates $K$,
  and let $F_{K'}$ be a minimum cleaning set for $G[K']$. 
  Let $F'=(F \setminus (F_K \cup \delta_G(K))) \cup (F_{K'} \cup \delta_G(K'))$
  and define $H'$ with $V(H')=(V(H) \setminus K) \cup K'$ 
  and $E(H')=E(G[V(H')]) \setminus F'$. Note that $H'$ is balanced,
  since every cycle in $H'$ is either contained in $K'$ or exists in $H$.
  Furthermore $\cost(V(H')) \leq |F'|=|F|-\cost(K)+\cost(K') \leq |F|$,
  and either $|F'|<|F|$ or $K \subset K'$.
  Thus the vertex set $V(H')$ strictly dominates $V(H)$,
  contradicting the assumption that $H$ is important by Proposition~\ref{prop:really_important}.

  Finally, to argue that $H$ has at most $k$ components,
  for every $K \in \cK$ either $F_K \neq \emptyset$ or $K$ is a strict
  subset of some connected component of $G$ because no component in its entirety is balanced, 
  in which case $\delta_G(K) \neq \emptyset$. 
  Thus $\cost(H) = \sum_K \cost(H[K])\geq |\cK|$ by the above. 
\end{proof}

\begin{figure}
\fbox{\parbox{\textwidth}{

Algorithm: $\RandomCover(G, \BB, k)$
\begin{enumerate}
\item Compute a family $\cX = \bigcup_{v \in V(G)} \cX_v$ of important connected subsets
  by taking the union of the families $\cX_v$ computed 
  by applying Theorem~\ref{thm:important-subgraphs}
  for every $v \in V(G)$ using the cost bound $k$.
  For each $X \in \cX$, let $F_X$ be the cleaning set for $G[X]$
  computed by the algorithm.
\item Sample a subfamily $\cX' \subseteq \cX$ 
  by including each element $X$ independently at random
  with probability $p = 1/4^k$.
\item Let $U' \subseteq V(G)$ be the union of all vertex sets in $\cX'$.
\item A set $X \in \cX$ is \emph{chosen} if $X \subseteq U'$ and $N_G(X) \cap U' = \emptyset$.
  Let $S \subseteq V(G)$ be the union of chosen sets, and
  $F \subseteq E(G)$ be the union of the edge sets $F_X \cup \delta(X)$ for all chosen sets $X$.
\item Add to $S$ the vertex set of any balanced connected component of $G$
\item Return $(S, F)$
\end{enumerate}
}}
\caption{Algorithm for computing balanced subgraph covers}
\label{fig:sampling}
\end{figure}

We are now ready to prove Theorem~\ref{ithm:balanced-shadow-sampling}.

\begin{proof}[Proof of Theorem~\ref{ithm:balanced-shadow-sampling}]
  Let $(G, \BB)$ be a biased graph, $k \in \NN$,
  and let $(S, F)$ be the output of the sampling procedure.
  Observe that $F$ contains $\delta_G(X)$ for every chosen set $X$,
  hence $\delta_G(S) \subseteq F$.
  Moreover, for every chosen set $X$,
  the subgraph $G[X] - F_X$ is balanced,
  hence $(G-F)[S]$ is balanced as well.

  We argue that we can dispose of balanced connected components of $G$.
  If $G$ contains some connected component $K$ that is balanced,
  note that for every $v \in K$ we have $\cX_v=\{K\}$ with 
  cleaning set $F_K=\emptyset$. Thus $F$ contains no edge of $K$,
  and $K \subseteq S$ by design. Now let $H$ be a balanced subgraph of $G$,
  let $G'$ be the union of unbalanced components of $G$, and 
  let $H'=H[V(G')]$. Then $V(H) \setminus V(H') \subseteq S$
  and $F$ contains zero edges with an endpoint in $V(H) \setminus V(H')$;
  thus the theorem holds for $H$ as a subgraph of $G$ if and only if
  it holds for $H'$ as a subgraph of $G'$, and we assume in what follows
  that every connected component of $G$ is unbalanced.
  
  Now, let $H$ be a balanced subgraph of $G$ of cost at most $k$.
  We may assume without loss of generality that $H$ is important:
  indeed, if $H'$ is an important balanced subgraph
  that dominates $H$, and $S$ covers $H'$, then $S$ also covers $H$.
  On a high-level, we will show that
  there exist two disjoint subfamilies $\cA_H, \cB_H \subseteq \cX$,
  each of size bounded by a function of $k$,
  such that sampling every member of $\cA_H$ and
  no member of $\cB_H$ guarantees that $S$ covers $H$.
  The probability of this event is $p^{|\cA_H|} \cdot (1-p)^{|\cB_H|}$,
  which is also bounded by a function of $k$.

  To this end, let $\cA_H$ be the family of vertex subsets inducing 
  connected components of $H$.
  By Lemma~\ref{lem:important-components}, since $G$ contains no balanced component
  by assumption, we have $|\cA_H| \leq k$ and 
  $\cA_H \subseteq \cX$ because each member of $\cA_H$ is important.
  Define $\cB_H$ to be the family of all sets $Y \in \cX$ that intersect $N_G(V(H))$.
  By Lemma~\ref{lem:important-components}, the families $\cA_H$ and $\cB_H$ are disjoint.
  Moreover, $|N_G(H)| \leq k$ since $H$ is of cost at most $k$,
  and, by Lemma~\ref{lem:important-subsets}, 
  there are at most $4^k$ sets in $\cX$ containing a vertex $v \in N_G(H)$,
  hence $|\cB_H| \leq 4^k \cdot k$.
  Clearly, if every set in $\cA_H$ is chosen, then $V(H) \subseteq S$
  and $F$ contains at most $k$ edges incident to $V(H)$, namely 
  the edges $F_X$ and $\delta(X)$ for each $X \in \cA_H$.
  The probability that each set in $\cA_H$ is sampled is $$p^{|\cA_H|} \geq 1/4^{k^2}.$$
  Moreover, if no set in $\cB_H$ is sampled,
  then for every $K \in \cA_H$ we have $N_G(K) \cap S = \emptyset$,
  and $K$ is chosen.
  The probability of the latter event is 
  $$(1-p)^{|\cB_H|} \geq ((1-1/4^k)^{4^k})^{k} \geq (1/4)^k$$
  since $(1-1/n)^n$ is increasing with $n$ and we have $k \geq 1$. 
  Thus, $(S,F)$ covers $H$ with probability at least
  $1/2^{\bigoh(k^2)}$.
\end{proof}

\subsection{Derandomization}
\label{sec:balance:derand}

For the derandomization, we note that the only part of the sampling
procedure that uses randomness is Step~2 in
$\RandomCover(G, \BB, k)$ (Fig.~\ref{fig:sampling})
where we sample a set $\cX' \subseteq \cX$ of important connected subsets.
Furthermore, by the proof of Theorem~\ref{ithm:balanced-shadow-sampling},
for any important balanced subgraph $H$ of cost at most $k$ there are
two disjoint families $\cA_H, \cB_H \subseteq \cX$ with $|\cA_H| \leq k$
and $|\cB_H| \leq k4^k$ such that the output $(S,F)$ of $\RandomCover(G, \BB, k)$
covers $H$ if and only if $\cA_H \subseteq \cX'$ and
$\cB_H \cap \cX' = \emptyset$.

We are looking to replace the sampling $\cX' \subseteq \cX$
by the iteration of $\cX'$ over all members of a set family $\cF \subseteq 2^{\cX}$ 
with the guarantee that one member of $\cF$ meets the requirements.
Thus we are, in derandomization terms, looking for an ``extremely
lopsided'' universal set. Fortunately, the literature does contain the
construction of such a family for our parameters, known as an 
\emph{$(n,(r,s))$-cover-free family} (with suitable parameters $n$, $r$, $s$).
We recall the definition from Bshouty and Gabizon~\cite{BshoutyG17cff}, rephrased
in an equivalent form.

\begin{definition}[Bshouty and Gabizon~\cite{BshoutyG17cff}]
  Let $r, s, n \in \NN$ with $r, s < n$.
  An \emph{$(n,(r,s))$-cover free family} ($(n,(r,s))$-CFF) is a set
  $\cF \subseteq \{0,1\}^n$
  such that for every disjoint pair of sets $A, B \subseteq [n]$
  with $|A|=r$ and $|B|=s$, there is a member $F \in \cF$
  such that $F_i=1$ for every $i \in A$ and $F_i=0$ for every $i \in B$.
\end{definition}

Bshouty and Gabizon showed a near-optimal, near-linear time
construction of an $(n,(r,s))$-CFF for any parameters $n, s, r$,
in the following sense. There is an information-theoretic lower bound
that states that there is a function $N(r,s)$ such that any
$(n,(r,s))$-CFF requires size $\Omega(N(r,s) \log n)$,
and they gave an algorithm that constructs an $(n(r,s))$-CFF
of size $N(r,s)^{1+o(1)}\log n$ in time $N(r,s)^{1+o(1)}n \log n$.
For the details of the function $N(r,s)$, see~\cite{BshoutyG17cff}.
We focus, as above, on the setting $r=k$ and $s=k4^k$. 

\begin{theorem} \label{thm:derandom_cover}
  Let $(G,\BB)$ be a biased graph and $k \in \NN$ a parameter.
  Let $n=|V(G)|$. There is a deterministic algorithm that produces
  a list of $O(4^{k^2(1+o(1))}\log n)$ pairs $(S_i,F_i)$
  such that for any balanced (not necessarily connected) subgraph $H$ 
  of cost at most $k$, there is a pair $(S_i,F_i)$ in the list
  which covers $H$. The list can be computed in time $O^*(4^{k^2(1+o(1))})$.
\end{theorem}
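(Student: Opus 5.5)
The plan is to derandomize $\RandomCover(G,\BB,k)$ by replacing its single random step with enumeration over a cover-free family. First I will run Step~1 deterministically. Applying Theorem~\ref{thm:important-subgraphs} (equivalently Lemma~\ref{lem:important-subsets}) for every $v \in V(G)$ gives the family $\cX=\bigcup_v \cX_v$ together with the cleaning sets $F_X$. This takes $O^*(4^k)$ time, and $|\cX| \leq 4^k n$. I will fix an indexing $\cX=\{X_1,\dots,X_N\}$ with $N \leq 4^k n$.

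Next I will invoke the construction of Bshouty and Gabizon~\cite{BshoutyG17cff} to obtain an $(N,(k,k4^k))$-CFF $\cF \subseteq \{0,1\}^N$. Two degenerate cases need a note. If $N$ is too small relative to $k$ and $s=k4^k$, the requirement $r,s<N$ fails, and I will instead use all $2^N \leq 2^{O(k4^k)}$ subsets, or pad $\cX$ with dummy indices. Also, the CFF is defined for sets of sizes exactly $r$ and $s$, so smaller sets $A,B$ must be padded with dummy or unrelated indices; a member that works for the padded pair also works for the original pair. For each $F \in \cF$ I will set $\cX'_F=\{X_i : F_i=1\}$ and run Steps~3--6 of $\RandomCover$ with $\cX'_F$ in place of the random sample. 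This produces one pair $(S_F,F_F)$ per member of $\cF$, and the output list consists of all these pairs.

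For correctness, properties~1 and~2 of Theorem~\ref{ithm:balanced-shadow-sampling} hold for every output pair, since their proof used nothing about the sample. For property~3, let $H$ be balanced of cost at most $k$. As in the proof of Theorem~\ref{ithm:balanced-shadow-sampling}, I will first reduce to the case where $G$ has no balanced component and then to an important $H$ dominating the original one. For this important $H$ the proof yields disjoint families $\cA_H,\cB_H \subseteq \cX$ with $|\cA_H| \leq k$ and $|\cB_H| \leq k4^k$. It also shows that $\cA_H \subseteq \cX'$ and $\cB_H \cap \cX'=\emptyset$ suffice for the output to cover $H$. The CFF property guarantees some $F\in\cF$ realizing exactly this event, so the corresponding pair covers $H$.

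The step I expect to need the most care is the size and time accounting, namely showing $N(k,k4^k)^{1+o(1)} = 4^{k^2(1+o(1))}$. I will bound $N(r,s)$ by the standard estimate $N(r,s) \leq \binom{r+s}{r}\cdot\mathrm{poly}(r+s)$. With $r=k$ and $s=k4^k$ this gives $\binom{k+k4^k}{k} \leq (e(1+4^k))^k = 4^{k^2(1+o(1))}$. Multiplying by $\log N = O(k+\log n)$ yields a list of size $O(4^{k^2(1+o(1))}\log n)$. The construction time is $N(r,s)^{1+o(1)} N\log N$, and each output pair costs polynomial time. Hence the total running time is $O^*(4^{k^2(1+o(1))})$.
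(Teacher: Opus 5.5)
Your proposal is correct and follows the paper's proof. You replace the random sample $\mathcal{X}'$ by iteration over a Bshouty--Gabizon $(|\mathcal{X}|,(k,k4^k))$-CFF, reuse the families $\mathcal{A}_H,\mathcal{B}_H$ from the randomized analysis, and bound $N(k,k4^k)$ via $\binom{k+k4^k}{k}\le 4^{k^2+O(k)}$, exactly as the paper does.

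One small correction: in the degenerate case $|\mathcal{X}|\le k+k4^k$, enumerating all $2^{|\mathcal{X}|}$ subsets could produce up to $2^{k+k4^k}$ pairs. That exceeds the claimed $O(4^{k^2(1+o(1))}\log n)$ size bound, so you should use the other option you mention, padding $\mathcal{X}$ with dummy indices.
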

\begin{proof}
  As discussed above, it suffices to replace the random sampling of a
  set $\cX' \subseteq \cX'$ in $\RandomCover(G, \BB, k)$
  by iterating $\cX'$ over an $(|\cX|, (k, k4^k))$-CFF.
  For this setting, the value of $N(r,s)$ is
  \[
    \left(\frac{r+s}{r}\right)^{r+o(r)} = 4^{k^2+o(k^2)},
  \]
  hence we get a family of size $4^{k^2(1+o(1))}\log n$.
  We produce on output set $(S_i,F_i)$ for every choice
  $\cX_i \in \cF$ where $\cF$ is the $(|\cX|,(k,k4^k))$-CFF
  of Bshouty and Gabizon~\cite{BshoutyG17cff}. Note that in this
  setting we have $n \leq n_G4^k$ where $n_G=|V(G)|$,
  hence the size overhead of $\log n=O(k \log n_G)$ 
  and the running time overhead of $n=O^*(4^k)$
  are negligible compared to the total output size.
\end{proof}

\subsection{Connection to Random Sampling of Important Separators}
\label{sec:balance:impsep}

Theorem~\ref{ithm:balanced-shadow-sampling} is a direct generalization of (the edge deletion case of) the shadow covering result of Marx and Razgon~\cite{marx2014fixed}, generalized to the setting of arbitrary balanced subgraphs. 
To highlight this connection, we 
recall the simplification of the theorem that we presented
in Section~\ref{sec:intro-coverings}.

\corSampleImportant*

\begin{proof}
  Let $H$ be an important balanced subgraph of some cost at most $k$,
  and let $\cK$ be the partition of $V(K)$ according to the connected
  components of $H$. By Lemma~\ref{lem:important-components}, 
  every component $K \in \cK$ is an important subset 
  and the sets $\delta_G(K)$ over $K \in \cK$ are pairwise disjoint.
  Thus we can follow the proof of Theorem~\ref{ithm:balanced-shadow-sampling}
  and define the two sets $\cA_H=\cK$ and $\cB_H$ consisting of all
  important subsets of cost at most $k$ that intersect $N_G(V(H))$.
  Let $\cX'$ be the set of important subsets sampled in the execution
  of $\RandomCover(G, \BB, k)$ and consider the event that $\cA_H \subseteq \cX'$
  and $\cB_H \cap \cX' = \emptyset$. In this case, we get $V(H) \subseteq S$
  and $N_G(V(H)) \cap S = \emptyset$. Furthermore, for every connected
  component $K \in \cK$ the cleaning set $F_K$ for $K$ produced by
  $\RandomCover(G, \BB, k)$ has minimum cost; and since $H$ is important,
  there are precisely $|F_K|$ edges of $G[K]$ absent from $H$ for
  every $K \in \cK$. Thus both $G[S]-F$ and $H$ contain connected components
  on vertex set $K$ and with precisely $\cost(K)$ edges of $G$ with
  endpoints in $K$ absent from the respective components. Thus they
  are domination equivalent. The final statement follows immediately.
\end{proof}

Let us now take a closer look at the shadow covering result of Marx and Razgon~\cite{marx2014fixed}.
We note some more terminology; note that the following is written for vertex cuts. Let $G$ be a graph and $W \subseteq V(G)$ a set of terminals. For $S \subseteq V(G)$, the \emph{shadow of $S$} is the set of vertices not reachable from any vertex of $W$ in $G-S$. 
A set $U \subseteq V(G)$ with $W \subseteq U$ is \emph{$W$-closest} 
if there is no set $U' \subset U$ with $W \subseteq U'$ such that $|N(U')| \leq |N(U)|$. 
Note that this is the opposite direction from how important separators are defined;
indeed, Marx and Razgon show that if $U$ is a $W$-closest set, then for every vertex $v \in V(G) \setminus N[U]$, $N(U)$ contains an important $(v,W$)-vertex separator~\cite{marx2014fixed}[Lemma~3.10]. In this scenario, $V(G) \setminus N[U]$ is the shadow of $N(U)$. 

Their formulation of the result that came to be known as shadow covering is then the following. 

\begin{theorem}[{Shadow covering~\cite[Theorem~3.6]{marx2014fixed}}] \label{thm:marxrazgon}
  There is a randomized algorithm that, given a graph $G$, a set $W \subseteq V(G)$ and an integer $k$, produces a set $Z \subseteq V(G) \setminus W$ such that the following holds.
  For every $W$-closest set $U$ with $|N(U)| \leq k$ with probability at least $1/2^{O(k^3)}$ the following hold.
  \begin{enumerate}
      \item $N(U) \cap Z = \emptyset$, and
      \item $V(G) \setminus N[U] \subseteq Z$.
  \end{enumerate}
\end{theorem}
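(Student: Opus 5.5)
We plan to derive this from the machinery of Section~\ref{sec:balance:procedure}, using the reduction sketched after Theorem~\ref{thm:important-subgraphs}. That reduction turns separators towards a terminal into balanced subgraphs. Here it has to be adapted from edge cuts to vertex cuts.

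\emph{Encoding.} First we turn the vertex-cut setting into an edge-cut setting. Split every vertex $x \notin W$ into $x^-$ and $x^+$ joined by one deletable edge, and replace every original edge by $k+1$ parallel copies. Then a set of at most $k$ edges that is cheap to cut corresponds to at most $k$ vertices. Next we contract $W$ into a single terminal $t$ and attach $k+1$ internally disjoint unbalanced cycles at $t$, declaring every other cycle balanced. As noted after Theorem~\ref{thm:important-subgraphs}, a balanced subgraph of cost at most $k$ is then exactly a subgraph avoiding $t$, and its cost is the size of the vertex separator it induces.

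\emph{Translating the shadow.} Fix a $W$-closest set $U$ with $|N(U)| \le k$ and let $R = V(G) \setminus N[U]$ be the shadow of $N(U)$. By \cite[Lemma~3.10]{marx2014fixed}, for every $v \in R$ the set $N(U)$ contains an important $(v,W)$-separator. Hence the component of $G - N(U)$ containing $v$, together with its boundary, is an important connected subset, in the sense of Lemma~\ref{lem:important-subsets}, of the encoded biased graph. We would then run $\RandomCover$ on the encoded graph and output as $Z$ the original vertices whose copies are covered by $S$, minus the vertices incident to $F$. The two required properties then become the two halves of ``$S$ covers $H$'':
\begin{itemize}
\item $R \subseteq Z$ corresponds to $V(H) \subseteq S$;
\item $N(U) \cap Z = \emptyset$ corresponds to $N(U)$ being separated from $S$, i.e.\ $N(U) \cap S = \emptyset$ as in Corollary~\ref{icor:sample-important}.
\end{itemize}

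\emph{Main obstacle.} The difficulty is that $R$ may consist of unboundedly many components. Lemma~\ref{lem:important-components} bounds the number of components only for an important subgraph of cost at most $k$, and the union of all shadow components does not obviously have cost at most $k$: distinct components may share boundary vertices, which are counted once as vertices but many times as edges. My first attempt is to group the shadow components by their neighbourhood $N(C) \subseteq N(U)$.
\begin{itemize}
\item There are at most $2^k$ distinct neighbourhoods.
\item For each neighbourhood $T$, the union of the components with $N(C) = T$ is a single important subset of cost $|T| \le k$ in the encoded graph, because the parallel edges make the cost count vertices.
\end{itemize}
So we would take $\cA_H$ to be these groups and $\cB_H$ to be the important subsets meeting $U \cap N(N(U))$. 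The cost of this naive grouping is $|\cA_H| \le 2^k$, which yields only a $2^{-O(k2^k)}$ bound.

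\emph{Recovering $2^{O(k^3)}$.} To recover the stated probability, I would pass to a minimal subfamily of at most $k$ important separators whose shadows cover $R$, replacing each group by an important separator dominating it. This should follow from closestness via a submodularity or uncrossing argument on $N(U)$. The argument would cost $4^k$-probability per selected set, over $k$ selections, and each would additionally need to avoid $k\cdot 4^k$ bad sets. Checking that this loses at most an $O(k^3)$ exponent is the step I expect to be hardest.
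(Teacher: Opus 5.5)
There is a genuine gap in your first step. In an undirected graph, splitting $x$ into $x^-,x^+$ joined by one deletable edge cannot make edge costs count vertices. The heavy ($k+1$-fold) edges at $x$ must be attached to $x^-$ or $x^+$ once and for all, without knowing $U$.

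\emph{One attachment per edge.} If each original edge at $x$ is attached on one side and $x\notin W$ has three neighbours $a,b,c$, then two of them, say $a$ and $b$, share a side. Take the star on $x,a,b,c$ with $W=U=\{b\}$. This $U$ is trivially $W$-closest, with $N(U)=\{x\}$ and shadow $\{a,c\}$. Cutting $a$ off from the terminal now costs $k+1$ instead of $1$.

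\emph{Attachment on both sides.} If you attach as $x^+y^-$ and $y^+x^-$, the path $a^+x^-b^+$ bypasses the edge $x^-x^+$.

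The splitting trick needs directed edges. So your claim that the encoded cost ``is the size of the vertex separator it induces'' fails, and the correspondence between shadow components and important connected subsets of the encoded graph breaks down. Your ``main obstacle'' is a symptom of this. If the encoding really charged one per boundary vertex, the shadow together with the half-copies $x^-$, $x\in N(U)$, would be a single balanced subgraph of cost $|N(U)|\le k$. Each of its components would contain some $x^-$ (after discarding components of $G$ that avoid $W$). Lemma~\ref{lem:important-components} would then already bound the number of components by $k$, with no grouping needed.

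Your final repair is not merely unverified; it is false as stated. Take $U=W=\{w\}$ and $N(U)=\{x_1,\dots,x_k\}$, all adjacent to $w$. For every $T\subseteq N(U)$ with $|T|=k/2$, add a vertex $c_T$ adjacent exactly to $T$. Every $(c_T,w)$-separator contains $T$, so $T$ is the only important one, and its shadow is just $\{c_T\}$. Covering $R$ by shadows of individual important separators therefore needs $\binom{k}{k/2}$ of them.

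What is true is that at most $k$ of the separators $N(C)$ have a \emph{union} that separates all of $R$ from $W$. Exploiting this requires outputting the shadow of the union of the sampled separators, as Marx and Razgon do, rather than the union of chosen sets as in \RandomCover. The bad events must then be re-analysed.

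For comparison, the paper does not prove this vertex statement at all; it quotes it from Marx and Razgon. It only derives the edge-cut analogue, after redefining $W$-closest via $|\delta_G(U')|>|\delta_G(U)|$. In that setting $V(G)\setminus U$ is an important vertex subset, and $H=G-U$ is an important balanced subgraph of cost $|\delta_G(U)|\le k$. Corollary~\ref{icor:sample-important} then directly gives $V(H)\subseteq S$ and $N_G(V(H))\cap S=\emptyset$ with probability $2^{-O(k^2)}$. There, costs count edges natively, so neither of the two problems above arises.
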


Again, let $U$ be a $W$-closest set and assume that $Z$ meets the conditions in the theorem.
Then $S:=N(U)$ acts as a separator and $V(G) \setminus N[U]$ is the shadow of $N(U)$.
The set $Z$ then contains the shadow of $S$ and is disjoint from $S$ itself. 
They also show a derandomized version, and follow-up work has generalized the notion to directed graphs and improved the success probability to $1/2^{O(k^2)}$~\cite{chitnis2015directed} and subsequently, very recently (for undirected graphs) to $1/2^{O(k \log k)}$~\cite{chu2026faster}.
We focus on the above basic form here.

To connect this to important balanced subgraphs, recall the construction discussed in Section~\ref{sec:balance:important}. Let $G$ be a graph and $W \subseteq V(G)$ a set of terminals. Attach a sufficiently large number of unbalanced cycles to every vertex of $W$ and define all other cycles of the graph to be balanced. 
Let $(G,\BB)$ be the resulting biased graph. Then a balanced subgraph $H$ of $G$ of cost at most $k$ is any subgraph $H$ (of cost at most $k$) such that $W \cap V(H)=\emptyset$. 

To state the connection precisely, let us redefine $W$-closest sets to refer to edge cuts, so that a set $U \subseteq V(G)$ with $W \subseteq U$ is $W$-closest if for any set $U' \subset U$ with $W \subseteq U'$ we have $|\delta_G(U')| > |\delta_G(U)|$. 
Then it is easy to see that $U$ is a $W$-closest set if and only if $V(G) \setminus U$ is an important vertex subset for $(G,\BB)$, i.e.\ $H=G-U$ is an important balanced subgraph (of cost $|\delta_G(U)$). Thus Corollary~\ref{icor:sample-important} implies Theorem~\ref{thm:marxrazgon} for the case of edge cuts. 

We note in passing that collections of unbalanced cycles for biased graphs are closed under union; i.e.\ if $(G,\BB_1)$ and $(G,\BB_2)$ are biased graphs then so is $(G,\BB_1 \cap \BB_2)$.
Thus, the above construction can if desired be combined with other biased graph constructions, resulting in balanced subgraphs being defined as both being disjoint from $W$ and balanced in a second biased graph sense. 

However, we briefly note a difference in application of Theorem~\ref{thm:marxrazgon} and Theorem~\ref{ithm:balanced-shadow-sampling}. Marx and Razgon use Theorem~\ref{thm:marxrazgon} to solve \textsc{Multicut} parameterized by the solution size, by applying a greedy reduction to the shadowless case; i.e.\ they show, at a certain stage of their algorithm, that the solution $S$ they are looking for can be assumed to have the property that $V(G)$ minus $S$ and the shadow of $S$ is a $W$-closest set (for a known set of terminals $W$). They then apply a reduction rule to simplify the graph by removing the shadow of $S$ from $G$, assuming that $Z$ in Theorem~\ref{thm:marxrazgon} has been guessed correctly.
This greedy aspect has been highly successful for many other FPT graph separation algorithms
(see e.g.~\cite{ChitnisHM13dmwc,chitnis2015directed,HatzelJLMPSS23}),
but as has been noted, it fails to apply in certain situations, e.g.\ in a setting of weighted graph cuts. Indeed, \textsc{Weighted Multicut} was shown to be FPT only more recently, via an application of the \emph{flow augmentation} procedure~\cite{KimPSW22weighted}. 
A particularly interesting case is \textsc{Directed Multiway Cut} which, along with a few related problems, is FPT in the unweighted setting, using shadow covering techniques, 
but is W[1]-hard in the weighted case~\cite{ChitnisHM13dmwc,HatzelJLMPSS23}.

In our algorithm for $\minlin{2}{\ZZ_{p^d}}$, our main application builds on Theorem~\ref{ithm:balanced-shadow-sampling} rather than Corollary~\ref{icor:sample-important}, and our main interest is the cleaning set $F$ rather than the exact balanced subgraph $G_B$ itself. 
In particular, we use no greedy ``shadow bypass'' step, and our algorithm could in principle be extended to the weighted case. We omit this, but see discussion in Section~\ref{sec:discussion}.

\section{Algorithm for $\minlinz{2}{m}$}
\label{sec:upperbounds}

Recall that $\omega(m)$ is the number of distinct prime factors of $m$.
Our goal in this section is to prove the
upper bound from Theorem~\ref{thm:main}
as summarized below.

\begin{restatable}{theorem}{thmmainalgo}  \label{thm:algorithm-main}
  $\minlin{2}{\ZZ_m}$ is FPT-approximable within $\omega(m)$
  for every integer $m \geq 2$.
\end{restatable}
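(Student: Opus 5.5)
The plan is to derive Theorem~\ref{thm:algorithm-main} from the exact FPT result for prime powers, Theorem~\ref{thm:zpd_is_fpt}, together with the Chinese Remainder Theorem (Proposition~\ref{prop:crt}). This is exactly the reduction recorded in~\cite[Proposition~4]{Dabrowski:etal:esa2025}; I would reprove it briefly to keep the section self-contained. Write $m = p_1^{n_1}\cdots p_\ell^{n_\ell}$ with $\ell=\omega(m)$. By Proposition~\ref{prop:crt}, $\ZZ_m \cong \bigoplus_{i=1}^{\ell}\ZZ_{p_i^{n_i}}$, and the isomorphism acts coordinatewise.

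First, I would translate the instance. An equation $au+bv=c \pmod m$ holds under an assignment $\varphi$ if and only if, for every $i\in[\ell]$, the reduced equation $a u + b v = c \pmod{p_i^{n_i}}$ holds under $\varphi \bmod p_i^{n_i}$. Moreover, any tuple of assignments $(\varphi_i \colon V\to\ZZ_{p_i^{n_i}})_i$ glues into a unique $\varphi\colon V\to\ZZ_m$ by the CRT. Hence, for each $i$, we form the projected system $S_i$ over $\ZZ_{p_i^{n_i}}$, keeping crisp equations crisp.

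Second, I would run the FPT algorithm of Theorem~\ref{thm:zpd_is_fpt} on each $(S_i,k)$.
\begin{itemize}
\item If some $(S_i,k)$ is a no-instance, then $(S,k)$ is a no-instance. Indeed, a solution $\varphi$ of cost at most $k$ for $S$ projects to a solution of cost at most $k$ for $S_i$, since an equation satisfied mod $m$ is satisfied mod $p_i^{n_i}$.
\item Otherwise, using the self-reducibility remark from the introduction (unary equations $x=a$ are available), we obtain an assignment $\varphi_i$ violating a set $Z_i$ with $|Z_i|\le k$. We glue these into $\varphi$. An equation violated by $\varphi$ is violated in some coordinate, so it lies in $\bigcup_i Z_i$, giving $\cost(\varphi)\le \sum_i|Z_i|\le \ell k=\omega(m)\,k$.
\end{itemize}
Crisp equations are never in any $Z_i$, so they remain satisfied. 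The running time is $\ell$ calls to an FPT algorithm plus polynomial overhead; $\ell$ is a constant since $m$ is fixed.

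The only step needing care is the soundness direction: projection does not increase cost, and gluing sums the costs. Both follow directly from the coordinatewise nature of the ring isomorphism. The real difficulty is entirely hidden in Theorem~\ref{thm:zpd_is_fpt}, whose proof via the coset relaxation, balanced subgraph covering (Theorem~\ref{ithm:balanced-shadow-sampling}) and disambiguation occupies the rest of the section. For the present theorem, the main obstacle is merely ensuring that $\minlin{2}{\ZZ_{p^d}}$ is available in the general form, with arbitrary coefficients and crisp equations, rather than only for special instances. This is handled by citing~\cite[Lemma~5]{Dabrowski:etal:esa2025} to reduce general instances to special ones without changing the parameter.
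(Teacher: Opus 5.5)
Your proposal is correct and follows the paper's route: the paper derives this theorem from Proposition~\ref{prop:product-approx} combined with the FPT algorithm for $\minlin{2}{\ZZ_{p^d}}$, and you reprove that proposition directly via CRT projection and gluing instead of citing it, while the paper's proof of this theorem is also where the correctness of \SolveMinLin (Lemmas~\ref{lem:Zpdcomplete} and~\ref{lem:Zpdsound}) and its FPT running time are assembled. One minor imprecision: the result you cite (Proposition~\ref{prop:simple}, i.e.\ Lemma~5 of~\cite{Dabrowski:etal:esa2025}) yields \emph{simple} instances via a Turing reduction, and the further step to \emph{special} instances is the paper's Lemma~\ref{lem:special-instances}; this does not affect your argument, since Theorem~\ref{thm:zpd_is_fpt} already covers general instances.
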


Let $p_1^{n_1} \cdots p_\ell^{n_\ell}$ be the prime factorization of $m$ and
note that $\omega(m) = \ell$.
We know that $\ZZ_m$ is isomorphic to the direct sum
$\bigoplus_{i=1}^{\ell} \ZZ_{p_i^{n_i}}$ by 
Proposition~\ref{prop:crt}.
The following result implies that we can focus 
on the prime power case
when analyzing \FPT-approximability of $\minlin{2}{\ZZ_m}$.

\begin{proposition}[Proposition~4~in~\cite{Dabrowski:etal:esa2025}]
  \label{prop:product-approx}
  Suppose that the ring $R$ is 
  isomorphic to a direct sum $\bigoplus _{i=1}^{\ell} R_{i}$.
  If $\minlin{2}{R_{i}}$ is \FPT-approximable within a factor $c_i$
  for all $i \in [\ell]$, then $\minlin{2}{R}$ is \FPT-approximable
  within the factor $\sum_{i=1}^\ell c_i$.
\end{proposition}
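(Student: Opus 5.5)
The statement is Proposition~\ref{prop:product-approx}. I will prove it by solving the problem separately in each coordinate of the direct sum and paying for every equation that fails in at least one coordinate.

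Fix an isomorphism $R \cong \bigoplus_{i=1}^{\ell} R_i$. An equation $au+bv=c$ over $R$ holds under an assignment $\varphi$ exactly when its projection $a_iu+b_iv=c_i$ holds in every coordinate $i$, where $a_i,b_i,c_i$ are the $i$-th components. Also, assignments $\varphi\colon V\to R$ correspond bijectively to tuples $(\varphi_1,\dots,\varphi_\ell)$ with $\varphi_i\colon V\to R_i$. So for an instance $(S,S^\infty,k)$ of $\minlin{2}{R}$ I build $\ell$ instances $(S_i,S_i^\infty,k)$ of $\minlin{2}{R_i}$. Each equation of $S$ is replaced by its $i$-th projection, and an equation is crisp in $S_i$ iff it is crisp in $S$. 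This is a polynomial-time construction, since $R$ is fixed.

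\textbf{Completeness.} Suppose $Z$ with $|Z|\le k$ makes $S-Z$ consistent via $\varphi$. Then $\varphi_i$ satisfies all projections outside $Z$. Hence each $(S_i,k)$ is a yes-instance, and the factor-$c_i$ FPT-approximation algorithm for $R_i$ accepts it.

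\textbf{Soundness.} If the algorithm for $R_i$ accepts, then $(S_i,c_ik)$ is a yes-instance: there is $Z_i$ with $|Z_i|\le c_ik$ and an assignment $\varphi_i$ satisfying $S_i-Z_i$, with no crisp equation in $Z_i$. Put $\varphi=(\varphi_1,\dots,\varphi_\ell)$ and $Z=\bigcup_i Z_i$, identifying each projected equation with its original. Every equation outside $Z$ holds in every coordinate, so it holds in $R$, and no crisp equation lies in $Z$. Thus $|Z|\le \sum_i c_i k$.

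\textbf{The algorithm.} Run all $\ell$ subroutines and accept iff all accept. If $(S,k)$ is a yes-instance, all accept. If all accept, then $(S,(\sum_i c_i)k)$ is a yes-instance. The running time is $\sum_i f_i(k)\cdot\mathrm{poly}$, which is FPT.

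The only delicate point is the bookkeeping between equations of $R$ and their projections. A multiset of equations yields a multiset of projections, so edge multiplicities match and the union bound is valid. Degenerate projections, such as $0u+0v=0$ (always true) or $0=c_i$ with $c_i\neq 0$ (unsatisfiable), stay within $\minlin{2}{R_i}$ and need no special treatment. The constructive version of the approximation, returning an actual solution, follows by the self-reducibility remark in the introduction.
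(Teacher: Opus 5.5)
Your proof is correct and is the standard coordinate-wise argument behind this proposition; the paper does not reprove it but imports it as Proposition~4 of \cite{Dabrowski:etal:esa2025}. You project every equation onto the coordinates of the direct sum (keeping crispness), run the factor-$c_i$ algorithm on each projected instance with the same parameter $k$, and use that acceptance certifies $(S_i,c_ik)$ is a yes-instance, so the union of the per-coordinate deletion sets witnesses that $(S,(\sum_i c_i)k)$ is a yes-instance.
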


Thus, it suffices to show that
$\minlin{2}{\ZZ_{p^d}}$ is FPT-approximable
within the factor $1$, or equivalently,
to prove Theorem~\ref{thm:zpd_is_fpt} restated below.

\thmZpd*

We prove the theorem in three steps. 
First, we show that we only need to prove
the theorem for \emph{special instances}
(Section~\ref{ssec:special}).
Then, given a special instance $(I,k)$
of $\minlin{2}{\ZZ_{p^d}}$, we construct 
an instance $(\bool{I},k)$ of
\MinSat{} (Section~\ref{sec:mincsp-reduction}). 
The instance $(\bool{I},k)$ is a relaxation of $(I,k)$: 
if $(I, k)$ is a yes-instance, then $(\bool{I}, k)$
is a yes-instance, 
but the converse does not necessarily hold.
To overcome this,
in the third step (Section~\ref{sec:the-algorithm}),
we augment $\bool{I}$ with additional constraints 
using the machinery for balanced subgraph covering from 
Section~\ref{sec:balanced-subgraph-coverings}.
This ensures that the resulting instance is yes/no-equivalent to $(I,k)$ with sufficient probability. 
The full algorithm is depicted in Figure~\ref{fig:mainalgorithm}.

\subsection{Reduction to Special Instances}
\label{ssec:special}

An instance of $\minlin{2}{R}$ is \emph{simple}
if all binary equations are of the form
$a \cdot u = v$ and all unary
equations are crisp and of the form 
$v = b$ for some $a,b \in R$. 
By using iterative compression, homogenization and branching in FPT time, 
Dabrowski~et~al.\ have shown the following.

\begin{proposition}[See Lemma~5~in~\cite{Dabrowski:etal:esa2025}]
  \label{prop:simple}
  If $\minlin{2}{\ZZ_{p^d}}$ restricted to simple instances
  is in \FPT, then $\minlin{2}{\ZZ_{p^d}}$ is in \FPT.
\end{proposition}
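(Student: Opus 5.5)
The plan is to follow the recipe used by Dabrowski et al.: iterative compression, then branching on a bounded set of variables, then homogenization. Together these turn an arbitrary instance into a bounded number of simple instances, each with parameter $k$, and an FPT algorithm for simple instances is run on each. Throughout, $p^d$ is a constant, so enumerating values of $\ZZ_{p^d}$ for $O(k)$ variables costs only $(p^d)^{O(k)}$.

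\emph{Step 1 (preprocessing and compression).} First make every equation contain at most two variables written as $au+bv=c$ or $av=c$. Replace each soft unary equation $av=c$ by the soft binary equation $av = c\cdot s$, where $s$ is a fresh global variable with the crisp equation $s=1$. I will check that this preserves costs exactly. Then apply standard iterative compression over the soft equations. We add equations one at a time. At each stage we hold a set $Z'$ of at most $k+1$ soft equations whose removal makes the current system consistent, and we must decide whether a solution of size at most $k$ exists. We branch over which equations of $Z'$ the new solution deletes ($2^{k+1}$ choices). Deleted ones are removed and the budget is reduced accordingly; the kept ones are made crisp. Let $W$ be the at most $2(k+1)$ variables occurring in $Z'$. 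We branch over all $(p^d)^{|W|}$ assignments $\beta$ to $W$. Each guess is enforced by crisp unary equations $w=\beta(w)$, and branches where $\beta$ violates a crisp equation of $Z'$ are discarded. The equations of $Z'$ are now implied by the unary constraints and can be dropped.

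\emph{Step 2 (homogenization).} By construction, $S\setminus Z'$ has a satisfying assignment $\varphi_0$, computable in polynomial time by the procedure for $\textsc{2-Lin}(\ZZ_m)$. Substitute $x = x' + \varphi_0(x)$ for every variable. Each equation of $S\setminus Z'$ is satisfied by $\varphi_0$, so it becomes homogeneous, $au'+bv'=0$, and its satisfaction status is unchanged for every assignment. The crisp unary equations become $w' = \beta(w)-\varphi_0(w)$, which is still a crisp unary equation of the required shape.

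\emph{Step 3 (normal form for binary equations).} Write $a=p^i\alpha$ and $b=p^j\beta'$ with $\alpha,\beta'$ units, and without loss of generality $i\le j$. If $a=0$, the equation has effectively one variable and is handled as the case $i=d$; the equation $0=0$ is simply deleted. Multiplying by $\alpha^{-1}$ gives the equivalent equation $p^i u' = \gamma v'$, where $\gamma=-\alpha^{-1}\beta' p^j$. Introduce a fresh variable $w$ with a crisp equation $p^i\cdot u' = w$, which is allowed since only unary equations are required to be crisp. Replace the soft equation by the soft equation $\gamma\cdot v' = w$. Because $w$ is forced to equal $p^i u'$, every assignment violates the new soft equation exactly when it violates the old one. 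Conversely, every assignment of the old system extends uniquely to the new one. Hence costs are preserved exactly and the parameter stays $k$. All binary equations now have the form $a\cdot u = v$ and all unary equations are crisp of the form $v=b$, so the instance is simple.

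The step I expect to need the most care is the bookkeeping in the compression step. Specifically, I must confirm that guessing values on $W$ and then homogenizing via $\varphi_0$ neither creates soft unary equations nor leaves three-variable equations. This is why $W$ is fixed by branching \emph{before} homogenization, and why soft unary constraints are routed through the crisp anchor $s$. With that in place, the total overhead is $2^{O(k)}\cdot (p^d)^{O(k)}$ times the running time of the simple-instance algorithm. This gives an FPT algorithm for $\minlin{2}{\ZZ_{p^d}}$.
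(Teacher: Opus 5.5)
Your proposal is correct and follows essentially the route of the cited Lemma~5 of Dabrowski et al.\ (iterative compression, branching on the compression set and on the values of its at most $2(k+1)$ variables, then homogenization via an assignment satisfying everything outside the compression set); the paper only cites that lemma and names these same ingredients. Two routine details still need patching: first, crisp unary input equations $av=c$ with $a$ a non-unit become $av'=0$ after homogenization, which is not of the form $v=b$, so route them through $s$ as you do for the soft ones, or use your fresh-variable trick with a crisp $w=0$; second, the hypothesis only gives a decision algorithm, so each compression step needs the standard self-reduction (delete or make crisp the soft equations of the simple instance one at a time, which keeps it simple) to produce the size-$\le k$ solution that the next iteration requires.
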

We remark that the reduction underlying Proposition~\ref{prop:simple}
is a Turing reduction: it uses iterative compression,
which makes $\bigoh^*(2^{O(k)})$ calls to the algorithm for $\minlin{2}{\ZZ_{p^d}}$
on simple instances.
To simplify the proofs and the presentation, 
it is more convenient for us to work with
\emph{special} instances,
which only allow the following equations
\begin{itemize}
   \item a single crisp equation $v = 1$,
   \item binary equations $r \cdot u = v$
     for units $r$ in $\ZZ_{p^d}$, and
   \item binary equations $p \cdot u = v$.
\end{itemize}

\begin{lemma}
\label{lem:special-instances}
  If $\minlin{2}{\ZZ_{p^d}}$ restricted to special instances
  is in \FPT, then $\minlin{2}{\ZZ_{p^d}}$ is in \FPT.
\end{lemma}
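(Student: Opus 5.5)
By Proposition~\ref{prop:simple}, it suffices to give a parameterized reduction from simple instances to special instances that preserves the parameter $k$ and the optimum cost. I would map each simple instance $(S,k)$ to a special instance $(S',k)$ with $\mincost(S')=\mincost(S)$, computable in polynomial time (with $p$ and $d$ fixed). The construction rewrites each equation type separately.

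\textbf{Crisp unary equations.} First I introduce a fresh variable $s$ together with the single crisp equation $s=1$. Each element $b\in\ZZ_{p^d}$ is either $0$ or can be written uniquely as $b=r\cdot p^j$ with $r$ a unit and $0\le j<d$. A crisp equation $v=b$ with $b=rp^j$ is then expressed by a chain of crisp equations
\[
s=w_0,\qquad w_{i+1}=p\cdot w_i\ (0\le i<j),\qquad v=r\cdot w_j .
\]
The first link $s=w_0$ uses the unit $1$. The case $b=0$ uses $j=d$. Each link is made crisp by taking $k+1$ copies, which is legitimate since crisp equations are modelled this way.

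\textbf{Soft binary equations.} A soft equation $v=a\cdot u$ with $a=rp^j$ is replaced by the chain
\[
w_1=p\cdot u,\quad\dots,\quad w_j=p\cdot w_{j-1},\qquad v=r\cdot w_j ,
\]
which uses fresh variables $w_1,\dots,w_j$. All links are crisp except one. This is where I need the correct orientation: the equations $r\cdot u=v$ and $p\cdot u=v$ in the special form have $u$ on the multiplied side. I keep the last link $v=r\cdot w_j$ soft, and in the special form it is written as $r^{-1}\cdot v=w_j$. The case $a=0$ is handled the same way with $j=d$, which gives $w_d=0$. If $a$ is a unit, the equation is already special.

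\textbf{Cost preservation.} I then check that each gadget behaves exactly like the original equation.
\begin{itemize}
\item Since all chain links except one are crisp, the fresh variables are functionally determined by $u$: $w_j=p^j u$.
\item The remaining soft link is satisfied if and only if $v=r p^j u=a u$.
\item Conversely, an assignment to $S$ extends to $S'$ by setting $w_i=p^i u$, with the same cost. So every assignment to $S$ extends to one of equal cost, and every assignment to $S'$ of finite cost restricts to one of equal cost.
\end{itemize}
Hence $\mincost(S')=\mincost(S)$, and $k$ is unchanged.

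The main obstacle, though minor, is the bookkeeping of crisp versus soft equations. The special form allows only one crisp unary equation, so the soft equations must be modelled with $k+1$ copies while the original soft equation keeps exactly one soft copy. This is what guarantees that exactly one unit of cost is incurred per violated original equation.
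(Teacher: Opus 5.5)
Your proposal is correct and follows essentially the paper's proof. It introduces a fresh $s$ with the single crisp equation $s=1$ and routes unary equations through $s$. It then splits each soft $a\cdot u=v$ with $a=rp^j$ into a crisp chain of $p$-multiplications followed by one soft unit link. You are in fact slightly more careful than the paper, which writes crisp unary equations as $v=b\cdot s$ (not literally special unless $b$ is a unit or $p$) and silently excludes $a=0$. One caveat: if you model crisp links by $k+1$ copies, $\mincost(S')=\mincost(S)$ holds only up to the threshold $k$, which is all the reduction needs.
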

\begin{proof}
  Let $(I, k)$ be an instance of $\minlin{2}{\ZZ_{p^d}}$.
  By Proposition~\ref{prop:simple}, we may assume
  that it is simple.
  To construct a special instance $(I', k)$
  with the same parameter,
  we let $V(I') = V(I) \cup \{s\}$,
  where $s$ is a new variable,
  and add a crisp equation $s = 1$.
  Then, for every crisp unary equation
  $v = b$ in $I$, we add a crisp
  binary equation $v = b \cdot s$.
  Now, consider a soft binary equation $C$ of 
  the form $a \cdot u = v$ and
  observe that $a = r \cdot p^\ell$ for some unit $r$
  and $0 \leq \ell < d$.
  Add auxiliary variables $z_1,\dots,z_{\ell}$ to $I'$
  with crisp equations
  \[
    p \cdot u = z_1, \
    p \cdot z_1 = z_2, \
    \dots, \
    p \cdot z_{\ell-1} = z_\ell
  \]
  and a soft equation
  \[ 
    v = r \cdot z_\ell.
  \]
  Note that these equations together imply $C$.
  Moreover, if the last equation is deleted,
  then there is no equation implies on $u$ and $v$.
  Hence, $(I, k)$ is a yes-instance
  if and only if $(I', k)$ is a yes-instance.
\end{proof}

The computations in Lemma~\ref{lem:special-instances}  
can obviously be performed in polynomial time, 
so at the cost of an additional
$\bigoh^*(2^{O(k)})$ factor in the running time, 
we may reduce the $\minlin{2}{\ZZ_{p^d}}$ 
to solving special instances.

\subsection{Relaxation}
\label{sec:mincsp-reduction}

In this section, we show how special instances
of $\minlin{2}{\ZZ_{p^d}}$
can be relaxed with the aid of the \MinSat{} problem.
We begin by describing how certain subsets of domain elements can be represented in a 
convenient way (Section~\ref{ssec:cosets}), and
then use this representation as the basis for our transformation into
\MinSat{} (Section~\ref{ssec:boolean-transformation}).

\subsubsection{Cosets and Base-$p$ Representations}
\label{ssec:cosets}

An \emph{ideal} of a ring $(R, +, \cdot)$ is a subset $I \subseteq R$ of the elements
closed under addition and absorbing multiplication, i.e.\
$(I,+)$ is a subgroup of $(R,+)$ and 
for every $r \in R$ and every 
$x \in I$, the product $r \cdot x$ is in $I$. 
The \emph{cosets} of an ideal $I$ are the sets
$r + I = \{r + x \mid x \in I\}$ for every $r \in R$.
Note that if $r \in I$, then $r + I = I$,
and a coset $r + I$ is called \emph{proper} if $r \notin I$.
For example, in $\ZZ_{27}$ the elements divisible
by $3$ form an ideal, sometimes denoted by $(3)$,
and the proper cosets are $1 + (3)$ and $2 + (3)$.
Note that the cosets of an ideal form a partition of the ring elements.

We introduce a way of describing certain cosets of
the elements in $\ZZ_{p^d}$.
Every $a \in \ZZ_{p^d}$ admits a base-$p$ representation, which is
a tuple $(a_{d-1}, \dots, a_0)$ 
where each $a_j \in \ZZ_p$ is called a \emph{digit}:
formally, $a = \sum_{j=0}^{d-1} a_j \cdot p^j$.
The \emph{order} of $a$, denoted by $\ord(a)$, is the number
of trailing zeros in its base-$p$ representation;
equivalently, it is the largest integer $i$ such that
$a$ is divisible by $p^i$.
The digits $a_{d-1}, \dots, a_{\ord(i)}$ are called \emph{significant},
ordered so that more significant digits have larger subscript.
To extract $j$ significant digits of $a$,
we use the notation $\sind{a}{j}$ for $a \in \ZZ_{p^d}$ and $j \geq 1$
which equals $(a / p^{\ord(a)}) \bmod p^j$.

In the course of the algorithm we encounter variables
whose exact value is unknown but the constraints 
determine the order and some but not all significant digits.
For example, consider an equation $3v = 18 \mod 27$.
It implies that $v$ may take values $6$, $15$ or $24$.
The base-$3$ for this set of values may be described as 
$(*, 2, 0)$, meaning that the order must be $1$, 
the least significant digit must be $2$, and 
the next significant digit denoted by $*$ may be arbitrary.
More generally, a coset 
denoted by $\coset{s}{\ell}{i}$ corresponds to a tuple
\[
  \underbrace{* \ \cdots \ *}_{j} \
  \underbrace{s_\ell \ \cdots \ s_1}_{\ell} \
  \underbrace{0 \ \cdots \ 0}_{i},
\]
where $i + \ell + j = d$, 
$\ell \geq 1$, $s_1 \neq 0$ and
$s = (s_\ell,\ldots,s_1)$ is called the \emph{suffix} of the coset.
Formally, $\coset{s}{\ell}{i} = \{a \in \ZZ_{p^d} : \ord(a) = i \text{ and } \sind{a}{\ell} = s \}$.

\subsubsection{Transformation into \MinSat}
\label{ssec:boolean-transformation}

Let $(I, k)$ be a special instance of $\minlin{2}{\ZZ_{p^d}}$.
We construct an instance $\bool{I}$ of $\MinSat(\bA_{p^d})$ 
for a finite Boolean, bijunctive and $2K_2$-free constraint language $\bA_{p^d}$.
The language $\bA_{p^d}$ is finite and only depends on $\ZZ_{p^d}$.
We do not explicitly list the relations but rather define them by conjunctions of $1$- and $2$-clauses
and showing that their Gaifman graphs are $2K_2$-free.

For the constructed instance $\bool{I}$,
the following holds:
if $(I, k)$ is a yes-instance,
then $(\bool{I},k)$ is a yes-instance.
We stress that the converse is not necessarily true:
see the example in \eqref{eq:triangle_on_a_stick} in the introduction
and the discussion below.
Thus, $(\bool{I},k)$ is a relaxation of $(I, k)$.
The instance $\bool{I}$ will be constructed in polynomial time,
and since $\bA_{p^d}$ is finite and is $2K_2$-free,
we can decide $(\bool{I}, k)$ is FPT time by Theorem~\ref{thm:2k2-free}.

Let us start with a high-level overview
of the construction.
For every variable $v \in V(I)$,
we will introduce a set $\bool{V}(v)$ of Boolean
variables in $\bool{I}$ and add a set
of crisp constraints $\bool{K}(v)$ on these variables.
The Gaifman graph of $\bool{K}(v)$ will form a clique.
Then, for every equation $C$ on variables $u,v$ in $I$,
we will introduce a Boolean constraint
$\bool{C}$ on variable set $\bool{V}(u) \cup \bool{V}(v)$,
and show that it $2K_2$-free.
We present the steps in detail below.

\paragraph{Variables in $I$.}

For every variable $v \in V(I)$, create a set $\bool{V}(v)$ of
Boolean variables $\bvar{v}{s}{\ell}{i}$
for all integers $i \geq 0$ and $\ell \geq 1$ 
such that $i + \ell \leq d$,
and all units $s \in \ZZ_{p^\ell}$.
The variable $\bvar{v}{s}{\ell}{i}$
indicates whether the value assigned to $v$ belongs 
to $\coset{s}{\ell}{i}$.

Boolean variables $\bvar{v}{s}{\ell}{i}$ with $\ell + i = d$ 
are called \emph{singleton variables} as
$\coset{s}{\ell}{i}$ consists of a single element,
while variables with $\ell + i < d$ are called \emph{coset variables}.
For conciseness,
we sometimes write $\bvarsingle{v}{a}$ instead of $\bvar{v}{s}{\ell}{d-\ell}$
for the singletons, where $a = s \cdot p^{d-\ell}$.
Note that there is no Boolean variable indicating that $v$
is assigned value $0$: instead, we treat setting 
all Boolean variables in $\bool{V}(v)$ to false 
as setting $v$ to $0$.

Every pair of cosets we defined are either
disjoint or one is a superset of the other, i.e. the cosets
form a laminar set family.
To encode this, we create a set of crisp constraints 
$\bool{K}(v)$ for every pair of variables in $\bool{V}(v)$.
\begin{itemize}
  \item If $\coset{a}{\ell}{i}$ and $\coset{a'}{\ell'}{i'}$ are disjoint, 
  then we add to $\bool{K}(v)$ a negative clause
  $(\lnot \bvar{v}{a}{\ell}{i} \lor \lnot \bvar{v}{a'}{\ell'}{i'})$.
  \item If $\coset{a}{\ell}{i} \subsetneq \coset{a'}{\ell'}{i'}$, 
  then we add to $\bool{K}(v)$ an implication
  $(\bvar{v}{a}{\ell}{i} \to \bvar{v}{a'}{\ell'}{i'})$.
  Observe that for $\coset{a}{\ell}{i} \subsetneq \coset{a'}{\ell'}{i'}$
  to hold, we must have $i = i'$, $\ell > \ell'$ and
  $a \bmod p^{\ell - \ell'} = a'$.
\end{itemize}
The Gaifman graph of $\bool{K}(v)$ is complete, i.e.\ a clique.

\begin{example}
  Consider $\ZZ_9$, i.e. $p=3$ and $d=2$.
  For a variable $v \in V(I)$, the set $\bool{V}(v)$ contains 
  the following Boolean variables:
  \begin{itemize}
    \item singleton $\bvarsingle{v}{a}$ for $a \in \{1,2,3,4,5,6,7,8\}$, and
    \item coset variables $\bvar{v}{1}{1}{0}$ and $\bvar{v}{2}{1}{0}$
    corresponding to $\{1,4,7\}$ and $\{2,5,8\}$, respectively.
  \end{itemize}
  The constraints in $\bool{K}(v)$ are
  \begin{itemize}
    \item $(\bvarsingle{v}{a} \to \bvar{v}{1}{1}{0})$ for $a \in \{1,4,7\}$,
    \item $(\bvarsingle{v}{b} \to \bvar{v}{2}{1}{0})$ for $b \in \{2,5,8\}$, and
    \item negative clauses between all other pairs of variables.
  \end{itemize}
\end{example}

\paragraph{Constraints in $I$.}

The instance $I$ is special so we only have to consider three
cases.

\begin{enumerate}

\item
For the crisp equation $(v = 1)$ in $I$, 
we add a crisp constraint 
$\bvarsingle{v}{1}$ to $\bool{I}$.

\item
For every binary equation $C = (r \cdot u = v)$ in $I$
where $r$ is a unit,
we create a constraint $\bool{C}$ in $\bool{I}$ on 
variables $\bool{V}(u) \cup \bool{V}(v)$
defined as a conjunction all constraints in $\bool{K}(u)$ and $\bool{K}(v)$, and the constraints
\[ 
  (\bvar{u}{a}{\ell}{i} \leftrightarrow \bvar{v}{b}{\ell}{i})
\]
for all $i$ and $\ell$ such that 
$i + \ell \leq d$ and
$b = a \cdot r \bmod p^{\ell}$.
Observe that the Gaifman graph of $\bool{C}$ is $2K_2$-free 
because $\bool{K}(u)$ and $\bool{K}(v)$ form cliques.

\item
For every binary equation $C = (v = p \cdot u)$ in $I$,
we create a constraint $\bool{C}$ in $\bool{I}$ on 
variables $\bool{V}(u) \cup \bool{V}(v)$
defined as a conjunction of all constraints in $\bool{K}(u)$ and $\bool{K}(v)$, the constraints
\[
  (\lnot \bvar{v}{a}{\ell}{0}) 
\]
and
\[
  (\bvar{u}{a}{\ell}{i} \leftrightarrow \bvar{v}{a}{\ell}{i+1})
\]
for all $i$ and $\ell$ such that $i + \ell + 1 \leq d$ and 
all units $a \in \ZZ_{p^\ell}$.
Observe that the Gaifman graph of $\bool{C}$ is $2K_2$-free
because $\bool{K}(u)$ and $\bool{K}(v)$ form cliques.
\end{enumerate}

\begin{example}
  Consider the equation $C = (2u = v)$ over $\ZZ_9$.
  The constraints in $\bool{C}$, apart from 
  those in $\bool{K}(u)$ and $\bool{K}(v)$, are
  \begin{itemize}
    \item $(\bvarsingle{u}{1} \leftrightarrow \bvarsingle{v}{2})$,
    \quad $(\bvarsingle{u}{2} \leftrightarrow \bvarsingle{v}{4})$,
    \quad $(\bvarsingle{u}{3} \leftrightarrow \bvarsingle{v}{6})$,
    \quad $(\bvarsingle{u}{4} \leftrightarrow \bvarsingle{v}{8})$,
    \item[] $(\bvarsingle{u}{5} \leftrightarrow \bvarsingle{v}{1})$,
    \quad $(\bvarsingle{u}{6} \leftrightarrow \bvarsingle{v}{3})$,
    \quad $(\bvarsingle{u}{7} \leftrightarrow \bvarsingle{v}{5})$,
    \quad $(\bvarsingle{u}{8} \leftrightarrow \bvarsingle{v}{7})$,
    \item $(\bvar{u}{1}{1}{0} \leftrightarrow \bvar{v}{2}{1}{0})$, and
    \item $(\bvar{u}{2}{1}{0} \leftrightarrow \bvar{v}{1}{1}{0})$.
  \end{itemize}
\end{example}

\medskip

\noindent
We will now take a closer look at the connection
between the assignments to $I$ and the Boolean
assignments to $\bool{I}$.
Given an assignment $\alpha : V(I) \to \ZZ_{p^d}$, we define 
$\bool{\alpha} : V(\bool{I}) \to \{0, 1\}$ as follows.
\begin{itemize}
  \item For $v \in V(I)$ such that $\alpha(v) = 0$,
    let $\bool{\alpha}(\mathbf{x}) = 0$ for all $\mathbf{x} \in \bool{V}(v)$.
  \item For $v \in V(I)$ such that $\alpha(v) = r \neq 0$,
    let $\bool{\alpha}(\bvarsingle{v}{r}) = 1$, and extend $\bool{\alpha}$
    to the rest of $\bool{V}(v)$ so that all constraints in $\bool{K}(v)$
    are satisfied.
    Note that the extension is unique:
    only the variables with an implication 
    from $\bvarsingle{v}{r}$ in $\bool{K}(v)$ are set to $1$,
    and all the remaining variables are set to $0$.
\end{itemize}
Observe that $\bool{\alpha}$ obtained this way 
sets exactly one singleton variable in $\bool{V}(v)$ to $1$.
However, $\bool{K}(v)$ also admits satisfying assignments
that maps all singleton variables in $\bool{K}(v)$ to $0$
and maps a coset variable to $1$.
We call such Boolean assignments \emph{ambiguous on $v$}.
For example, consider $\ZZ_4$, i.e.\ $p=2$ and $d=2$,
and an assignment $\beta$ such that 
$\beta(\bvarsingle{v}{1}) = \beta(\bvarsingle{v}{2}) = \beta(\bvarsingle{v}{3}) = 0$ and
$\beta(\bvar{v}{1}{1}{0}) = 1$.
Note that $\coset{1}{1}{0} = \{1,3\}$, so,
intuitively, both assignments 
$v \mapsto 1$ and $v \mapsto 3$ 
are compatible with $\beta$.
If a Boolean assignment is ambiguous on some variable,
we call it \emph{ambiguous}, and \emph{unambiguous} otherwise.

\begin{lemma} \label{lem:unambiguous_Boolean_to_modpn}
  Let $(I, k)$ be a special instance of $\minlin{2}{\ZZ_{p^d}}$.
  If $\bool{I}$ admits an unambiguous assignment of 
  cost $k$, then $(I, k)$ is a yes-instance.
\end{lemma}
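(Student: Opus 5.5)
Given an unambiguous Boolean assignment $\beta$ to $\bool{I}$ of cost at most $k$, I will decode it into $\alpha\colon V(I)\to\ZZ_{p^d}$ and show that every equation of $I$ violated by $\alpha$ corresponds to a constraint of $\bool{I}$ violated by $\beta$. Since constraints of $\bool{I}$ are in one-to-one correspondence with equations of $I$ (the crisp $v=1$ maps to a crisp unit clause), this gives $\cost_I(\alpha)\le k$.

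\textbf{Decoding.} For each $v\in V(I)$, unambiguity of $\beta$ on $v$ means one of two things. Either some singleton variable $\bvarsingle{v}{a}$ is set to $1$, and by the negative clauses of $\bool{K}(v)$ between disjoint singletons it is the unique such $a$; or all variables of $\bool{V}(v)$ are $0$. (The latter uses that any coset variable set to $1$ with no singleton set to $1$ is exactly ambiguity.) We set $\alpha(v)=a$ or $\alpha(v)=0$ accordingly. I must be slightly careful here: $\beta$ might violate a soft constraint $\bool{C}$, and hence possibly violate the copy of $\bool{K}(v)$ inside it. So I will define $\alpha$ from the singleton variables only. I expect this to be fine because $\bool{K}(v)$ appears in every constraint containing $v$; any $\bool{C}$ that is satisfied certifies $\bool{K}(v)$. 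If no constraint on $v$ is satisfied, the choice of $\alpha(v)$ is irrelevant. If $\bool{K}(v)$ fails with two singletons set to $1$, I just pick one, since all constraints touching $v$ are already violated and hence charged.

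\textbf{Key claim.} Whenever $\bool{C}$ is satisfied by $\beta$, the equation $C$ is satisfied by $\alpha$. For a satisfied $\bool{C}$, both $\bool{K}(u)$ and $\bool{K}(v)$ hold, so $\beta$ restricted to $\bool{V}(u)$ equals $\bool{\alpha}$ there, and likewise for $v$. This follows from unambiguity together with the uniqueness of the extension noted before the lemma. Nonzero values $x$ are exactly encoded by singleton variables with $\ell=d-\ord(x)$, $i=\ord(x)$ and suffix $\sind{x}{\ell}$. The cases are then:
\begin{itemize}
  \item The crisp equation $v=1$: directly $\bvarsingle{v}{1}=1$, so $\alpha(v)=1$.
  \item $r\cdot u=v$ with $r$ a unit: $\ord(ru)=\ord(u)$ and $\sind{ru}{\ell}=r\,\sind{u}{\ell}\bmod p^\ell$. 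The equivalence $\bvar{u}{a}{\ell}{i}\leftrightarrow\bvar{v}{ar}{\ell}{i}$ at singleton level forces $\alpha(v)=r\alpha(u)$ when $\alpha(u)\neq 0$. If $\alpha(u)=0$, all of $\bool{V}(u)$ is $0$, so by the equivalences (which are a bijection on the variables indexed by $(i,\ell)$) all of $\bool{V}(v)$ is $0$, giving $\alpha(v)=0$.
  \item $v=p\cdot u$: if $\alpha(u)=x\neq0$ with $\ord(x)=i<d-1$, then $px$ has order $i+1$ and suffix $\sind{x}{\ell-1}$. I need the singleton $\bvar{u}{s}{\ell}{i}$ with $i+\ell=d$ to be linked to $v$'s singleton. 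The listed equivalences only cover $i+\ell+1\le d$, i.e.\ they link the coset variable $\bvar{u}{s'}{\ell-1}{i}$, implied by $u$'s singleton, to $\bvar{v}{s'}{\ell-1}{i+1}$, which is a singleton for $v$. So $\alpha(v)=px$.
  \item $v=p\cdot u$ with $\ord(x)=d-1$ or $x=0$: no $v$-variable with $i\ge1$ is forced to $1$, and $v$-variables with $i=0$ are forced $0$ by the clauses $\lnot\bvar{v}{a}{\ell}{0}$. Every $v$-variable with $i\ge1$ is equivalent to some $u$-variable with index $i-1$. For $\ord(x)=d-1$, all of these are $0$ because $u$'s only true variables have $i=d-1$, which is not linked. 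Hence $\alpha(v)=0=px$.
\end{itemize}

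\textbf{Main obstacle.} The main obstacle is the bookkeeping in the $p\cdot u$ case: checking that the nontrivial equivalences, whose index range stops at $i+\ell+1\le d$, exactly transport the relevant coset/singleton truths. The $\bool{K}$-consistency caveat is minor.
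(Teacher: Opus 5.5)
Your proof is correct and follows essentially the same route as the paper. Like the paper, you decode $\alpha$ from the unique true singleton variable (or set it to $0$), then verify case by case (the crisp $v=1$, unit equations $r\cdot u=v$, and $p\cdot u=v$) that every $\bool{C}$ satisfied by $\beta$ gives a satisfied equation $C$. Two small differences: your handling of $p\cdot u=v$, which routes through the coset variable $\bvar{u}{s'}{\ell-1}{i}$ implied by $u$'s singleton, is more precise than the paper's shorthand, and your caveat about $\bool{K}(v)$ being violated is unnecessary because the paper adds $\bool{K}(v)$ as crisp constraints.
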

\begin{proof}
  Let $\beta$ be an unambiguous Boolean assignment to $\bool{I}$.
  Define $\alpha : V(I) \to \ZZ_{p^d}$ as
  \[
    \alpha(v) = \begin{cases}
      a & \text{if } \beta(\bvarsingle{v}{a}) = 1 \text{ for some } a \in \ZZ_{p^d} \\ 
      0 & \text{otherwise}.
    \end{cases}
  \]
  Note that $\alpha$ is well defined because $\beta$ has finite cost,
  so it satisfies all crisp constraints which disallow setting
  two singleton Boolean variables for the same $v$ to $1$.
  Moreover, observe that since $\beta$ is unambiguous,
  for any $v \in V(I)$,
  if there is a variable in $\bool{V}(v)$
  set to $1$, then there must be a singleton variable
  set to $1$ as well.
  
  Consider an equation $C$ in $I$.
  We claim that if $\beta$ satisfies $\bool{C}$,
  then $\alpha$ satisfies $C$.
  Note that this implies that the cost of $\alpha$ in $I$
  is at most $k$, and thus is sufficient to prove the lemma.
  Since $I$ is special, it is sufficient
  to consider the following three cases.

 \begin{enumerate}

\item
  $C$ is the crisp equation $(v = 1)$. Then 
  $\beta(\bvarsingle{v}{1}) = 1$ because $\beta$ has finite cost 
  and $\alpha(v) = 1$.

  \item
  $C$ is of the form $r \cdot u = v$ for some unit $r$.
  By the definition of $\bool{C}$, the variables
  in $\bool{V}(u)$ and $\bool{V}(v)$ are in one-to-one
  correspondence formed by the $\leftrightarrow$ constraints.
  We claim that if $\alpha(u) = 0$ then $\alpha(v) = 0$:
  indeed, if $\beta$ assigns $0$ to all variables in $\bool{V}(u)$,
  so it also assigns all variables in $\bool{V}(v)$ to $0$.
  Otherwise, $u$ takes a nonzero value under $\alpha$,
  which can be expressed as 
  $\alpha(u) = a \cdot p^i$ for some unit $a \in \ZZ_{p^{d-i}}$.
  Then $\beta(\bvar{u}{a}{d-i}{i}) = 1$ 
  and $\beta(\bvar{v}{b}{d-i}{i}) = 1$
  for $b = r \cdot a \bmod p^{d-i}$ by the constraints in $\bool{C}$.
  Hence, $\alpha(v) = b \cdot p^i = r \cdot a \cdot p^i = r \cdot \alpha(u)$.

\item
  $C$ is of the form $p \cdot u = v$.
  By the definition of $\bool{C}$, for every unit $a \in \ZZ_{p^d}$,
  we have $\beta(\bvarsingle{v}{a}) = 0$;
  moreover, for every nonzero and nonunit value in $\ZZ_{p^d}$,
  which can be expressed as $a \cdot p^i$ for $0 < i < d$ and unit $a \in \ZZ_{p^{d-i}}$,
  we have $\beta(\bvarsingle{u}{a \cdot p^{i-1}}) = \beta(\bvarsingle{v}{a \cdot p^i})$.
  In particular, if $\alpha(u) = 0$, we have $\alpha(v) = 0$
  because $\beta$ is not allowed to set any variable to $1$ in $\bool{V}(v)$.
  Moreover, if $\alpha(u) = a \cdot p^{i-1}$ for
  some $0 < i < d$ and unit $a \in \ZZ_{p^{d-i}}$,
  then $\alpha(v) = a \cdot p^{i}$.
  Finally,
  if $\alpha(u) = a \cdot p^{d-1}$, then $\beta$ is not allowed
  to set any variable in $\bool{V}(u)$ to $1$,
  so $\alpha(v) = 0$.
  In all three cases, $\alpha$ satisfies $C$. \qedhere
  \end{enumerate}
\end{proof}

\subsection{Disambiguation}
\label{sec:the-algorithm}

Let $(I, k)$ be a special instance of $\minlinz{2}{p^d}$.
Lemma~\ref{lem:unambiguous_Boolean_to_modpn} shows that
  if $\bool{I}$ has an unambiguous assignment of 
  cost $k$, then $(I, k)$ is a yes-instance.
Thus, we need a method for disambiguating 
assignments to $\bool{I}$ and
this is where balanced subgraph covers come into play.
In order to apply the balanced subgraph cover machinery,
we define auxiliary graphs $(G_i, \BB_i)$ for all $0 \leq i \leq d-2$ as follows.
Let $\Gamma_i$ be the group with elements being the units in $\ZZ_{p^{d-i}}$
and the group operation being multiplication modulo $p^{d-i}$.
The vertex set of $G_i$ consists of vertices 
$v^0, \dots, v^i$ for every variable $v$ in $I$.
We refer to them as the \emph{copies of $v$ (in $G_i$)}.
The edge set of $G_i$ and the labelling 
$\gamma_i : E(G) \times V(G_i)^2 \to \Gamma_i$ are defined as follows.
\begin{itemize}
  \item 
  For every equation in $I$ of the form $r \cdot u = v$, 
  where $r$ is a unit in $\ZZ_{p^d}$,
  add an edge $e = u^jv^j$ for all $0 \leq j \leq i$
  with $\gamma_i(e, u^j, v^j) = r \bmod p^{d-i}$.
  \item 
  For every equation in $I$ of the form $p \cdot u = v$,
  add an edge $e=u^jv^{j+1}$ for all $0 \leq j \leq i-1$
  with $\gamma_i(e, u^j, v^{j+1}) = 1$.
\end{itemize}
By Lemma~\ref{lem:minlin-fin-to-rbgce},
$(G_i, \BB_i)$ is a biased graphs for all $i$
and each $\BB_i$ admits a polynomial-time oracle.

\smallskip

\begin{figure}[tb]
\fbox{\parbox{\textwidth}{

Algorithm: $\SolveMinLin(I, k)$

\begin{enumerate}

  \item \label{alg:make_special}
  Convert $I$ into special form according to Lemma~\ref{lem:special-instances}.

  \item \label{alg:encode_bool}
  Initialize the \MinSat{} 
  instance $\enhance{I} \gets \bool{I}.$
  
  \item \label{alg:forloop_1}
  For all $i \in \range{0}{d-2}$, do the following.
  \begin{enumerate}
    \item \label{alg:aux_graph}
    Construct auxiliary graphs $G_i \gets G_i(I)$.
    \item \label{alg:random_cover}
    Compute $(S'_i, F'_i) \gets \RandomCover(G_i, \BB_i, 2k)$.
  \end{enumerate}
  
  \item \label{alg:forloop_2}
  For all $i \in \range{0}{d-2}$, do the following.
  \begin{enumerate}
  
    \item \label{alg:compute_si}
      Let $S_i = \bigcap_{j=i}^{d-2} \{ v \in V(I) : v^i \in S'_j \}$.

    \item \label{alg:compute_ti}
      Let $T_i = \bigcup_{j=i}^{d-2} \{ v \in S_i : v^i \in V(F'_j) \}$.      
    
    \item \label{alg:reduce_domain}
      For all $v \notin S_i$, add crisp constraints
      $(\lnot \bvar{v}{a}{1}{i})$ to $\enhance{I}$ 
      for every unit $a \in \ZZ_p$.

    \item \label{alg:add_implications}
      For all $v \in T_i$, pick a unit 
      $s_v \in \ZZ_{p^{d-i}}$ uniformly at random,
      and let $g_v = s_v \cdot p^{i}$. \\
      For all $\ell \geq 1$ such that $i + \ell \leq d$,
      add the following crisp constraints to $\enhance{I}$:
      \begin{itemize}
        \item $(\bvar{v}{a}{\ell}{i} \to \bvarsingle{v}{g_v})$
          for all units $a \in \ZZ_{p^\ell}$ such that $a = s_v \bmod p^\ell$, and
        \item $(\lnot \bvar{v}{b}{\ell}{i})$ for all units 
          $b \in \ZZ_{p^\ell}$ 
          such that $b \neq s_v \bmod p^\ell$.
      \end{itemize}
  \end{enumerate}
  \item \label{alg:final_boolean}
  Solve $(\enhance{I}, k)$ with the algorithm underlying Theorem~\ref{thm:2k2-free} and return the same answer.
\end{enumerate}
}}
\caption{Algorithm for $\minlinz{2}{p^d}$ on simple instances.}
\label{fig:mainalgorithm}
\end{figure}

We outline the steps of the algorithm in Figure~\ref{fig:mainalgorithm}.
We devote the next two sections to proving correctness of the algorithm 
via the following lemmas, and afterwards show how to derandomize it.

In Section~\ref{sec:algo-complete} we prove the following lemma.

\begin{restatable}{lemma}{lemZpdcomplete}\label{lem:Zpdcomplete}
  Let $(I, k)$ be an instance of $\minlinz{2}{p^d}$.
  If $(I,k)$ is a yes-instance,
  then the instance $(I^+, k)$ of \MinSat{} computed by \emph{\SolveMinLin} 
  is a yes-instance with probability at least $1/2^{\bigoh(k^2)}$.
\end{restatable}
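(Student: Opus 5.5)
Fix an optimal assignment $\alpha : V(I) \to \ZZ_{p^d}$ of cost at most $k$ for the special instance $I$ produced in Step~\ref{alg:make_special}. The plan is to show that, with probability $1/2^{\bigoh(k^2)}$ over the random choices in Steps~\ref{alg:random_cover} and~\ref{alg:add_implications}, the Boolean assignment $\bool{\alpha}$ satisfies every crisp constraint added to $\enhance{I}$. It then has cost at most $k$ in $\enhance{I}$. Before using this, I would check once that $\bool{\alpha}$ satisfies every $\bool{C}$ for which $\alpha$ satisfies $C$ (the converse direction of Lemma~\ref{lem:unambiguous_Boolean_to_modpn}). This is a direct case check: $\bool{K}(v)$ holds by construction, unit equations map the coset $\coset{s}{\ell}{i}$ to $\coset{rs \bmod p^\ell}{\ell}{i}$, and $p\cdot u=v$ shifts the order by one.

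Next, for each $0\le i\le d-2$, I would define a subgraph $H_i$ of $G_i$. Its vertices are the copies $v^j$ ($j\le i$) with $\ord(\alpha(v))=j$, and $\alpha(v)\neq 0$ when $j=i$ (for $j<i$, nonzero already follows since $d-i\ge 2$). Its edges are the edges of $G_i$ between such copies that come from equations satisfied by $\alpha$. To show $H_i$ is balanced, label $v^j$ by $\sind{\alpha(v)}{d-i}$, which is a unit of $\ZZ_{p^{d-i}}$ because $\ord(\alpha(v))=j\le i$ leaves at least $d-i$ significant digits. A satisfied unit edge $r u=v$ multiplies this label by $r \bmod p^{d-i}$. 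A satisfied edge $p u = v$ between $u^j$ and $v^{j+1}$ preserves the significant digits, so it matches label $1$. Hence $H_i$ admits a consistent potential and is balanced.

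For the cost bound I would argue as follows. Each violated equation contributes at most $i+1 \le d$ edges of $G_i$ but only touches copies of two variables. The key point is that an edge of $G_i$ with exactly one endpoint in $V(H_i)$ must come from a violated equation: a satisfied equation preserves order exactly (unit case) or shifts it by one (the $p$ case), so it joins copies inside $V(H_i)$ or both outside. Counting carefully, each violated equation contributes at most two relevant edges, since only copies matching the order pattern of $\alpha$ matter. This gives $\cost(H_i)\le 2k$, which explains the parameter $2k$ in Step~\ref{alg:random_cover}. I expect this counting to be the main obstacle, and I would verify it edge type by edge type. By Theorem~\ref{ithm:balanced-shadow-sampling}, each $(S'_i,F'_i)$ covers $H_i$ with probability $1/2^{\bigoh(k^2)}$. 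Since the $d-1\le$ constant many calls are independent, all succeed simultaneously with probability $1/2^{\bigoh(k^2)}$.

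Conditioned on this, I would verify the crisp constraints. For Step~\ref{alg:reduce_domain}: if $\ord(\alpha(v))=i$ and $\alpha(v)\ne0$, then $v^i\in V(H_j)\subseteq S'_j$ for all $j\ge i$, so $v\in S_i$. Contrapositively, for $v\notin S_i$ we have $\bool{\alpha}(\bvar{v}{a}{1}{i})=0$, as required. For Step~\ref{alg:add_implications}, the constraints on $v\in T_i$ are satisfied trivially unless $\ord(\alpha(v))=i$. In that case, the variables that must be correctly guessed are those with $v^i$ incident to $F'_j$ for some $j$. Each $F'_j$ has at most $2k$ edges touching $V(H_j)$, so there are at most $\bigoh(k)$ such variables in total. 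With probability $\ge (1/p^{d})^{\bigoh(k)}$ every such $s_v$ equals $\sind{\alpha(v)}{d-i}$. Then $g_v=\alpha(v)$, and both constraint families hold for $\bool{\alpha}$. Multiplying the probabilities gives $1/2^{\bigoh(k^2)}$.
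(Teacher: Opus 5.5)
Your proposal is correct and follows the paper's proof essentially step for step. You define $H_i$ from an optimal assignment and show it is balanced of cost at most $2k$ (the content of Lemma~\ref{lem:HAlphaBal}, which you prove by labelling with $\sind{\alpha(v)}{d-i}$ instead of citing Lemma~\ref{lem:cycles_Gi}, though it is the same computation); you then use the covering guarantee for Step~\ref{alg:reduce_domain}, and bound by $O(k)$ the number of order-$i$ variables in $T_i$ whose guesses must be correct, with bookkeeping (cover parameter $2k$, at most $2k$ edges of $F'_j$ touching $V(H_j)$, guess probability $(1/p^d)^{O(k)}$) that is cleaner than the constants written in the paper's proof.
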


In Section~\ref{sec:algo-sound} we prove the following lemma.

\begin{restatable}{lemma}{lemZpdsound}\label{lem:Zpdsound}
  Let $(I, k)$ be an instance of $\minlinz{2}{p^d}$.
  If the instance $(\enhance{I}, k)$ of \MinSat{} computed by \emph{\SolveMinLin} is 
  a yes-instance, then $(I, k)$ is a yes-instance.
\end{restatable}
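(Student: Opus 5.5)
The plan is to take a Boolean assignment $\beta$ of cost at most $k$ for $(\enhance{I},k)$ and transform it into an \emph{unambiguous} assignment $\beta'$ for $\bool{I}$ whose cost is no larger. Since $\bool{I}$ is a subinstance of $\enhance{I}$ (the annotations only add crisp constraints), Lemma~\ref{lem:unambiguous_Boolean_to_modpn} then yields that $(I,k)$ is a yes-instance. The transformation is a sequence of refinement steps. Each step takes the ambiguous variables and lengthens their determined suffix. When $\beta$ is ambiguous on $v$, let $\coset{s}{\ell}{i}$ be the smallest coset with $\beta(\bvar{v}{s}{\ell}{i})=1$. It is well defined by the laminar constraints in $\bool{K}(v)$, and it satisfies $\ell+i<d$ and hence $i\le d-2$. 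Call $i$ the order of $v$ and $\ell$ its precision.

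\textbf{Step 1: structural facts about ambiguous variables.}
\begin{enumerate}
\item If $\beta$ is ambiguous on $v$ with order $i$, then $v\in S_i$. Otherwise step~\ref{alg:reduce_domain} forbids every $\bvar{v}{a}{1}{i}$, while $\bool{K}(v)$ forces $\bvar{v}{s\bmod p}{1}{i}$ to be true.
\item $v\notin T_i$. Otherwise step~\ref{alg:add_implications} either falsifies the coset variable or forces the singleton $\bvarsingle{v}{g_v}$ to be true, and in both cases $v$ is unambiguous.
\end{enumerate}
For each $j\ge i$ we thus have $v^i\in S'_j$ and $v^i$ is not an endpoint of $F'_j$, so every edge of $G_j$ at $v^i$ lies in $(G_j-F'_j)[S'_j]$. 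This graph is balanced by Theorem~\ref{ithm:balanced-shadow-sampling}. Consider a constraint $C$ of $I$ satisfied by $\beta$ with an ambiguous endpoint. Inspecting $\bool{C}$ shows two propagation rules. For a unit equation $r\cdot u=v$, the other endpoint is ambiguous with the same order and precision, and its suffix is multiplied by $r$. For an equation $p\cdot u=v$, ambiguity at order $i$ with precision $\ell$ on $u$ corresponds to order $i+1$ with the same $\ell$ on $v$. Let $A$ be the \emph{ambiguity graph}: its vertices are the ambiguous variables and its edges are the constraints satisfied by $\beta$ between them. Placing each $v$ at copy $v^{\mathrm{order}(v)}$ embeds each component of $A$ into $(G_j-F'_j)[S'_j]$ for every $j$ that is at least the maximum order in that component. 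This uses the intersection over $j\ge i$ in step~\ref{alg:compute_si}.

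\textbf{Step 2: refine one digit at a time.} Fix a component $K$ of $A$ and let $i^*$ be its maximum order. Because $(G_{i^*}-F'_{i^*})[S'_{i^*}]$ is balanced, its component containing $K$ carries a potential $\lambda:V\to\Gamma_{i^*}$ with $\lambda(v)=\gamma(e,u,v)\lambda(u)$ along every edge; this potential is unique up to a global unit factor. Around any cycle of $K$ the $\beta$-suffixes are consistent modulo $p^{\ell}$. Consequently the ratio between $\lambda(v^{\mathrm{order}(v)})\bmod p^{\ell}$ and the suffix $s_v$ is the same unit for every $v\in K$. Rescaling $\lambda$ by that unit gives a choice of the next digit for each $v\in K$ that is compatible with $s_v$. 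Set $\beta$ to the coset of precision $\ell+1$ (or the singleton) that this choice determines, and extend it to all of $\bool{V}(v)$ through $\bool{K}(v)$.

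\textbf{Step 3: the cost does not increase.}
\begin{enumerate}
\item Constraints inside $K$ stay satisfied, because $\lambda$ respects the edge labels modulo $p^{d-i^*}$, which is enough precision for the digit being fixed.
\item Constraints between $K$ and an unambiguous variable, or between different components, were violated by $\beta$ already, unless propagation forces the neighbour into $K$; the latter is excluded by maximality of $K$.
\end{enumerate}
Iterating, the precision of every ambiguous variable strictly increases, so after at most $d$ rounds $\beta'$ is unambiguous and has cost at most $k$.

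The main obstacle is Step~2 when a component mixes orders, that is, when it contains $p\cdot u=v$ edges. One must check that the modulus $p^{d-i^*}$ used by $G_{i^*}$ supplies the correct digit for the lower-order vertices of $K$, and that propagation along $p$-edges matches the copy shift $u^j\to v^{j+1}$. The first thing I would try is to process the orders from high to low, so that each refinement only ever needs the digit at position $\ell+1$ relative to the vertex's own order. I would then verify the cycle-consistency claim modulo $p^{\ell+1}$ using the balance of $G_{i^*}$ together with the $\leftrightarrow$ constraints of $\bool{C}$.
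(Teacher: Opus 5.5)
Your overall strategy is the paper's. You use the annotations to show that an ambiguous $v$ of order $i$ satisfies $v\in S_i\setminus T_i$, so $v^i$ lies in the balanced graph $(G_j-F'_j)[S'_j]$ and is not an endpoint of $F'_j$. You then read the next significant digit off a consistent labelling, iterate, and finish with Lemma~\ref{lem:unambiguous_Boolean_to_modpn}. Using components of your ambiguity graph $A$ with $G_{i^*}$, instead of the paper's $G'$ inside $G_{d-\ell-1}$ with only minimum-weight variables refined, is a harmless variant. It even makes the ``one rescaling constant per component'' check immediate.

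\textbf{The gap is in Step~3(2).} It is false that every constraint between $K$ and an unambiguous variable was already violated by $\beta$.

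\begin{itemize}
\item Take $C=(p\cdot u=v)$ with $u\in K$ of order $i$ and precision $\ell$, where $i+\ell+1=d$.
\item Then $\bool{C}$ forces $v$ to have order $i+1$ and precision $\ell$. So $v$ is the \emph{singleton} $\coset{s_u}{\ell}{i+1}$: it is unambiguous, $C$ is satisfied, and $v\notin K$.
\item Your propagation rule for $p$-edges yields an ambiguous neighbour only when $i+1+\ell<d$.
\item This already occurs in the basic $\ZZ_4$ example: $u_1\mapsto\unit$, $v\mapsto 2$, with the constraint $2u_1=v$.
\end{itemize}

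The fix is the paper's Case~2. After refinement, $u$ becomes a singleton at order $i$ with $\ell+1=d-i$ digits. However, $\bool{C}$ only links coset variables $\bvar{u}{a}{m}{i}$ with $i+m+1\le d$, that is, $m\le\ell$, and refinement does not change these. So $\bool{C}$ stays satisfied. You need to add this case explicitly, since Step~3 as written asserts something false.

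The ``main obstacle'' you flag is not one. In $G_{i^*}$, a $p$-edge joins $u^j$ to $v^{j+1}$ with label $1$. This matches exactly how orders shift, and how suffixes are preserved, under a satisfied $p\cdot u=v$ between ambiguous variables. Moreover, $\Gamma_{i^*}$ works modulo $p^{d-i^*}\ge p^{\ell+1}$ for every vertex of $K$, whatever its own order, because $i^*+\ell<d$. So the direct check in Step~3(1) goes through without processing orders from high to low.

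Also state that refinement never changes orders, so the facts $v\in S_i\setminus T_i$ from Step~1 remain valid in later rounds.
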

\setcounter{theoremspecial}{\value{theorem}}

Here we combine all the results to prove 
Theorem~\ref{thm:algorithm-main} restated below.

\thmmainalgo*

\begin{proof}[Proof of Theorem~\ref{thm:algorithm-main}]
  Arbitrarily choose a prime power $p^d$.
  We show that \SolveMinLin
  correctly solves $\minlin{2}{\ZZ_{p^d}}$ in \FPT time with sufficient probability. 
  This is sufficient
  for proving the full result by Proposition~\ref{prop:product-approx}.

  By Proposition~\ref{prop:simple},
  it suffices to solve the problem on simple instances.
  Let $(I,k)$ denote an simple instance of 
  $\minlin{2}{\ZZ_{p^d}}$ and let $(\enhance{I}, k)$ be the
  instance that is computed in lines 1-4 of the algorithm.
  By Lemma~\ref{lem:Zpdcomplete},
  if $(I, k)$ is a yes-instance,
  then $(\enhance{I}, k)$ is a yes-instance with probability at least $1/2^{\bigoh(k^2)}$,
  and \SolveMinLin accepts it.
  If, on the other hand, $(I,k)$ is a no-instance, then $(\enhance{I}, k)$ is a no-instance
  by Lemma~\ref{lem:Zpdsound}, so \SolveMinLin correctly rejects it.
  
  We conclude the proof by verifying that the algorithm runs in \FPT time.
  
\medskip

\noindent  
Step 1. Transforming the input $(I,k)$ into special form takes polynomial time (by Section~\ref{ssec:special})
since we assume that $I$ is a simple instance.

\medskip

\noindent
Step 2. Computing $\bool{I}$ takes polynomial time by Section~\ref{sec:mincsp-reduction}.

\medskip

\noindent
Step 3. The dominant factor comes from procedure \RandomCover.
It runs in $\bigoh^*(4^{2k})$ time (by Theorem~\ref{ithm:balanced-shadow-sampling}) and is performed a fixed number of times (since $d$ is a fixed constant).

  \medskip

\noindent
Step 4. Takes polynomial time.

\medskip

\noindent
Step 5. $(I^+,k)$ is an instance of $\MinSat(\bA_{p^d})$ for a finite, bijunctive, and $2K_2$-free Boolean constraint
language $\bA_{p^d}$ that only depends on the choice of ring
$\ZZ_{p^d}$ (as pointed out in Section~\ref{sec:mincsp-reduction}). Theorem~\ref{thm:2k2-free} implies that $(I^+,k)$
can be solved in \FPT time.
\end{proof}

\subsubsection{Completeness}
\label{sec:algo-complete}

Let $(I, k)$ be an instance of $\minlinz{2}{p^d}$.
In this section we prove Lemma~\ref{lem:Zpdcomplete},
i.e.\ if $(I,k)$ is a yes-instance,
then the instance $(I^+, k)$ of \MinSat{} computed by \SolveMinLin
is a yes-instance with probability at least $1/2^{\bigoh(k^2)}$.

We first describe the intuition behind the graphs $(G_i, \BB_i)$.
To translate between the edges of the graphs
$G_i$ and the corresponding equations in $I$,
let us define $\eqn_i : E(G_i) \to \cC(I)$ that maps 
every edge of $G_i$ to the corresponding equation in $I$,
and we let $\ed_i$ be its inverse, i.e.\
$\ed_i$ maps an equation in $I$ to the set of corresponding edges in
$G_i$, i.e.\ for an equation $c$ in $I$, we have
$\ed_i(c) = \{ e \in E(G_i) : \eqn_i(e) = c \}$.
The definitions are naturally extended to sets of edges and equations.

Suppose $\alpha : V(I) \to \ZZ_{p^d}$ is an assignment and 
$Z$ is the set of constraints that $\alpha$ violates in $I$.
Recall that $\sind{a}{i}$ is the number represented by $i$ least significant digits of $a$.
On a high level, each graph $G_i$ is concerned with the variables in $I$
whose values under $\alpha$ have order at most $i$, and captures the constraints
implied on the $d-i$ least significant digits of their values.
To formalise this, say that $v^j$ in $G_i$ is activated if $\ord(\alpha(v)) \setminus j$,
and let $H_i=H_i(\alpha)$ be the subgraph of $G_i - \ed_i(Z)$ induced by the activated vertices.
Then one can show that $H_i$ is in fact a balanced subgraph of $G_i$,
i.e.\ there is a labelling of the vertices in $H_i$ with elements in $\Gamma_i$ consistent with the edge labels.
To obtain such a labelling, assign a vertex $v^j$ in $H_i$ to
$\sind{\alpha(v)}{d-i}$.
For instance, the graph $H_0$ is induced by 
the vertices $v^0$ such that $\alpha(v)$ is a unit, and 
the label of $v^0$ is $\alpha(v)$ itself.
In the graph $H_1$, the label of $v^j$ is 
the $(d-1)$-suffix of $\alpha(v)$:
specifically, if $j=0$, then the label is $\alpha(v)$ without the most 
significant digit, i.e.\ $\alpha(v) \bmod p^{d-1}$, and if 
$j=1$, then the order of $v$ is $1$ and the label is obtained
by removing the trailing zero from $\alpha(v)$, i.e.\ it is $\alpha(v) / p$.

\smallskip

To prove \Cref{lem:Zpdcomplete}, we need some auxiliary results.
The first lemma shows that the graphs $G_i(I)$ have the intended semantics, i.e.\ 
for an assignment $\alpha : V(I) \rightarrow \ZZ_{p^d}$ and a variable
$v \in V(I)$, the copies of $v$ in $G_i(I)$ represent
the part $\sind{\alpha(v)}{d-i}$ of the assignment $\alpha$, 
and the label $\gamma_i(e,u',v')$ of an edge
$e=u'v'$ in $G_i(I)$ coming from an equation $c=\eqn_i(e)$
capture what happens to the parts $\sind{\alpha(u)}{d-i}$ and
$\sind{\alpha(v)}{d-i}$ of the assignment under $c$, i.e.\ 
$\sind{\alpha(v)}{d-i}=\gamma_i(e,u',v') \cdot \sind{\alpha(u)}{d-i}$.

\begin{lemma}\label{lem:egdesGi}
  Let $G=G_i(I)$ for some $i\in\range{0}{d-2}$, $e=u'v' \in E(G)$, where $u'$
  and $v'$ are copies of
  the variables $u$ and $v$, respectively. Let $\alpha :
  \{u,v\}\rightarrow \ZZ_{p^d}$ be any assignment satisfying the
  equation $c=\eqn_i(e)$. Then, 
  $\gamma_i(e,u',v') \cdot \sind{\alpha(u)}{d-i}=\sind{\alpha(v)}{d-i}$.
\end{lemma}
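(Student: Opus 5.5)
The plan is a direct case analysis on the equation $c=\eqn_i(e)$. Since $I$ is special and $G_i$ only has edges coming from binary equations, $c$ is either $r\cdot u=v$ with $r$ a unit, or $p\cdot u=v$. In both cases I will write $\alpha(u)=a'\cdot p^{o}$, where $o=\ord(\alpha(u))$ and $a'$ is a unit of $\ZZ_{p^{d-o}}$ represented by an integer in $[0,p^{d-o})$. Then $\sind{\alpha(u)}{d-i}=a' \bmod p^{d-i}$. The whole proof is a computation of $\alpha(v)$ in the same normal form, followed by a reduction modulo $p^{d-i}$.

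\emph{Unit case.} Here $e=u^jv^j$ and $\gamma_i(e,u^j,v^j)=r \bmod p^{d-i}$. Since $r$ is a unit, $\alpha(v)=r\cdot a' \cdot p^{o} \bmod p^d = \big((r a') \bmod p^{d-o}\big)\cdot p^{o}$. Hence $\ord(\alpha(v))=o$ and the unit part of $\alpha(v)$ is $r a' \bmod p^{d-o}$. Reducing further modulo $p^{d-i}$ gives $r\cdot a' \bmod p^{d-i}$, which is exactly $\gamma_i(e,u^j,v^j)\cdot\sind{\alpha(u)}{d-i}$. For the orientation $(v^j,u^j)$ the label is $r^{-1}$, and the claim follows by symmetry.

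\emph{Coarsening case.} Here $e=u^jv^{j+1}$ with label $1$. We have $\alpha(v)=a'\cdot p^{o+1}$, so if $o+1<d$ then $\ord(\alpha(v))=o+1$ and the unit part is $a' \bmod p^{d-o-1}$. Reducing modulo $p^{d-i}$ gives $a' \bmod p^{d-i}=1\cdot\sind{\alpha(u)}{d-i}$, as required.

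The degenerate situations need a convention. One is $\alpha(u)=0$ (then $\alpha(v)=0$). Another is $o=d-1$ in the coarsening case, where $\alpha(v)=0$. For these I will use the natural convention for $\sind{0}{\cdot}$ fixed by the paper, namely that the value $0$ has no digits, and check them separately.

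The main obstacle is well-definedness of the reduction modulo $p^{d-i}$. The unit part of an element of order $o$ is only determined modulo $p^{d-o}$, so reducing it modulo $p^{d-i}$ is legitimate exactly when $d-i\le d-o$, that is, $o\le i$ (respectively $o+1\le i$ for $v$ in the coarsening case). This matches the intended semantics: only the copies $v^j$ with $j=\ord(\alpha(v))\le i$ are activated, and the lemma is used for those. I would therefore first argue, or make explicit, that the relevant orders are at most $i$. In the coarsening case this holds because the edge $u^jv^{j+1}$ exists only for $j\le i-1$.

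If instead the statement is meant literally for arbitrary orders, I would fall back to using the fixed integer representative $a'<p^{d-o}$. In that case I would verify that the identity still holds as stated, reading the equality in $\ZZ_{p^{d-i}}$ and reducing both sides from the common modulus $p^{\min(d-o,\,d-i)}$. I expect this bookkeeping to be the only delicate point. The rest is the routine arithmetic above.
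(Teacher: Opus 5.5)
Your case split on the edge type (unit edge $u^jv^j$ versus coarsening edge $u^jv^{j+1}$) is the same as the paper's proof. Your arithmetic is correct when $\ord(\alpha(u))\le i$ for unit edges and $\ord(\alpha(u))\le i-1$ for coarsening edges. The paper's proof just writes ``therefore'' and never addresses the issue you raise. That issue is not bookkeeping: the statement is false outside this regime, so your fallback plan cannot succeed.

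Take $p=2$, $d=2$, $i=0$, the equation $3u=v$ and $\alpha(u)=\alpha(v)=2$; indeed $3\cdot 2=2$ in $\ZZ_4$. Then $\sind{2}{2}=1$, so the left side is $3\cdot 1=3$ while the right side is $1$. When $o>i$, reducing from the common modulus $p^{\min(d-o,d-i)}=p^{d-o}$ only gives a congruence modulo $p^{d-o}$, not equality in $\ZZ_{p^{d-i}}$.

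The degenerate coarsening case $o=d-1$ also does not check out. There $\sind{\alpha(v)}{d-i}=0$, while $1\cdot\sind{\alpha(u)}{d-i}$ is a unit. For example, take $p=2$, $d=3$, $i=1$, the equation $2u=v$, $\alpha(u)=4$ and $\alpha(v)=0$.

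The second problem is your sentence that, in the coarsening case, the bound holds ``because the edge $u^jv^{j+1}$ exists only for $j\le i-1$''. This conflates the copy index $j$ with $\ord(\alpha(u))$. Nothing in the hypotheses links them, so this cannot be argued, only assumed.

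The repair is to add the hypothesis that the superscript of $u'$ equals $\ord(\alpha(u))$. Then $\ord(\alpha(v))$ automatically equals the superscript of $v'$. All superscripts are at most $i\le d-2$, which rules out both $\alpha(u)=0$ and $o=d-1$. With this hypothesis your two computations form a complete proof, and a more careful one than the paper's.

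This hypothesis holds wherever the lemma is used. In Lemma~\ref{lem:cycles_Gi} one starts from a vertex $v^j$ on the cycle with $\ord(\alpha(v))=j$. Two facts propagate this along the cycle: $\ord(rx)=\ord(x)$ for units $r$, and $\ord(px)=\ord(x)+1$ whenever $\ord(x)+1<d$, which holds because $i\le d-2$. Hence every vertex $w^{j'}$ on the cycle satisfies $\ord(\alpha(w))=j'\le i$. In Lemma~\ref{lem:HAlphaBal} only such vertices belong to $H$ in the first place.
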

\begin{proof}
  We consider the following cases based on the type of the edge $e$.
  If $e$ is of the form $u^jv^j$ for some $j$, then, without
  loss of generality, $c$ is
  of the form $r \cdot u=v$ for some unit $r$ and $\gamma_i(e,u',v')=r
  \bmod p^{d-i}$. Because $\alpha$ satisfies $c$, it holds that $r\cdot\alpha(u)=\alpha(v)$ and therefore
  $\gamma_i(e,u',v')\cdot \sind{\alpha(u)}{d-i}=\sind{\alpha(v)}{d-i}$.

  If $e$ is of the form $u^jv^{j+1}$ for some $j$, then $c$ is
  of the form $p \cdot u=v$ and $\gamma_i(e,u',v')=1$. Because $\alpha$
  satisfies $c$, it holds that $p\cdot \alpha(u)=\alpha(v)$ and therefore
  $\gamma_i(e,u',v')\cdot \sind{\alpha(u)}{d-i}=\sind{\alpha(v)}{d-i}$.
\end{proof}
The next two lemmas shows that the equations $\eqn_i(C)$ corresponding to
the edges of any unbalanced cycle $C$ in the biased graph
$(G_i(I),\BB_i(I))$ cannot be simultaneously satisfied if at least one
variable $v$ in $V(\eqn_i(C))$ that occurs on $C$ as $v^j$ is assigned
to a value of order $j$. In other words, it shows the
main role of the graphs $G_i(I)$, i.e.\ taking care of obstructions
(to satisfying assignments)
resulting from such unbalanced cycles.
\begin{lemma}\label{lem:cycles_Gi}
  Let $I$ be an instance of $\linz{2}{p^d}$, let $G=G_i(I)$ and
  $\BB=\BB_i(I)$ for some $i \in \range{0}{d-2}$ and let $C$ be a
  cycle in $G$. If $C \notin \BB$, then the corresponding
  system $\eqn_i(C)$ of equations in $I$ is not satisfiable using any
  assignment that sets at least one variable $v \in V(\eqn_i(C))$ that
  occurs on $C$ as $v^j$ to a value of order $j$.
\end{lemma}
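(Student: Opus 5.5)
The plan is to argue by contraposition. Suppose $\alpha$ satisfies every equation in $\eqn_i(C)$, and some variable $v$ occurring on $C$ as the copy $v^j$ has $\ord(\alpha(v)) = j$. From this I will derive that $C$ is an identity cycle, i.e.\ $C \in \BB$.

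\emph{Step 1: propagate the order invariant around $C$.} I will show that every vertex $w^l$ on $C$ satisfies $\ord(\alpha(w)) = l$. This is done by walking along $C$ from $v^j$ and checking that the invariant passes across each edge in both directions.
\begin{itemize}
  \item For an edge $u^lw^l$ coming from $r \cdot u = w$ with $r$ a unit, multiplication by a unit preserves the order. Hence $\ord(\alpha(u)) = l$ if and only if $\ord(\alpha(w)) = l$.
  \item For an edge $u^lw^{l+1}$ coming from $p \cdot u = w$, we have $l \leq i-1 \leq d-3$. If $\ord(\alpha(u)) = l < d-1$, then $\alpha(w) = p\alpha(u)$ is nonzero of order $l+1$.
  \item Conversely on the same edge, if $\ord(\alpha(w)) = l+1 \leq d-1$, then $\alpha(w) \neq 0$. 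This forces $\ord(\alpha(u)) \leq d-2$, and then $\ord(\alpha(w)) = \ord(\alpha(u)) + 1$ gives $\ord(\alpha(u)) = l$.
\end{itemize}
This direction-independent check, including the care needed with the zero value in the $p$-edges, is the main (though modest) obstacle.

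\emph{Step 2: extract group elements.} Every vertex $w^l$ of $C$ now has $\ord(\alpha(w)) = l \leq i$. Then $\alpha(w)/p^l$ is a well-defined unit modulo $p^{d-l}$. Since $d-l \geq d-i$, its reduction $x_w := \sind{\alpha(w)}{d-i}$ is a unit of $\ZZ_{p^{d-i}}$, that is, an element of $\Gamma_i$. I also want to confirm that Lemma~\ref{lem:egdesGi} applies to every edge of $C$ with these values. For $p$-edges this uses $\sind{p\alpha(u)}{d-i} = \sind{\alpha(u)}{d-i}$, which holds because $p\alpha(u)$ has order exactly one more than $\alpha(u)$ by Step~1. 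So for every oriented edge $(e,u',w')$ of $C$ we get $\gamma_i(e,u',w') \cdot x_u = x_w$.

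\emph{Step 3: close the cycle.} Traverse $C$ starting and ending at $v^j$, and compose the relations from Step~2. This gives $\big(\prod_{e \in C} \gamma_i(e)\big) \cdot x_v = x_v$ in the Abelian group $\Gamma_i$. Since $x_v$ is a unit, cancelling it shows that the product of the labels is $1$. Hence $C$ is an identity cycle and $C \in \BB$, contradicting the assumption $C \notin \BB$.
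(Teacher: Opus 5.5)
Your proof is correct and follows the paper's route. You apply the per-edge relation of Lemma~\ref{lem:egdesGi} along $C$, compose the relations around the cycle, and cancel the unit $\sind{\alpha(v)}{d-i}$ to force the label product to be $1$. Your Step~1, which propagates $\ord(\alpha(w))=l$ to every vertex $w^l$ of $C$, is a worthwhile addition: the paper invokes Lemma~\ref{lem:egdesGi} without checking that every value on the cycle is nonzero of order at most $i$, and that condition is exactly what makes $\sind{\cdot}{d-i}$ well defined and the per-edge relation valid.
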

\begin{proof}
  Assume for a contradiction that the equations $\eqn_i(C)$ are
  satisfiable by some assignment $\alpha : V(\eqn_i(C)) \rightarrow
  \ZZ_{p^d}$ such that $\alpha(v)=r \cdot p^{j}$ for some unit $r$ and
  some $0 \leq j \leq i$. Let $(e_0,\dotsc,e_\ell)$ be the edges on
  $C$ in cyclic order such that $e_x$ is an edge between $v_x$ and
  $v_{x+1 \bmod \ell+1}$ for every $x \in \range{0}{\ell}$. Because of
  Lemma~\ref{lem:egdesGi}, we have that
  $\alpha$ must satisfy
  \[ 
    \sind{\alpha(v)}{d-i}=\left( \prod_{x=0}^{\ell}\gamma_i(e_x,v_x,v_{x+1 \bmod
    \ell+1}) \right) \cdot \sind{\alpha(v)}{d-i}. 
\]
Note that since $\sind{\alpha(v)}{d-i}= r/p^j \bmod p^{d-i}$ is a unit, it
holds that $\sind{\alpha(v)}{d-i}=b\cdot \sind{\alpha(v)}{d-i}$ if and only if
$b=1$. However, because $C \notin \BB$, it holds that
$(\prod_{x=0}^{\ell}\gamma_i(e_x,v_x,v_{x+1 \bmod
  \ell+1}))\neq 1$ and the above equation cannot be satisfied.
\end{proof}
\begin{lemma}\label{lem:HAlphaBal}
  Let $I$ be an instance of $\linz{2}{p^d}$, $Z\subseteq C(I)$
  such that $I-Z$ is satisfiable using the assignment $\alpha : V(I)
  \rightarrow \ZZ_{p^d}$, $G=G_i(I)$, $\BB=\BB_i(I)$, and let
  $H=H_i(\alpha,Z)$ be the subgraph of $G\setminus \ed_i(Z)$ induced by the vertices
  in $\SB v^j : \ord(\alpha(v))=j\SE$.
  Then, $H$ is balanced (in $(G,\BB)$) and has cost at most
  $2|Z|$.
\end{lemma}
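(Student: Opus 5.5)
Two things must be shown: that $H$ is balanced, and that $\cost(H)\le 2|Z|$. I will handle them separately.

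\textbf{Balance.} I would exhibit an explicit consistent labelling and apply Lemma~\ref{lem:egdesGi}. For every vertex $v^j$ of $H$ (so $\ord(\alpha(v))=j\le i$) set $\lambda(v^j)=\sind{\alpha(v)}{d-i}$. Since $\alpha(v)/p^j$ is a unit, this label is a unit in $\ZZ_{p^{d-i}}$, i.e.\ an element of $\Gamma_i$.

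Every edge $e=u'v'$ of $H$ lies outside $\ed_i(Z)$, so $\alpha$ satisfies $\eqn_i(e)$. Lemma~\ref{lem:egdesGi} then gives $\lambda(v')=\gamma_i(e,u',v')\lambda(u')$. Now take any cycle of $H$ and multiply these identities around it. The labels are invertible, so they cancel, and the product of the edge labels is $1$. Hence every cycle of $H$ is an identity cycle, i.e.\ lies in $\BB_i$, and $H$ is balanced.

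\textbf{Cost.} By definition, $\cost(H)$ counts the edges of $G_i$ that have an endpoint in $V(H)$ but are not in $H$. Every such edge $e$ falls into one of two cases: either $e\in\ed_i(Z)$, or $e$ joins an activated vertex to a non-activated one. The key claim is that in the second case $\eqn_i(e)\in Z$. To prove it, I would check both edge types, assuming the equation is satisfied and deriving a contradiction.

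For a unit edge $u^jv^j$ coming from $ru=v$: if the equation holds, then $\ord(\alpha(u))=\ord(\alpha(v))$, so $u^j$ and $v^j$ are either both activated or both not.

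For an edge $u^jv^{j+1}$ coming from $pu=v$: if the equation holds and $\ord(\alpha(u))=j\le i-1<d-1$, then $\ord(\alpha(v))=j+1$. Conversely, if $\ord(\alpha(v))=j+1\ge 1$, then $\alpha(v)\ne 0$, and $\alpha(v)=p\alpha(u)$ forces $\ord(\alpha(u))=j$. In either direction both endpoints are activated together.

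Consequently every edge counted in $\cost(H)$ belongs to $\ed_i(Z)$ and has an endpoint in $V(H)$. It remains to bound the number of such edges per equation $c\in Z$.

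\textbf{Counting.} An equation $c$ on variables $u,v$ contributes edges $u^jv^j$ or $u^jv^{j+1}$, one for each $j$. However, $V(H)$ contains exactly one copy of $u$, namely $u^{\ord(\alpha(u))}$ (and none if $\ord(\alpha(u))>i$), and likewise exactly one copy of $v$. Each of these two copies is incident to at most one edge of $\ed_i(c)$, because the edges for different $j$ use pairwise distinct copies. So at most $2$ edges of $\ed_i(c)$ touch $V(H)$, which gives $\cost(H)\le 2|Z|$.

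\textbf{Main obstacle.} The delicate step is the boundary-edge claim for $pu=v$. Orders near $d$ and the value $0$ (which has no order below $d$) need care, and the argument relies on the fact that $j$ ranges only up to $i\le d-2$.
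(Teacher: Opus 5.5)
Your proof is correct and takes essentially the same route as the paper. Like the paper, you show $\delta_G(V(H))\subseteq \ed_i(Z)$ by case analysis on the two edge types, bound each equation of $Z$ by two edges because $V(H)$ contains at most one copy of each variable, and obtain balance from Lemma~\ref{lem:egdesGi}. The only differences are minor: you inline the product-around-a-cycle argument through the explicit labelling $\lambda(v^j)=\sind{\alpha(v)}{d-i}$ instead of invoking Lemma~\ref{lem:cycles_Gi} by contradiction, and your order-based treatment of the $pu=v$ boundary edges is slightly more careful than the paper's assertion $\alpha(v)=r\cdot p^{x-1}$.
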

\begin{proof}
  First note that $\ed_i(Z)$ contains at most two edges in
  $\delta_G(V(H))\cup E(G[V(H)])$ for every equation $z \in Z$. This is
  because $H$ contains at most one copy of every vertex
  corresponding to a variable of $I$ and therefore the number of edges
  in $\ed_i(z)$ incident to a vertex in $H$ is at most two for every
  $z \in Z$. Therefore, if 
  $$\delta_G(V(H))\cup E(G[V(H)])\setminus E(H) \subseteq
  \ed_i(Z),$$ then $H$ has cost at most $2|Z|$. Since
  $E(G[V(H)])\setminus E(H) \subseteq \ed_i(Z)$ holds due to the
  definition of $H$, it remains to show that $\delta_{G}(V(H))\subseteq \ed_i(Z)$.
  Assume for a
  contradiction that
  $e=u^xv^y \in \delta_{G}(V(H))\setminus \ed(Z)$ with $u^x \in V(H)$ and $v^y \notin
  V(H)$. Let $c=\eqn_i(e)$, which because $c \notin Z$ must be
  satisfied by $\alpha$. Moreover,
  because $u^x \in V(H)$, it holds that $\alpha(u)=r\cdot p^x$
  for some unit $r$. We now distinguish the following cases.
  
  If $x=y$, then, without loss of generality,
  $c=\eqn_i(e)$ is an equation of the form $r' \cdot u=v$
  for some unit $r'$. Because $\alpha$ satisfies $c$,
  we obtain that $\alpha(v)=r'\cdot r\cdot
  p^x=r'\cdot r\cdot p^y$ and therefore $v^y \in V(H)$, a contradiction.

  Otherwise, $y=x+1$ or $y=x-1$. In the former case,
  $c=\eqn_i(e)$ is an equation of the form $p \cdot u=v$.
  Because $\alpha$ satisfies $c$, we obtain that $\alpha(v)=p\cdot r\cdot
  p^x=r\cdot p^y$ and therefore $v^y \in V(H)$, a contradiction.
  In the latter case,
  $c=\eqn_i(e)$ is an equation of the form $p \cdot v=u$. Because
  $\alpha$ satisfies $c$, we see that $\alpha(v)=r\cdot
  p^{x-1}=r\cdot p^y$ and consequently $v^y \in V(H)$, a contradiction.

  We conclude that the cost of $H$ is at most $2|Z|$ and it only remains to show
  that $H$ is balanced in $(G, \BB)$.
  Assume for a contradiction that $H$ is not balanced and let $C$ be
  a cycle of $H$ that is not in $\BB$. Because of
  Lemma~\ref{lem:cycles_Gi}, we obtain that the corresponding system of
  equations $\eqn_i(C)$ is not satisfiable, contradicting our
  assumption that $I-Z$ is satisfiable.
\end{proof}

We are now ready to show \Cref{lem:Zpdcomplete}.

\begin{proof}[Proof of \Cref{lem:Zpdcomplete}]
  Let $Z \subseteq C(I)$ with $|Z|\leq k$ such that $I-Z$ has a
  satisfying assignment $\alpha : V(I) \rightarrow \ZZ_{p^d}$ and let
  $\beta=\bool{\alpha}$. Note that $\beta$ satisfies $\bool{(I-Z)}$.

  For $i \in \{0,\dotsc,d-2\}$, let $G_i=G_i(I)$ be the auxiliary graph constructed in step 2.(a)
  and let $(S_i',F_i')$ be the pair given by $\RandomCover(G_i, \BB_i, i
  \cdot k)$ in step~\ref{alg:random_cover}. Moreover, let $H_i$ be the subgraph of
  $G_i\setminus \ed_i(Z)$ induced by the vertices
  in $\SB v^j \SM \ord(\alpha(v))=j\SE$.
  Because of Lemma~\ref{lem:HAlphaBal}, $H_i$ is balanced in $(G_i,\BB_i)$
  and has cost at most $2|Z|\leq 2k$.

  Applying Theorem~\ref{ithm:balanced-shadow-sampling} for $H=H_i$, $S=S_i'$,
  $F=F_i'$ and $k$, we obtain that with probability at least
  $1/2^{\bigoh(k^2)}$, $H_i$
  is covered by $S_i'$, i.e. $V(H_i)\subseteq S_i'$ and moreover
  $F_i'$ contains at most $k$ edges with at least one endpoint in $V(H_i)$.

  Let $S_i = \bigcap_{j=i}^{d-2} \SB v \in V(I) \SM v^i \in S'_j \SE$ as
  defined in step~\ref{alg:compute_si}. Since $\SB v^i \SM v^i \in V(H_i)\SE
  \subseteq \SB v^i \SM v^i \in V(H_j)\SE$ for every $j\geq i$ and
  because $S_i'$ covers $H_i$ for every $i$, we obtain that $\SB v^i
  \SM v^i \in V(H_i)\SE \subseteq S_i$. Therefore, if $v \notin S_i$,
  then $\ord(\alpha(v))\neq i$, which shows that the constraints
  introduced in step~\ref{alg:reduce_domain} are satisfied by $\beta$.

  Let $T_i = \bigcup_{j=i}^{d-2} \SB v \in S_i \SM v^i \in V(F'_j) \SE$ as
  defined in step~\ref{alg:compute_ti}. First note that for every $v \in T_i$ such
  that $\ord(\alpha(v))\neq i$, it holds that
  all constraints introduced in step~\ref{alg:add_implications} are satisfied by $\beta$.

  We now show that there are at most $2k(d-i-2)$ variables $v$ in
  $T_i$ with $\ord(\alpha(v))=i$. To see this, first recall that for
  every $j\geq i$, it holds that $F_j'$ contains
  at most $k$ edges with at least one endpoint in
  $V(H_j)$. Therefore, $|V(F_j')\cap V(H_j)|\leq 2k$, which implies that $|T_i\cap
  V(H_i)|$ is at most $2k(d-i-2)$, as required. Consequently, the
  constraints introduced in step~\ref{alg:add_implications}, are satisfied by $\beta$ with probability at
  least $$(p^d)^{2k(d-i-2)}=2^{\bigoh(k)}$$ for every variable $v$ in $T_i$ with
  $\ord(\alpha(v))=i$. Taking everything together we obtain that
  $\beta$ satisfies all constraints in \bool{(I-Z)} with probability
  at least $1/2^{\bigoh(k^2)}$.
\end{proof}

\subsubsection{Soundness}
\label{sec:algo-sound}

Let $(I, k)$ be an instance of $\minlinz{2}{p^d}$.
In this section we prove Lemma~\ref{lem:Zpdsound},
i.e.\ if the instance $(\enhance{I}, k)$ of \MinSat{} computed by 
\SolveMinLin is 
a yes-instance, then $(I, k)$ is a yes-instance.

First, note that since $\bool{I}$ is a subinstance if $\enhance{I}$,
if $\enhance{I}$ admits an unambiguous assignment of cost $k$,
then $(I, k)$ is a yes-instance by Lemma~\ref{lem:unambiguous_Boolean_to_modpn}.
Thus, to prove soundness,
it suffices to show that one 
can disambiguate any Boolean assignment to $\bool{I}$
without increasing its cost.
Towards this goal, we introduce some notation to extract information
from an ambiguous Boolean assignment.

Let $\beta$ be a Boolean assignment to $\enhance{I}$ of finite cost,
i.e.\ satisfying all crisp constraints.
Consider a variable $v$.
Intuitively, the clauses in $\bool{K}(v)$ force $\beta$ to choose
a coset for $v$.
More formally, we have the following.

\begin{proposition} \label{proposition:order_and_suffix}
  Let $\beta$ be a Boolean assignment satisfying $\bool{K}(v)$
  that assigns at least one variable in $\bool{V}(v)$ to $1$.
  Then there is a unique $\coset{a}{\ell}{i}$ with maximum
  $\ell$ such that $\beta$ assigns $1$ to the variable
  $\bvar{v}{a}{\ell}{i}$ and all variables implied by it in $\bool{K}(v)$,
  and $0$ to all other variables.
\end{proposition}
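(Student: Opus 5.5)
The argument is purely combinatorial and uses only the laminar structure of the cosets together with the clauses in $\bool{K}(v)$. Let $T$ be the set of variables in $\bool{V}(v)$ that $\beta$ sets to $1$; by assumption $T \neq \emptyset$.

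\textbf{Step 1: the true cosets form a chain.} Take two variables $\bvar{v}{a}{\ell}{i}, \bvar{v}{a'}{\ell'}{i'} \in T$. Since the cosets form a laminar family, the two cosets are either disjoint or nested.
\begin{itemize}
\item If they were disjoint, $\bool{K}(v)$ would contain the negative clause $(\lnot \bvar{v}{a}{\ell}{i} \lor \lnot \bvar{v}{a'}{\ell'}{i'})$, which $\beta$ violates.
\item Hence they are nested, say $\coset{a}{\ell}{i} \subsetneq \coset{a'}{\ell'}{i'}$ (distinct variables index distinct cosets).
\end{itemize}
By the observation in the definition of $\bool{K}(v)$, this forces $i=i'$ and $\ell>\ell'$. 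So $T$ is totally ordered by inclusion of cosets, all members share the same order $i$, and their $\ell$-values are pairwise distinct. Therefore $T$ has a unique element $\bvar{v}{a}{\ell}{i}$ with maximum $\ell$, and its coset is the smallest one in the chain.

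\textbf{Step 2: $T$ is exactly the set of variables implied by the maximal element.} Let $\bvar{v}{a}{\ell}{i}$ be the element from Step~1 and let $D$ be the set consisting of it and all variables it implies in $\bool{K}(v)$. The implication clauses go exactly from a coset to its proper supersets, so $D$ is the variable itself together with all variables whose cosets strictly contain $\coset{a}{\ell}{i}$.
\begin{itemize}
\item $D \subseteq T$: $\beta$ satisfies the implications and sets $\bvar{v}{a}{\ell}{i}$ to $1$, so it sets every variable in $D$ to $1$.
\item $T \subseteq D$: by Step~1, every other element of $T$ is comparable with $\coset{a}{\ell}{i}$ and has smaller $\ell$. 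Its coset therefore strictly contains $\coset{a}{\ell}{i}$, so it lies in $D$.
\end{itemize}
Thus $T = D$: $\beta$ assigns $1$ to $\bvar{v}{a}{\ell}{i}$ and everything it implies, and $0$ to all other variables. Uniqueness follows because the element with maximum $\ell$ is determined by $T$.

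\textbf{Main point to check.} The only delicate step is the claim from the construction that a strict inclusion $\coset{a}{\ell}{i} \subsetneq \coset{a'}{\ell'}{i'}$ forces $\ell > \ell'$ and $i = i'$. This holds because cosets with different orders $i$ are disjoint, and two cosets with the same order and the same length $\ell$ are either equal or disjoint. This is what guarantees that the maximum is unique and that its coset is the smallest in the chain.
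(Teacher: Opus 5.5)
Your proof is correct, and it is the argument the paper leaves implicit: the proposition is stated there without proof, as an immediate consequence of the laminarity of the cosets and the negative clauses and implications in $\bool{K}(v)$, which is exactly your chain argument. One small tightening: in your closing paragraph, the two facts you cite give only $i=i'$ and $\ell\neq\ell'$, not the direction $\ell>\ell'$; the count $|\coset{s}{\ell}{i}|=p^{d-i-\ell}$ supplies it, and Step~1 already relies on the paper's own stated observation for this anyway.
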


With this in mind, \emph{order} of a variable $v \in V(I)$ under $\beta$ as 
$\ord_\beta(v) = 0$ if $\beta$ assigns $0$ to all Boolean variables in $\bool{V}(\beta)$,
and
$\ord_\beta(v) = i$ if $\beta(\bvar{v}{a}{\ell}{i}) = 1$ for some $i$.
Note that since $\beta$ has finite cost, it satisfies $\bool{K}(v)$, so
there is exactly one such $i$.
Moreover, if $v$ is nonzero under $\beta$,
then $\beta$ also fixes some least significant digits.
To extract them, for a nonzero $v$,
let $\bvar{v}{a}{\ell}{i}$ be the Boolean variable in $\bool{V}(v)$
assigned $1$ under $\beta$ with maximum $\ell$,
and let $\suff(v) = a$ and $\wgt(v) = \ell$.
The next observations follow from the definitions of 
Boolean constraints in $\bool{I}$ corresponding to the equations of $I$.

\begin{proposition}
  \label{prop:string_mult}
  Let $\beta$ be an assignment that 
  satisfies all crisp constraints in $\bool{I}$.
  Suppose $\beta$ satisfies $\bool{C}$ for some equation $C$ in $I$.
  
  \begin{enumerate}
    \item If $C$ is of the form $v = r \cdot u$ for some unit $r$, then
      $\ord_\beta(v) = \ord_\beta(u)$, $\wgt_\beta(v) = \wgt_\beta(u)$ and
      $\suff_\beta(v) = r \cdot \suff_\beta(u) \bmod p^{\wgt_\beta(u)}$. 
    \item If $C$ is of the form $v = p \cdot u$ and $u$ or $v$ is ambiguous, then
      $\ord_\beta(v) = \ord_\beta(u) + 1$,
      $\wgt_\beta(v) = \wgt_\beta(u)$ and
      $\suff_\beta(v) = \suff_\beta(u)$.  
  \end{enumerate}
\end{proposition}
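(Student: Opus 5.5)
The plan is to argue directly from the explicit clauses defining $\bool{C}$, together with Proposition~\ref{proposition:order_and_suffix}. In both items I will show that the ``$\leftrightarrow$'' clauses induce a bijection between the set of Boolean variables of $\bool{V}(u)$ that $\beta$ sets to $1$ and the corresponding set for $\bool{V}(v)$. This bijection shifts order by a fixed amount and preserves weight. Applying it to the variable of maximum weight, which is unique by Proposition~\ref{proposition:order_and_suffix}, then gives the claimed relations for $\ord_\beta$, $\wgt_\beta$ and $\suff_\beta$.

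\textbf{Item 1.} Write $C$ as $r\cdot u = v$; the form $v = r\cdot u$ is literally the same equation. The clauses $(\bvar{u}{a}{\ell}{i} \leftrightarrow \bvar{v}{b}{\ell}{i})$ with $b = ra \bmod p^\ell$ range over all admissible $i,\ell$ and all units $a$. Multiplication by the unit $r$ permutes the units of $\ZZ_{p^\ell}$, so these clauses define a bijection $\bool{V}(u)\to\bool{V}(v)$ that preserves $(i,\ell)$, and $\beta$ agrees on matched variables. Two consequences follow:
\begin{itemize}
\item If $u$ is zero under $\beta$, i.e.\ all of $\bool{V}(u)$ is $0$, then so is $v$, and conversely.
\item Otherwise the unique maximum-weight true variable $\bvar{u}{a}{\ell}{i}$ is matched to $\bvar{v}{ra \bmod p^\ell}{\ell}{i}$. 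This matched variable is the maximum-weight true variable of $v$, because weights are preserved. Hence $\ord_\beta(v)=\ord_\beta(u)$, $\wgt_\beta(v)=\wgt_\beta(u)$, and $\suff_\beta(v)=r\cdot\suff_\beta(u) \bmod p^{\wgt_\beta(u)}$.
\end{itemize}

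\textbf{Item 2.} Here the clauses are $(\lnot\bvar{v}{a}{\ell}{0})$ and $(\bvar{u}{a}{\ell}{i}\leftrightarrow \bvar{v}{a}{\ell}{i+1})$ for $i+\ell+1\le d$. The map $\bvar{u}{a}{\ell}{i}\mapsto\bvar{v}{a}{\ell}{i+1}$ is a bijection between two sets:
\begin{itemize}
\item $D_u$, the variables of $\bool{V}(u)$ with $i+\ell\le d-1$;
\item the variables of $\bool{V}(v)$ with order $\ge 1$.
\end{itemize}
Every true variable of $v$ has order $\ge 1$ by the negative unit clauses, so every true variable of $v$ lies in the image of $D_u$. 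The remaining step is to show that every true variable of $u$ lies in $D_u$. The only variables outside $D_u$ are the singletons $\bvar{u}{a}{\ell}{i}$ with $i+\ell=d$. I split according to which side is ambiguous.
\begin{itemize}
\item \emph{$u$ is ambiguous.} Then no singleton of $u$ is true, so all true variables of $u$ lie in $D_u$.
\item \emph{$v$ is ambiguous.} Suppose a singleton $\bvar{u}{a}{\ell}{i}$ with $i+\ell=d$ were true. If $\ell=1$, then $i=d-1$ and $u$ is assigned a value of order $d-1$; I still need to check that this case is excluded, see the last paragraph. If $\ell\ge 2$, the implication in $\bool{K}(u)$ forces $\bvar{u}{a \bmod p^{\ell-1}}{\ell-1}{i}$ to be true. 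This variable lies in $D_u$, since $i+\ell-1=d-1$. By the equivalence, $\bvar{v}{a\bmod p^{\ell-1}}{\ell-1}{i+1}$ is true. This is a singleton of $v$, since $(i+1)+(\ell-1)=d$, which contradicts $v$ being ambiguous.
\end{itemize}

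With the true sets in bijection, I conclude as in item 1. The hypothesis that $u$ or $v$ is ambiguous makes the relevant side nonzero, and the bijection then makes the other side nonzero too. The maximum-weight true variable $\bvar{u}{a}{\ell}{i}$ maps to the maximum-weight true variable $\bvar{v}{a}{\ell}{i+1}$ of $v$. This gives $\ord_\beta(v)=\ord_\beta(u)+1$, $\wgt_\beta(v)=\wgt_\beta(u)$ and $\suff_\beta(v)=\suff_\beta(u)$.

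\textbf{Main obstacle.} The delicate point is the sub-case where only $v$ is ambiguous and $u$ may hold a singleton variable of weight $1$ and order $d-1$. Such a variable has no ancestor inside $D_u$, so the ancestor argument above does not apply. I would handle it by observing that this singleton is the only true variable of $u$, since it is the unique coset containing its value. Every variable of $\bool{V}(v)$ lies in the image of $D_u$ or has order $0$. The former are matched to false variables of $u$, and the latter are forced false by the unit clauses. So all of $\bool{V}(v)$ is $0$, and $v$ is not ambiguous, which is a contradiction. This closes the $v$-ambiguous case in full, and I would verify it carefully.
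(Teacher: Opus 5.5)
Your proof is correct and takes the approach the paper intends. The paper gives no separate proof and only says the proposition follows from the definitions of the Boolean constraints. Your bijection between the true variables of $\bool{V}(u)$ and of $\bool{V}(v)$ spells this out: it shifts order by $0$ in item 1 and by $1$ in item 2, preserves weight, and combined with uniqueness from Proposition~\ref{proposition:order_and_suffix} gives the three relations. The only non-obvious point is the case where only $v$ is ambiguous and $u$ carries the singleton $\bvar{u}{a}{1}{d-1}$, and you handle it correctly. Your ``order $\ge 1$'' claim also holds under either reading of the paper's index range for the negative clauses. An order-$0$ singleton of $v$ is either negated directly, or implies a weight-$1$ order-$0$ variable that is negated.
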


We refer to $\wgt_\beta(v)$ as the \emph{weight} of $v$ under $\beta$,
and say that $v$ is \emph{$\ell$-ambiguous} if
it is ambiguous and has weight $\ell$ under $\beta$.
We are ready to prove
Lemma~\ref{lem:Zpdsound}.

\begin{proof}[Proof of Lemma~\ref{lem:Zpdsound}]
  Let $\beta$ be a Boolean assignment to $\enhance{I}$ that satisfies all crisp constraints.
  Let $\ell$ be the minimum weight of an ambiguous variable under $\beta$.
  It suffices to show that we can modify $\beta$ by setting 
  more Boolean variables to $1$ while increasing 
  the minimum weight of ambiguous variables under $\beta$
  and without affecting the cost of $\beta$:
  indeed, the process can be repeated until there are no ambiguous variables left.
  
  Let $Z = \{C \in C(I) : \beta \text{ does not satisfy } \bool{C}\}$ be 
  the set of equations in $I$ that $\beta$ violates. 
  We will construct a new Boolean assignment $\beta'$ 
  that extends $\beta$ (i.e.\ assigns $1$ to a superset of Boolean variables)
  so that no variable is $\ell$-ambiguous under $\beta'$,
  and then argue that $\beta'$ satisfies $\bool{C}$ for all 
  equation $C \notin Z$.
  For that, we consider the graph 
  \[ 
    G' = (G_{d-\ell-1} - F'_{d-\ell-1} - \ed_{d-\ell-1}(Z))[S'],
  \]
  where $S' \subseteq S'_{d-\ell-1}$ contains
  vertices $v^i \in S'_{d-\ell-1}$ such that $v \in S_i$,
  as defined in step~\ref{alg:compute_si}.
  
  Since $G'$ is a subgraph of 
  $(G_{d-\ell-1} - F'_{d-\ell-1})[S'_{d-\ell-1}]$, 
  it admits a consistent labelling 
  $\lambda : V(G') \to \Gamma_{d-\ell-1}$.
  We will use $\lambda$ to obtain $\beta'$ from $\beta$.
  To do this, we observe that a copy of every $\ell$-ambiguous variable occurs in $G'$.

  \begin{claimspecial} \label{claim:found_a_copy}
    If $v$ is $\ell$-ambiguous under $\beta$,
    then $v^i \in V(G')$ where $i = \ord_\beta(v)$.
  \end{claimspecial}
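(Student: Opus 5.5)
To prove Claim~\ref{claim:found_a_copy}, we must check two memberships: $v^i \in S'_{d-\ell-1}$ and $v \in S_i$, where $i = \ord_\beta(v)$. First we make sure $v^i$ is a vertex of $G_{d-\ell-1}$ at all, i.e.\ $i \leq d-\ell-1$. Since $v$ is ambiguous with weight $\ell$, the variable $\bvar{v}{a}{\ell}{i}$ with $a = \suff_\beta(v)$ is a coset variable. Hence $i + \ell < d$, so $i \leq d-\ell-1$, and in particular $d-\ell-1 \geq i$. We also need $d-\ell-1 \leq d-2$, i.e.\ $\ell \geq 1$, which holds by definition of the Boolean variables. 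So $v^i$ is a vertex of $G_{d-\ell-1}$, and $d-\ell-1$ is one of the indices $j \in \{i,\dots,d-2\}$ over which the intersection in step~\ref{alg:compute_si} defining $S_i$ is taken.

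The main step is to show $v \in S_i$, using the crisp constraints added in step~\ref{alg:reduce_domain}. Suppose $v \notin S_i$. Then $\enhance{I}$ contains the crisp constraints $\lnot\bvar{v}{b}{1}{i}$ for every unit $b \in \ZZ_p$. On the other hand, $\beta$ assigns $1$ to $\bvar{v}{a}{\ell}{i}$. By Proposition~\ref{proposition:order_and_suffix}, $\beta$ also assigns $1$ to every variable implied by it in $\bool{K}(v)$. In particular it assigns $1$ to $\bvar{v}{a \bmod p}{1}{i}$, because $\coset{a}{\ell}{i} \subseteq \coset{a \bmod p}{1}{i}$; when $\ell = 1$ this is the variable itself. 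This contradicts the finite cost of $\beta$, so $v \in S_i$.

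Once $v \in S_i$ is established, the definition of $S_i$ as an intersection over $j$ from $i$ to $d-2$ gives $v^i \in S'_j$ for every such $j$. Taking $j = d-\ell-1$ yields $v^i \in S'_{d-\ell-1}$. Together with $v \in S_i$, this is exactly the defining condition of $S'$, so $v^i \in S' = V(G')$. Deleting the edges $F'_{d-\ell-1}$ and $\ed_{d-\ell-1}(Z)$ does not remove any vertices, so this completes the argument.

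The step most likely to need care is the index bookkeeping: confirming that ambiguity forces $i+\ell<d$, and hence that $j=d-\ell-1$ really lies in the range used to define $S_i$. The laminarity implication from $\bool{K}(v)$ is routine by comparison.
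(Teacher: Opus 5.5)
Your proof is correct and follows the same route as the paper: ambiguity gives $i+\ell<d$, so $v^i$ is a vertex of $G_{d-\ell-1}$; the crisp constraints of step~\ref{alg:reduce_domain} force $v\in S_i$; hence $v^i\in S'$. You also make explicit two points the paper leaves implicit: the implication $\bvar{v}{a}{\ell}{i}\to\bvar{v}{a \bmod p}{1}{i}$ in $\bool{K}(v)$ that makes those constraints apply, and the fact that $j=d-\ell-1$ lies in the range $\{i,\dots,d-2\}$ of the intersection defining $S_i$, which gives $v^i\in S'_{d-\ell-1}$.
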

  \begin{claimproof}
    If $v$ is $\ell$-ambiguous and has order $i$,
    then $i + \ell < d$, so $i \leq d - \ell - 1$
    and $v^i$ occurs as a vertex in $G_{d-\ell-1}$.
    Moreover, crisp constraints in step~\ref{alg:reduce_domain}
    forbid variables outside $S_i$ from having order $i$,
    so $v \in S_i$.
    Thus, $v^i \in S'$ and $v^i \in V(G')$.    
  \end{claimproof}

  Let $U$ be a set of $\ell$-ambiguous variables whose copies
  induce a connected component of $G'$.
  Pick an arbitrary variable $v \in U$ and suppose $v^i \in V(G')$.
  Note that $\suff_\beta(v)$ is a unit in $\ZZ_{p^\ell}$,
  and hence in $\ZZ_{p^{\ell+1}}$.
  Thus, $\suff_\beta(v)$ belongs to the group $\Gamma_{d-\ell-1}$,
  and there exists $c \in \Gamma_{d - \ell - 1}$ such that
  $\lambda(v^i) \cdot c = \suff_\beta(v) \bmod p^\ell$.
  For each $u \in U$, set
  $\beta'(\bvar{u}{\lambda(u^{i}) \cdot c}{\ell+1}{i}) = 1$
  where $\ord_\beta(u) = i$.
  Note $\suff_{\beta'}(u) = \lambda(u^i) \cdot c$ 
  and $\ord_{\beta'}(u) = \ord_{\beta}(u)$,
  so the weight of $u$ increases by $1$ in $\beta'$ compared to $\beta$,
  while the order is unchanged.
  This completes the construction of $\beta'$.
  By Claim~\ref{claim:found_a_copy},
  no variable is $\ell$-ambiguous under $\beta'$, and
  it remains to show that $\beta'$ satisfies $\bool{C}$ 
  for all $C \notin Z$.

  \smallskip
  
  Consider an equation $C \in C(I) \setminus Z$ on variables $u$ and $v$.
  Clearly, if $\beta$ and $\beta'$ agree on $u$ and $v$,
  then $\bool{C}$ is satisfied by $\beta'$.
  We claim that this case covers all equations $C$
  such that $\ed_{d-\ell-1}(C)$ intersects $F'_{d-\ell-1}$.
  Note that $\beta$ and $\beta'$ disagree on $\bool{K}(v)$
  if and only if $v$ is $\ell$-ambiguous under $\beta$.

  \begin{claimspecial} \label{claim:terminals}
    If $\ed_{d-\ell-1}(C)$ intersects $F'_{d-\ell-1}$,
    then $\beta$ and $\beta'$ agree on $\bool{C}$.
  \end{claimspecial}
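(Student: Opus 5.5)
Since $\beta$ and $\beta'$ differ only on the sets $\bool{V}(w)$ of $\ell$-ambiguous variables $w$, it suffices to show that neither $u$ nor $v$ is $\ell$-ambiguous under $\beta$. We argue by contradiction. Suppose $C\notin Z$, some edge $e\in\ed_{d-\ell-1}(C)$ lies in $F'_{d-\ell-1}$, and, say, $v$ is $\ell$-ambiguous of order $i=\ord_\beta(v)$.

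\textbf{Step 1 (relating $e$ to $v^i$).} Since $C$ is satisfied by $\beta$, Proposition~\ref{prop:string_mult} applies. If $C$ is $v=r\cdot u$, then $u$ has the same order $i$ and weight $\ell$, and is also ambiguous. If $C$ is $v=p\cdot u$ or $u=p\cdot v$, the orders differ by one and the weights agree; ambiguity also transfers, since a singleton variable on one side is linked by $\leftrightarrow$ to a singleton on the other. Hence both endpoints are $\ell$-ambiguous, and in each case their copies at their $\beta$-orders form an edge $e^*\in\ed_{d-\ell-1}(C)$.

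The difficulty is that $e$ itself may join other copies $u^j v^{j'}$, not the ones at the $\beta$-orders. To handle this, I would restrict to the copy of each variable at its $\beta$-order. The intended reading is that the relevant edge is the one whose endpoints are these active copies, i.e.\ $e^*=v^iu^{i'}$. I expect to need that $F'_{d-\ell-1}$ is a union of sets $F_X\cup\delta(X)$ over important connected sets $X$, so that an edge of $F'$ has an endpoint in $V(F')$. The plan is to argue that the endpoint of $e$ playing the role of $v^i$, or $u^{i'}$, lies in $V(F'_{d-\ell-1})$. This identification of $e$ with $e^*$ is where I expect the main obstacle.

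\textbf{Step 2 (terminal constraints).} Suppose $v^i\in V(F'_{d-\ell-1})$. By Claim~\ref{claim:found_a_copy} we have $v\in S_i$. With $j=d-\ell-1\ge i$, the definition in step~\ref{alg:compute_ti} then puts $v\in T_i$. Step~\ref{alg:add_implications} therefore added two crisp constraints, with $\ell'$ ranging over all weights $\ell'\ge1$ with $i+\ell'\le d$. The first is $\lnot\bvar{v}{b}{\ell'}{i}$ for every unit $b\neq s_v \bmod p^{\ell'}$. The second is the implication $\bvar{v}{a}{\ell'}{i}\to\bvarsingle{v}{g_v}$ for $a=s_v\bmod p^{\ell'}$.

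Since $v$ has order $i$ under $\beta$, some coset variable $\bvar{v}{a}{\ell'}{i}$ with $\ell'\ge1$ is true. The first constraint forces $a=s_v \bmod p^{\ell'}$, and the implication then forces the singleton $\bvarsingle{v}{g_v}$ to be true. So $v$ is unambiguous, a contradiction.

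\textbf{Step 3 (the other endpoint).} If instead the $F'$-endpoint is the copy of $u$, the same argument applied to $u$ shows $u$ is unambiguous. By Step~1 this contradicts the ambiguity that $u$ inherits from $v$. Hence neither endpoint of $C$ is $\ell$-ambiguous, so $\beta$ and $\beta'$ agree on $\bool{V}(u)\cup\bool{V}(v)$, and thus on $\bool{C}$.
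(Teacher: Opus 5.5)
Your Step~2 is exactly the paper's argument: Claim~\ref{claim:found_a_copy} gives $v\in S_i$, the copy $v^i\in V(F'_{d-\ell-1})$ then puts $v$ into $T_i$ by step~\ref{alg:compute_ti}, and the crisp constraints of step~\ref{alg:add_implications} force any order-$i$ value of $v$ onto the singleton $\bvarsingle{v}{g_v}$.

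The obstacle you flag in Step~1 cannot be overcome, because the claim read literally is not true. Take $\ZZ_8$ and $\ell=1$, and let $C\colon 3u=v$ occur together with $2u=z$, $2v=y$ and $z=y$, the last three duplicated.
\begin{itemize}
\item In $G_1$ the vertices $u^0,v^0,y^1,z^1$ form a cycle with label product $3\neq 1$ in $\Gamma_1$, whose unique minimum cleaning edge is $u^0v^0$.
\item Meanwhile $\{u^1,v^1\}$ is a balanced component.
\item So $F'_1=\{u^0v^0\}$ is a possible output, and then $T_1=\emptyset$.
\item Let $\beta$ map $u,v\mapsto\overline{26}$ and $y,z\mapsto 4$. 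It satisfies every crisp constraint and $\bool{C}$, and $u,v$ are $1$-ambiguous, even though $\ed_1(C)\cap F'_1\neq\emptyset$.
\end{itemize}
The paper's proof makes the same tacit identification you describe. From $u^iv^j\in F'_{d-\ell-1}$ it only derives that $u$ is not $\ell$-ambiguous \emph{of order $i$}. Case~1 uses the claim only for the edge joining the copies at the $\beta$-orders. So the right move is to state and prove your $e^*$-version: if $u,v$ are $\ell$-ambiguous of orders $i,j$, then $u^iv^j\notin F'_{d-\ell-1}$. Do not look for a bridge from $e$ to $e^*$.

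Separately, your justification of the transfer of ambiguity in Step~1 is wrong for $p$-equations. For $v=p\cdot u$, $\bool{C}$ links the singletons of $v$ to \emph{coset} variables of $u$ and leaves the singletons of $u$ unconstrained. So if $u$ is $\ell$-ambiguous with $\ord_\beta(u)+\ell+1=d$, then $v$ is a singleton. Moreover $v^{d-\ell}$ is not a vertex of $G_{d-\ell-1}$, so $e^*$ does not exist; this is the paper's Case~2. Transfer does hold whenever $e^*$ exists, by comparing orders and weights via Proposition~\ref{prop:string_mult}. But you never need it: $V(F'_{d-\ell-1})$ contains \emph{both} endpoints of each edge of $F'_{d-\ell-1}$. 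So if $e^*\in F'_{d-\ell-1}$, apply Step~2 directly to whichever endpoint is $\ell$-ambiguous, as the paper does (``by analogous reasoning''). Step~3 can then be dropped.
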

  \begin{claimproof}
    Let $u$ and $v$ be the variables of $C$, and 
    suppose $u^i v^j \in \ed_{d-\ell-1}(C) \cap F'_{d-\ell-1}$.    
    We claim that $u$ and $v$ are not $\ell$-ambiguous under $\beta$.
    Indeed, if $u \notin S_i$, then $u^i \notin V(G')$ by Claim~\ref{claim:found_a_copy}.
    Otherwise, we have $u \in S_i$ and $u \in T_i$
    by the construction in step~\ref{alg:compute_ti};
    then $\bool{I}$ contains crisp constraints 
    introduced in step~\ref{alg:add_implications}
    implying that $u$ has order $i$, then it is unambiguous.
    By analogous reasoning, neither $v$ is $\ell$-ambiguous.
  \end{claimproof}
  
  In the remaining cases, we have $\ed_{d-\ell-1}(C) \cap F'_{d-\ell-1} = \emptyset$,
  while $\beta$ and $\beta'$ disagree on at least one of $u$ and $v$.

  \smallskip

  \emph{Case 1:} $\beta(u) \neq \beta'(u)$ and $\beta(v) \neq \beta'(v)$,
  i.e.\ both $u$ and $v$ are $\ell$-ambiguous under $\beta$.
  Let $i = \ord_\beta(u)$ and $j = \ord_\beta(v)$.
  The edge $u^i v^j$ is present in $G'$ since 
  $u^i, v^j \in V(G')$ by Claim~\ref{claim:found_a_copy},
  $u^i v^j$ is not in $F'_{d-\ell-1}$ by Claim~\ref{claim:terminals} and
  not in $\ed_{d-\ell-1}(Z)$ since $C \notin Z$.
  Thus, $u^i$ and $v^j$ are in the same connected component of $G'$,
  which implies that $\suff_{\beta'}(u) = \lambda(u^i) \cdot c$ and
  $\suff_{\beta'}(v) = \lambda(v^i) \cdot c$ for the same $c \in \Gamma_{d-\ell-1}$. 
  We consider the two possible subcases.

\begin{enumerate}
  \item
  $C$ is of the form $r \cdot u = v$ for some unit $r$.
  Then we have 
  $i = j$ and $r \cdot \lambda(u^i) = \lambda(v^j) \bmod p^{\ell+1}$.
  Thus, $r \cdot \suff_{\beta'}(u) = r \cdot \lambda(u^i) \cdot c = 
  \lambda(v^j) \cdot c = \suff_{\beta'}(v) \mod p^{\ell+1}$,
  and $\beta'$ satisfies $\bool{C}$ by Proposition~\ref{prop:string_mult}.

  \item
  $C$ is of the form $p \cdot u = v$. We see that
  $j = i + 1$, $\lambda(u^i) = \lambda(v^j) \bmod p^{\ell+1}$
  and 
  $\suff_{\beta'}(u) = \suff_{\beta'}(v) \bmod p^{\ell+1}$.
  By Proposition~\ref{prop:string_mult},
  this implies that $\beta'$ satisfies $\bool{C}$.
\end{enumerate}

  \smallskip

  \emph{Case 2:} $\beta(u) \neq \beta'(u)$ and $\beta(v) = \beta'(v)$,
  i.e.\ $u$ is $\ell$-ambiguous under $\beta$ and $v$ is not.
  By Proposition~\ref{prop:string_mult}, this is only possible if
  $C$ is of the form $p \cdot u = v$ and
  $\ord_\beta(u) + 1 + \ell = d$.
  Thus, $\beta'$ assigns one of the singleton variables in
  $\bool{V}(u)$ to $1$. 
  By construction, $\bool{C}$ does not contain constraints involving 
  singleton variables from $\bool{V}(u)$.
  Since $\beta'$ agrees with $\beta$ on all other Boolean
  variables in $\bool{V}(u)$,
  it satisfies $\bool{C}$.
\end{proof}

\subsubsection{Derandomization}

There are two randomized ingredients in the algorithm in Figure~\ref{fig:mainalgorithm}:
the first is the procedure \RandomCover{} being called in step~\ref{alg:random_cover};
the second is the choice of $s_v$ for each $v \in T_i$ uniformly at random in step~\ref{alg:add_implications}.
The first one was derandomized in Theorem~\ref{thm:derandom_cover}.
For the second one, we may combine perfect hash families with exhaustive enumeration.
Perfect hash families are a standard tool in derandomizing FPT algorithms.

\begin{definition}
  An $(n, \kappa)$-perfect hash is family $\cH$
  of functions $h : [n] \to [\kappa]$ such that,
  for all subsets $S \subseteq [n]$ of size $\kappa$
  there is a function $h \in \cH$ that is
  injective on $S$, i.e.\ 
  $h(i) \neq h(j)$ for all distinct $i,j \in S$.
\end{definition}

We will need the following result.

\begin{theorem}[Naor et al.~\cite{naor1995splitters}]
  For any $n, \kappa \geq 1$, one can construct
  an $(n, \kappa)$-perfect hash family of size
  $e^\kappa \kappa^{O(\log \kappa)} \cdot \log n$ in time
  $e^\kappa \kappa^{O(\log \kappa)} \cdot  n \log n$
\end{theorem}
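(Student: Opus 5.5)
This is the classical splitter construction of Naor, Schulman and Srinivasan, and I would reproduce it in two stages.

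\emph{Stage 1: reduce the universe.} The first step replaces $[n]$ by a universe of size polynomial in $\kappa$, so that $\log n$ appears only once, as a multiplicative factor. I would use a family of $O(\kappa^{O(1)}\log n)$ functions $[n]\to[\kappa^2]$ such that every $\kappa$-subset is mapped injectively by at least one of them. Two standard ways to get this:
\begin{itemize}
  \item the Fredman--Koml\'os--Szemer\'edi style hashing $x\mapsto (ax \bmod q)\bmod \kappa^2$ for a prime $q\approx n$, after first reducing $n$ to $\mathrm{poly}(\kappa)\log n$ by hashing modulo small primes;
  \item an explicit $\kappa$-wise independent sample space (e.g.\ over a field) combined with a union bound over collisions.
\end{itemize}
Either route gives the family in time about $\kappa^{O(1)} n\log n$.

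\emph{Stage 2: split into blocks.} Now work on the reduced universe $[\kappa^2]$. I would build a splitter $B:[\kappa^2]\to[\ell]$ with $\ell\approx \kappa/\log\kappa$ blocks. The required property is that every $\kappa$-subset $S$ is split evenly by some $B$ in the family, i.e.\ each block receives about $\log\kappa$ elements of $S$.
\begin{itemize}
  \item The splitter family is obtained by enumerating block boundaries of intervals after a further hash. Alternatively, one can recurse: split into two halves (count $\kappa^{O(1)}$) and repeat for $\log(\kappa/\log\kappa)$ levels. This yields $\kappa^{O(\log\kappa)}$ splitters.
  \item Inside each block, which contains about $\log\kappa$ elements, brute-force an exactly-perfect family onto a dedicated palette of that size.
  \item Per block this costs $(\log\kappa)^{O(\log\kappa)}$ choices, since we enumerate functions from a $\mathrm{poly}(\kappa)$ domain restricted to small sets. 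More carefully, I would use $(m,r)$-perfect families of size $e^{r}r^{O(\log r)}$, obtained by a greedy or derandomized covering on the small domain.
  \item Take the product over the blocks.
\end{itemize}

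\emph{Counting.} A single random function is injective on a fixed $\kappa$-set with probability $\kappa!/\kappa^\kappa\approx e^{-\kappa}$. A greedy set-cover construction over all $\binom{\kappa^2}{\kappa}$ sets therefore gives $e^{\kappa}\kappa^{O(1)}$ functions. However, that greedy step is exponential in time on the domain $[\kappa^2]$ unless it is done per block, which is precisely why we split into blocks.

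\emph{Main obstacle.} The delicate part is the bookkeeping showing that the combined size is $e^{\kappa}\kappa^{O(\log\kappa)}$ rather than $e^{\kappa(1+o(1))}$ with worse lower-order terms. The point to check is that the $e^{\kappa}$ factor arises only from the product of per-block $e^{r_j}$ factors, where $\sum_j r_j=\kappa$. Meanwhile the splitter count and the per-block polynomial slacks together contribute only $\kappa^{O(\log\kappa)}$. Given that this is a cited theorem, in the paper I would simply invoke Naor et al.\ rather than reprove it.
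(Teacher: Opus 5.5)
\textbf{Verdict: genuine gap.} The paper gives no proof of this statement; it is quoted from Naor et al. So your closing decision to cite it is exactly what the paper does. The reconstruction you sketch, however, does not give the stated bound. Stage~2 uses the wrong block parameters, and with them the overhead cannot be $\kappa^{O(\log\kappa)}$. There are two problems.

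\textbf{The splitter is too large.} An even splitter of $[\kappa^2]$ into $\ell\approx\kappa/\log\kappa$ blocks of $m=\kappa/\ell\approx\log\kappa$ elements cannot have $\kappa^{O(\log\kappa)}$ members. By AM--GM, a single $h\colon[\kappa^2]\to[\ell]$ splits evenly at most a $4\kappa!/((m!)^{\ell}\ell^{\kappa})\le 4e\sqrt{\kappa}\,(2\pi m)^{-\ell/2}$ fraction of all $\kappa$-subsets. Hence any such family has $2^{\Omega(\kappa\log\log\kappa/\log\kappa)}$ members. Your two constructions do not beat this:
\begin{itemize}
\item Enumerating interval boundaries gives $\binom{\kappa^2}{\ell-1}=2^{\Theta(\kappa)}$ functions.
\item Recursive halving costs only $\kappa^{O(1)}$ per level if a single function halves all current parts at once. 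That is the same splitter problem again.
\item If you relax to \emph{at most} $O(\log\kappa)$ elements per block, the palettes must match the actual block sizes. You would then have to enumerate $2^{\Theta(\kappa\log\log\kappa/\log\kappa)}$ size profiles.
\end{itemize}

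\textbf{The per-block slack is paid $\kappa/\log\kappa$ times.} Greedy per-block families of size $e^{r}\cdot\mathrm{polylog}(\kappa)$ give a total overhead of $2^{\Theta(\kappa\log\log\kappa/\log\kappa)}$. Your stated $(\log\kappa)^{O(\log\kappa)}$ choices per block give $(\log\kappa)^{O(\kappa)}$, which is not even $2^{O(\kappa)}$. So the bookkeeping you postpone to your last paragraph actually fails. This parameterization yields at best $e^{\kappa(1+o(1))}$, which is the weaker bound you wanted to avoid.

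\textbf{How to fix Stage~2.} Swap the roles of the two parameters:
\begin{itemize}
\item Use only $b=c\lceil\log\kappa\rceil$ blocks, with a fixed size profile $r_j\in\{\lfloor\kappa/b\rfloor,\lceil\kappa/b\rceil\}$ and fixed disjoint palettes $P_j\subseteq[\kappa]$ with $|P_j|=r_j$.
\item Every $\kappa$-set is split with exactly this profile by one of the at most $\binom{\kappa^2}{b-1}=\kappa^{O(\log\kappa)}$ interval partitions of $[\kappa^2]$.
\item For each $r\approx\kappa/b$, build a $(\kappa^2,r)$-perfect family of size $e^{r}\kappa^{O(1)}$. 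Do this by greedily covering the $\binom{\kappa^2}{r}\le\kappa^{2r}$ $r$-subsets, drawing candidates from an $r$-wise independent space of size $\kappa^{O(r)}$.
\item Both of these counts are $2^{O(\kappa/c)}$, so for a large constant $c$ the construction fits in $e^{\kappa}$ time.
\end{itemize}
The product family then has size $\kappa^{O(b)}\prod_j e^{r_j}=e^{\kappa}\kappa^{O(\log\kappa)}$. The $\kappa^{O(\log\kappa)}$ factor arises precisely because a $\kappa^{O(1)}$ slack is paid only $O(\log\kappa)$ times.

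\textbf{Stage~1 is also off as stated.}
\begin{itemize}
\item Exactly pairwise independent families on $[n]$ already need at least $n$ members, so the $\kappa$-wise independence route cannot give $\kappa^{O(1)}\log n$.
\item Primes followed by FKS, as you describe it, gives roughly $\kappa^{O(1)}\log^2 n$.
\item One standard way to get linear dependence on $\log n$ is to take the coordinates of a code with relative distance above $1-1/\binom{\kappa}{2}$ over a $\mathrm{poly}(\kappa)$-size alphabet, and then apply FKS on the alphabet.
\end{itemize}
For the paper's actual use, the derandomized guessing of the values $s_v$, any $f(\kappa)\cdot\mathrm{polylog}(n)$ bound would suffice. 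The statement as written, however, needs the sharper construction.
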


Now let $D$ be a set.
There are $|D|^\kappa$ functions from $[\kappa]$ to $D$.
By composing an $(n, \kappa)$-perfect hash
with all functions $f : [\kappa] \to D$,
we obtain the following.

\begin{lemma} \label{lem:nkd_function}
  For any $n, \kappa \geq 1$ and any set $D$ of size $d$,
  one can construct a family $\cF$ of functions
  $f : [n] \to D$ such that for all
  subsets $S \subseteq [n]$ with $|S| = \kappa$
  and every function $g : S \to D$,
  there is a function $f \in \cF$
  that agrees with $g$ on $S$, i.e.\ 
  $f(x) = g(x)$ for all $x \in S$.
  Moreover, the size of $\cF$ is
  $(ed)^\kappa \kappa^{O(\log \kappa)} \cdot \log n$
  and it can be constructed in time
  $(ed)^\kappa \kappa^{O(\log \kappa)} \cdot  n \log n$.
\end{lemma}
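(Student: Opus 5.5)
The plan is to compose the $(n,\kappa)$-perfect hash family of Naor et al.\ with all functions $[\kappa] \to D$, exactly as suggested in the text preceding the statement. Concretely, let $\cH$ be an $(n,\kappa)$-perfect hash family of size $e^\kappa \kappa^{O(\log \kappa)} \log n$, and define
\[
  \cF = \{ \phi \circ h : h \in \cH,\ \phi \colon [\kappa] \to D \}.
\]
The size bound is then immediate, since $|\cF| \le |\cH| \cdot d^\kappa = (ed)^\kappa \kappa^{O(\log\kappa)} \log n$. For the running time, we first construct $\cH$ in time $e^\kappa \kappa^{O(\log\kappa)} n \log n$, as guaranteed by the theorem of Naor et al. We then enumerate the $d^\kappa$ maps $\phi$, and for each pair $(h,\phi)$ write down $\phi \circ h$ as a table of $n$ values in $O(n)$ time. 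This gives $(ed)^\kappa \kappa^{O(\log\kappa)} n\log n$ in total, where the $\log n$ factor survives from the size of $\cH$.

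For correctness, fix $S \subseteq [n]$ with $|S| = \kappa$ and an arbitrary $g \colon S \to D$. By the perfect hash property there is some $h \in \cH$ that is injective on $S$. Since $|S| = \kappa$, the restriction $h|_S$ is a bijection $S \to [\kappa]$. Define $\phi = g \circ (h|_S)^{-1} \colon [\kappa] \to D$. Then for every $x \in S$ we get $\phi(h(x)) = g(x)$. Hence $f = \phi \circ h \in \cF$ agrees with $g$ on $S$.

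The only point needing care is the corner cases: when $\kappa > n$ there are no sets $S$ of size $\kappa$, so the statement is vacuous (and one may use any trivial family). Otherwise I expect no real obstacle; the argument is a direct composition.
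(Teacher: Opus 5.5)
Your proposal is correct and is the paper's own argument. The paper likewise composes the $(n,\kappa)$-perfect hash family of Naor et al.\ with all $d^\kappa$ maps $[\kappa]\to D$, and you spell out the details it leaves implicit: injectivity on a $\kappa$-set $S$ makes $h|_S$ a bijection onto $[\kappa]$, which gives $\phi = g\circ (h|_S)^{-1}$, together with the size and running-time accounting.
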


Thus, to derandomize step~\ref{alg:add_implications},
we may use the lemma above to
construct a family $\cF$ with parameters
$n = |T_i|$, $\kappa = 2k(d-i-2)$ and 
$D$ being the set of units in $\ZZ_{p^{d-i}}$, for each iteration $i$.
We replace guessing $s_v$ for each $v \in T_i$
by enumerating all functions $f \in \cF$ 
and setting $s_v = f(v)$ to produce the
implication constraints in step~\ref{alg:add_implications}.
The domain size is bounded as $|D| < |\ZZ_{p^{d-i}}|$,
hence for each iteration Lemma~\ref{lem:nkd_function}
introduces a multiplicative overhead of at most
$p^{O(kd^2)} \log n$. 

We summarize as follows. We take the opportunity to also chase down a rough bound on the running time and dependency on $p$ and $d$.

\begin{theorem} \label{thm:alg-derand}
  There is a deterministic FPT algorithm for $\minlin{2}{\ZZ_{p^d}}$
  for any prime $p$ and $d \geq 1$ running in time $O^*(2^{(k+p^d)^{O(1)}})$.
\end{theorem}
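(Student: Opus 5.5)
The plan is to take the randomized algorithm \SolveMinLin{} of Figure~\ref{fig:mainalgorithm} and replace its two random steps by deterministic enumerations. The outer layer is the Turing reduction behind Proposition~\ref{prop:simple} and Lemma~\ref{lem:special-instances}. It makes $\bigoh^*(2^{O(k)})$ calls to a special-instance solver and adds polynomial overhead. Inside each call, Step~\ref{alg:random_cover} is run for every $i \in \range{0}{d-2}$. We replace it by the deterministic list from Theorem~\ref{thm:derandom_cover} with parameter $2k$, which has $4^{4k^2(1+o(1))}\log n$ pairs $(S'_i,F'_i)$ and is computed in time $\bigoh^*(4^{4k^2(1+o(1))})$. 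We then branch over all tuples $(S'_0,F'_0),\dots,(S'_{d-2},F'_{d-2})$, one pair from each list. This gives at most $(4^{O(k^2)}\log n)^{d-1}$ branches.

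In each branch, Step~\ref{alg:add_implications} is derandomized with Lemma~\ref{lem:nkd_function}. For each $i$ we use $n=|T_i|$, $\kappa = 2k(d-i-2)$ (or $1$ if this is $0$), and take $D$ to be the set of units of $\ZZ_{p^{d-i}}$. We enumerate all combinations of functions $f_i$ from the resulting families and set $s_v=f_i(v)$. Each $\enhance{I}$ obtained this way is passed to the \MinSat{} algorithm of Theorem~\ref{thm:2k2-free}, and we accept if any run accepts.

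Correctness follows from the two earlier lemmas. Soundness: Lemma~\ref{lem:Zpdsound} holds for every choice of $(S'_i,F'_i)$ and $s_v$, since its proof only uses the crisp constraints that were added and the balancedness of $(G_i-F'_i)[S'_i]$. So every accepting branch certifies a yes-instance. Completeness: the proof of Lemma~\ref{lem:Zpdcomplete} shows that if $(I,k)$ has a solution $\alpha$, then it suffices that
\begin{itemize}
\item each $(S'_i,F'_i)$ covers $H_i(\alpha)$, and
\item $s_v$ agrees with $\alpha$ on the at most $2k(d-i-2)$ variables of $T_i$ whose order under $\alpha$ is $i$.
\end{itemize}
Theorem~\ref{thm:derandom_cover} guarantees that some tuple satisfies the first condition. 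For that tuple, Lemma~\ref{lem:nkd_function} guarantees that some $f_i$ agrees with the required values. The lemma is stated for sets of size exactly $\kappa$, but a smaller set can be padded. So some branch is a yes-instance of \MinSat.

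The step I expect to be the main obstacle is bounding the running time, and in particular tracking the dependence on $p$ and $d$ in Theorem~\ref{thm:2k2-free}. The backend solver's running time depends on the constraint language $\bA_{p^d}$. Its relations have arity $|\bool{V}(u)|+|\bool{V}(v)| = O(p^d)$, so I would bound that cost by $2^{(k+p^d)^{O(1)}}$, assuming the Kim et al.\ bound is of the form $2^{\mathrm{poly}(k, \text{arity})}$. The remaining factors are all of this form:
\begin{itemize}
\item covering lists: $(4^{O(k^2)}\log n)^{d}$;
\item hash families: $\prod_i (e p^{d})^{O(kd)}\kappa^{O(\log\kappa)}\log n = p^{O(kd^2)}(\log n)^{d}$;
\item outer reduction: $2^{O(k)}$.
\end{itemize}
The factor $(\log n)^{O(d)}$ is at most $2^{O(d^2)} + n^{O(1)}$ by the standard bound $(\log n)^c \le 2^{c^2}+n$. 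Multiplying everything gives $O^*(2^{(k+p^d)^{O(1)}})$.
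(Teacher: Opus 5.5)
Your route is the same as the paper's. You swap Step~\ref{alg:random_cover} for the list of Theorem~\ref{thm:derandom_cover} with parameter $2k$, swap Step~\ref{alg:add_implications} for the family of Lemma~\ref{lem:nkd_function}, branch over all combinations, and observe that soundness holds in every branch. The assumption you flag as the main obstacle is exactly what the paper uses: Theorem~1.6 of Kim et al.\ is a deterministic $O^*(2^{(k+r)^{O(1)}})$ algorithm, where $r$ is the maximum arity. Your arity bound $O(p^d)$ is correct and slightly tighter than the paper's $dp^d$.

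One step fails as written: the choice $\kappa = 2k(d-i-2)$, ``or $1$ if this is $0$''. The figure $2k(d-i-2)$ also appears in the paper's completeness proof, and it is an undercount. The union defining $T_i$ in step~\ref{alg:compute_ti} ranges over the $d-1-i$ indices $j \in \{i,\dots,d-2\}$. For each $j$, a successful cover with parameter $2k$ leaves at most $2k$ edges of $F'_j$ touching $V(H_j)$. Hence at most $4k$ variables $v$ satisfy both $\ord(\alpha(v)) = i$ and $v^i \in V(F'_j)$, and the correct bound is $4k(d-1-i)$, which is positive for every $i \le d-2$.

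Your patch breaks completeness at $i = d-2$, which is the only index when $d = 2$. Take $\mathbb{Z}_4$ and suppose $\alpha$ gives unit values to every variable of an odd cycle of unit equations, one of which is deleted. If the cover succeeds, the cycle lies in $G_0[S'_0]$. Balancedness of $(G_0 - F'_0)[S'_0]$ then forces one of its edges into $F'_0$, so $T_0$ contains at least two order-$0$ variables. With $\kappa = 1$, Lemma~\ref{lem:nkd_function} only guarantees agreement with $\alpha$ on single variables; even a family of constant functions meets that requirement. So your claim that ``some $f_i$ agrees with the required values'' does not follow.

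Setting $\kappa = 4k(d-1-i)$ fixes this and leaves the running-time bound unchanged. A harmless arithmetic slip: multiplying the per-$i$ factors $(ep^d)^{O(kd)}$ over the $d-1$ values of $i$ gives $p^{O(kd^3)}$, not $p^{O(kd^2)}$. This is still within $2^{(k+p^d)^{O(1)}}$.
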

\begin{proof}
  As discussed above, we modify $\SolveMinLin(I, k)$, given in Figure~\ref{fig:mainalgorithm},
  so that Step~\ref{alg:forloop_1} uses Theorem~\ref{thm:derandom_cover} instead of a call to $\RandomCover$
  and Step~\ref{alg:add_implications} uses Lemma~\ref{lem:nkd_function} instead of random guessing. 
  In both cases, the algorithm iterates exhaustively over the respective family of objects
  and accepts if and only if at least one resulting branch accepts the instance.  

  For correctness, we only need to verify completeness. 
  Let $\alpha : V(I) \to \ZZ_{p^d}$ be an optimal assignment to $(I, k)$,
  and let $\bool{\alpha}$ be the corresponding Boolean assignment.
  As in the proof of Lemma~\ref{lem:Zpdcomplete}, for each $i \in \{0,\ldots,d-2\}$
  there is a balanced subgraph $H_i$ of $G_i$ of cost at most $2k$.
  Thus, the output of Theorem~\ref{thm:derandom_cover} on $(G_i,\BB_i,2k)$
  contains one pair $(S_i,F_i)$  which covers $H_i$. 
  Moreover, as in Lemma~\ref{lem:Zpdcomplete}
  at most $2k(d-i-2)$ variables $v$ in $T_i$ have $\ord(\alpha(v)) = i$.
  The implications introduced in step~\ref{alg:add_implications}
  are satisfied by $\bool{\alpha}$ for all variables $v$ in $T_i$
  with $\ord(\alpha(v)) \neq i$ because the left-hand-side
  of the implications is false under $\bool{\alpha}$,
  and by Lemma~\ref{lem:nkd_function} there is $f \in \cF$ such that
  for all $v \in T_i$ with $\ord(\alpha(v)) = i$, 
  $\alpha(v) = f(v) \cdot p^i$, so these implications
  are satisfied by $\bool{\alpha}$ for at least one $f \in \cF$.
  Thus there is a branch where the instance is accepted.

  For running time, all operations are of lower-order term except
  the product of iteration over $O(d)$ outputs from Theorem~\ref{thm:derandom_cover} with parameter~$2k$,
  $O(d)$ outputs from Lemma~\ref{lem:nkd_function},
  and the call to Theorem~\ref{thm:2k2-free} in the end. 
  The former two parts together introduce a total overhead of
  $$2^{O(k^2d + kd^3 \log p)}\log^{O(d)} n.$$ 
  When we treat $d$ as a constant,
  this is already in our time budget. For taking $p^d$ as an additional parameter,
  we note the well-known fact that $\log^t n \leq n+t^{O(t)}$.
  Finally, Theorem~\ref{thm:2k2-free} has a deterministic form which runs in $O^*(2^{(k+r)^{O(1)}})$ time, 
  where $r$ is the maximum arity of a constraint in the Boolean formula $\enhance{I}$;
  see Theorem~1.6 in~\cite{Kim:etal:sicomp2025}. We have $r \leq dp^d$
  (see the representation of cosets in Section~\ref{ssec:cosets}).
  Thus the total time is $O^*(2^{(k+p^d)^{O(1)}})$ as promised.
\end{proof}

\section{Hardness of FPT Approximation}
\label{sec:hardness}

The aim of this section is to prove that, under the ETH, $\minlin{2}{\ZZ_m}$ is not \FPT{}-approximable within $\omega(m)-\epsilon$ for any $\epsilon > 0$ (Corollary~\ref{cor:modular-hardness}). 
From now on, we will formulate approximation problems as {\em gap problems}
so that we can view them as decision problems.
Hence, given a minimization problem $P$, the equivalent \textsc{Gap$_c$} problem distinguishes between two cases, for an input $k$ and an instance $x$ of problem $P$:

\smallskip

\begin{tabular}{ll}
(YES) & ${\displaystyle {\text{OPT}}_{P}(x)\leq k}$.\\
(NO) & ${\displaystyle {\text{OPT}}_{P}(x)>c \cdot k}$. 
\end{tabular}

\smallskip

\noindent
If the optimum value is between the thresholds, then there is no requirement on the output. Note that $c$-approximating the solution to an instance of $P$ is at least as hard as solving the $c$-gap version of $P$.

Our starting point is the connection between
$\minlin{2}{\ZZ_m}$ and graph separation problems already observed in
\cite{Dabrowski:etal:arxiv2024}. In particular, Dabrowski et
al.~\cite{Dabrowski:etal:arxiv2024} showed that $\GAP{2-\eps}$ $\minlin{2}{\ZZ_m}$ is
not in \FPT for any $\eps>0$, unless the \ETH\ is false; note that this
is equivalent to saying that $\minlin{2}{\ZZ_m}$ is
not FPT-approximable within $2-\eps$ for any $\eps>0$, unless the \ETH
is false. Crucially, Dabrowski et al.~\cite{Dabrowski:etal:arxiv2024}
employed the well-known \textsc{Pair Min Cut}
problem as an intermediate problem, where given a graph $G$, two vertices $s$ and $t$ of $G$, 
a set $\mathcal{P}$ of disjoint edge pairs (or bundles), and a
parameter $k$, the aim is to decide whether there is an $st$-mincut $Z
\subseteq E(G)$, which is the union of at most $k$ bundles in
$\mathcal{P}$. They showed that $\GAP{2-\eps}$ \textsc{Pair Min Cut} is
not in \FPT for any $\eps>0$, unless the \ETH is false, even for
\emph{split} instances, i.e.\ where $G-\{s,t\}$ is the disjoint union
of two graphs $G_1$ and $G_2$ and every bundle contains at most one
edge from each $G_i$.

As a first step to obtain our lower bounds, we introduce a natural generalization of
{\sc Paired Min Cut} for split instances, which we coin
\pSplitng. Recall from \Cref{sec:intro-hardness} that an instance of
the \pSplitng problem is
a graph $G$ with two terminals $s, t$ such that $G-\{s,t\}$ is the disjoint union of 
$p$ graphs $G_1,\dotsc,G_p$, together with
a set of disjoint edge sets $\cB \subseteq \binom{E(G)}{1} \cup \binom{E(G)}{p}$,
where each set contains at most one edge from each $G_1,\dots,G_p$, and
an integer $k$, which also serves as the parameter for the problem.
The members of $\cB$ are referred to as {\em bundles}.
The question is whether 
there is an $st$-cut which is the union of at most $k$ bundles in
$\cB$. 

Clearly, $2$-{\sc Split Min Cut} is the same problem 
as \textsc{Pair Min Cut} for split instances. 
Unfortunately, Dabrowski et al.'s hardness result for
$\GAP{2-\eps}$ $2$-{\sc Split Min Cut}
is not easily generalizable to a hardness result for
$\GAP{p-\eps}$ $p$-{\sc Split Min Cut} with $p > 2$.
Their reduction starts from
a sparse version of
the $\GAP{\eps}$ {\sc Max-2-CSP} problem\footnote{See formal definition just before Theorem~\ref{thm:Bafna-hardness}}, and
the non-approximability of this problem
follows from Theorem~4.1~of~\cite{guruswami2024almost} combined with
a result concerning parallel repetition in projection games~\protect\cite{Rao:sicomp2011}.
They use a straightforward probabilistic argument for showing that the same non-approximability result holds for
instances with bipartite primal graphs, and this
is then transferred to the $2$-{\sc Split Min Cut} problem.
Generalizing their probabilistic argument to $p$-partite
graphs with $p > 2$ does not seem possible so we 
use a different approach based on a recent non-approximability result for
a {\em dense} version of 
the $\GAP{\eps}$ {\sc Max-2-CSP} problem~\cite{Bafna:etal:stoc2025}.
The main
technical contribution of our lower bound then lies in the definition
of an tailor-made intermediate problem (as well as reductions to and from this
problem), which we coin \gapsven (\gapsvens).
\svens\ is a restricted variant of the well-known
\textsc{Multicoloured Subgraph Isomorphism} problem (see, e.g.~\cite{DBLP:journals/toc/Marx10}), where the aim is
to find a subgraph isomorphism from a $t$-partite graph $H$ with each
part having parameter many ($k$) vertices and whose edges can be
partitioned into $\Omega(k^2)$ $t$-cliques.
To show the hardness of this problem, we
provide a reduction from the above mentioned dense version of the
\textsc{Max-2-CSP} problem using the \textsc{Multicoloured Densest Subgraph}
problem as an intermediate step.
A crucial ingredient for the reduction is
to show that for sufficiently large $k$ one can construct a
$t$-partite graph $H$ with $k$ vertices in each part whose edges can
be partitioned into at least $\Omega(k^2)$ $t$-cliques. We show this
result with the help of a classical result by Turán
stating the maximum number of edges in a $K_r$-free graph with $n$
vertices. Using an delicate reduction from \gapsvens, we are then able
to show that \pSplit\  is not in \FPT.

We continue by proving (in Theorem~\ref{thm:incomparable-annihilators}
and Corollary~\ref{cor:modular-hardness}) that $\GAP{\omega(m)-\eps}$
$\omega(m)$-{\sc Split Min Cut} fpt-reduces to $\GAP{\omega(m)-\eps}$
$\minlin{2}{\ZZ_m}$, and consequently $\minlin{2}{\ZZ_m}$ is not
\FPT{}-approximable within $\omega(m)-\epsilon$ for any $\epsilon >
0$.
The full reduction chain in our hardness proof is outlined in
Figure~\ref{fig:hardness-structure}. We start by introducing \svens\
and showing hardness for the problem in \Cref{sec:sven-hard}. We then
show hardness for
$\GAP{p-\eps}$ \pSplit in
Section~\ref{sec:graph-hard} and provide the reduction from $\GAP{\omega(m)-\eps}$
$\omega(m)$-{\sc Split Min Cut} to $\GAP{\omega(m)-\eps}$
$\minlin{2}{\ZZ_m}$ in Section~\ref{sec:equation-hard}.

\begin{figure}
\begin{center}
$\GAP{\eps}$ {\sc Max-2-CSP} with complete primal graph is not in \FPT for any $\epsilon > 0$
\end{center}
\vspace{-2mm}
\hspace{7cm}$\big\downarrow$ {\footnotesize(Lemma~\ref{lem:gapmultidenshard})} 
\vspace{-2mm}
\begin{center}
$\GAP{\eps}$ {\sc Multicoloured Densest Subgraph} is not in \FPT for any $\epsilon > 0$
\end{center}
\vspace{-2mm}
\hspace{7cm}$\big\downarrow$ {\footnotesize(Lemma~\ref{lem:gapsvenhard})} 
\vspace{-2mm}
\begin{center}
\gapsven\\ is not in \FPT for any $t \geq 2$ and $\epsilon > 0$
\end{center}
\vspace{-2mm}
\hspace{7cm}$\big\downarrow$ {\footnotesize(Lemma~\ref{lem:gap-p-cut})} 
\vspace{-2mm}
\begin{center}
$\GAP{p-\eps}$ $p$-{\sc Split Min Cut} is not in \FPT for any $p \geq 2$ and $\epsilon > 0$
\end{center}
\vspace{-2mm}
\hspace{7cm}$\big\downarrow$ {\footnotesize(Theorem~\ref{thm:incomparable-annihilators}, Corollary~\ref{cor:modular-hardness})} 
\vspace{-2mm}
\begin{center}
$\GAP{\omega(m)-\eps}$ $\minlin{2}{\ZZ_m}$ is not in \FPT{} for any $\epsilon > 0$.
\end{center}
\caption{Reduction chain in hardness proof}
\label{fig:hardness-structure}
\end{figure}

\subsection{Definition and Hardness of \svens}
\label{sec:sven-hard}

The goal of this section is to introduce our variant of the
\textsc{Multicoloured Subgraph Isomorphism} problem, i.e.\ the \svens\
problem, and show its hardness. We start by showing hardness for the $\GAP{\eps}$ {\sc Multicoloured Densest
Subgraph} problem using a reduction from a dense variant of $\GAP{\eps}$ {\sc Max-2-CSP}.

\pbDefGapnonparam{$\GAP{\eps}$ {\sc Max-2-CSP}}
{A set of variables $V$, a value domain $D$, and a set of constraints $C$ where each constraint
is of the form $R(x,y)$ with $x,y \in V$ and $R \subseteq D^2$.}
{There is a function $f:V \rightarrow D$ such that for every $R(x,y) \in C$, it holds that
$(f(x),f(y)) \in R$.}
{For every function $f:V \rightarrow D$, there are at most $\epsilon \cdot |C|$ constraints $R(x,y) \in C$ such that $(f(x),f(y)) \in R$.}

Let $I=(V,D,C)$ denote an arbitrary instance of $\GAP{\eps}$ {\sc Max-2-CSP}.
The {\em primal graph} $G$ of $I$ has $V(G)=V$ and $xy \in E(G)$ if and only if $R(x,y) \in C$
or $R(y,x) \in C$.

\begin{theorem}[Follows from Theorem 1.4 in Bafna et al.~\cite{Bafna:etal:stoc2025}] \label{thm:Bafna-hardness}
Assuming ETH, $\GAP{\eps}$ {\sc Max-2-CSP} is not in \FPT for any $0 < \epsilon < 1$, even
if the primal graph of instances is required to be complete.
\end{theorem}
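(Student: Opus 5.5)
This statement is imported from Bafna et al.~\cite{Bafna:etal:stoc2025}, so the plan is to show that their Theorem~1.4 matches the gap formulation used here, not to reprove it. Their result states that, under ETH, for every constant $\eps>0$ there is no algorithm running in time $f(k)\cdot n^{o(k)}$ (in particular no FPT algorithm) that, given a \textsc{Max-2-CSP} instance on $k$ variables whose constraint graph is complete and with alphabet of size $n$, distinguishes fully satisfiable instances from those in which every assignment satisfies at most an $\eps$ fraction of the constraints. The parameter here is the number of variables $|V|=k$, and the alphabet $D$ is part of the input.

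The steps are as follows. First, fix $0<\eps<1$. Second, note that for complete primal graphs $|C|\geq\binom{k}{2}$; if Bafna et al.\ allow several constraints on one pair, or require exactly one per pair, nothing changes, since the YES/NO conditions above are stated in terms of fractions of $C$. If their soundness is phrased as ``value $<\eps$'' rather than ``at most $\eps|C|$'', I apply their result with a slightly smaller $\eps'<\eps$. Third, I use the standard fact that an FPT algorithm for the gap problem would run in $f(k)\,n^{O(1)}$ time, which is $f(k)\,n^{o(k)}$ for large $k$. This contradicts their lower bound, which holds for all sufficiently large $k$ (or for an infinite family of $k$). Hence there is no FPT algorithm for $\GAP{\eps}$ \textsc{Max-2-CSP} on complete instances unless ETH fails.

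The only real obstacle is bookkeeping. I need to check that their hardness is for each fixed constant $\eps$, and not only for some $\eps$ or for $\eps\to 0$ with $k$. I also need to check that the hardness is robust across all large $k$. If their theorem gives hardness only for $\eps = \eps(k)$ tending to $0$, the fixed-$\eps$ statement still follows by monotonicity. This works because a gap instance with a smaller soundness threshold is also a valid gap instance for any larger fixed $\eps$.
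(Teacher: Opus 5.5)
Your proposal is correct and takes the same route as the paper, which simply invokes Theorem~1.4 of Bafna et al.\ with no further argument. Your bookkeeping is exactly the routine check implicit in ``follows from'': matching the soundness convention and the complete constraint graph, using monotonicity in $\eps$, and noting that a lower bound whose exponent of $n$ grows unboundedly with $k$ rules out $f(k)\,n^{O(1)}$-time algorithms (so the exact form of their exponent, which you hedge on, does not matter).
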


The {\em graph density} of a simple graph $G$ is defined to be the ratio of the number of edges $|E(G)|$ with respect to the maximal number of edges, i.e. $\frac{|E(G)|}{{|V(G)| \choose 2}}$.
The previous result thus shows hardness for the case when the primal graph has maximum
density. Our first reduction transforms a $\GAP{\eps}$ {\sc Max-2-CSP} instance into a graph representation known
as the {\em microstructure graph}~\cite{Jegou:aaai93}: vertices represent variable-value 
assignments and edges describe compatibility between pairs.
The construction preserves (in a certain sense) the high density that we can assume the given {\sc Max-2-CSP} instance to have, and this allows us to produce a gap
that only depends on $\epsilon$ and the chromatic number of the input graph.

\pbDefGap{$\GAP{\eps}$-Multicoloured Densest Subgraph}
{A $k$-partite undirected graph $G$ with all parts of equal size and an integer $k$.}
{$k$.}
{$G$ contains a multicoloured $k$-clique.}
{Every multicoloured subgraph of $G$ induced on $k$ vertices 
contains at most $\eps \binom{k}{2}$ edges.}

\begin{lemma}
  \label{lem:gapmultidenshard}
  Assuming \ETH,
  \textsc{$\GAP{\eps}$-Multicoloured Densest Subgraph}
  is not in \FPT for any $0 < \eps < 1$.
\end{lemma}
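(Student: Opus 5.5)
We reduce from $\GAP{\eps'}$ \textsc{Max-2-CSP} with complete primal graph (Theorem~\ref{thm:Bafna-hardness}) via the microstructure graph. Let $I=(V,D,C)$ be an instance with $|V|=k$ variables and complete primal graph, and let $\eps'$ be chosen as a function of $\eps$ below. We may assume that for every pair $x \neq y$ there is exactly one constraint $R_{xy}(x,y)$. If there are several, we merge them by intersecting relations. This only changes the constraint count by a factor that the gap absorbs, provided we first argue that a bounded number of constraints per pair can be assumed, or else we weight the pairs. The cleanest route is to note that Bafna et al.'s instances can be taken with one constraint per pair, and we will verify this when citing them.

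The construction is as follows. Build a $k$-partite graph $G$ with parts $P_x=\{(x,a): a\in D\}$ for $x\in V$, so all parts have size $|D|$. Put an edge $(x,a)(y,b)$ for $x\neq y$ iff $(a,b)\in R_{xy}$. The parameter is $k=|V|$, and the construction is polynomial.

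For completeness, suppose $f$ satisfies all constraints. Then $\{(x,f(x))\}_{x \in V}$ is a multicoloured $k$-clique in $G$.

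For soundness, suppose every assignment satisfies at most $\eps'|C|=\eps'\binom{k}{2}$ constraints. Take any multicoloured induced subgraph on $k$ vertices. It picks one vertex $(x,f(x))$ per part, so it defines an assignment $f$. Its edges are exactly the pairs $xy$ with $(f(x),f(y))\in R_{xy}$, i.e.\ the constraints satisfied by $f$. Hence it has at most $\eps'\binom{k}{2}$ edges, and we can take $\eps'=\eps$. An FPT algorithm for the gap version of \textsc{Multicoloured Densest Subgraph} would therefore give an FPT algorithm for $\GAP{\eps}$ \textsc{Max-2-CSP} on complete instances, which contradicts the ETH.

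The main obstacle is the bookkeeping around multiple constraints on one pair. If the hard instances have several constraints per pair, intersecting them breaks the counting. Instead, let $c_{xy}$ be the number of constraints on $xy$. If the $c_{xy}$ are unbalanced, the unweighted edge count does not track the satisfied fraction. In that case I would first reduce to the one-constraint-per-pair case. A clean way to do this is to replace each variable by a tuple of copies, which blows up $k$ only by a bounded factor. Alternatively, one can observe directly from Bafna et al.'s statement that the instances are complete simple 2-CSPs.
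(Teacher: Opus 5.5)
Your proposal is the paper's proof: merge constraints so each pair has one, build the microstructure graph with parts $\{(x,a) : a \in D\}$, turn a satisfying assignment into a multicoloured $k$-clique, and note that the edges of any multicoloured $k$-set are exactly the constraints its assignment satisfies, so $\eps'=\eps$ works. Your worry about merging is milder than you fear. After intersecting, an assignment satisfies a merged constraint only if it satisfies every original constraint on that pair, so the merged satisfied count is at most the original one; with a bounded number of constraints per pair (the paper's case is $R(x,y)$ versus $S(y,x)$), this costs only a constant factor, absorbed by choosing $\eps'$ smaller.
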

 \begin{proof}
  We present an FPT-reduction from $\GAP{\eps}$ \textsc{Max-2-CSP} that runs in 
  polynomial time and preserves the parameter exactly. Let 
  $\cI=(V,D,\cC)$ be an arbitrary instance of $\GAP{\eps}$ \textsc{Max-2-CSP} such that its underlying
  primal graph is complete. We can, without loss of generality, assume that
  for every pair of distinct variables $x,y \in V$, there is exactly
  one constraint involving these two variables. Note that if both $R(x,y)$ and $S(y,x)$
  are in $C$, then we can replace these two constraints by $U(x,y)$ where $U=R \cap \{(b,a) \in D^2 \; | \; (a,b) \in S\}$.
  This implies that $\cC$
  contains exactly $\binom{|V|}{2}$ constraints.

   We construct the instance $\cI'=(G,k)$ as follows. The graph $G$ has
   one vertex $u_{v,d}$ for every $v \in V$ and $d \in D$ and the
   partition of $V(G)$ is given by $V_v=\SB u_{v,d}\SM d \in
   D\SE$. Moreover, we set $k=|V|$ and $G$ has an edge between
   $u_{v,d}$ and $u_{v',d'}$ whenever $v \neq v'$ and the assignment
   setting $v$ to $d$ and $v'$ to $d'$ does not falsify any constraint
   in $\cC$. Clearly, the reduction can be performed in polynomial-time. Towards showing the equivalence of the
   two instances, assume first that $\cI$ is a yes-instance, i.e. there
   is an assignment $\tau : V \rightarrow D$ that satisfies all
   constraints in $\cC$. Then, it is straightforward to verify that
   $\SB u_{v,\tau(v)} \SM v \in V\SE$ is a $k$-clique in $G$. 
   Now suppose that $\cI$ is a no-instance, i.e. every
   assignment of the variables in $V$ satisfies at most $\eps|\cC|=\eps\binom{k}{2}$
   constraints. But then every multicoloured set $S$ of vertices of $G$
   can contain at most $\eps\binom{k}{2}$ edges.
 \end{proof}

\newcommand{\PPP}{\mathcal{P}}

Our next step is to introduce the \gapsven (\gapsvens)
problem.
As
mentioned earlier, the aim of this problem is
to find a subgraph isomorphism to a $t$-partite graph $H$ with each
part having parameter many ($k$) vertices and whose edges can be
partitioned into $\Omega(k^2)$ $t$-cliques.
Before doing so, we first need to show that
such a graph $H$ does indeed exist and can be computed efficiently.
We will achieve this using a classical result by
Turán stated in \Cref{thm:turan} below. Before stating the theorem, we
need to introduce some notions from extremal graph theory.
For
positive integers $r \leq n$, the {\em Turán graph} $T(n,r)$ is the unique complete $r$-partite
$n$-vertex graph where each part consists of 
$\lfloor n/r \rfloor$ or $\lceil n/r \rceil$ vertices, i.e.
the vertices are partitioned as evenly as possible.
We let $t(n,r)$ denote the number of edges
in $T(n,r)$ and note that 
$$t(n,r) = (1+\frac{1}{r}+o(1)) \cdot \frac{n^2}{2}$$ for fixed $r$.
The following is a central result.

\begin{theorem}[\cite{Turan:mfl41}. Also see Chapter~7 in \cite{Diestel:GT}.]\label{thm:turan}
  Let $r \leq n$ be positive integers. Then, any $K_{r+1}$-free $n$-vertex graph has at most $t(n,r)$ edges. 
\end{theorem}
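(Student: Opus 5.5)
We prove Turán's theorem by Zykov symmetrization. Among all $K_{r+1}$-free graphs on $n$ vertices, fix one, $G$, with the maximum number of edges. It suffices to show that $G$ has at most $t(n,r)$ edges. We do this by showing that $G$ is a complete multipartite graph with at most $r$ parts, and that among such graphs the balanced one, $T(n,r)$, has the most edges.

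The one operation we use is \emph{cloning}. To replace a vertex $x$ by a clone of a vertex $y$ (with $x,y$ non-adjacent), delete $x$ and add a new vertex with neighbourhood $N_G(y)$. The result is still $K_{r+1}$-free. Indeed, a clique cannot contain both $y$ and its clone, since they are non-adjacent. So any clique using the clone would already be a clique in $G$ with the clone replaced by $y$.

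\textbf{Step 1: non-adjacency is an equivalence relation.} Reflexivity and symmetry are trivial. Suppose transitivity fails: there are vertices $u,v,w$ with $uv,vw \notin E(G)$ but $uw \in E(G)$. There are two cases.
\begin{itemize}
\item If $\deg(v) < \deg(u)$, replace $v$ by a clone of $u$. Since $uv \notin E(G)$, removing $v$ loses exactly $\deg(v)$ edges and the clone gains $\deg(u)$ edges. The edge count strictly increases, contradicting maximality. The case $\deg(v) < \deg(w)$ is symmetric.
\item Otherwise $\deg(v) \ge \deg(u)$ and $\deg(v) \ge \deg(w)$. Replace both $u$ and $w$ by clones of $v$. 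Since $u,w \notin N(v)$, removing $u$ and $w$ loses $\deg(u)+\deg(w)-1$ edges; the $-1$ accounts for the edge $uw$. The two clones gain $2\deg(v)$ edges, because $N(v)$ is untouched and the clones are non-adjacent to each other. The net change is $2\deg(v)-\deg(u)-\deg(w)+1 \ge 1$, again a contradiction.
\end{itemize}
This edge bookkeeping is the only delicate point of the proof, and the $-1$ for the edge $uw$ is exactly what makes the second case strict.

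\textbf{Step 2: $G$ is complete multipartite with at most $r$ parts.} By Step 1, the equivalence classes of non-adjacency are independent sets, and any two vertices in different classes are adjacent. Hence $G$ is complete multipartite. Picking one vertex from each class gives a clique, so $K_{r+1}$-freeness forces at most $r$ classes. We may allow some classes to be empty and assume there are exactly $r$.

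\textbf{Step 3: the parts are balanced.} Suppose two parts have sizes $a \ge b+2$. Move one vertex from the part of size $a$ to the part of size $b$. The moved vertex loses $b$ neighbours and gains $a-1$, so the edge count changes by $(a-1)-b \ge 1$, and the graph stays complete $r$-partite. By maximality, all part sizes therefore differ by at most one. Thus $G \cong T(n,r)$, and every $K_{r+1}$-free $n$-vertex graph has at most $t(n,r)$ edges.
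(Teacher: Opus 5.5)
Your proof is correct, but it takes a different route from the source the paper relies on. The paper gives no proof of this statement: it quotes Tur\'an's theorem as a classical result, citing Tur\'an and Chapter~7 of Diestel, and uses it only as a black box in Lemma~\ref{lem:tcliques}.

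The proof in the cited textbook goes by induction on $n$:
\begin{itemize}
\item An edge-maximal $K_{r+1}$-free graph $G$ on $n \geq r$ vertices contains a copy $K$ of $K_r$.
\item By induction, $G-K$ has at most $t(n-r,r)$ edges.
\item Every vertex outside $K$ has at most $r-1$ neighbours in $K$, since otherwise it would complete a $K_{r+1}$.
\item The identity $t(n-r,r)+(n-r)(r-1)+\binom{r}{2}=t(n,r)$ then gives the bound. The identity is read off from $T(n,r)$ by deleting one vertex from each part.
\item For the extremal structure, equality forces every outside vertex to miss exactly one vertex of $K$. This produces an $r$-partition, and a balancing argument like your Step~3 identifies $G$ as $T(n,r)$.
\end{itemize}

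Your Zykov symmetrization works instead with a single extremal graph and needs neither the induction nor the counting identity. Transitivity of non-adjacency gives the complete multipartite structure directly, and balancing is a one-line local improvement. The trade-off is that the textbook route obtains the edge bound itself from the identity alone, with structure needed only for the equality case. Your route obtains the bound through the structure, and so yields uniqueness of the extremal graph at no extra cost.

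Two spots in your write-up are terse but not gaps:
\begin{itemize}
\item In the first case of Step~1, the role of $uv \notin E(G)$ is that $v \notin N_G(u)$. So after deleting $v$, the clone really receives all $\deg(u)$ edges.
\item The double replacement in the second case is just two successive single clonings. After the first cloning, $w$ and $v$ are still non-adjacent and $N(v)$ is unchanged, so your $K_{r+1}$-freeness argument applies verbatim.
\end{itemize}
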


Stated differently,
$T(n,r)$ has maximum density
among graphs
which contains no $(r+1)$-clique. We use this result for analyzing how a clique
can be partitioned into smaller multicoloured cliques.
Let $H$ be a graph without isolated vertices. An $H$-{\em packing} of
a graph $G$ is a set $\{G_1,\ldots,G_n\}$ of edge-disjoint subgraphs of
$G$ that are isomorphic to $H$.

\begin{lemma} \label{lem:tcliques}
  Let $k$ and $t$ be integers with $k \geq t$.
  There is a colouring $\chi \colon [k] \to [t]$ such that the edges
  of $K_k$ can be partitioned into $\Omega(k^2/(t^3e^t))$ multicoloured
  (w.r.t. $\chi$) $t$-cliques.
\end{lemma}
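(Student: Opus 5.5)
Note that the statement only asks for a \emph{partition} of some subset of the edges of $K_k$ into multicoloured $t$-cliques that are pairwise edge-disjoint. So it suffices to exhibit a colouring $\chi$ and an edge-disjoint family (a $K_t$-packing) of $\Omega(k^2/(t^3e^t))$ multicoloured $t$-cliques. The plan is a greedy packing argument driven by Theorem~\ref{thm:turan}.

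First fix $\chi$ to be a balanced colouring: split $[k]$ into $t$ classes of size $\lfloor k/t\rfloor$ or $\lceil k/t\rceil$. The multicoloured $t$-cliques in $K_k$ are exactly the $t$-cliques of the complete $t$-partite graph $T(k,t)$, which has $t(k,t)\geq (1-\frac1t)\frac{k^2}{2}-O(k)$ edges. Then build the packing greedily. Maintain a graph $G_j\subseteq T(k,t)$ consisting of the edges not yet used, starting with $G_0=T(k,t)$. While $G_j$ contains a $t$-clique (automatically multicoloured, since $G_j$ is $t$-partite), pick one, add it to the packing, and delete its $\binom t2$ edges. If the process stops after $m$ steps, then $G_m$ is $K_t$-free, so by Theorem~\ref{thm:turan} it has at most $t(k,t-1)$ edges. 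Hence
\[
m\binom t2 \;\geq\; t(k,t)-t(k,t-1).
\]

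The key computation is the lower bound on $t(k,t)-t(k,t-1)$. Its leading term is $\frac{k^2}{2}\bigl(\frac1{t-1}-\frac1t\bigr)=\frac{k^2}{2t(t-1)}$, with rounding errors of order $O(k)$ (more precisely $O(t)$ per part-size adjustment). This gives $m\geq \Omega(k^2/t^4)$ once $k$ is large relative to $t$, which is stronger than the required $\Omega(k^2/(t^3e^t))$ for $t\geq 2$, since $t\le e^t$.

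I expect the main obstacle to be the regime where $k$ is only slightly larger than $t$, where the rounding errors in the Turán numbers could swamp the main term. I would handle it by noting that whenever $k\geq t$ at least one multicoloured $t$-clique exists: pick one vertex from each class, so $m\geq 1$. This is $\Omega(k^2/(t^3e^t))$ whenever $k^2\le Ct^3e^t$. That leaves only the case $k^2>Ct^3e^t$, where $k\gg t^2$ and the $O(tk)$ error is dominated by $k^2/t^2$. The $e^t$ slack in the statement suggests the authors may use a probabilistic or counting argument instead, for instance a random colouring that leaves a clique multicoloured with probability $t!/t^t\approx e^{-t}$. If the Turán-difference estimate proves messy, I would fall back to that: greedily pack edge-disjoint $t$-cliques in $K_k$ (giving $\Omega(k^2/t^2)$ by the same Turán argument), then choose $\chi$ at random. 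By linearity of expectation, some $\chi$ makes an $e^{-t}/\mathrm{poly}(t)$ fraction of them multicoloured.
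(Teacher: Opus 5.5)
Your argument is correct, but it takes a different route from the paper. The paper first greedily packs edge-disjoint $t$-cliques in all of $K_k$. Theorem~\ref{thm:turan} applied to the residual graph shows that at least $\binom{k}{2}-t(k,t-1)\ge \frac{k^2}{2(t-1)}-\frac{k}{2}$ edges are covered, hence $\Omega(k^2/t^3)$ cliques. Only then does it choose a uniformly random colouring, under which each packed clique is multicoloured with probability $t!/t^t\approx e^{-t}$, and it averages; this is exactly your fallback.

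You instead fix a balanced colouring up front and pack inside $T(k,t)$. Every clique you find is then automatically multicoloured, and Turán's theorem bounds the residual of $T(k,t)$ rather than of $K_k$. This gives two things:
\begin{itemize}
\item a deterministic, polynomial-time construction;
\item a stronger bound, $\Omega(k^2/t^4)$ instead of $\Omega(k^2/(t^3e^t))$, better by roughly a factor $e^t/t$. The $e^t$ in the statement is purely an artifact of the random colouring, and your bound is more than enough for the downstream use in Lemma~\ref{lem:constHkt}.
\end{itemize}

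The price is that your edge count is a difference $t(k,t)-t(k,t-1)$ of two quantities of order $k^2$. You therefore need second-order control of Turán numbers, whereas the paper needs only the leading term. That control is easy. Writing $k=qr+s$ with $0\le s<r$, one gets $t(k,r)=(1-\frac1r)\frac{k^2}{2}-\frac{s(r-s)}{2r}$, so $t(k,t)-t(k,t-1)\ge\frac{k^2}{2t(t-1)}-\frac{t}{8}$. The error is thus at most $t/8$, not $O(k)$ or $O(tk)$, and your case split on $k^2$ versus $Ct^3e^t$ is essentially unnecessary.

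As written, that case split also slightly misstates the threshold. Absorbing an $O(tk)$ error into $\frac{k^2}{2t(t-1)}$ needs $k\gg t^3$, not $k\gg t^2$. Your assumption $k^2>Ct^3e^t$ does imply $k\ge c\,t^3$ once $C$ is large enough, so nothing breaks.
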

\begin{proof}
  Theorem~\ref{thm:turan} tells us that
  $T(n,r)$ is the densest $n$-vertex graph which contains no $(r+1)$-clique.
  Thus, let $C_1$, \ldots, $C_p$ be a maximal edge-disjoint packing of $t$-cliques in $K_k$.
  Then the number of edges not covered by the packing is at most
  $$|E(T(k,t-1))| = \left(1-\frac{1}{t-1}+o(1) \right) \cdot \frac{k^2}{2},$$
  and the packing contains at least $p = \Omega(k^2/t^3)$ cliques.
  Now select a random $t$-partition of $[k]$. The probability that a
  given $t$-clique is multicoloured is 
  $$\frac{t!}{t^t}=e^{-(1-o(1))t}$$ by Stirling's approximation and asymptotically $\Omega(e^{-t})$. 
  Therefore there exists a $t$-partition in which
  $\Omega(k^2/(t^3e^t))$ $t$-cliques are multicoloured.
\end{proof}

The following lemma now shows the existence of the graph $H$
that is required for the definition of \gapsvens and whose edges can be can be partitioned into
$\Omega(k^2)$ $t$-cliques.

\begin{lemma} \label{lem:constHkt}
  Let $t$ be an integer. There is an integer $k_0(t)$, depending only on $t$, such
  that for every $k\geq k_0(t)$ there is an algorithm
  that in time $\bigoh(k^k)$ constructs a $t$-partite graph $H$
  with partition $\{U_1,\dotsc,U_t\}$ and $|U_i|=k$
  together with a partition $\PPP_H$ of its edge set into at least $k^2/(t^2e^t)$
  $t$-cliques. Moreover, $H$ is the disjoint union of $t$ isomorphic
  graphs.
\end{lemma}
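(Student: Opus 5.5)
We derive Lemma~\ref{lem:constHkt} from Lemma~\ref{lem:tcliques} by brute-force search plus a disjoint-union trick. The constructed graph need not be complete; it only needs $t$ parts of size exactly $k$, an edge partition into many $t$-cliques, and a decomposition into $t$ isomorphic disjoint pieces.

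The first step is to build one piece. Apply Lemma~\ref{lem:tcliques} to $K_k$: it gives a colouring $\chi\colon[k]\to[t]$ and a family $\cP_0$ of $\Omega(k^2/(t^3e^t))$ edge-disjoint multicoloured $t$-cliques. Let $H_0$ be the graph on $[k]$ whose edge set is $\bigcup\cP_0$. Then $H_0$ is $t$-partite with classes $\chi^{-1}(1),\dots,\chi^{-1}(t)$, because every edge lies in a multicoloured clique. Also, $\cP_0$ partitions $E(H_0)$ into $t$-cliques.

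The classes of $H_0$ have unequal sizes, so we balance them by taking $t$ copies $H_0^{(0)},\dots,H_0^{(t-1)}$ with cyclically shifted colours. In copy $j$, vertex $x$ is placed in class $\chi(x)+j \pmod t$. Let $H$ be the disjoint union of the copies, and set $U_c$ to be the union over $j$ of the vertices of copy $j$ that receive class $c$. Then $|U_c| = \sum_j |\chi^{-1}(c-j)| = k$ exactly. Each copy is still properly $t$-partite under the shifted colouring, and its cliques stay multicoloured. So $H$ is $t$-partite with all parts of size $k$, and it is the disjoint union of $t$ isomorphic graphs. The union of the shifted copies of $\cP_0$ gives a partition $\cP_H$ of $E(H)$ into $t\cdot|\cP_0|$ cliques. (As a fallback, one could equalize class sizes by padding with isolated vertices, but the cyclic shift gives equality exactly.)

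It remains to check the count and the running time. We have $|\cP_H| = t\,|\cP_0| \geq t\cdot c\,k^2/(t^3e^t)$ for some absolute constant $c$ hidden in the $\Omega$. This meets the claimed $k^2/(t^2e^t)$ provided the constant in Lemma~\ref{lem:tcliques} is at least $1$. If it is smaller, the weaker $\Omega(k^2/(t^2e^t))$ would have to suffice downstream. This constant bookkeeping is the main obstacle: it requires tracking the Turán estimate, the Stirling estimate and the $o(1)$ terms. These terms are absorbed by choosing $k_0(t)$ large, which is why the lemma restricts to $k\ge k_0(t)$.

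For the algorithm, Lemma~\ref{lem:tcliques} is only existential (it uses a random colouring), so we search instead. There are at most $t^k\le k^k$ colourings $\chi$ (as $t\le k$). For each one, we greedily build a maximal edge-disjoint packing of multicoloured $t$-cliques in $K_k$ in time polynomial in $k^t$. We keep a colouring that achieves the bound. Since $t$ is fixed, the total time is $O(k^k)$ up to lower-order factors, which is absorbed for $k\ge k_0(t)$.

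One point still needs care: the existence argument applies to a fixed maximal packing, whereas the search computes a greedy multicoloured packing for each colouring. To align the two, we will instead fix one maximal packing $C_1,\dots,C_p$ of $t$-cliques in $K_k$, computed greedily once. We then enumerate colourings and count how many $C_i$ are multicoloured. This matches the proof of Lemma~\ref{lem:tcliques} exactly, so the search is guaranteed to find a colouring achieving the bound.
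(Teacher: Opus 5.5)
Your proposal is correct and is essentially the paper's proof: take the multicoloured packing from Lemma~\ref{lem:tcliques}, let the piece be the union of its cliques on vertex set $[k]$, take $t$ disjoint copies with the colour classes rotated so that each $U_i$ receives exactly one copy of each class, and get $t$ times as many cliques. Your concrete search (fix one greedy maximal packing of $K_k$, then enumerate the $t^k$ colourings) is a sharper version of the paper's ``exhaustive search''.

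The constant you hedge on does work out, so the exact bound $k^2/(t^2e^t)$ holds and no weaker $\Omega$-version is needed. By Tur\'an's theorem, a maximal $t$-clique packing of $K_k$ has at least $\frac{k^2}{t(t-1)^2}-\frac{k}{t(t-1)}$ cliques, which is at least $\frac{k^2}{t^3}$ once $k\ge \frac{t^2(t-1)}{2t-1}$. Also $t!/t^t\ge e^{-t}$, because $e^t\ge t^t/t!$. Hence some colouring makes at least $k^2/(t^3e^t)$ of these cliques multicoloured, and the $t$ copies give at least $k^2/(t^2e^t)$. The paper itself only asserts this step.
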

\begin{proof}
  Consider a colouring $\tau \colon [k] \to [t]$ and 
  an edge-disjoint packing ${\cal K}_k=\{K_k[C_1], \dotsc, K_k[C_p]\}$ of
  $t$-cliques multicoloured by $\tau$ inside $K_k$. By Lemma~\ref{lem:tcliques}
  there is an integer $k_0(t)$
  such that if $k \geq k_0(t)$, then $p \geq k^2/(t^3e^t)$. 
  We note that ${\cal K}_k$ can be computed in $\bigoh(k^k)$ time by exhaustive search.
  
  Let $H'$ denote the graph
  consisting of the edge disjoint union of the graphs in ${\cal K}_k$.
  The graph $H'$ is $t$-partite as is witnessed by $\tau$.
  Construct graph $H$ by taking the disjoint union of $t$ copies of $H'$.
  Note that a $t$-partition $\{U_1,\dotsc,U_t\}$ of $H$
  with $|U_i|=k$ for every $i \in [t]$ can be obtained by letting
  every $U_i$ contain one copy of every part of $H'$.

  Finally, let $\PPP_H$ be the partition of the edges of $H$ into $pt$
  $t$-cliques consisting of $H[C_i^j]$ for every copy $C_i^j$
  of $C_i$ in $H$ (where $i\in [p]$ and $j \in [t]$). Then,
  $|\PPP_H|\geq k^2/(t^2e^t)$.
\end{proof}

In the following, let $t$ be an integer and let $k_0(t)$ be the integer whose existence is
guaranteed by \Cref{lem:constHkt}. Moreover, for every $k \geq k_0(t)$, let
$\Htk$ be the graph $H$ from \Cref{lem:constHkt}. Having shown the
existence of the graph $\Htk$, we are now ready to define
\gapsvens.
Please also refer to Figure~\ref{fig:gapsven} for an illustration of a problem
instance for \gapsvens.

\pbDefGap{\gapsven (\gapsvens)}{%
  An integer $k\geq k_0(t)$, a $t \times k$-partite graph $G=(V,E)$ with partition
  $\PPP_G=\{V_{1,1}, \dotsc,V_{t,k}\}$ into sets of equal size and
  a homomorphism $\chi:G \rightarrow H$, where $H=\Htk$, that assigns
  every vertex $v\in V_{i,j}$ to the $j$-th vertex in $U_i$ of $H$.}
{$k$.}
{There exists $S \subseteq V$ (with $|S|=|\chi(S)|=kt$) such that $\chi$ restricted to $S$ is an isomorphism between $G[S]$ and $H$.}
{For every $S \subseteq V$ with $|S|=|\chi(S)|=kt$, it holds that $G[S]$ contains at most $\eps |E(H)|$ edges.}

\begin{figure}
  \centering
  \begin{tikzpicture}[label distance=0.3cm]
    \tikzstyle{mp}=[ellipse,draw]
    \tikzstyle{mn}=[circle,draw, inner sep=2pt]
    \tikzstyle{md}=[]
    \tikzstyle{me}=[draw, line width=1pt,-latex]

    \draw
    node[mp, minimum width=0.8cm, minimum height=3cm, label=north:$V_{i,1}$] (V11) {}
    (V11.north) +(0cm,-0.4cm) node[mn, label=left:$1$] (V11v1) {}
    (V11.center) node[md] (V11d) {\textbf{\vdots}}
    (V11.south) +(0cm,0.4cm) node[mn, label=left:$n$] (V11vn) {}

    (V11.east) +(1.5cm,0cm) node[md] (V111kd) {\textbf{\dots}}
    
    (V11.east) +(3cm, 0cm) node[mp, anchor=west,minimum width=0.8cm, minimum height=3cm, label=north:$V_{i,k}$] (V1k) {}
    (V1k.north) +(0cm,-0.4cm) node[mn, label=right:$1$] (V1kv1) {}
    (V1k.center) node[md] (V1kd) {\textbf{\vdots}}
    (V1k.south) +(0cm,0.4cm) node[mn,label=right:$n$] (V1kvn) {}
    ;
    \draw
    (V111kd) +(0cm,-3cm) node[anchor=north,mp, minimum width=3cm,
    minimum height=0.8cm,label=south:$U_i$] (U1) {}
    (U1.west) +(0.4cm,0cm) node[mn, label=below:$1$] (u11) {}
    (U1.center) node[md] (u1d) {\textbf{\dots}}
    (U1.east) +(-0.4cm,0cm) node[mn, label=below:$k$] (u1k) {}
    ;
    \draw
    (V11) edge[me] (u11)
    (V1k) edge[me] (u1k)
    ;

    \draw
    (V1k.east) +(+2.5cm,0cm) node[anchor=west,mp, minimum width=0.8cm, minimum height=3cm, label=north:$V_{j,1}$] (Vj1) {}
    (Vj1.north) +(0cm,-0.4cm) node[mn, label=left:$1$] (Vj1v1) {}
    (Vj1.center) node[md] (V11d) {\textbf{\vdots}}
    (Vj1.south) +(0cm,0.4cm) node[mn, label=left:$n$] (Vj1vn) {}

    (Vj1.east) +(1.5cm,0cm) node[md] (Vj11kd) {\textbf{\dots}}
    
    (Vj1.east) +(3cm, 0cm) node[mp, anchor=west,minimum width=0.8cm, minimum height=3cm, label=north:$V_{i,k}$] (Vjk) {}
    (Vjk.north) +(0cm,-0.4cm) node[mn, label=right:$1$] (Vjkv1) {}
    (Vjk.center) node[md] (V1kd) {\textbf{\vdots}}
    (Vjk.south) +(0cm,0.4cm) node[mn,label=right:$n$] (Vjkvn) {}
    ;
    \draw
    (Vj11kd) +(0cm,-3cm) node[anchor=north,mp, minimum width=3cm,
    minimum height=0.8cm,label=south:$U_j$] (Uj) {}
    (Uj.west) +(0.4cm,0cm) node[mn, label=below:$1$] (uj1) {}
    (Uj.center) node[md] (u1d) {\textbf{\dots}}
    (Uj.east) +(-0.4cm,0cm) node[mn, label=below:$k$] (ujk) {}
    ;
    \draw
    (Vj1) edge[me] (uj1)
    (Vjk) edge[me] (ujk)
    ;
  \end{tikzpicture}
  \caption{An illustration of the parts $U_i$ and $U_j$ of $H$ and the parts
    $V_{i,1},\dotsc,V_{i,k}$ and $V_{j,1},\dotsc,V_{j,k}$ of $G$ and their relation with respect to the
    homomorphism $\chi$ from $G$ to $H$, which is represented by
    directed edges, for an instance of \gapsvens{}.}
  \label{fig:gapsven}
\end{figure}
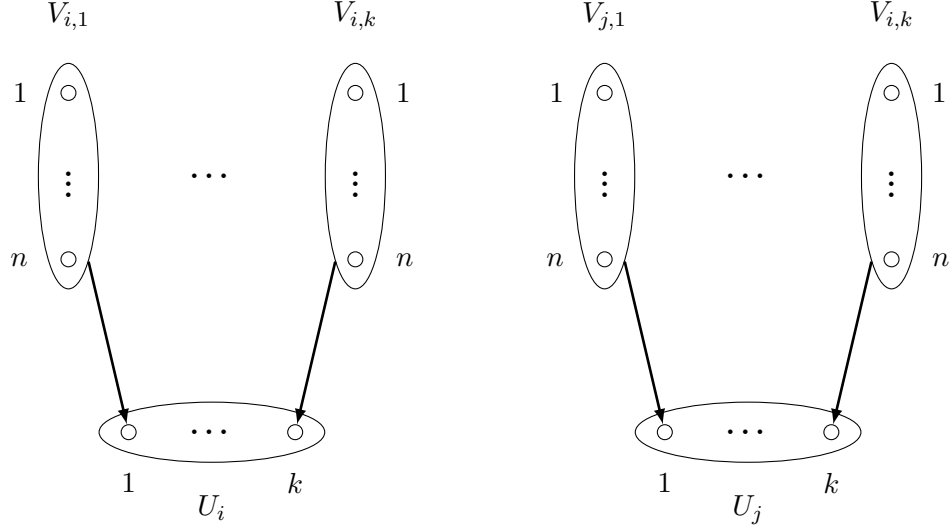

We are now ready to show hardness for \gapsvens{}.
\begin{lemma}\label{lem:gapsvenhard}
  Assuming \ETH,
  \gapsvens{} is not in \FPT for any $0 < \eps < 1$ and $t \geq 2$.
\end{lemma}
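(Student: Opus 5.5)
We reduce from \textsc{$\GAP{\eps'}$-Multicoloured Densest Subgraph}, which is hard for every $0<\eps'<1$ by Lemma~\ref{lem:gapmultidenshard}. The construction mirrors how $H=\Htk$ was built in Lemma~\ref{lem:constHkt}. There, $H$ consists of $t$ disjoint copies of a graph $H'\subseteq K_k$, so every vertex of $H$ is a pair (copy index $j\in[t]$, vertex $a\in[k]$ of $K_k$).

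\textbf{Construction.} Let $(G_0,k)$ be an instance of Densest Subgraph with parts $W_1,\dots,W_k$ of equal size $n$; if $k<k_0(t)$ we solve it by brute force. Build $G$ on the vertex set $\bigcup_{j\in[t]}\bigcup_{a\in[k]} W_a\times\{j\}$. A vertex $(w,j)$ with $w\in W_a$ goes into the class $V_{i,\ell}$ corresponding to the vertex $(j,a)$ of $H$, and $\chi$ maps it there. For every edge $(j,a)(j,b)$ of $H$, i.e.\ $ab\in E(H')$, add the edge $(w,j)(w',j)$ whenever $w\in W_a$, $w'\in W_b$ and $ww'\in E(G_0)$. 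Then $\chi$ is a homomorphism, all classes have size $n$, the construction runs in time $O(k^k)\cdot\mathrm{poly}$, and the new parameter $kt$ depends only on $k$ and $t$.

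\textbf{Completeness.} If $G_0$ has a multicoloured $k$-clique $\{w_a\}$, take $S=\{(w_a,j)\}$. Every edge of $H$ is realised in $G[S]$, and $\chi$ only allows edges between classes adjacent in $H$, so $\chi|_S$ is an isomorphism onto $H$.

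\textbf{Soundness.} Take any $S$ with $|S|=|\chi(S)|=kt$, i.e.\ one vertex per class. For each copy $j$, this picks a multicoloured $k$-set $X_j$ in $G_0$, and the edges of $G[S]$ inside copy $j$ are $E(H')\cap E(G_0[X_j])$, of which there are at most $|E(G_0[X_j])|\le\eps'\binom{k}{2}$. Hence $|E(G[S])|\le t\eps'\binom{k}{2}$. On the other hand, $|E(H)| = t\,|E(H')|\ge k^2/(t^2e^t)\cdot\binom{t}{2}$, using $|\PPP_H|\ge k^2/(t^2e^t)$ with each clique contributing $\binom t2$ edges. So $|E(G[S])|\le \eps|E(H)|$ as soon as $\eps'\le c_t\,\eps$ for a constant $c_t>0$ depending only on $t$, roughly $c_t\approx \binom t2/(t\,e^t)$ up to factors. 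Choosing $\eps'$ this way turns an FPT algorithm for the gap $\eps$ version into one for Densest Subgraph with gap $\eps'$, contradicting the ETH.

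\textbf{Main obstacle.} The delicate point is this density comparison. $H'$ is only a constant fraction, depending on $t$, of $K_k$, so the gap must be rescaled. This is acceptable precisely because Lemma~\ref{lem:gapmultidenshard} holds for \emph{every} $\eps'>0$. The rest is bookkeeping: the classes must be equal-sized and $\chi$ must match the prescribed indexing $V_{i,j}\mapsto$ $j$-th vertex of $U_i$, which is handled by relabelling.
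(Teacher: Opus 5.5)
Your proposal is correct and follows essentially the same route as the paper: the same construction ($t$ disjoint copies of the Densest Subgraph instance, keeping only edges whose colour pair is an edge of $H'$, with $\chi$ induced by the colouring), the same completeness argument, and the same soundness argument with the gap rescaled by a constant depending only on $t$. The differences are cosmetic: you sum over the $t$ copies where the paper uses pigeonhole, and the parameter of the target problem is $k$ rather than $kt$, which is harmless since $t$ is fixed.
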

\begin{proof}
  We provide a reduction from \textsc{$\GAP{\eps}$ Multicoloured
    Densest Subgraph}.  Let $t$ be an integer and let $k_0(t)$ be the
  integer (depending solely on $t$), whose existence is guaranteed by
  \Cref{lem:constHkt}. Let $\cI=(D,k)$ be an instance of \textsc{$\GAP{\eps}$ Multicoloured
    Densest Subgraph} with $D$ having a $k$-partition represented by $\lambda \colon V(D)
  \to [k]$. We assume without loss of generality that $k\geq k_0(t)$.
  Let $H=\Htk$ be the $t$-partite graph with partition
  $\{U_1,\dotsc,U_t\}$ that consists of $t$ copies of the graph $H'$ from
  \Cref{lem:constHkt} with the partition $\PPP_H$ of its edge set into
  at least $k^2/(t^2e^t)$ $t$-cliques, which can be constructed in
  time $\bigoh(k^k)$. Without loss of
  generality we assume that $[k]$ is the vertex set of $H'$.
  We construct an equivalent instance $\cI'=(k,G,\PPP_G,\chi)$ of
  \gapsvens as follows.

 \begin{enumerate}

\item
  Let 
  $G'$ be the $k$-partite graph (with $k$-partition given by
  $\lambda$) obtained from $D$ after removing every edge $uv$ such that
  $\lambda(u)\lambda(v) \notin E(H')$. Note that $\lambda$ is a
  homomorphism from $G'$ to $H'$. Let $G$ denote the $t \cdot k$-partite graph obtained by taking the
  disjoint union of $t$ copies of $G'$.

\item
We construct the homomorphism $\chi:G \rightarrow H$ as follows.
Consider the homomorphism $\lambda$ from $G'$ to $H'$.
Recall that $G$ is the disjoint union of $t$ copies of $G'$ while
$H$ is the disjoint union of $t$ copies of $H'$.
We let $\chi$ homomorphically map the $i$:th copy of $G'$ in $G$
into the $i$:th copy of $H'$ in $H$ using $\lambda$.

\item
Finally, we construct the partition $\PPP_G$ as follows. That is, we let $V_{i,j}$ be
the set of vertices in $G$ that $\chi$ maps to the $j$:th vertex of $U_i$.
This is indeed a partitioning since $\chi$ maps every vertex in $V(G)$ to a vertex in $V(H)$.
Note that $\lambda$ maps all elements in the same colour class to the same element
in $H'$ and that the elements in distinct colour classes are assigned
distinct elements in $H'$. Since $G$ and $H$ consists of $t$ copies of
$G'$ and $H'$, respectively, it follows that $\chi$ maps exactly the
vertices of one copy of a colour class to any vertex in $H$. Therefore,
$V_{i,j}$ contains exactly all vertices from one copy of one colour
class and since all colour classes in $D$ have the same size, it
follows that all sets $V_{i,j}$ have the same size.
\end{enumerate}

\noindent

We continue by verifying the correctness of the reduction.

\medskip

\noindent
\textbf{$\cI$ is a yes-instance.}
  This implies that $D$ contains a multicoloured
  $k$-clique $D[S]$. 
  Consider the graph $G'[S]$ and recall that
  $G'[S]$ is obtained from $D[S]$ by removing every edge $uv$ such that
  $\lambda(u)\lambda(v) \notin E(H')$. That is, $G'[S]$ is isomorphic to $H'$. 
  Let $S^+$ be the set $S$ extended by all $t$ copies
  of every vertex in $S$. Then, $G[S^+]$ is
  isomorphic to $H$. The set $S^+$ contains $kt$ vertices and $\cI'$ is a
  yes-instance.

\medskip

\noindent
\textbf{$\cI$ is a no-instance.}
We show that  $\cI'$ is a no-instance by proving the contrapositive:
if $\cI'$ is not a no-instance, i.e.\ there is an $S \subseteq V$ with
$|S|=|\chi(S)|=kt$ such that
$G[S]$ contains at least $\eps |E(H)|$ edges for some $0 < \epsilon <
1$, then $\cI$ is not a no-instance, i.e.
$D$ contains a multicoloured subgraph on $k$ vertices that contains
at least $\eps' k^2$ edges, where $\eps'=\eps \cdot \binom{t}{2}/(t^3e^t)$.
Recall that $H$ is the edge-disjoint union of the $t$-cliques in
$\PPP_H$. Because $|\PPP_H|\geq k^2/(t^2e^t)$, it follows that $H$
has at least $k^2\binom{t}{2}/(t^2e^t)$.

Because $S$ is multicoloured and $G[S]$ contains 
at least $\eps |E(H)|$ edges, 
there is a copy $G^c$ of $G'$ in $G$ such that
such that $G^c[S']$, where $S'=S\cap V(G^c)$, contains at least
$\eps|E(H)|/t$ edges.
Therefore, $G^c[S']$ (and therefore also $D[S']$) contains at least
$\eps \cdot k^2\binom{t}{2}/(t^3e^t)=\eps' k^2$ edges, which shows that $\cI$ is
not a no instance and completes the proof of the theorem.
\end{proof}

\subsection{Hardness of \pSplit}
\label{sec:graph-hard}

Using the hardness for \svens\ given in \Cref{lem:gapsvenhard}, we are now ready to
show hardness for the \pSplit{} problem given below.

\pbDefGap{\pSplit}
{A graph $G$ with special vertices $s, t$ such that $G-\{s,t\}$ is the disjoint union of 
$p$ graphs $G_1,\dotsc,G_p$, 
a set of pairwise disjoint edge sets $\cB \subseteq \binom{E(G)}{1}
\cup \binom{E(G)}{p}$, each set containing at most one edge from each
graph $G_i$,
and an integer $k$.}
{$k$.}
{There is an $st$-cut $Z \subseteq E(G)$ which is the union of $k$ bundles in $\cB$.}
{Every $st$-cut $Z \subseteq E(G)$ intersects at least $(p-\eps)k$ bundles in $\cB$.}

\begin{lemma} \label{lem:gap-p-cut}
  Assuming \ETH,
  $\GAP{(p-\eps)}$ $p$-{\sc Split Min Cut} is not in \FPT for any $0 < \eps < 1$
  and $p \geq 2$.
\end{lemma}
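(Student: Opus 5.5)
We reduce from \gapsvens{} with $t=p$, using Lemma~\ref{lem:gapsvenhard}. Let $(k,G,\PPP_G,\chi)$ be an instance, with $H=\Htk$ partitioned into $p$-cliques $\PPP_H$, each multicoloured with respect to $\{U_1,\dots,U_p\}$. We build an instance of \pSplitng{} with $G_1,\dots,G_p$ (graph $G_i$ handling the vertices of $U_i$) and parameter $k'=|\PPP_H|$. The generalization of the Paired Min Cut gadget in the introduction is the following. For each clique $P\in\PPP_H$ with vertices $a_1\in U_1,\dots,a_p\in U_p$, and each ``image'' $I=(u_1,\dots,u_p)$ with $u_i\in V_{a_i}$ such that $G[\{u_1,\dots,u_p\}]$ is a $p$-clique, we create edges $(e^I_P)_1,\dots,(e^I_P)_p$ and bundle them together. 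Here $(e^I_P)_i$ lives in $G_i$.

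Inside $G_i$, for each $a\in U_i$ we build a choice gadget exactly as in Figure~\ref{fig:msigadget}: a chain of segments from $s$ to $t$, one per candidate $u\in V_a$. Each segment consists of one path for every clique $P\ni a$, and the path for $(a,P)$ in the segment for $u$ is the concatenation of the edges $(e^I_P)_i$ over all images $I$ with $I_i=u$. We must also handle the case where no image exists; for this we add an extra singleton-bundle edge on each such path so that cutting a segment is always possible. Every edge not in a $p$-bundle is placed in its own singleton bundle.

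The key properties are the following. Every $st$-cut must sever every chain, so it cuts at least one edge on every path of some segment, i.e.\ at least $p|\PPP_H|$ edges. Conversely, a minimal cut picks one segment per $a$, giving a map $h$ with $h(a)\in V_a$. The point is that $|S|=|\chi(S)|=kt$ is then automatic. Equivalently, a solution selects one edge per pair $(a,P)$.

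\textbf{Yes case.} An isomorphism $S$ gives images $I_P=h(P)$ that are cliques. Cutting all $p$ edges of the bundle $(e^{I_P}_P)_1,\dots,(e^{I_P}_P)_p$ for every $P$ separates all gadgets, using exactly $|\PPP_H|=k'$ bundles.

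\textbf{No case, by contrapositive.} Suppose some cut intersects fewer than $(p-\eps)k'$ bundles. First make the cut canonical: for each $a$, keep only one fully cut segment (the cut must contain one), discarding other edges. This does not increase the bundle count, and each path of the kept segment contributes at least one edge. Now we have $p k'$ selected path-edges hitting fewer than $(p-\eps)k'$ bundles. Charging then gives at least $\eps k'/(p-1)$ bundles each containing at least two selected edges, and distinct such bundles belong to distinct cliques $P$, since each $(a,P)$ path contributes only a bounded number of edges.

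A bundle $e^I_P$ with two selected coordinates $i\neq j$ means $h(a_i)=I_i$ and $h(a_j)=I_j$, where $I_iI_j\in E(G)$. So $h$ preserves the edge $a_ia_j$ of $P$. Hence $G[h(V(H))]$ contains at least $\eps k'/(p-1)\ge \eps k^2/((p-1)p^2e^p)$ edges, and since $|E(H)|=O(p^2k^2)$ this is a constant fraction $\eps''|E(H)|$. Choosing the \gapsvens{} gap parameter $\eps''$ accordingly gives the reduction. The parameter satisfies $k'=O(k^2)$, and the construction is polynomial apart from the $\bigoh(k^k)$ construction of $H$.

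\textbf{Main obstacle.} The delicate step is the canonicalization and charging: showing that a cheap cut can be assumed to cut exactly one edge per $(a,P)$-path within a single segment per gadget, without increasing the number of touched bundles. The concern is that an arbitrary cut might cut several edges along one path, or several segments, and still share bundles. I would handle this first by a minimality argument: any cut contains, for each gadget, a fully cut segment, so we restrict to those edges, keeping only one edge per path chosen to maximize bundle sharing. Then I would count bundles by the number of selected coordinates.
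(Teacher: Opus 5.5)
Your proposal is correct and follows essentially the same route as the paper. Both reduce from \gapsvens{} with $t=p$, using segment-chain choice gadgets whose $(a,P)$-paths carry one edge per candidate image of the clique $P$, bundled across the $p$ coordinates. Both then restrict to a minimal cut with exactly $p|\mathcal{P}_H|$ edges and count the bundles hit at least twice. The only construction difference is that the paper creates an edge for every tuple $x\in[n]^t$ and makes the non-clique tuples singleton bundles, which plays the role of your padding edges.

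One correction: your claim that distinct doubly-hit bundles belong to distinct cliques is false for $p\geq 4$, since one clique can have two such bundles. However, after canonicalization, distinct bundles of the same clique occupy disjoint coordinate sets, so they still yield distinct preserved edges of $H$. Alternatively, you can lose a factor of $p$ as the paper does. Either way your bound of $\eps k'/(p-1)$ preserved edges stands.
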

\begin{proof}
  Let $\cI=(k,G,\PPP_G,\chi)$ be an instance of \gapsvens{}.
  Moreover, let $H=\Htk$ be the $t$-partite graph with partition
  $\{U_1,\dotsc,U_t\}$ from
  \Cref{lem:constHkt} with the partition $\PPP_H$ of its edge set into
  at least $k^2/(t^2e^t)$ $t$-cliques, which can be constructed in
  time $\bigoh(k^k)$. For convenience, let
  $U_i=\{u_{i,1},\dotsc,u_{i,k}\}$ and let
  $\PPP_G=\{V_{1,1},\dotsc,V_{t,k}\}$ with $V_{i,j}=\{v_{i,j}^1,\dotsc,v_{i,j}^n\}$.
  
  We construct an instance 
  $\cI' = (F, \hat{s}, \hat{t}, \{F_1,\dotsc, F_p\}, \cB, k')$ of 
  \pSplit{}
  with $p=t$ and $k' = |\PPP_H|\leq k^t$ (note that $|\PPP_H|\leq k^t$
  and therefore $k'$ is bounded by a function of $k$) such that
  \begin{itemize}
  \item if $\cI$ is a yes-instance, then
    $\cI'$ is a yes-instance, and
  \item if $F$ admits an $\hat{s}\hat{t}$-cut
    that intersects at most $(p - \eps)k'$ bundles in $\cB$
    for any $0 < \eps < 1$,
    then $G$ contains a multicoloured induced subgraph
    with at least $\eps |E(H)|$ edges.
  \end{itemize}
  The hardness then follows by Lemma~\ref{lem:gapsvenhard}. We now present the instance
  $\cI'$.
  
\smallskip
\noindent
\textbf{Construction of graph.}
  Let the graph $F$ have vertices $\hat{s}$, $\hat{t}$ and
  $f_{i,j}^d$
  for every $i \in [t]$, $j \in [k]$, and
  $0 \leq d \leq n$.
  Identify $f_{i,j}^0$ with $\hat{s}$ and $f_{i,j}^n$ with $\hat{t}$
  for all $i$ and $j$.
  We will now add a number of paths to the graph: this will describe the full set
  of edges but it will also introduce more auxiliary vertices.

  For every $i \in [t]$, $j \in [k]$, and $d \in [n]$, we do the following.
  Connect $f_{i,j}^{d-1}$ to $f_{i,j}^d$
  with a set of internally disjoint paths of length $n^{t-1}$,
  for every $t$-clique $C \in \PPP_H$ that contains $u_{i,j}$.
  We refer to these paths as $P_{i,j,d,C}$.
  The edges of $P_{i,j,d,C}$ are named $e_{i,j,C,x}$ for $x \in
  [n]^t$ such that $x[i]=d$ and occur on those paths
  in an arbitrary order.
  Finally, define $F_i$ as the union of 
  the paths $P_{i,j,d,C}$ for all $j,d, C$.
  Clearly, $F-\{\hat{s},\hat{t}\}$ is the disjoint union of 
$t$ vertex-disjoint graphs so the graph satisfies the requirements.

\smallskip

\noindent
\textbf{Construction of bundles.}
For $C=\{u_{1,j_1},\dotsc,u_{t,j_t}\} \in \PPP_H$ and $x \in [n]^t$,
  let $S(C,x)$ be the set containing the $x[i]$'th vertex
  $v_{i,j_i}^{x[i]}$ in $V_{i,j_i}$ for every $i \in [t]$. Note that $\chi(S(C,x))=C$.
  We let the set $\cB$ contains one bundle (of size $t$)
  $\SB e_{i,j_i,C,x}\SM i \in [t]\SE$
  for every $x \in [n]^t$ and every
  $C=\{u_{1,j_1},\dotsc,u_{t,j_t}\} \in \PPP_H$ such that $S(C,x)$
  forms a $t$-clique in $G$. 
  Each bundle contains one edge from each part $F_1,\dots,F_t$.
  Additionally, $\cB$ contains one bundle
  (of size $1$) for every edge of $F$ that does not occur in any of
  the bundles of size $t$ introduced above.

  \medskip

\noindent
We make a couple of observations before presenting the correctness proof.
  Note that a set $Z$ of edges of $F$ is an $\hat{s}\hat{t}$-cut in $F$ if and only if
  for every $i \in [t]$ and $j \in [k]$, there
  is a $d\in [n]$ such that $Z$ contains exactly one edge from every path
  $P_{i,j,d,C}$, for every $C=\{u_{1,j_1},\dotsc,u_{t,j_t}\}\in
  \PPP_H$ with $u_{i,j} \in C$. Note also that if $Z$ is a minimal
  $st$-cut, then $Z$ does not contain any other edges and $|Z|=
  t|\PPP_H|$, since for every $C \in \PPP_H$, exactly $t$ paths need
  to be cut.
  We are now ready to give the correctness proof.
  
  \medskip

\noindent
\textbf{$\cI$ is a yes-instance.}
  We show that this implies that $\cI'$ is a
  yes-instance. Arbitrarily choose $S \subseteq V(G)$
  (with $|S|=|\chi(S)|=kt$) such that the function $\chi_S$ obtained
  by projecting $\chi$ to $S$ is an isomorphism between
  $G[S]$ and $H$. Let $C=\{u_{1,j_1},\dotsc,u_{t,j_t}\} \in \PPP_H$,
  then because $\chi_S$ is an isomorphism between $G[S]$ and $H$, it holds
  that $\chi_S^{-1}(C)$ forms a $t$-clique in $G[S]$.
  
  Let $y \in [n]^{t\times k}$ be the vector such
  that $\chi_{S}^{-1}(u_{i,j})$ is the $y[i,j]$-th vertex
  $v_{i,j}^{y[i,j]}$ in
  $V_{i,j}$. Moreover, for $C=\{u_{1,j_1},\dotsc,u_{t,j_t}\} \in
  \PPP_H$, let $y_C$ be the vector $y$ restricted to the coordinates
  in $\{(i,j_i) \; | \; i \in [t]\}$. Then, the set
  $\cP(C)=\SB e_{i,j_i,C,y_C}\SM i \in [t]\SE$ is a bundle in $\cB$ for every
  $C=\{u_{1,j_1},\dotsc,u_{t,j_t}\} \in \PPP_H$ and we claim that the
  set $B=\SB
  \cP(C) \SM C \in \PPP_H\SE$ is a solution for $\cI'$. Clearly,
  $|B|\leq k'=|\PPP_H|$. Moreover, $\bigcup_{b \in B}b$ is an $st$-cut
  in $F$ since it contains the edge $e_{i,j_i,C,y_C}$ for every $i \in
  [t]$, $j\in [k]$, and $C \in \PPP_H$ with $u_{i,j}\in C$. This shows
  that $\cI'$ is a yes-instance.

\medskip

\noindent
\textbf{$\cI$ is a no-instance.}
We show that  $\cI'$ is a no-instance by proving the contrapositive: if $\cI'$ is not a no-instance,
  i.e. $F$ admits an $st$-cut 
  that intersects at most $(t - \eps)k'$ bundles in $\cB$
  for any $0 < \eps < 1$,
  then $\cI$ is not a no-instance, i.e.
  $G$ contains a multicoloured vertex set $S$ such that $G[S]$
  has at least $(\eps/t^4) |E(H)|$ edges.

  Now, suppose that there is an $\hat{s}\hat{t}$-cut $Z$ in $F$ 
  that intersects at most $(t - \eps)|\PPP_H|$ bundles in $\cB$.
  We assume without loss of generality that $Z$ is smallest possible.
  Pick $i \in [t]$ and $j \in [k]$.  There is a $d_{i,j}
  \in [n]$ such that $Z$ contains exactly one edge from every path
  $P_{i,j,d,C}$ for every $C=\{u_{1,j_1},\dotsc,u_{t,j_t}\}\in
  \PPP_H$ with $u_{i,j} \in C$. 
  Let $S=\{v_{i,j}^{d_{i,j}} \; | \; i \in
  [t] \land j \in [k]\}$ and let $\chi_S$ be the projection of $\chi$
  to $S$. Note that $\chi_S$ is bijective and again because $Z$ is
  minimal, $|Z|=t|\PPP_H|$.
  
  We claim that the $G[S]$
  contains at least $\eps / t^4 |E(H)|$ edges.
  Since $Z$ intersects at most $(t - \eps)|\PPP_H|$ bundles in $\cB$,
  there must be at least $\eps/t |\PPP_H|$ bundles
  which $Z$ intersects in at least two edges.
  Let $B=\SB e_{i,j_i,C,x}\SM i \in [t]\SE$ be such a bundle,
  where $C=\{u_{1,j_1},\dotsc,u_{t,j_t}\} \in
  \PPP_H$  and $x \in [n]^t$ such that $S(C,x)$ is a $t$-clique of
  $G$, and let
  $e_{i,j_i,C,x}, e_{i',j_{i'},C,x} \in Z \cap B$.
  Because $S(C,x)$ is a $t$-clique of $G$, we obtain that the vertices
  $v_{i,j_i}^{x[i]}$ and $v_{i',j_{i'}}^{x[i']}$ are adjacent in $G$.
  Therefore, $\chi^{-1}_S(C)$ contains at least one edge
  $e_C=\{v_{i,j_i}^{x[i]}, v_{i',j_{i'}}^{x[i']}\}$. Moreover, since
  $\chi_S$ is bijective and $\PPP_H$ partitions the edges of $H$, it
  follows that $e_C$ and $e_{C'}$ are distinct for every two distinct
  cliques $C$ and $C'$ in $\PPP_H$.
  Since $Z$ intersects at most $t$ bundles belonging to any clique in
  $\PPP_H$, $Z$ must intersect at least $\eps |\PPP_H| / t^2$ bundles
  corresponding to distinct cliques with at least two edges.
  Therefore, $G[S]$ contain at least 
  $$\frac{\eps \cdot |\PPP_H|}{t^2} \geq \frac{\eps}{
  t^4} \cdot |E(H)|$$ edges since $|E(H)|=\binom{t}{2} \cdot |\PPP_H|$.
\end{proof}

\subsection{Hardness of Linear Equations}
\label{sec:equation-hard}

We finally turn our attention to equations over $\ZZ_m$.
 We consider a more general class of rings since it makes
the proof conceptually easier.
Let $R$ be a finite commutative ring with additive identity 0 and
multiplicative identity 1. 
An element
$r \in R$ is {\em idempotent} if $r^2=r$, and two idempotents $r,s \in R$ are
{\em orthogonal} if $rs=0$.
The elements $0$ and $1$ are always idempotent so
we say that an idempotent $r \in R$ is {\em non-trivial} if $r \not\in \{0,1\}$.
Idempotents are intimately connected with various decompositions of rings.
Let
$R_1,\dots,R_p$ denote nontrivial rings (i.e. they contain at least two elements each) where $0_{R_i},1_{R_i}$ denote the additive and multiplicative identity, respectively, in ring $R_i$.
Assume $R$ is isomorphic to the direct sum $R_1 \oplus \dots \oplus R_p$
and consider the set
\[S=\{(0_{R_1},\dots,0_{R_{i-1}},1_{R_i},0_{R_{i+1}},\dots,0_{R_p}) \; | \; 1 \leq i \leq p\}\]
We see that $S$ (and thus $R$) contains $p$ orthogonal non-trivial idempotents.
In particular, this shows that $\ZZ_m$ contains a set of orthogonal non-trivial idempotents of size $\omega(m)$ since (by Proposition~\ref{prop:crt})
$\ZZ_m$ is isomorphic to the direct sum
$\bigoplus_{i=1}^{\ell} \ZZ_{p_i^{n_i}}$ where $p_1^{n_1} \cdots p_\ell^{n_\ell}$ is the prime factorization of $m$
and $\ell=\omega(m)$. We exemplify with the ring $\ZZ_{30}$ where
Proposition~\ref{prop:crt} gives us an isomorphism $f:\ZZ_{30} \rightarrow \ZZ_2 \oplus \ZZ_3 \oplus \ZZ_5$
defined by $x \mapsto (x \mod 2, x \mod 3, x \mod 5)$.
It is easy to verify that $f(15)=(1,0,0)$, $f(10)=(0,1,0)$, and $f(6)=(0,0,1)$ so $\{6,10,15\}$ is
a set of orthogonal non-trivial idempotents in $\ZZ_{30}$ of size 3.

Given a ring $R$ with a set $\{q_1,\dots,q_p\}$ of orthogonal idempotents,
the reduction starts from an instance $I$ of \pSplit with graph $G$ and bundle set $\cB$.
The parts $G_i$, $i \in [p]$, of the graph $G$ are associated with
idempotent $q_i$ and 
transformed into sets of equations.
Let us consider the set of equations connected to $q_i$.
It is initially not satisfiable but if a suitable set of equations is removed
from it, then
it has a solution that only uses values $0$ and $q_i$. This is exploited for
expressing the bundles in $\cB$: the bundles cut across the parts of $G$
and the orthogonality of $q_1,\dots,q_p$ gives us a mechanism for synchronizing
the removal of equations in the resulting instance of $\GAP{p-\eps}$ $\minlin{2}{R}$.

\begin{theorem} \label{thm:incomparable-annihilators}
Let $R$ denote a finite commutative ring that contains a set $Q=\{q_1,\dots,q_p\}$ of orthogonal
non-trivial idempotents.
  Assuming \ETH, $\GAP{(p-\eps)}$-$\minlin{2}{R}$ is not in \FPT{} for any $\epsilon > 0$.
\end{theorem}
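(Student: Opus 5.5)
The plan is an FPT-reduction from $\GAP{(p-\eps)}$ \pSplitng (hard by Lemma~\ref{lem:gap-p-cut}) to $\GAP{(p-\eps)}$-$\minlin{2}{R}$ that keeps the parameter and turns each bundle into a single equation. Let $(G,s,t,\{G_1,\dots,G_p\},\cB,k)$ be an instance. As variables we take one variable $x_v$ for every vertex $v \in V(G)\setminus\{s,t\}$, plus two variables $x_s, x_t$. We add crisp equations $x_s = 0$ and $x_t = q_1+\dots+q_p$. Write $q = \sum_i q_i$; it is idempotent because the $q_i$ are orthogonal.

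\textbf{Edges and bundles.} An edge $uv$ lying in $G_i$, or joining $s$ or $t$ to $G_i$, should force $q_i x_u = q_i x_v$. This equation expresses ``equality in the $i$-th component'': if $R$ is decomposed via the $q_i$, then $q_i x$ is the projection of $x$ onto $q_iR$. A bundle $b=\{e_1,\dots,e_p\}$ with $e_i=u_iv_i \in E(G_i)$ should be represented by one equation. I would first subdivide: for every edge $e=uv$ of $G_i$, introduce a fresh variable $y_e$ with crisp equations $q_i y_e = q_i x_u$ and $(1-q_i) y_e = 0$. Since $y_e = q_i y_e + (1-q_i)y_e$, these can be realised as binary equations $q_i y_e - q_i x_u = 0$ and $(1-q_i)y_e=0$, and together they mean $y_e = q_i x_u$. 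Similarly, a variable $z_e$ is crisply forced to equal $q_i x_v$. For a bundle we then take one soft equation
\[\textstyle\sum_i y_{e_i} = \sum_i z_{e_i}.\]
This has $2p$ variables, not two, so it needs fixing. The fix uses orthogonality: introduce $Y_b$ with crisp equations $q_i Y_b = y_{e_i}$ for each $i$, and similarly $Z_b$ with $q_i Z_b = z_{e_i}$. Then one soft binary equation $Y_b = Z_b$ (or $qY_b=qZ_b$) is equivalent to all $p$ equations $y_{e_i}=z_{e_i}$ holding simultaneously, because $Y_b = \sum_i q_iY_b + (1-q)Y_b$ and the components decouple. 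Singleton bundles become a single soft equation of the same kind.

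\textbf{Correctness.} For the yes direction, given a cut $Z$ formed by $k$ bundles, let $A$ be the side of $s$ in $G-Z$. We set $x_v = \sum_{i:\, v\in G_i,\ v\notin A} q_i$, that is, the value $0$ or $q_i$ on the $G_i$-part, and extend this to $y,z,Y,Z$ consistently. Only the equations of the deleted bundles are violated, and the free components of $Y_b, Z_b$ can be chosen freely. For the no direction, take any assignment of cost $c$ and project it with $q_i$. Each violated soft equation $Y_b=Z_b$ violates the $i$-th projection for at most the $i$ with $e_i \in b$. Each satisfied equation enforces $q_ix_u=q_ix_v$ along all edges of $b$. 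Hence the set of edges $e$ in violated bundles with $q_i x_u \ne q_i x_v$ must contain an $st$-cut, since $q_i x_s = 0 \neq q_i = q_i x_t$ by non-triviality of $q_i$ on each component. This cut is contained in the union of at most $c$ bundles, so costs transfer exactly and the gap $p-\eps$ is preserved with $k'=k$.

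\textbf{Main obstacle.} I expect the main obstacle to be the modelling of a $p$-bundle by a single binary equation with exactly the right semantics. The auxiliary $Y_b$, $Z_b$ construction must not let a solution ``partially'' break a bundle more cheaply, and must not let crisp auxiliary constraints become unsatisfiable when a bundle is deleted. I would check carefully that, when $Y_b=Z_b$ is deleted, the other auxiliary equations remain satisfiable for arbitrary values of the $x$'s. They do, because they merely define $Y_b, Z_b$ componentwise. I would also check that every crisp equation used is genuinely binary over $R$.
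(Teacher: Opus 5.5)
Your reduction is essentially the paper's. Both use, per bundle $b$, one soft equation $Y_b=Z_b$ between two fresh variables, tied crisply to the endpoints by $q_iY_b=q_ix_{u_i}$ and $q_iZ_b=q_ix_{v_i}$, with orthogonality decoupling the $p$ components. The yes-direction sets $Y_b=\sum_i q_ix_{u_i}$, and the no-direction telescopes $q_i$-projections along an $st$-path, which must stay inside a single part $G_i$, to contradict $q_ix_s\neq q_ix_t$.

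The differences are cosmetic. Your auxiliary variables $y_e,z_e$ are redundant; the paper writes $q_iu_i=q_ix_B$ directly. Your single pair $x_s=0$, $x_t=\sum_i q_i$ is an equivalent simplification of the paper's $s_i=q_i$, $t_i=0$. Routing edges at $s$ and $t$ through the same projected equations also covers bundled edges incident to the terminals. The paper's crisp equations $s_i=u$ and $u=t_i$ do not address that case.

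There is one omission. You never add constraints for edges that lie in no bundle. In \pSplitng{} such edges cannot be cut, but in your instance nothing links $y_e$ and $z_e$ for them, so they are broken for free. Then your no-direction claim, that every satisfied or crisp constraint enforces $q_ix_u=q_ix_v$ on all edges outside violated bundles, fails. The fix is to add a crisp equation $q_ix_u=q_ix_v$ (or $y_e=z_e$) for each unbundled edge of part $i$, as the paper does with its fixed-edge equations $u=v$. Alternatively, note that the instances built in Lemma~\ref{lem:gap-p-cut} place every edge in some bundle, since size-$1$ bundles are added for all leftover edges, so the omission is harmless for the hardness chain.
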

\begin{proof}
  We provide an fpt-reduction from \pSplit. The reduction preserves solution cost exactly
  which gives us the result due to Lemma~\ref{lem:gap-p-cut}.
  Let $\cI=(G, s, t, G_1,\dots,G_p,k, \cB)$ be an instance of \pSplit.
  We construct an instance $\cI'=(S, k)$ of \minlin{2}{R} as follows.
Introduce four sets of variables.

\begin{itemize}
\item
$V_G$ contains one variable for every vertex in $V(G) \setminus \{s,t\}$ with the same name.

\item
$V_s=\{s_1, \dots, s_p\}$.

\item
$V_t=\{t_1, \dots, t_p\}$.

\item
$V_{\cal P}$ contains two variables $x_B,y_B$ for every bundle $B \in
\cB$.
\end{itemize}

\medskip

\noindent
Let the variable set $V$ be the union of $V_G$, $V_s$, $V_t$, and $V_{\cB}$.
We now describe the equations in $S$. 
First of all, we introduce
{\em source/sink} equations. These are
crisp equations $s_i=q_i$ and $t_i=0$, $1 \leq i \leq p$.
For every edge $su \in G$ with $u \in G_i$, we add
a crisp equation $s_i=u$. Similarly, for every edge $ut \in G$ with $u$ in
$G_i$, we add a crisp equation $u=t_i$.

For every edge $uv \in E(G)$ 
that is {\em not} present in any bundle in $\cB$, we introduce a crisp {\em fixed-edge}  equation $u=v$.
For every bundle $B = (u_1 v_1, u_2 v_2,\dots,u_p,v_p) \in \cB$, we introduce the central
gadget in the reduction: it consists of
  one soft equation $x_{B} = y_{B}$
  together with a collection of crisp equations
  \begin{align*}
    &q_1u_1 = q_1x_{B} && q_1y_{B} = q_1v_1, \\
    &q_2u_2 = q_2x_{B} && q_2y_{B} = q_2v_2 \\
    &\vdots && \vdots\\
    &q_pu_p = q_px_{B} && q_py_{B} = q_pv_p.
  \end{align*}
  The gadget construction is illustrated in Figure~\ref{fig:gadget}.

\medskip

  \noindent
  We continue by proving the correctness of the reduction. 
  
  \medskip

  \noindent
  \textbf{$\cI$ is a yes-instance.}
  Let $Z$ be an $st$-cut in $G$ which is a union of $k$ bundles in $\cB$.
  We will show that $(S, k)$ is a yes-instance.
  Let $X \subseteq \cP$ be the set of tuples such that $Z = \bigcup X$, and
  define the set of equations $X' = \{ x_t = y_t \mid t \in X \}$.
  Note that $|X'| = |X| = k$ and $X'$ only contains soft equations.
  We argue that $S - X'$ is satisfied by an assignment $\alpha : V \rightarrow R$ defined as follows.
  For $v \in V_G$, we let

  \[
    \alpha(v) = 
    \begin{cases}
      q_1 & \text{if } v \in V(G_1) \text{ and } s \text{ reaches } v \text{ in } G - Z, \\
      q_2 & \text{if } v \in V(G_2) \text{ and } s \text{ reaches } v \text{ in } G - Z, \\
      \vdots \\
      q_p & \text{if } v \in V(G_p) \text{ and } s \text{ reaches } v \text{ in } G - Z, \; \text{and} \\
      0  & \text{otherwise}.
    \end{cases}
  \]
  For $1 \leq i \leq p$, we set $\alpha(s_i)=q_i$ and $\alpha(t_i)=0$.
 For the variables in $V_{\cal B}$, we do the following. Arbitrarily choose a bundle
 $B=(u_1v_1,\dots,u_pv_p) \in \cB$ and let
  \[\alpha(x_B) = \alpha(u_1) + \alpha(u_2) + \dots + \alpha(u_p), \;\alpha(y_B) = \alpha(v_1) + \alpha(v_2) + \dots + \alpha(v_p).\]
We verify that $\alpha$ satisfies the equations in $S \setminus X'$.
We first observe the following.

\medskip

\noindent
(OBS): Assume the edge $uv$ is in $G-Z$. If $s$ reaches $u$ in $G$, then $s$ reaches $v$ in $G$, too, and $\alpha(u)=\alpha(v)=q_j$
for some $1 \leq j \leq p$. If $s$ does not reach $u$ in $G$, then $s$ does not reach $v$
in $G$ and $\alpha(u)=\alpha(v)=0$. Hence, $\alpha(u)=\alpha(v)$ in this case.

\medskip

\noindent
We proceed with the proof.
The assignment $\alpha$ obviously satisfies 
the basic source/sink equations
$s_i=q_i$ and $t_i=0$, $1 \leq i \leq p$. Assume the edge $su$ is in $G-Z$
with $u \in G_i$. Then $\alpha(s_i)=q_i=\alpha(u)$ and the corresponding
source/sink equation $s_i=u$ is satisfied. 
Assume the edge $ut$ is in $G-Z$ with $u \in G_i$. 
Since $Z$ is an $st$-cut, $s$ does not reach
$t$ in $G-Z$ so $\alpha(u)=0=\alpha(t_i)$, and the source/sink equation $u=t_i$
is satisfied.

We continue with the fixed-edge equations.
Arbitrarily choose $u=v \in S$ where edge $uv \in E(G)$  does not appear in any bundle
   in $\cB$. The edge $uv$ is left in $G-Z$ and (OBS) implies that $\alpha(u)=\alpha(v)$.

   Finally, we consider the gadget equations.
 Pick a bundle
$B = (u_1 v_1, \dots, u_pv_p) \in {\cal B}$ and consider its corresponding gadget.
  Note that $\alpha(u_i),\alpha(v_i) \in \{0,q_i\}$ for
  every $1 \leq i \leq p$ since $G_1,\dots,G_p$ is a partitioning of $G-\{s,t\}$.
  It is straightforward to verify that
  $\alpha$ satisfies the crisp equations
$q_iu_i = q_ix_{B}$ and $q_iy_{B} = q_iv_i$ for arbitrary
$1 \leq i \leq p$.
Let us exemplify by the equation $q_1u_1 = q_1x_{B}$. After applying $\alpha$, we have
\[q_1\alpha(u_1) = q_1(\alpha(u_1)+\dots+\alpha(u_p)).\]
The element $q_1$ is orthogonal to $q_2,\dots,q_p$ so the equation simplifies to $q_1\alpha(u_1)=q_1\alpha(u_1)$ and $\alpha$ satisfies $q_1u_1 = q_1x_{B}$.

Up to this point, $\alpha$ satisfies all gadget equations.
If the bundle $B$ is in $X$, then there is nothing to prove so we assume $B \not\in X$.
  Then, the edges $u_1 v_1, \dots, u_p v_p$ are present in $G - Z$ and
(OBS) implies that $\alpha(u_i)=\alpha(v_i)$, $1 \leq i \leq p$.
By the definition of $\alpha$, we see that 

\[\alpha(x_{B})=\alpha(u_1)+\dots+\alpha(u_p)=\alpha(v_1)+\dots+\alpha(v_p)
=\alpha(y_{B}).\]
We conclude that $S \setminus X'$ is satisfiable and $\cI'=(S,k)$ is a yes-instance.

\medskip

  \noindent
  \textbf{$\cI'$ is a yes-instance.}
  Suppose there exists a subset
  $Y$ of soft equations such $S - Y$ is satisfiable.
  We claim that $G$ admits an $st$-cut that intersects $|Y|$ 
  bundles in $\cB$ and $\cI$ is a yes-instance.
  By definition, the only soft equations in $S$ are the gadget equations $x_B = y_B$ for bundles $B \in \cB$.
  Define $Y' = \bigcup \{ t \mid (x_t = y_t) \in Y\}$ and 
  note that $Y'$ intersects exactly $|Y|$ tuples in $\cB$.
  We claim that $Y'$ is an $st$-cut in $G$.
  Suppose, with the aim of obtaining a contradiction, that $G_1 - Y'$ contains an $st$-path
  (the proofs for $G_2,\dots,G_p$ are analogous).
  For every edge $uv$ on the path, we first verify that the equations in $S - Y$
  imply $q_1u = q_1v$.
  If $uv$ does not appear in any bundle in $\cB$, then this is enforced by the
  crisp fixed-edge equation
  $u = v$. If $uv$ appears in some bundle $B \in \cB$, then the
  gadget for $B$ contains equations
   $q_1u = q_1x_{B}$, $x_{B} = y_{B}$, $q_1y_{B} = q_1v$
 that together imply $q_1u=q_1v$.
  Thus, the equations in $S - Y$ imply
  $q_1s_1 = q_1t_1$.
  We know that $q_1$ is a non-trivial idempotent so
  $q_1^2 = q \neq 0$ and this makes
  $q_1s_1 = q_1t_1$ incompatible
  with the crisp source/sink equations
  $s_1 = q_1$ and $t_1 = 0$, and this completes the proof.
\end{proof}

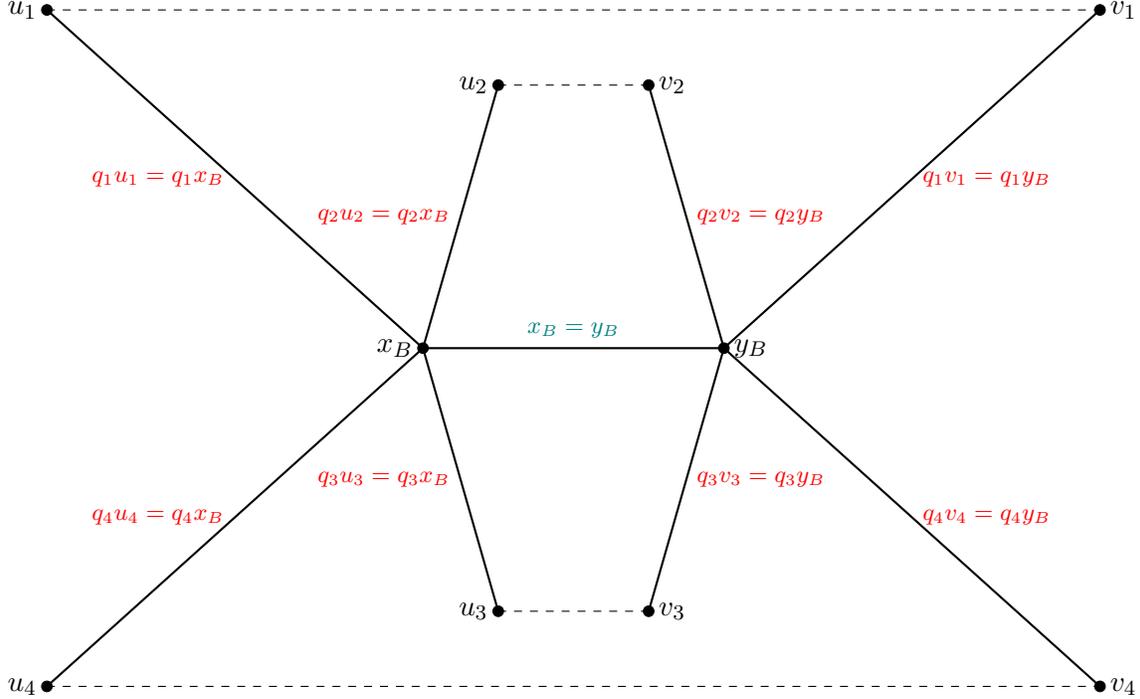
\begin{figure}
 \centering
  \begin{tikzpicture}
    \coordinate (u1) at (-5,6);
    \coordinate (v1) at (9,6);
    \coordinate (u2) at (1,5);
    \coordinate (v2) at (3,5);
    \coordinate (u3) at (1,-2);
    \coordinate (v3) at (3,-2);
    \coordinate (u4) at (-5,-3);
    \coordinate (v4) at (9,-3);
    \coordinate (xp) at (0,3/2);
    \coordinate (yp) at (4,3/2);

    \filldraw[black] (u1) circle (2pt) node[anchor=east]{$u_1$};
    \filldraw[black] (v1) circle (2pt) node[anchor=west]{$v_1$};
    \filldraw[black] (u2) circle (2pt) node[anchor=east]{$u_2$};
    \filldraw[black] (v2) circle (2pt) node[anchor=west]{$v_2$};
    \filldraw[black] (u3) circle (2pt) node[anchor=east]{$u_3$};
    \filldraw[black] (v3) circle (2pt) node[anchor=west]{$v_3$};
    \filldraw[black] (u4) circle (2pt) node[anchor=east]{$u_4$};
    \filldraw[black] (v4) circle (2pt) node[anchor=west]{$v_4$};
    \filldraw[black] (xp) circle (2pt) node[anchor=east]{$x_B$};
    \filldraw[black] (yp) circle (2pt) node[anchor=west]{$y_B$};

    \draw[dashed] (u1) -- (v1);
    \draw[dashed] (u2) -- (v2);
    \draw[dashed] (u3) -- (v3);
    \draw[dashed] (u4) -- (v4);
    \draw[thick] (u1) -- (xp) node[midway, left,  color=red] {\footnotesize{$q_1u_1 = q_1x_{B}$}};
    \draw[thick] (u2) -- (xp) node[midway, left,  color=red] {\footnotesize{$q_2u_2 = q_2x_{B}$}};
    \draw[thick] (xp) -- (yp) node[midway, above, color=teal] {\footnotesize{$x_{B}=y_{B}$}};
    \draw[thick] (yp) -- (v1) node[midway, right, color=red] {\footnotesize{$q_1v_1 = q_1y_{B}$}};
    \draw[thick] (yp) -- (v2) node[midway, right, color=red] {\footnotesize{$q_2v_2 = q_2y_{B}$}};

    \draw[thick] (u3) -- (xp) node[midway, left, color=red] {\footnotesize{$q_3u_3 = q_3x_{B}$}};
    \draw[thick] (yp) -- (v3) node[midway, right, color=red] {\footnotesize{$q_3v_3 = q_3y_{B}$}};
    \draw[thick] (u4) -- (xp) node[midway, left, color=red] {\footnotesize{$q_4u_4 = q_4x_{B}$}};
    \draw[thick] (yp) -- (v4) node[midway, right, color=red] {\footnotesize{$q_4v_4 = q_4y_{B}$}};

  \end{tikzpicture}
 \caption{System of equations obtained from a tuple of edges $\{u_1v_1,\dots,u_4v_4\} \in {\cal P}$. Red equations are crisp and the green equation is soft.
 Edges are illustrated by dashed lines.}
 \label{fig:gadget}
\end{figure}

\begin{corollary} \label{cor:modular-hardness}
  Assume that the \ETH holds.
 If $R$ is a finite and commutative ring that is the direct sum
 $R_1 \oplus \dots \oplus R_p$ of non-trivial rings, then $\minlin{2}{R}$
  is not FPT-approximable within $p-\epsilon$ for any $\epsilon > 0$.
  In particular, $\minlin{2}{\ZZ_m}$ is not FPT-approximable within $\omega(m)-\epsilon$ for any $\epsilon > 0$.
\end{corollary}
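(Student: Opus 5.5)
This is a direct consequence of Theorem~\ref{thm:incomparable-annihilators}, so the plan is to exhibit $p$ orthogonal non-trivial idempotents in $R$ and then convert the gap hardness into non-approximability.

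First, fix an isomorphism $f\colon R_1\oplus\dots\oplus R_p\to R$. Let $e_i$ be the tuple with $1_{R_i}$ in coordinate $i$ and $0$ elsewhere, and set $q_i=f(e_i)$. Coordinatewise computation gives $e_i^2=e_i$ and $e_ie_j=0$ for $i\neq j$. Each $e_i$ is non-zero because $R_i$ is non-trivial, so $1_{R_i}\neq 0_{R_i}$. Each $e_i$ also differs from the identity $(1_{R_1},\dots,1_{R_p})$ whenever $p\geq 2$, since some other coordinate is $0$ where the identity has $1$. Hence $Q=\{q_1,\dots,q_p\}$ is a set of $p$ orthogonal non-trivial idempotents of $R$.

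The case $p=1$ needs separate treatment, since then $e_1=1$ is trivial. There the claim is only that no FPT-approximation within $1-\epsilon<1$ exists. This is vacuous, because no algorithm can certify a solution of cost below the optimum; if one also wants to exclude the exact ratio $1$ for $\minlin{2}{R}$, that does not follow from this construction. So we restrict to $p\geq 2$. Alternatively, for $p=1$ and $\epsilon\geq 1$, the gap factor $p-\epsilon\leq 0$ is meaningless.

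For $p\geq 2$, Theorem~\ref{thm:incomparable-annihilators} shows that, under ETH, $\GAP{(p-\eps)}$-$\minlin{2}{R}$ is not in \FPT. As noted at the start of Section~\ref{sec:hardness}, a factor-$(p-\epsilon)$ FPT-approximation algorithm would decide this gap problem in FPT time: it accepts when the optimum is at most $k$ and rejects when the optimum exceeds $(p-\epsilon)k$. This contradicts the theorem, which proves the first claim.

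For $\ZZ_m$, let $p_1^{n_1}\cdots p_\ell^{n_\ell}$ be the prime factorization of $m$, with $\ell=\omega(m)$. Proposition~\ref{prop:crt} gives $\ZZ_m\cong\bigoplus_i\ZZ_{p_i^{n_i}}$, and every summand is non-trivial. Applying the first part with $p=\omega(m)$ gives the second claim. The only point that needs care is the degenerate case $\omega(m)=1$, i.e.\ prime powers, where the bound $1-\epsilon$ is trivially true; there is no real obstacle.
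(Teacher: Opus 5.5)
Your proof is correct and follows the paper's argument. You exhibit $p$ orthogonal non-trivial idempotents $f(e_i)$ coming from the direct-sum decomposition, invoke Theorem~\ref{thm:incomparable-annihilators}, and specialize to $\ZZ_m$ via Proposition~\ref{prop:crt}. Your explicit treatment of the degenerate case $p=1$, and of the translation from gap hardness to non-approximability, fills in details the paper leaves implicit. In the case $p=1$ the idempotent $e_1=1$ is trivial, but the claimed bound $1-\epsilon<1$ holds vacuously because the yes- and no-regions of the gap problem overlap.
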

\begin{proof}
The direct sum $R_1 \oplus \dots \oplus R_p$ of non-trivial rings
contains a set of $p$ orthogonal idempotents (as explained in the beginning of Section~\ref{sec:equation-hard}) so $\minlin{2}{R}$
  is not FPT-approximable within $p-\epsilon$ for any $\epsilon > 0$
  by Theorem~\ref{thm:incomparable-annihilators}.
  The bound for $\ZZ_m$ follows immediately since it is the direct sum of $\omega(m)$
  non-trivial rings by Proposition~\ref{prop:crt}.
\end{proof}

\section{Conclusion}
\label{sec:discussion}

We prove that $\minlin{2}{\ZZ_m}$ is FPT-approximable within factor $\omega(m)$ of the optimum,
and show that FPT-approximation within $\omega(m) - \eps$
for any $\eps > 0$ is ruled out under ETH.
The core part of the algorithm is an FPT algorithm for $\minlin{2}{\ZZ_{p^d}}$ for every prime $p$ and $d \geq 1$.
For the lower bound, we define a structurally simple problem \textsc{$q$-Split Min Cut} and show that it cannot be FPT-approximated within a factor of $q-\varepsilon$ for any $\varepsilon>0$ under ETH. 

We remark that the algorithmic result is slightly stronger, and it follows that
$\minlin{2}{\ZZ_{p^d}}$ is in FPT parameterized by $k$, $p$ and $d$.
Furthermore,
the algorithm can be extended to handle the weighted variant of the problem,
where the input comes with a function $w : C \to \ZZ_+$ assigning
positive integer weights to every equation, and a weight budget $W$,
which is \emph{not} part of the parameter;
the goal is to find a solution $Z$ such that $|Z| \leq k$ and $w(Z) \leq W$.
Such variants of other FPT problems have been considered previously~\cite{Kim:etal:sicomp2025,KimPSW22weighted}
and often require additional algorithmic insights.
In particular, the FPT algorithm for \textsc{Multicut} by Marx and Razgon~\cite{marx2014fixed}, where the shadow covering procedure was introduced,
fails to handle weights due to a greedy aspect in the ``shadow removal'' step. 
We note that our algorithm, despite using a generalization of shadow covering, has no such greedy aspect.

The running time of our algorithm is $O^*(2^{k^{O(1)}})$ where the exponent degree is non-trivial to analyze but we know it is 
at least 4.  Can this be improved? The current bottleneck is the algorithm for \MinSat{} that we rely upon;
thus one improvement would be to replace this algorithm by a special-purpose FPT algorithm for our problem. Apart from this, the next bottlenecks are $2^{O(k^2)}$ from balanced subgraph covering and $2^{O(k)}$ for everything else (assuming that $p$ and $d$ are constants). 

The following is a related question: can the success probability of the balanced subgraph covering procedure be improved? The success probability of shadow covering, which is a special case, was recently improved to $1/2^{O(k \log k)}$ by Chu et al.~\cite{chu2026faster}, using a result of Korhonen and Lokshtanov~\cite{korhonen2023improved} that important separators have small \emph{blocking sets}, i.e.\ for any graph $G$ and disjoint vertex sets $X$ and $Y$ in $G$, there is a set of size $k$ that intersects all important $(X,Y)$-separators of size at most $k$.
A possible generalization to balanced important subgraphs would be the following question. Given a biased graph $(G,\BB)$ and a vertex $v \in V(G)$, is there a set $Z$ of $O(k)$ edges such that every important connected subset $S$ in $(G,\BB)$ with $v \in S$ has $\delta(S) \cap Z \neq \emptyset$? 
This is an intriguing possibility, which could improve the success probability of balanced subgraph covering to $1/2^{O(k \log k)}$.

For a much more ambitious research direction, the problems we study are instances of the problem class  $\MinCSP(\Gamma)$ for finite-domain constraint languages $\Gamma$. 
It is natural to ask for which languages $\Gamma$ the problem $\MinCSP(\Gamma)$ admits an FPT algorithm, or a constant-factor FPT-approximation.
While this is fully settled in the Boolean domain~\cite{bonnet2016fixed,Kim:etal:sicomp2025},
in the more general setting the question is very much open; only preliminary or ad-hoc cases are known (see~\cite{Kratsch:etal:compscirev2026} for an overview).
For constant-factor FPT-approximation, it would be particularly interesting to obtain matching upper and lower bounds for broader classes of problems than
those studied in this paper.

With the \textsc{MinCSP} motivation in mind,
perhaps the easiest extension of \textsc{Min-2-Lin} that our algorithm does not cover is the \emph{conservative}, a.k.a.\ \emph{list} case of $\minlin{2}{\ZZ_m}$, where in addition every variable $v$ has a list of values $D_v \subseteq \ZZ_m$ that it is allowed to take. These can be treated as unary constraints, or as a separate part of the problem formulation; this makes no difference to the problem complexity.
Is \textsc{List $\minlin{2}{\ZZ_{p^d}}$} FPT?

Finally, our results for \pSplitng{} can be used to prove optimality of constant factors achieved by FPT-approximation algorithms, as exemplified by the case of $\minlin{2}{\ZZ_m}$.
For example, they directly imply that the approximation factor of $2$ in~\cite[Theorem~10]{Dabrowski:etal:ipec2023} cannot be improved under the ETH.
Many more W[1]-hardness reductions in the literature can be rethought as reductions from \pSplitng{} for some $p \geq 2$, and it is natural to ask whether the lower bounds we obtain match the factors achieved by the best known FPT-approximation algorithms for these problems.
In this regard, one could consider e.g. 
\textsc{Directed Odd Cycle Transversal}~\cite{lokshtanov2020parameterized},
\textsc{Directed Multicut with $t$ Terminal Pairs} for $t \geq 4$~\cite{pilipczuk2018directed}, and
\textsc{Directed Symmetric Multicut}~\cite{EibenRW22ipec,OsipovPW24pointalgebra}.

\section*{Acknowledgements}

The second author was partially supported by 
the Swedish Research Council (VR)
under grant 2021-04371.
The fourth author was supported by VR
under grant 2024-00274.

\bibliographystyle{alpha}
\bibliography{references}

\end{document}